\numberwithin{equation}{section}
\newtheorem{Theorem}{Theorem}[section]
\newtheorem{Corollary}[Theorem]{Corollary}
\newtheorem{Lemma}[Theorem]{Lemma}
\newtheorem{Proposition}[Theorem]{Proposition}
 { \theoremstyle{definition}
\newtheorem{Definition}[Theorem]{Definition}

\newtheorem{Example}[Theorem]{Example}
\newtheorem{Remark}[Theorem]{Remark} }
\font\tengoth=eufm10 at 10pt
\font\sevengoth=eufm7 at 6pt
\newcommand{\fB}{{\mathfrak B}}
\newcommand{\fS}{{\mathfrak S}}
\newcommand{\g}{{\mathfrak g}}
\newcommand{\fb}{{\mathfrak b}}
\newcommand{\fg}{{\mathfrak g}}
\newcommand{\fh}{{\mathfrak h}}
\newcommand{\fk}{{\mathfrak k}}
\newcommand{\fq}{{\mathfrak q}}
\newcommand{\fs}{{\mathfrak s}}
\renewcommand\sp{\mathfrak {sp}}
\newcommand{\1}{\mathbf{1}}
\newcommand{\cA}{\mathcal{A}}
\newcommand{\cB}{\mathcal{B}}
\newcommand{\cC}{\mathcal{C}}
\newcommand{\cD}{\mathcal{D}}
\newcommand{\cE}{\mathcal{E}}
\newcommand{\cF}{\mathcal{F}}
\newcommand{\cH}{\mathcal{H}}
\newcommand{\cK}{\mathcal{K}}
\newcommand{\cM}{\mathcal{M}}
\newcommand{\cN}{\mathcal{N}}
\newcommand{\cR}{\mathcal{R}}
\newcommand{\cS}{\mathcal{S}}
\newcommand\bg{{\bf{g}}}
\newcommand\bs{{\bf{s}}}
\newcommand\bt{{\bf{t}}}
\newcommand{\dd}{{\tt d}}
\newcommand{\subeq}{\subseteq}
\newcommand{\into}{\hookrightarrow}
\newcommand{\eps}{\varepsilon}
\newcommand{\shalf}{{\textstyle{\frac{1}{2}}}}
\newcommand{\N}{{\mathbb N}}
\newcommand{\Z}{{\mathbb Z}}
\newcommand{\R}{{\mathbb R}}
\newcommand{\C}{{\mathbb C}}
\newcommand{\bP}{{\mathbb P}}
\newcommand{\T}{{\mathbb T}}
\newcommand{\bE}{{\mathbb E}}
\newcommand{\bS}{{\mathbb S}}
\renewcommand{\hat}{\widehat}
\renewcommand{\tilde}{\widetilde}
\newcommand{\Aff}{\mathop{{\rm Af\/f}}\nolimits}
\newcommand{\GL}{\mathop{{\rm GL}}\nolimits}
\newcommand{\SL}{\mathop{{\rm SL}}\nolimits}
\newcommand{\PSL}{\mathop{{\rm PSL}}\nolimits}
\newcommand{\SO}{\mathop{{\rm SO}}\nolimits}
\newcommand{\SU}{\mathop{{\rm SU}}\nolimits}
\newcommand{\OO}{\mathop{\rm O{}}\nolimits}
\newcommand{\U}{\mathop{\rm U{}}\nolimits}
\newcommand{\Sp}{\mathop{{\rm Sp}}\nolimits}
\newcommand{\Sym}{\mathop{{\rm Sym}}\nolimits}
\newcommand{\gl} {\mathop{{\mathfrak{gl} }}\nolimits}
\newcommand{\fsl} {\mathop{{\mathfrak{sl} }}\nolimits}
\newcommand{\Exp}{\mathop{{\rm Exp}}\nolimits}
\newcommand{\Hom}{\mathop{{\rm Hom}}\nolimits}
\newcommand{\Aut}{\mathop{{\rm Aut}}\nolimits}
\newcommand{\Diff}{\mathop{{\rm Dif\/f}}\nolimits}
\newcommand{\id}{\mathop{{\rm id}}\nolimits}
\newcommand{\rad}{\mathop{{\rm rad}}\nolimits}
\renewcommand{\dim}{\mathop{{\rm dim}}\nolimits}
\newcommand{\supp}{\mathop{{\rm supp}}\nolimits}
\newcommand{\conv}{\mathop{{\rm conv}}\nolimits}
\newcommand{\Spann}{\mathop{{\rm span}}\nolimits}
\newcommand{\ev}{\mathop{{\rm ev}}\nolimits}
\newcommand{\Sesq}{\mathop{{\rm Sesq}}\nolimits}
\newcommand{\vphi}{\varphi}
\renewcommand{\phi}{\varphi}
\newcommand{\Rarrow}{\Rightarrow}
\newcommand{\oline}{\overline}
\newcommand{\la}{\langle}
\newcommand{\ra}{\rangle}
\newcommand{\Mot}{{\rm Mot}}
\newcommand{\res}{\vert}
\newcommand{\Spec}{{\rm Spec}}
\newcommand{\ssssarr}{\hbox to 15pt{\rightarrowfill}}
\newcommand{\sssarr}{\hbox to 20pt{\rightarrowfill}}
\newcommand{\ssarr}{\hbox to 30pt{\rightarrowfill}}
\newcommand{\sarr}{\hbox to 40pt{\rightarrowfill}}
\newcommand{\arr}{\hbox to 60pt{\rightarrowfill}}
\newcommand{\larr}{\hbox to 60pt{\leftarrowfill}}
\newcommand{\Arr}{\hbox to 80pt{\rightarrowfill}}
\newcommand{\pmat}[1]{\begin{pmatrix} #1 \end{pmatrix}}
\newcommand{\du}[2]{\langle#1,#2 \rangle}
\newcommand{\due}{\du{\, \cdot\, }{\, \cdot\, }}
\newcommand{\E}{{\bE}}
\newcommand{\mfg}{\mathfrak{g}}
\newcommand{\mfh}{\mathfrak{h}}
\newcommand{\mfq}{\mathfrak{q}}
\newcommand{\hE}{\hat\cE}
\begin{document}

\allowdisplaybreaks

\newcommand{\arXivNumber}{1510.07445}

\renewcommand{\PaperNumber}{058}

\FirstPageHeading

\ShortArticleName{Ref\/lection Positive Stochastic Processes Indexed by Lie Groups}
\ArticleName{Ref\/lection Positive Stochastic Processes\\ Indexed by Lie Groups}

\Author{Palle E.T.~JORGENSEN~$^\dag$, Karl-Hermann NEEB~$^\ddag$ and Gestur \'OLAFSSON~$^\S$}
\AuthorNameForHeading{P.E.T.~Jorgensen, K.-H.~Neeb and G.~\'Olafsson}

\Address{$^\dag$~Department of Mathematics, The University of Iowa, Iowa City, IA 52242, USA}
\EmailD{\href{mailto:palle-jorgensen@uiowa.edu}{palle-jorgensen@uiowa.edu}}

\Address{$^\ddag$~Department Mathematik, FAU Erlangen-N\"urnberg,\\
\hphantom{$^\ddag$}~Cauerstrasse 11, 91058-Erlangen, Germany}
\EmailD{\href{mailto:neeb@math.fau.de}{neeb@math.fau.de}}

\Address{$^\S$~Department of mathematics, Louisiana State University, Baton Rouge, LA 70803, USA}
\EmailD{\href{mailto:olafsson@math.lsu.edu}{olafsson@math.lsu.edu}}

\ArticleDates{Received October 28, 2015, in f\/inal form June 09, 2016; Published online June 21, 2016}

\Abstract{Ref\/lection positivity originates from one of the Osterwalder--Schrader axioms for constructive quantum f\/ield theory. It serves as a bridge between euclidean and relativistic quantum f\/ield theory. In mathematics, more specif\/ically, in representation theory, it is related to the Cartan duality of symmetric Lie groups (Lie groups with an involution) and results in a transformation of a unitary representation of a symmetric Lie group to a unitary representation of its Cartan dual. In this article we continue our investigation of representation theoretic aspects of ref\/lection positivity by discussing ref\/lection positive Markov processes indexed by Lie groups, measures on path spaces, and invariant gaussian measures in spaces of distribution vectors. This provides new constructions of ref\/lection positive unitary representations.}

\Keywords{ref\/lection positivity; stochastic process; unitary representations}

\Classification{22E45; 60G15; 81S40}

{\small \tableofcontents}

\section{Introduction}
Ref\/lection positivity is one of the cornerstones in \textit{constructive quantum field theory}. It was f\/irst formulated in the fundamental work of Osterwalder and Schrader~\cite{OS73,OS75}, see also~\cite{GJ81}. Mathematically, a~quantum physical system corresponds to a unitary representation of the corresponding symmetry group $G$. In euclidean quantum f\/ield theory this group is the euclidean motion group and in relativistic quantum f\/ield theory the symmetry group is the Poincar\'e group. Ref\/lection positivity enters the picture when it comes to passing from euclidean quantum f\/ields to relativistic ones. The time ref\/lection and the passage to imaginary time, sometimes called Wick rotation, changes the standard euclidean inner product into the Lorentz form of relativity. The time ref\/lection corresponds to an involution $\tau $ of the euclidean motion group~$G$ and then the duality of symmetric Lie algebras $\fg=\fg^\tau \oplus \fg^{-\tau} \leftrightarrow \fg^c=\fg^\tau \oplus i\fg^{-\tau}$ leads to a duality between the Lie algebra of the euclidean motion group and that of the Poincar\'e group.

In the following we call a pair $(G,\tau)$, consisting of a Lie group $G$ and an involutive automorphism $\tau\colon G\to G$ a \textit{symmetric Lie group}. On the group level, the aforementioned duality is implemented by the {\it Cartan duality} (or~$c$-duality for short) $G\leftrightarrow G^c$, where~$G^c$ is the simply connected Lie group with Lie algebra~$\fg^c$. Passing from real to imaginary time and from the euclidean motion group to the universal covering of the (identity component of the) Poincar\'e group is an important special case of $c$-duality which has been studied independently from the quantum f\/ield theoretic context.

In addition to $c$-duality, the other three basic notions considered in the basic theory of ref\/lection positivity are that of ref\/lection positive Hilbert spaces, ref\/lection positive kernels, and ref\/lection positive representations. A ref\/lection positive representation together with the Osterwalder--Schrader quantization leads to an inf\/initesimally unitary representation of the Lie algebra $\fg^c$. The problem in representation theory is then to determine if this representation arises as the derived representation of a unitary representation of $G^c$, which establishes a passage from a unitary representations of $G$ to one of~$G^c$.

This integration process is often accomplished using ref\/lection positive kernels and geometric actions of the Lie group or its Lie algebra~\cite{MNO15} or, as in the case of the L\"uscher--Mack theorem~\cite{LM75}, using semigroups and invariant cones. A second step is then to determine the resulting representation in terms of decomposition into irreducible representation. For further representation theoretic results related to ref\/lection positivity, we refer to \cite{HO96,J86,J87, JOl98,JOl00,NO14,NO15a,NO15b,O00,S86}. Our present paper concentrates on various mathematical aspects of constructions of ref\/lection positive representations: inf\/inite-dimensional analysis, functional integration and gaussian measures, and stochastic processes. On the way we recall several basic facts in these areas to make this article more self-contained.

Shortly after the groundbreaking work of Osterwalder and Schrader, A.~Klein and L.~Landau built a bridge between ref\/lection positivity, path spaces and stochastic processes \cite{Kl77,Kl78,KL75}. One of our goals in this article is to connect the ideas of Klein and Landau to representation theory and our previous work on ref\/lection positive representations~\cite{JOl98,JOl00,NO14,NO15a,NO15b}. This is done by replacing the real line, viewed as continuous time, by an arbitrary Lie group $G$, not necessarily f\/inite-dimensional, and the positive time axes $\R_+$ by a semigroup $S\subset G$ invariant under $s\mapsto s^\# = \tau (s)^{-1}$. This leads naturally to the concepts of $(G,\tau)$-measure spaces, ref\/lection positive measure spaces, and a positive semigroup structures introduced in Section~\ref{sec:1} of this article. Many concepts extend naturally to general triples $(G,S,\tau)$. But in the generalization of the Abel--Klein reconstruction theorem (Theorem~\ref{thm:1.8G2}) which reconstructs a $(G,S,\tau)$-measure space from a positive semigroup structure, we need to assume that $G=S\cup S^{-1}$; then $S$ is called {\it total}. It would be interesting to see how far our techniques can be extended beyond total subsemigroups.

In the following section we turn back to the classical case where the symmetric Lie group is $(\R, \R_+,-\id)$, but the measure space is the path space $P(Q) = Q^\R$ of a polish topological group $Q$, mostly assumed locally compact. If $\nu$ is a measure on $Q$ and $(\mu_t)_{t \geq 0}$ is a convolution semigroup of symmetric probability measures on $Q$ satisfying $\nu * \mu_t = \nu$ for every $t > 0$, then the reconstruction theorem applies to the corresponding $\R_+$-action on $L^2(Q,\nu)$ and leads to an invariant probability measure on the path space $Q^\R$ corresponding to a
ref\/lection positive representation (Theorem~\ref{thm:4.11}). However, the corresponding measure on $P(Q)$ is the product of the measure $\nu$, on the constant paths, and a probability measure on the pinned paths
\begin{gather*} P_*(Q) := \{ \omega \in P(Q) \colon \omega(0)= \1\}, \end{gather*}
which is also invariant under a suitable one-parameter group of transformations. Since we presently do not know how to obtain a similar factorization if $(\R,\R_+,-\id_\R)$ is replaced by some $(G,S,\tau)$, we discuss the one-dimensional case in some detail.

 One of the advantages of the ref\/lection positivity condition is that it allows to construct representations of a symmetric Lie group (classically the euclidean motion group) in $L^2$-spaces of measures on spaces of distributions, which after the Osterwalder--Schrader quantization lead to
 unitary representations of the $c$-dual group. In this connection the physics literature considered ref\/lection positive distributions $D \in \cS^\prime$, invariant under the euclidean group. The Hilbert space $\cH_D \subseteq \cS^\prime$ specif\/ied by the corresponding positive def\/inite kernel~\cite{Sch64} leads to a~ref\/lection positive representation of the euclidean motion group \cite{NO14, NO15a}. This motivates our discussion of $G$-invariant gaussian measures corresponding to a unitary representation $(\pi,\cH)$ of~$G$. The replacement for the Gelfand triple $(\cS,L^2(\R^n), \cS^\prime)$ is the Gelfand triple $(\cH^\infty, \cH,\cH^{-\infty})$, where~$\cH^\infty$ denotes the Fr\'echet space of smooth vectors in~$\cH$ and~$\cH^{-\infty}$ is it conjugate linear dual (the space of distribution vectors). Here the natural question is for which representations the gaussian measure of $\cH$ can be realized on $\cH^{-\infty}$? A partial answer is given in Theorem~\ref{thm:3.18}, Corollary~\ref{con:3.18}, and Proposition~\ref{prop:4.20}. All of this links naturally to the theory of generalized Wiener spaces and white noise processes for $\R^n$, indexed by $L^2(\R^n)$; see, e.g.,~\cite{AJ12,AJL11,Ap88,HOUZ10}.

Finally we would like to mention that there is also an important branch of applications of ref\/lection positivity in statistical physics
which does not refer to semigroups at all; see~\cite{FOS83} and the more recent \cite{NO15b}, where the corresponding group $G$ may be a torus, hence does not contain any proper open subsemigroup. We hope to develop this point in future work.

Our article is organized as follows. In the f\/irst section we start by recalling the basic facts about ref\/lection positive Hilbert spaces and ref\/lection positive representations~\cite{NO14, NO15a}. We always have a symmetric Lie group $(G,\tau)$ and a~subsemigroup $S$ invariant under $s^\# = \tau (s)^{-1}$. Based on ideas from~\cite{KL75}, we introduce the special class of ref\/lection positive Hilbert spaces of Markov type. The main result is Proposition~\ref{prop:3.9}.

In Section~\ref{sec:1} we discuss stochastic processes indexed by a symmetric Lie group $(G,\tau)$, where the forward time $\R_+$ is replaced by a subsemigroup $S\subset G$ invariant under $s \mapsto s^\sharp = \tau(s)^{-1}$. Here we prove our generalization of the reconstruction theorem (Theorem~\ref{thm:1.8G2}) which reconstructs for a so-called positive semigroup structure a $(G,S,\tau)$-measure space from which it can be derived. {}From the representation theoretic perspective, it corresponds to f\/inding euclidean realizations of unitary representations of the $c$-dual group~$G^c$. Unfortunately, the reconstruction process requires that $G=S\cup S^{-1}$, a property which is brief\/ly discussed in~Section~\ref{se:2.5}.

In Section~\ref{sec:3} we build a bridge between Markov processes and the reconstruction process. To this end, we return to the classical setting,
where the symmetric Lie group is $(\R, \R_+,-\id)$, but the measure space is the path space $P(Q) = Q^\R$ of a polish group $Q$ and the corresponding $(\R,\R_+,-\id_\R)$-measure space is given by a one-parameter group $P_t f = f * \mu_t$, $t \geq 0$, of left invariant Markov kernels on some space $L^2(Q,\nu)$. Here our main result concerns a factorization of the measure space $P(Q)$ as $Q \times P_*(Q)$ and a corresponding factorization of the measure preserving action of $\R$. For the special case where $\mu_t$ is the gaussian semigroup on $\R^d$, the corresponding measure on $P_*(\R^d)$
is the Wiener measure, but the measure on $P(\R^d)$ is a product of Lebesgue and Wiener measure. The main results in Section~\ref{sec:3} is Theorem~\ref{thm:4.11} which relates all this to the reconstruction process and hence to ref\/lection positive representations.

In Section~\ref{sec:2} we discuss the second quantization functor and how it can be used to derive from an orthogonal ref\/lection positive representation a gaussian $(G,S,\tau)$-probability spaces (Proposition~\ref{pr:4.6}). To understand the ambiguity in this construction we also
discuss equivalence of gaussian measures for reproducing kernel Hilbert spaces, and in Subsection~\ref{se:4.4} we connect this issue
to our previous work~\cite{NO14} on distributions on $G$ and distribution vectors of a unitary representation of $G$. In Theorem~\ref{thm:3.18} we determine when a gaussian measure $\gamma_{\cH}$ of a Hilbert space~$\cH$ can be realized in the dual of the space $\bigcap \cD (A^n)$ of smooth vector of a self adjoint operator~$A$. This is then applied to a unitary representation of~$G$ in Corollary~\ref{con:3.18}.

The article ends with two appendixes where needed material on stochastic processes and Markov semigroups is collected.

\section{Ref\/lection positive representations}\label{sec:0}

In this preliminary section we collect some results and def\/initions from \cite{NO14, NO15a} concerning ref\/lection positive representations.

\begin{Definition} \label{def:1.2} A {\it reflection positive Hilbert space} is a triple $(\cE,\cE_+,\theta)$, where $\cE$ is a Hilbert space, $\theta$ a unitary involution and $\cE_+$ a closed subspace which is {\it $\theta$-positive} in the sense that the hermitian form $\la v,w\ra_\theta := \la \theta v, w \ra$ is positive semidef\/inite on~$\cE_+$.
\end{Definition}

For a ref\/lection positive Hilbert space $(\cE,\cE_+,\theta)$, let $\cN:=\{u \in\cE_+\colon \du{\theta u}{u}=0\}$ and let $\hat\cE$ be the completion of $\cE_+/\cN$ with respect to the inner product $\due_\theta$. Let $q \colon \cE_+ \to \hat\cE, v \mapsto q(v)=\hat v$ be the canonical map. Then $\cE_+^\theta := \{ v \in \cE_+ \colon \theta v = v\}$ is the maximal subspace of $\cE_+$ on which $q$ is isometric.

\begin{Definition} Let $(\cE,\cE_+,\theta)$ be a ref\/lection positive Hilbert space. If $\cE_0 \subeq \cE_+^\theta$ is a closed subspace,
$\cE_- := \theta(\cE_+)$, and $E_0, E_\pm$ the orthogonal projections onto $\cE_0$ and $\cE_\pm$, then we say that $(\cE,\cE_0, \cE_+,\theta)$ is a
{\it reflection positive Hilbert space of Markov type} if
\begin{gather}\label{eq:Markov}
E_+ E_0 E_- = E_+ E_-.
\end{gather}
\end{Definition}

\begin{Lemma}\label{le:Markov} Suppose that $\cE$ is a Hilbert space, $\theta\colon \cE\to \cE$ is a unitary involution, $\cE_+$ is a closed
subspace and $\cE_0\subseteq \cE_+^\theta$. Let $\cE_-=\theta (\cE_+)$.
 \begin{itemize}\itemsep=0pt
 \item[\rm{(i)}] If $(\cE,\cE_0,\cE_+,\theta)$ satisfies the Markov condition~\eqref{eq:Markov}, then $\cE_+$ is $\theta$-positive, $\cE_0
 =\cE_+^\theta$, $\cN=\cE_+\ominus \cE_0$ and $q\colon \cE_0\to \hat\cE$ is a unitary isomorphism.
 \item[\rm{(ii)}] If $q_0\colon \cE_0\to \hat\cE$ is a unitary isomorphism, then~\eqref{eq:Markov} holds.
 \end{itemize}
\end{Lemma}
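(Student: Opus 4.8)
The plan is to translate every hypothesis into a statement about the orthogonal projections and then exploit the single structural fact that $\theta$ intertwines $E_+$ and $E_-$. Since $\theta$ is a unitary involution and $\cE_- = \theta(\cE_+)$, I have $E_- = \theta E_+\theta$, equivalently $\theta E_+ = E_-\theta$. Because $\cE_0 \subseteq \cE_+^\theta$, every vector of $\cE_0$ lies in $\cE_+$ and is $\theta$-fixed, so $\cE_0 \subseteq \cE_+ \cap \cE_-$; this yields $E_+E_0 = E_0E_+ = E_0 = E_-E_0 = E_0E_-$ together with $\theta E_0 = E_0\theta = E_0$. The first reduction is then purely formal: using $E_+E_0 = E_0$ and $E_0E_- = E_0$, the left-hand side of the Markov condition collapses to $E_+E_0E_- = E_0$, so \eqref{eq:Markov} is equivalent to $E_+E_- = E_0$. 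Finally, observing that the hermitian form $\due_\theta$ on $\cE_+$ is represented by the operator $E_+\theta E_+$ and that $E_+E_- = E_+\theta E_+\theta$, the Markov condition is also equivalent to $E_+\theta E_+ = E_0$. These equivalences are the backbone of both implications.

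For part (i) I assume $E_+E_- = E_0$ and compute the $\theta$-form on $\cE_+$. For $v \in \cE_+$ I have $v = E_+v$ and $\theta v = E_-\theta v$, whence $\langle \theta v, v\rangle = \langle E_+E_-\theta v, v\rangle = \langle E_0\theta v, v\rangle = \langle E_0 v, v\rangle = \|E_0 v\|^2$, using $E_0\theta = E_0$. This single identity delivers everything at once: positivity of $\due_\theta$ on $\cE_+$ is immediate; $\cN = \{v \in \cE_+ \colon \|E_0 v\| = 0\} = \cE_+ \ominus \cE_0$; and for $v \in \cE_+^\theta$ the equality $\|v\|^2 = \langle \theta v, v\rangle = \|E_0 v\|^2$ forces $v = E_0 v \in \cE_0$, giving $\cE_0 = \cE_+^\theta$. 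Since $q$ vanishes on $\cN$ and $\cE_+ = \cE_0 \oplus \cN$, the image $q(\cE_0) = q(\cE_+)$ is dense in $\hat\cE$; as $q$ is isometric on $\cE_0 \subseteq \cE_+^\theta$ its image is also closed, so $q\colon \cE_0 \to \hat\cE$ is a unitary isomorphism.

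For part (ii), where $\theta$-positivity is implicit since it is needed even to define $\hat\cE$ and $q_0$, I run the argument backwards, and the work lies in recovering the operator identity from the abstract unitarity of $q_0$. First I show $\cN = \cE_+ \ominus \cE_0$: the inclusion $\cN \subseteq \cE_+ \ominus \cE_0$ holds because for $v \in \cN$ and $u \in \cE_0$ one has $\langle v, u\rangle = \langle \theta v, u\rangle = \langle q v, q u\rangle_{\hat\cE} = 0$; conversely, if $v \in \cE_+ \ominus \cE_0$ then surjectivity of $q_0$ provides $u \in \cE_0$ with $q(v) = q(u)$, and testing against $q(\cE_0)$ forces $u = 0$, so $v \in \cN$. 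With $\cE_+ = \cE_0 \oplus \cN$ in hand, I compute $E_+\theta v$ for $v \in \cE_+$: splitting $v = v_0 + v_1$ along this decomposition and using that $q$ is isometric on $\cE_0$ and kills $\cN$, I match $\langle E_+\theta v, u\rangle$ with $\langle E_0 v, u\rangle$ for all $u \in \cE_+$, obtaining $E_+\theta E_+ = E_0$. Multiplying on the right by $\theta$ and invoking $E_- = \theta E_+\theta$ and $E_0\theta = E_0$ then gives $E_+E_- = E_0$, i.e., the Markov condition.

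The main obstacle is the reverse implication (ii): unlike (i), where a single inner-product computation delivers all four conclusions, here I must first convert the Hilbert-space statement that $q_0$ is onto into the orthogonal splitting $\cE_+ = \cE_0 \oplus \cN$, and only then can I pin down the action of $E_+$ on $\cE_- = \theta(\cE_+)$. Keeping track of where $\theta$-positivity is genuinely used, and being careful that $q(\cE_0)$ is closed so that density upgrades to honest surjectivity, are the points demanding the most care.
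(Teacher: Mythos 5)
Your proof is correct. In part (i) it follows essentially the paper's route: the paper likewise pivots on the operator identity $E_+\theta E_+ = E_0$ and the resulting formula $\la \theta v, v\ra = \|E_0v\|^2$ for $v\in\cE_+$, the only organizational difference being that the paper first obtains $\cE_0=\cE_+^\theta$ by applying both sides of \eqref{eq:Markov} to a vector $u\in\cE_+^\theta\ominus\cE_0$ (right side reproduces $u$, left side kills it), whereas you read all four conclusions off the single inner-product formula. Your preliminary reduction --- that under $\cE_0\subseteq\cE_+^\theta$ the Markov condition is equivalent to $E_+E_-=E_0$, hence to $E_+\theta E_+=E_0$ --- is not stated explicitly in the paper and is a nice structural simplification, since it makes the two implications exact converses of one another. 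Where you genuinely diverge is part (ii): after establishing $\cN=\cE_+\ominus\cE_0$ (by the same computation of testing against $q(\cE_0)$ that the paper uses), the paper finishes geometrically, deriving $\cE_0 = \cE_+\cap\cE_-$ and the orthogonal decomposition $\cE_++\cE_-=\theta(\cN)\oplus\cE_0\oplus\cN$ and reading \eqref{eq:Markov} off from it, while you instead verify the operator identity $E_+\theta E_+=E_0$ directly on the splitting $\cE_+=\cE_0\oplus\cN$ and then invoke your equivalence. The two verifications encode the same orthogonality facts (in particular $\theta(\cN)\perp\cE_+$); yours buys uniformity, with both directions running through one identity, while the paper's explicit decomposition of $\cE_++\cE_-$ is information the authors reuse later, e.g.\ in the remark following Proposition~\ref{prop:3.9} that the Markov condition yields $\cE=\cN\oplus\cE_0\oplus\theta(\cN)$ when $\cE_++\cE_-$ is dense.
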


\begin{proof} (i) By def\/inition, we have $\theta E_+=E_-\theta$ so $\cE^\theta_+ \subseteq \cE_+\cap \cE_-$. Let
$u\in\cE^\theta_+\ominus \cE_0$. Then the operator on the right hand side of \eqref{eq:Markov} reproduces $u$ while
the left hand side yields $0$. Hence $u=0$ and $\cE_0=\cE_+^\theta$. As $\cE_0\subset \cE_+^\theta$ it follows that
$E_0 \theta = \theta E_0 = E_0$ which implies that
\begin{gather*} E_+ \theta E_+ = E_+ E_- \theta = E_+ E_0 E_- \theta = E_0 \theta E_+ = E_0 E_+ = E_0 .\end{gather*}
It follows that, for $u\in\cE_+$, we have
\begin{gather*}\du{\theta u}{u}= \du{E_+\theta E_+u}{u}=\du{E_0u}{u}=\|E_0u\|^2\ge 0 .\end{gather*}
We obtain in particular that $\|q(v)\| = \|E_0v\|$ for $v \in \cE_+$, so that
$q\res_{\cE_0} \colon \cE_0 \to \hat\cE$ is a unitary isomorphism.

(ii) Since $q\res_{\cE_+^\theta}$ is isometric, we obtain
$\cE_0=\cE_+^\theta$ and thus $\cN=\cE_+ \cap \theta(\cE_0)^\bot
= \cE_+\ominus\cE_0$. This leads to the orthogonal decomposition
$\cE_+ + \cE_- = \theta(\cN) \oplus \cE_0 \oplus \cN$
and to $\cE_0 = \cE_+ \cap \cE_-$. Now~\eqref{eq:Markov} follows.
\end{proof}

In the following, if $(\cE,\cE_0,\cE_+,\theta)$ satisf\/ies the Markov
condition~(\ref{eq:Markov}), then we will always assume that
$\cE_0=\cE_+^\theta$ and hence only write $(\cE,\cE_+,\theta)$.

\begin{Definition} If $\tau$ is an involutive automorphism of the Lie group $G$, then we call $(G,\tau)$ a~{\it symmetric Lie group}.
A~{\it symmetric semigroup} is a triple $(G,S,\tau)$, where $(G,\tau)$ is a symmetric Lie group and $S \subeq G$ is a subsemigroup satisfying
\begin{itemize}\itemsep=0pt
\item[\rm(S1)] $S$ is invariant under $s \mapsto s^\sharp = \tau(s)^{-1}$,
\item[\rm(S2)] $HS = S$ for $H := (G^\tau)_0$,
\item[\rm(S3)] $\1 \in \oline S$.
\end{itemize}

We def\/ine a left invariant {\it partial order $\prec_S$} on $G$ by $g \prec_S h$ if $g^{-1}h \in S$, i.e., $h \in gS$.
\end{Definition}

\begin{Example} \label{ex:semigroups}
(1) $(\R,\R_+, -\id_\R)$ and $(\Z,\N_0, -\id_\Z)$ are the most
elementary examples of symmetric semigroups.

(2) Semigroups with polar decomposition: Let $(G,\tau)$ be a symmetric Lie group and $H$ be an open subgroup of $G^\tau :=\{g\in G\colon \tau (g)=g\}$. We denote the derived involution $\mfg \to \mfg $ by the same letter and def\/ine $\mfh = \{x\in \mfg \colon \tau (x)=x\} =\mfg^\tau$ and $\mfq =\{x\in \mfg\colon \tau (x)=-x\}=\mfg^{-\tau}$. Then $\mfg = \mfh \oplus\mfq$. We say that the open subsemigroup $S\subeq G$ has a polar decomposition if there exists an $H$-invariant open convex cone $C\subset \mfq$ containing no af\/f\/ine lines such that $S=H\exp C$ and the map $H\times C\to S$, $h,X\mapsto h\exp X$ is a dif\/feomorphism (cf.~\cite{HO96, La94,Ne00}). Typical examples are the complex
Olshanski semigroups in complex simple Lie groups such as $\SU_{p,q}(\C)_\C \cong \SL_{p+q}(\C)$. They exist if and only if $G/K$ is a bounded symmetric domain. This is equivalent to the existence of a $G$-invariant convex cone $C\subset i\mfg$ such that $G\exp C$ is a semigroup. More generally we have the causal symmetric spaces of non-compact type like $\SO_o (1,n+1)/\SO_o (1,n)$. In this case $\fq\simeq \R^{n+1}$ and $C$ can be
taken as the open light-cone.

(3) The simply connected covering group $G := \tilde\SL_2(\R)$ of $\SL_2(\R)$ carries an involution $\tau$ acting on $\fsl_2(\R)$ by
\begin{gather*} \tau\pmat{x & y \\ z & -x} = \pmat{x & -y \\ -z & -x},\end{gather*}
and there exists a closed subsemigroup $S \subeq G$ whose boundary is
\begin{gather*} \partial S = H(S) := S \cap S^{-1} = \exp(\fb) \qquad \mbox{with}
\quad \fb := \left\{ \pmat{ x & y \\ 0 & -x} \colon x,y \in \R\right\}.\end{gather*}
This semigroup satisf\/ies $S^\sharp = S$, the subgroup $H(S)$ is $\tau$-invariant, but strictly larger than $G^\tau_0$ (see also Section~\ref{se:2.5} for more on this semigroup).
\end{Example}

\begin{Definition} \label{def:1.4}
For a symmetric semigroup $(G,S,\tau)$, a unitary representation $U$ of $G$ on $(\cE,\cE_+,\theta)$ is called {\it reflection positive} if $\theta U_g \theta= U_{\tau(g)}$ for $g \in G$ and $U_s\cE_+ \subeq \cE_+$ for every $s \in S$.
\end{Definition}

\begin{Remark} Let $G_\tau =G\rtimes \{1,\tau\}$. Then $\theta U_g\theta =U_{\tau (g)}$ holds for every $g \in G$ if and only if $U$ extends to
a unitary representation of $G_\tau$ by def\/ining $U_\tau = \theta$.
\end{Remark}

\begin{Remark} \label{rem:dil}
(a) If $(U_g)_{g \in G}$ is a ref\/lection positive representation of $(G,S,\tau)$ on $(\cE,\cE_+,\theta)$, then we obtain contractions
$(\hat U_s)_{s \in S}$, on $\hat\cE$, determined by
\begin{gather*} \hat U_s \circ q = q \circ U_s\res_{\cE_+},\end{gather*}
and this leads to an involutive representation $(\hat U,\hat\cE)$ of~$S$ by contractions (cf.\ \cite[Corollary~ 3.2]{JOl98} and~\cite{NO14}). We then call $(U,\cE,\cE_+,\theta)$ a~{\it euclidean realization} of $(\hat U,\hat\cE)$. We refer to~\cite{MNO15} and~\cite{LM75} for methods for obtaining a unitary representation of $G^c$ from $\hat U$.

(b) For $(G,S,\tau) = (\R,\R_+,-\id_\R)$, continuous ref\/lection positive unitary one-parameter groups $(U_t)_{t \in \R}$ lead to a strongly continuous semigroup $(\hat U,\hat\cE)$ of hermitian contractions and every such semigroup $(C,\cH)$ has a natural euclidean realization obtained as the GNS representation associated to the positive def\/inite operator-valued function $\vphi(t) := C_{|t|}$, $t \in \R$ \cite[Proposition~6.1]{NO15a}.
\end{Remark}

The following proposition is a generalization of \cite[Proposition~5.17]{NO15a} which applies to the special case $(\R,\R_+, -\id_\R)$.

\begin{Proposition} \label{prop:3.9}
Let $(U_g)_{g \in G}$ be a reflection positive unitary representation of $(G,S,\tau)$ on $(\cE, \cE_+, \theta)$, let $\cE_0 \subeq (\cE_+)^\theta$ be a subspace and $\Gamma=q|_{\cE_0}\colon \cE_0\to\hE$. If $(\cE,\cE_0,\cE_+,\theta)$ is of Markov type, then the following assertions hold:
\begin{itemize}\itemsep=0pt
\item[\rm(i)] The reflection positive function $\vphi \colon G \to B(\cE_0)$, $\vphi(g) := E_0 U_g E_0$, is multiplicative on $S$.
\item[\rm(ii)] $\vphi(s) = \Gamma^* \hat U_{s} \Gamma$ for $s\in S$, i.e., $\Gamma$ intertwines $\vphi\res_{S}$ with $\hat U$.
\end{itemize}
\end{Proposition}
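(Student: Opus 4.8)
The plan is to establish (ii) first and obtain (i) as a formal consequence. The starting point is Lemma~\ref{le:Markov}(i): since $(\cE,\cE_0,\cE_+,\theta)$ is of Markov type we have $\cE_0=\cE_+^\theta$ and the map $\Gamma=q|_{\cE_0}\colon\cE_0\to\hat\cE$ is a \emph{unitary} isomorphism, so that $\Gamma\Gamma^*=\id_{\hat\cE}$ and $\Gamma^*\Gamma=\id_{\cE_0}$. The one genuinely computational input I need is an explicit description of the adjoint $\Gamma^*\colon\hat\cE\to\cE_0$.

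To get it, I would compute, for $w\in\cE_+$ and $\xi\in\cE_0$,
\begin{gather*}
\langle\Gamma^*q(w),\xi\rangle=\langle q(w),q(\xi)\rangle_{\hat\cE}=\langle\theta w,\xi\rangle=\langle w,\xi\rangle=\langle E_0w,\xi\rangle,
\end{gather*}
where the second equality is the definition of the inner product on $\hat\cE$, the third uses $\theta^*=\theta$ together with $\theta\xi=\xi$ (valid as $\xi\in\cE_+^\theta$), and the last uses $\xi\in\cE_0$. Since $\Gamma^*q(w)\in\cE_0$, this yields the key identity $\Gamma^*q(w)=E_0w$ for all $w\in\cE_+$.

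For (ii), I take $s\in S$ and $\xi\in\cE_0\subseteq\cE_+$. Then $U_s\xi\in U_s\cE_+\subseteq\cE_+$, so the defining relation $\hat U_s\circ q=q\circ U_s|_{\cE_+}$ of Remark~\ref{rem:dil}(a) applies and, combined with the adjoint formula,
\begin{gather*}
\Gamma^*\hat U_s\Gamma\,\xi=\Gamma^*\hat U_sq(\xi)=\Gamma^*q(U_s\xi)=E_0U_s\xi=E_0U_sE_0\xi=\vphi(s)\xi,
\end{gather*}
using $E_0\xi=\xi$. Hence $\vphi(s)=\Gamma^*\hat U_s\Gamma$. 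Part (i) then follows formally: for $s,t\in S$ one has $st\in S$, and since $(\hat U_s)_{s\in S}$ is a representation of $S$ by contractions (Remark~\ref{rem:dil}(a)) and $\Gamma\Gamma^*=\id_{\hat\cE}$,
\begin{gather*}
\vphi(s)\vphi(t)=\Gamma^*\hat U_s(\Gamma\Gamma^*)\hat U_t\Gamma=\Gamma^*\hat U_s\hat U_t\Gamma=\Gamma^*\hat U_{st}\Gamma=\vphi(st).
\end{gather*}
That $\vphi$ is a reflection positive function is the general fact that compressing a reflection positive representation by the orthogonal projection onto a subspace of $\cE_+^\theta$ yields a reflection positive function (cf.~\cite{NO14}).

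The conceptual heart --- and the only place where the Markov hypothesis is essential --- is the surjectivity of $\Gamma$, i.e.\ the identity $\Gamma\Gamma^*=\id_{\hat\cE}$. Without it $\Gamma\Gamma^*$ would merely be an orthogonal projection of $\hat\cE$, and the insertion step in the final display would break, so that $\vphi$ would in general fail to be multiplicative on $S$. Everything else is a direct unwinding of the definitions of $q$, of the inner product $\langle\,\cdot\,,\,\cdot\,\rangle_\theta$, and of the contraction representation $\hat U$.
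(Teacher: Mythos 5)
Your proof is correct, but it runs in the opposite direction from the paper's and is more self-contained. The paper proves (i) first, by identifying the representation generated by $\cE_0$ with a GNS representation attached to $\vphi$, identifying $(\hat U,\hE)$ with the GNS representation of $\vphi\res_S$, and then invoking two external results: surjectivity of the inclusion $\cE_0 \into \cH_{\vphi\res_S}$ (which follows from Lemma~\ref{le:Markov}) is equivalent to multiplicativity of $\vphi\res_S$ by \cite[Lemma~10.3]{NO15a}; for (ii) it quotes \cite[Lemma~5.16(ii)]{NO15a} for the identity $q = \Gamma \circ E_0\res_{\cE_+}$ and then uses the intertwining relation, exactly as you do. You instead prove the identity $\Gamma^* \circ q = E_0\res_{\cE_+}$ by a direct inner-product computation (your computation is sound: the only inputs are the definition of $\due_\theta$, $\theta = \theta^*$, $\theta\xi = \xi$ for $\xi \in \cE_+^\theta$, and self-adjointness of $E_0$), obtain (ii) from the relation $\hat U_s \circ q = q \circ U_s\res_{\cE_+}$, and then get (i) formally by inserting $\Gamma\Gamma^* = \id_{\hE}$ between $\hat U_s$ and $\hat U_t$ and using the semigroup property of $\hat U$. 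This buys a proof whose only prerequisites are Lemma~\ref{le:Markov} and Remark~\ref{rem:dil}(a), and it isolates precisely where Markovianity is used (surjectivity of $\Gamma$; without it $\Gamma\Gamma^*$ is a proper projection and multiplicativity genuinely fails). What the paper's route buys is the embedding of the statement into the reproducing-kernel/GNS picture --- the identification of $\hat\cE$ with $\cH_{\vphi\res_S}$ and of $q$ with restriction of functions to $S$ --- which is reused elsewhere in that circle of papers, whereas your argument, being purely operator-theoretic, does not produce that dictionary. One cosmetic point: reflection positivity of $\vphi$ itself is not proved in the paper's argument either, so handling it by a reference to \cite{NO14} is consistent with the paper's own treatment.
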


\begin{proof}
(i) Let $\cK \subeq \cH$ be the $U$-invariant closed subspace generated by $\cE_0$. Let $(\cE_0)^G$ denote the linear space of all maps $G \to \cE_0$. Then the map
\begin{gather*} \Phi \colon  \ \cK \to (\cE_0)^G, \qquad
\Phi(v)(g) := E_0 U_g v \end{gather*}
is an equivalence of the representation $U$ of $G$ on $\cK$ with the GNS representation def\/ined by~$\vphi$, and the representation $\hat U$ of $S$ on~$\hat\cE$ is equivalent to the GNS representation def\/ined by~$\vphi\res_S$, where the map $q \colon \cE_+ \to \hat\cE$ corresponds to the restriction $f \mapsto f\res_S$ \cite[Proposition~1.11]{NO14}. The inclusion $\iota \colon \cE_0 \into \cH_\vphi$ is given by $\iota(v)(g) = E_0 U_gv = \vphi(g)v$ for $g \in G$, and likewise the inclusion $\hat\iota \colon \cE_0 \into \cH_{\vphi\res_S}$ is given by $\iota(v)(s) = \vphi(s)v$ for $s\in S$. Lemma~\ref{le:Markov} implies the surjectivity of the inclusion $\hat\iota$. In view of \cite[Lemma~10.3]{NO15a}, this is equivalent to the multiplicativity of $\vphi\res_S$.

(ii) If $\Gamma$ is unitary, then \cite[Lemma~5.16(ii)]{NO15a} implies that $q = \Gamma \circ E_0\res_{\cE_+}$. For $s\in S$, the relation $\hat U_s \circ q = q \circ U_s\res_{\cE_+}$ leads to $\hat U_s \Gamma E_0\res_{\cE_+} = \Gamma E_0 U_s\res_{\cE_+}$, so that $\Gamma^* \hat U_s \Gamma = E_0 U_s E_0 = \vphi(s)$, i.e., $\Gamma$ intertwines $\vphi(s)$ with $\hat U_s$.
\end{proof}

\begin{Lemma} Assume that $U$ is a reflection positive representation of $(G,S,\tau)$ on $(\cE,\cE_+,\theta)$. Let $\cE_0\subseteq \cE_+^\theta$. If $\varphi (g)=E_0U_gE_0$ is multiplicative and $\cE_0$ is $S$-cyclic in $\cE_+$ then $(\cE,\cE_+,\theta)$ is of Markov type.
\end{Lemma}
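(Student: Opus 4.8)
This is the converse of Proposition~\ref{prop:3.9}(i), so the plan is to reduce it to the already-established Lemma~\ref{le:Markov}(ii): I would show that $\Gamma := q|_{\cE_0}\colon \cE_0 \to \hat\cE$ is a \emph{unitary isomorphism}, and then that lemma yields the Markov condition~\eqref{eq:Markov} directly. Since $\cE_0 \subseteq \cE_+^\theta$, every $v \in \cE_0$ satisfies $\theta v = v$, whence $\|q(v)\|^2 = \du{\theta v}{v} = \|v\|^2$; thus $\Gamma$ is isometric, and as $\cE_0$ is closed, $\Gamma(\cE_0) = q(\cE_0)$ is a closed subspace of $\hat\cE$. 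Because $q(\cE_+)$ is dense in $\hat\cE$ by construction, and $\cE_+ = \cE_0 \oplus (\cE_+ \ominus \cE_0)$ orthogonally, it suffices to prove that $q$ annihilates $\cE_+ \ominus \cE_0$: this forces $q(\cE_+) = q(\cE_0) = \Gamma(\cE_0)$, which is then simultaneously dense and closed, hence all of $\hat\cE$, so $\Gamma$ is onto.

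The technical heart, which I expect to be the main obstacle, is the operator identity
\begin{gather*} E_0 U_r (I - E_0)|_{\cE_+} = 0 \qquad \text{for every } r \in S, \end{gather*}
and this is precisely the step where multiplicativity of $\varphi$ and $S$-cyclicity of $\cE_0$ have to be used together. First, for $v' \in \cE_0$ and $t \in S$, using $E_0 v' = v'$ and $\varphi(rt) = \varphi(r)\varphi(t)$, I would compute
\begin{gather*} E_0 U_r (I - E_0) U_t v' = E_0 U_{rt} v' - E_0 U_r E_0 U_t v' = \varphi(rt)v' - \varphi(r)\varphi(t)v' = 0. \end{gather*}
Since the finite linear combinations of the vectors $U_t v'$ ($t \in S$, $v' \in \cE_0$) are dense in $\cE_+$ by $S$-cyclicity, and the operators $E_0 U_r$ and $\varphi(r)E_0$ are bounded, the identity propagates from these generating vectors to all of $\cE_+$, which proves the claim.

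Finally I would transfer this to $q$ using reflection positivity. Fix $w' \in \cE_+$ with $E_0 w' = 0$. For $s \in S$ and $v \in \cE_0$, using $\theta v = v$, the intertwining $\theta U_s \theta = U_{\tau(s)}$, and $\tau(s)^{-1} = s^\sharp \in S$,
\begin{gather*} \du{\theta w'}{U_s v} = \du{w'}{U_{\tau(s)} v} = \du{U_{s^\sharp} w'}{v} = \du{E_0 U_{s^\sharp} w'}{v} = 0, \end{gather*}
where the last equality is the core identity applied to $w' = (I-E_0)w' \in \cE_+$ with $r = s^\sharp$. As the vectors $U_s v$ are total in $\cE_+$ by $S$-cyclicity, it follows that $\du{\theta w'}{u} = 0$ for all $u \in \cE_+$; equivalently $\du{q(w')}{q(u)}_{\hat\cE} = 0$ for all such $u$, and since $q(\cE_+)$ is dense this gives $q(w') = 0$. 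Hence $q$ annihilates $\cE_+ \ominus \cE_0$, so by the reduction above $\Gamma$ is a unitary isomorphism, and Lemma~\ref{le:Markov}(ii) completes the proof.
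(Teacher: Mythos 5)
Your proof is correct, and it shares its endpoint with the paper -- both arguments establish that $\Gamma = q\res_{\cE_0}\colon \cE_0 \to \hat\cE$ is a unitary isomorphism and then invoke Lemma~\ref{le:Markov}(ii) -- but the route to surjectivity is genuinely different. The paper's proof is by citation: as in the proof of Proposition~\ref{prop:3.9}(i), the $S$-cyclicity of $\cE_0$ in $\cE_+$ identifies $(\hat\cE,\hat U, q)$ with the GNS data of the $B(\cE_0)$-valued positive definite function $\vphi\res_S$ via \cite[Proposition~1.11]{NO14}, and \cite[Lemma~10.3]{NO15a} asserts that multiplicativity of $\vphi\res_S$ is equivalent to surjectivity of the canonical inclusion $\cE_0 \into \cH_{\vphi\res_S}$. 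You instead prove surjectivity by hand: the operator identity $E_0 U_r u = E_0 U_r E_0 u$ for $u \in \cE_+$, $r \in S$, checked on the cyclic generators $U_t v'$ using multiplicativity and extended by boundedness, followed by the transfer to $q$ through $\theta$, where the relations $\theta U_g \theta = U_{\tau(g)}$ and $s^\sharp = \tau(s)^{-1} \in S$ (property (S1)) are exactly what make the computation close up. In effect you have re-proved the relevant direction of \cite[Lemma~10.3]{NO15a} in this concrete setting. What your version buys is self-containedness and transparency: one sees precisely where multiplicativity, $S$-cyclicity, reflection positivity and the $\sharp$-invariance of $S$ each enter. What the paper's version buys is brevity and the placement of the statement inside the general reproducing-kernel framework of \cite{NO14, NO15a}, where the same equivalence is available for arbitrary operator-valued positive definite functions on $S$, not only those of the form $E_0 U_g E_0$.
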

\begin{proof} As in the proof of Lemma~\ref{le:Markov}, this follows from \cite[Lemma~10.3]{NO15a}.
\end{proof}

\begin{Remark} If $G = S \cup S^{-1}$ and $\cE_0$ is $G$-cyclic in $\cE$. Then $\cE_0$ is $S$-cyclic in $\cE_+$ and $\cE_+ + \cE_-$ is dense in $\cE$, so that the Markov condition leads to $\cE = \cN \oplus \cE_0 \oplus \theta(\cN)$.
\end{Remark}

\section{Ref\/lection positive Lie group actions on measure spaces}\label{sec:1}

In this section we generalize several results from \cite{Kl78} and \cite{KL75} to situations where $(\R,\R_+,-\id_\R)$ is replaced by a symmetric semigroup $(G,S,\tau)$. This leads us to the concept of a $(G,S,\tau)$-measure space generalizing Klein's Osterwalder--Schrader
path spaces for $(\R,\R_+,-\id_\R)$. The Markov $(G,S,\tau)$-measure spaces generalize the path spaces studied by Klein and Landau in~\cite{KL75}.
The main result of this section is the correspondence between $(G,S,\tau)$-measure spaces and the corresponding positive semigroup structures. For $(G,S,\tau)= (\R,\R_+,-\id_\R)$ this has been developed in~\cite{Kl78, KL75}, motivated by Nelson's work on the Feynman--Kac formula~\cite{Nel64}.

\subsection[$(G,S,\tau)$-measure spaces]{$\boldsymbol{(G,S,\tau)}$-measure spaces}
In this section we introduce basic concepts related to~$(G,S,\tau)$-measure spaces and draw some consequences from the def\/initions.

\begin{Definition} \label{def:0.1G} Let $(G,\tau)$ be a group with an involutive automorphism~$\tau$. A {\it $(G,\tau)$-measure space} is a quadruple $((Q,\Sigma,\mu), \Sigma_0, U,\theta)$ consisting of the following ingredients:
\begin{itemize}\itemsep=0pt
\item[\rm(GP1)] a measure space $(Q,\Sigma,\mu)$,
\item[\rm(GP2)] a sub-$\sigma$-algebra $\Sigma_0$ of $\Sigma$,
\item[\rm(GP3)] a measure preserving action $U \colon G \to
\Aut(\cA)$ on the $W^*$-algebra $\cA := L^\infty(Q,\Sigma,\mu)$
which is strongly continuous in measure, i.e.\footnote{Since the $G$-action on $\cA$ is measure preserving, it def\/ines natural
representations on $L^1(Q,\Sigma,\mu)$ and on $L^2(Q,\Sigma,\mu)$.
However, in general we do not have an action of $G$ on the set $Q$ itself,
but $G$ acts naturally on the set~$\Sigma/{\sim}$, where
$A \sim B$ if $\mu(A \Delta B) = 0$. This set corresponds to the idempotents
in the algebra~$\cA$. The continuity condition
(GP3) is equivalent to the continuity of the corresponding
unitary representation of $G$ on $L^2(Q,\Sigma,\mu)$
which in turn is equivalent to $\mu(gE\Delta E) \to 0$ for $g \to\1$ and
$E \in \Sigma$ with $\mu(E) < \infty$.
See the discussion in Appendix~\ref{subsec:6.3} or \cite[p.~107]{Si74}
for more details.},
\begin{gather*} \lim_{g \to \1} \mu(|U_g f -f| \geq \eps) = 0 \qquad \mbox{for} \quad \eps > 0
\qquad \mbox{and} \qquad
f \in \cA \cap L^1(Q,\Sigma,\mu),\end{gather*} and
\item[\rm(GP4)] a measure preserving involutive automorphism $\theta$ of
$L^\infty(Q,\Sigma,\mu)$
for which $\theta U_g \theta = U_{\tau(g)}$ for $g \in G$ and $\theta E_0 \theta = E_0$,
where
$E_0 \colon L^\infty(Q,\Sigma,\mu) \to L^\infty(Q,\Sigma_0,\mu)$ is
the conditional expectation.
\item[\rm(GP5)] $\Sigma$ is generated by the sub-$\sigma$-algebras $\Sigma_g
:= U_g \Sigma_0$, $g \in G$.
\end{itemize}
If $\mu$ is a probability measure, we speak of a {\it $(G,\tau)$-probability space}.
If $(G,S,\tau)$ is a symmetric semigroup, then we
write $\Sigma_\pm$ for the sub-$\sigma$-algebra generated by
$(\Sigma_s)_{s \in S^{\pm 1}}$,
and $E_\pm$ for the corresponding conditional expectations.
\end{Definition}

\begin{Definition} (a) A $(G,\tau)$-measure space is called
{\it reflection positive with respect to $S$} if
\begin{gather*} \la \theta f, f \ra\geq 0 \qquad \mbox{for} \quad f \in
\cE_+ := L^2(Q,\Sigma_+, \mu).\end{gather*}
This is equivalent to $E_+ \theta E_+ \geq 0$ as an operator on
$L^2(Q,\Sigma,\mu)$ and obviously implies $\theta E_0 = E_0$.
If this condition is satisf\/ied and, in addition,
$\Sigma_0$ is invariant under the group $H(S) := S\cap S^{-1}$,
then we call it a {\it $(G,S,\tau)$-measure space}\footnote{Note that $E_+ \theta E_+ \geq 0$ is equivalent to the condition that the kernel
$K(A,B) := \mu(A \cap \theta(B))$ on $\Sigma_+$ is positive def\/inite.}.

(b) A {\it Markov $(G,S,\tau)$-measure space} is a
$(G,S,\tau)$-measure space with the {\it Markov property} $E_+ E_- = E_+ E_0 E_-$
(cf.\ Def\/inition~\ref{def:1.2}(b)).
\end{Definition}

Proposition~\ref{prop:3.9} now implies:

\begin{Proposition} \label{prop:markov}
For any $(G,S,\tau)$-measure space $((Q,\Sigma,\mu), \Sigma_0, U,\theta)$, we put $\cE := L^2(Q,\Sigma,\mu)$, $\cE_0 := L^2(Q,\Sigma_0,\mu)$ and
$\cE_\pm:= L^2(Q,\Sigma_\pm,\mu)$. Then the natural action of $G$ on $\cE$ defines a~reflection positive representation of $(G,S,\tau)$. The Markov property is equivalent to the natural map $\cE_0 \to \hat\cE$ being unitary and this implies that the function $\vphi \colon S \to B(\cE_0), \vphi(s) = E_0 U_s E_0$
is multiplicative.
\end{Proposition}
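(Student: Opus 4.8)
The plan is to deduce the statement from the abstract machinery already in place, namely Lemma~\ref{le:Markov} and Proposition~\ref{prop:3.9}, by translating the measure-theoretic data into the language of reflection positive Hilbert spaces and then checking that the hypotheses of those results hold verbatim. First I would note that a measure preserving involutive automorphism $\theta$ of $L^\infty(Q,\Sigma,\mu)$ extends canonically to a unitary involution of $\cE = L^2(Q,\Sigma,\mu)$, and that the measure preserving $G$-action $U$, strongly continuous in measure by (GP3), induces a continuous unitary representation of $G$ on $\cE$ satisfying $\theta U_g\theta = U_{\tau(g)}$ by (GP4). The subspace $\cE_+ = L^2(Q,\Sigma_+,\mu)$ is closed, and its $\theta$-positivity is exactly the defining reflection positivity inequality $\la\theta f,f\ra \geq 0$ for $f \in \cE_+$. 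Hence $(\cE,\cE_+,\theta)$ is a reflection positive Hilbert space.

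To upgrade this to a reflection positive \emph{representation} in the sense of Definition~\ref{def:1.4}, I would verify the invariance $U_s\cE_+ \subseteq \cE_+$ for $s \in S$ using (GP5): the algebra $\Sigma_+$ is generated by the $\Sigma_s = U_s\Sigma_0$ with $s \in S$, and for $s_0 \in S$ one has $U_{s_0}\Sigma_s = U_{s_0 s}\Sigma_0 = \Sigma_{s_0 s}$ with $s_0 s \in S$ since $S$ is a semigroup; thus $U_{s_0}$ carries each generator of $\Sigma_+$ into $\Sigma_+$, whence $U_{s_0}\Sigma_+ \subseteq \Sigma_+$ and $U_{s_0}\cE_+ \subseteq \cE_+$. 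This gives the first assertion. At this point I would record the identification on which everything downstream rests: the conditional expectations $E_0$ and $E_\pm$ onto $L^\infty(Q,\Sigma_0,\mu)$ and $L^\infty(Q,\Sigma_\pm,\mu)$ restrict on $L^2$ to the orthogonal projections onto $\cE_0$ and $\cE_\pm$, so that the measure-theoretic Markov property $E_+E_- = E_+E_0E_-$ coincides literally with the Hilbert-space Markov condition~\eqref{eq:Markov}.

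It remains to place $\cE_0$ correctly, i.e.\ to check $\cE_0 \subseteq \cE_+^\theta$ so that Lemma~\ref{le:Markov} and Proposition~\ref{prop:3.9} apply. The inclusion $\cE_0 \subseteq \cE^\theta$ follows from $\theta E_0 = E_0$, which reflection positivity forces as noted after the definition; and $\cE_0 \subseteq \cE_+$ follows from $\Sigma_0 \subseteq \Sigma_+$, which is immediate when $\1 \in S$ (then $\Sigma_0 = \Sigma_\1 \subseteq \Sigma_+$) and in the general case $\1 \in \oline S$ is obtained from the strong continuity (GP3). With $\cE_0 \subseteq \cE_+^\theta$ established, Lemma~\ref{le:Markov}(i)--(ii) yields precisely the equivalence of the Markov property~\eqref{eq:Markov} with $q|_{\cE_0}\colon \cE_0 \to \hat\cE$ being unitary, which is the second assertion, while Proposition~\ref{prop:3.9}(i) --- applicable once the representation is reflection positive and of Markov type --- gives the multiplicativity of $\vphi(s) = E_0U_sE_0$ on $S$. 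The step I expect to require the most care is not any of the formal citations but the faithful dictionary between the two pictures: verifying that the conditional expectations really are the orthogonal projections and that $\Sigma_0$ genuinely sits inside both $\Sigma_+$ and $\Sigma_-$, since the reduction to the abstract results is exact only if these identifications hold on the nose rather than merely $\mu$-almost everywhere.
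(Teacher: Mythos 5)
Your proposal is correct and takes essentially the same route as the paper: the paper's entire proof of this proposition is the reduction to Proposition~\ref{prop:3.9} (and, through the Markov-type condition, Lemma~\ref{le:Markov}), introduced by the phrase ``Proposition~\ref{prop:3.9} now implies''. You merely make explicit the dictionary the paper leaves implicit --- the invariance $U_{s_0}\cE_+\subseteq\cE_+$ via $U_{s_0}\Sigma_s=\Sigma_{s_0 s}$, the inclusion $\cE_0\subseteq\cE_+^\theta$ obtained from $\1\in\oline S$ together with (GP3) and the closedness of $\cE_+$, and the identification of the conditional expectations $E_0$, $E_\pm$ with the orthogonal projections onto $\cE_0$, $\cE_\pm$ (including $\theta(\cE_+)=\cE_-$, which follows from (GP4) and (S1)) --- and these identifications all check out.
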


\begin{Definition} \label{def:1.5G}
(a) A {\it positive semigroup structure} $(\cH, P, \cA, \Omega)$ for a
symmetric semigroup $(G,S,\tau)$ consists of
\begin{itemize}\itemsep=0pt
\item[\rm(PS1)] a Hilbert space $\cH$,
\item[\rm(PS2)] a strongly continuous $*$-representation
$(P_s)_{s \in S}$ of $(S,\sharp)$ by contractions on~$\cH$,
\item[\rm(PS3)] a commutative von Neumann algebra $\cA$ on $\cH$ normalized
by the operators $(P_h)_{h \in H(S)}$, and
\item[\rm(PS4)] a unit vector $\Omega \in \cH$, such that
 \begin{itemize}\itemsep=0pt
 \item[\rm(a)] $P_s\Omega = \Omega$ for every $s \in S$.
 \item[\rm(b)] $\Omega$ is cyclic for the (not necessarily selfadjoint)
subalgebra $\cB \subeq B(\cH)$ generated by $\cA$ and $\{ P_s \colon s \in S\}$.
 \item[\rm(c)] For positive elements $A_1,\ldots, A_n \in \cA$
and $s_1,\ldots, s_{n-1} \in S$, we have
\begin{gather*} \la A_1 P_{s_1} A_2 \cdots P_{s_{n-1}} A_n \Omega, \Omega \ra \geq 0.\end{gather*}
 \end{itemize}
\end{itemize}

(b) A {\it standard positive semigroup structure} for a
symmetric semigroup $(G,S,\tau)$ consists of a $\sigma$-f\/inite measure
space $(M, \fS, \nu)$ and
\begin{itemize}\itemsep=0pt
\item[\rm(SPS1)] a representation
$(P_s)_{s \in S}$ of $S$ on $L^\infty(M,\nu)$ by positivity preserving operators, i.e.,
$P_s f \geq 0$ for $f \geq 0$.
\item[\rm(SPS2)] $P_s 1 = 1$ for $s \in S$ (the Markov condition).
\item[\rm(SPS3)] $P$ is involutive with respect to $\nu$, i.e.,
\begin{gather*} \int_M P_s(f) h d\nu = \int_M f P_{s^\sharp}(h)\, d\nu \qquad \mbox{for} \quad
s \in S, \quad f,h \geq 0.\end{gather*}
\item[\rm(SPS4)] $P$ is strongly continuous in measure, i.e.,
for each $f \in L^1(M ,\nu) \cap L^\infty(M,\nu)$ and every $\delta > 0$,
$s_0 \in S$, we have
$\lim\limits_{s \to s_0} \nu(\{|P_s f - P_{s_0}f| \geq \delta\}) = 0$.
\end{itemize}
\end{Definition}

The preceding def\/inition generalizes the corresponding classical concepts
for the case $(G,S,\tau)$ $= (\R,\R_+,-\id_\R)$ (\cite{Kl78} for~(a) and \cite{KL75} for~(b)).

\begin{Remark} (a) For an operator $P$ on
$L^2(X,\mu)$, the condition to be positivity preserving does not imply that
$P$ is symmetric. In fact, if $(Pf)(x) = f(Tx)$ for some measure preserving
map $T \colon X \to X$, then $P$ is positivity preserving and isometric.
Hence we cannot expect $P$ to be symmetric if $T$ is not an involution.

(b) Applying (SPS3) with $h = 1$, we obtain $\int_M P_s(f)\, d\mu = \int_M f\, d\mu$,
i.e., that the measure $\mu$ is invariant under the operators $(P_s)_{s \in S}$.

(c) If $P \colon M \times \fS \to [0,\infty]$ is a Markov kernel
(cf.\ Appendix~\ref{subsec:6.4}), then{\samepage
\begin{gather*} (Pf)(x) := \int_M P(x,dy) f(y) \end{gather*}
def\/ines a positivity preserving operator on $L^\infty(M,\fS,\nu)$ satisfying $P1 = 1$.}

In view of \cite[Lemma~36.2]{Ba96}, kernels on $M \times \fS$ are
in one-to-one correspondence with additive, positively homogeneous maps
$T$ from the convex cone of non-negative measurable functions $M \to [0,\infty]$ into itself which are {\it Daniell continuous} in the
sense that $\lim\limits_{n\to \infty} T(f_n) = T\big(\lim\limits_{n\to \infty} f_n\big)$ for monotone sequences
$f_n \leq f_{n+1}$.\footnote{The notion of a positivity preserving
operator on $L^\infty(M,\fS,\nu)$ is slightly weaker than this concept.
In particular, it only operates on equivalence classes of functions
in $L^\infty(X,\fS,\nu)$ and not on functions itself. In concrete situations,
the positivity preserving operators actually come from kernels, which makes
them easier to deal with.}
\end{Remark}

\begin{Remark} \label{rem:2.7}
We consider a standard positive semigroup structure for $(G,S,\tau)$.

(a) Conditions (SPS1/2) imply that
$\|P_s\| \leq 1$ on $L^\infty(M,\nu)$.
Further (SPS2/3) imply that the restriction of $P_s$ to
$L^1 \cap L^\infty$ is measure preserving, and from \cite[Proposition~1.2(i)]{KL75}
it follows that $P_{s^\sharp s}$ def\/ines a contraction on $L^2(M,\nu)$.
This implies that we obtain a $*$-representation of~$(S,\sharp)$ on~$L^2(M,\nu)$.
The proof of \cite[Proposition~1.2(ii)]{KL75} further shows that~(SPS4)
implies that this representation is strongly continuous because we have for
$f \in L^1 \cap L^2$:
\begin{align*}
 \|P_sf - P_{s_0}f\|_2^2
&\leq
\int_{\{|P_s f - P_{s_0} f|\geq \eps\}} |P_s f - P_{s_0}f|^2\, d\nu
+ \int_{\{|P_s f - P_{s_0} f|\leq \eps\}}|P_s f - P_{s_0}f|^2\, d\nu \\
&\leq
(2\|f\|_\infty)^2 \nu(\{|P_s f - P_{s_0} f|\geq \eps\})
+ 2\eps \|f\|_1.
\end{align*}

For $h \in H(S)$, both operators $P_h$ and $P_{h^{-1}} = P_h^{-1}$
are positivity preserving and f\/ix $1$, and since
$L^\infty(M,\nu)$ is a commutative von Neumann algebra,
they are algebra automorphisms by \cite[Theorem~3.2.3]{BR87}.
The relation $P_h(fg) = P_h(f) P_h(g)$ then leads to
\begin{gather*}
 P_h M_f P_h^{-1} = M_{P_h(f)}
\end{gather*}
for the multiplication operators $M_f$. In particular, the
action of $H(S)$ on $L^2(M,\nu)$ normalizes $L^\infty(M,\nu)$.

(b) If $\nu$ is a probability measure, $\Omega := 1$ and $\cA := L^\infty(M,\nu)$, then the preceding discussion shows that
we also have a~positive semigroup structure in the sense of Def\/inition~\ref{def:1.5G}(a)
for which~$1$ is a cyclic vector for~$\cA$. Note that Def\/inition~\ref{def:1.5G}(b) is not a special case of (a) because
it does not require the measure $\nu$ to be f\/inite.
\end{Remark}

The following proposition shows that the requirement that $\Omega$ is cyclic for $\cA$ describes those positive semigroup
structures which are standard.

\begin{Proposition}[\protect{\cite[Proposition~3.5]{Kl78}}] \label{prop:2.19}
Let $(M,\fS,\nu)$ be a probability space and $(P_s)_{s \in S}$ be a~positivity
preserving continuous $*$-representation of $S$ by contractions on $L^2(M,\nu )$, i.e.,
\begin{gather*} P_s 1 = 1 \qquad \mbox{and} \qquad
 P_s f \geq 0 \qquad \mbox{for} \quad f\geq 0, \quad s\in S.\end{gather*}
Then $(L^2(M), Q, L^\infty(M), 1)$ is a positive semigroup structure for which~$1$ is a cyclic vector for~$L^\infty(M)$.

Conversely, let $(\cH,P, \cA, \Omega)$ be a positive semigroup structure for
which $\Omega$ is a cyclic vector for~$\cA$. Then there exists a probability
space~$M$ and a positivity preserving semigroup~$\tilde Q$ on~$L^2(M)$ such that
$(\cH,P,\cA,\Omega) \cong (L^2(M), \tilde Q, L^\infty(M), 1)$ as positive
semigroup structures.
\end{Proposition}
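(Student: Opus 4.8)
The plan is to treat the two implications separately, in each case verifying the axioms (PS1)--(PS4) of a positive semigroup structure. For the forward implication I would set $\cH = L^2(M,\nu)$, let $\cA = L^\infty(M,\nu)$ act by multiplication operators $M_f$, and take $\Omega = 1$; since $\nu$ is a probability measure, $\|1\|_2^2 = \nu(M) = 1$, so $\Omega$ is a unit vector. Axioms (PS1) and (PS2) are precisely the hypotheses on $L^2(M,\nu)$ and $(P_s)_{s\in S}$. For (PS3) the argument of Remark~\ref{rem:2.7}(a) applies verbatim: for $h \in H(S)$ the operators $P_h$ and $P_{h^{-1}}$ are positivity preserving and fix $1$, so by \cite[Theorem~3.2.3]{BR87} they are $*$-automorphisms of $L^\infty(M,\nu)$, whence $P_h M_f P_h^{-1} = M_{P_h f}$ and $H(S)$ normalizes $\cA$. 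Axiom (PS4)(a) is exactly the Markov condition $P_s 1 = 1$, and (PS4)(b) holds because $\cA\cdot 1 = L^\infty(M,\nu)$ is dense in $L^2(M,\nu)$, so $1$ is already cyclic for $\cA$, a fortiori for $\cB$. Finally, (PS4)(c) follows by pushing positivity through the product from the right: for $f_1,\dots,f_n \geq 0$ in $L^\infty$ one has $M_{f_n}\Omega = f_n \geq 0$, and since each $P_{s_i}$ is positivity preserving and each $M_{f_i}$ multiplies by a nonnegative function, the vector $M_{f_1}P_{s_1}\cdots P_{s_{n-1}}M_{f_n}\Omega$ is a nonnegative function, so $\la M_{f_1}P_{s_1}\cdots M_{f_n}\Omega, 1\ra = \int_M(\cdots)\,d\nu \geq 0$.

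For the converse I would start from the cyclicity of $\Omega$ for the commutative von Neumann algebra $\cA$. Commutativity together with cyclicity forces $\Omega$ to be separating, so the normal state $A \mapsto \la A\Omega,\Omega\ra$ is faithful, and the spectral theorem for commutative von Neumann algebras (equivalently the GNS construction for this state) yields a probability space $(M,\fS,\nu)$---the total mass being $\|\Omega\|^2 = 1$---and a unitary $W \colon \cH \to L^2(M,\nu)$ with $W\Omega = 1$ carrying $\cA$ onto the multiplication algebra $L^\infty(M,\nu)$. Being a $*$-isomorphism of von Neumann algebras, $W(\cdot)W^{-1}$ is an order isomorphism, so positive elements of $\cA$ correspond to nonnegative functions. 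I then transport the semigroup by $\tilde Q_s := W P_s W^{-1}$; the contraction, $*$-representation, and strong continuity properties survive the unitary conjugation, and $\tilde Q_s 1 = W P_s \Omega = W\Omega = 1$ gives the Markov condition from (PS4)(a). The only substantive point is positivity preservation, which is exactly what (PS4)(c) encodes in the case $n=2$: for nonnegative $u,v \in L^\infty(M,\nu)$, writing $A := W^{-1}M_u W$ and $B := W^{-1}M_v W$ (positive elements of $\cA$), one computes
\begin{gather*} \int_M (\tilde Q_s u)\, v \, d\nu = \la M_v \tilde Q_s M_u 1, 1\ra = \la B P_s A\, \Omega, \Omega\ra \geq 0. \end{gather*}
Letting $v \geq 0$ range over $L^\infty(M,\nu)$ forces $\tilde Q_s u \geq 0$ for nonnegative $u \in L^\infty$, and density of the nonnegative part of $L^\infty$ in that of $L^2$ together with the contraction property extends this to all $u \geq 0$. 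Hence $W$ is an isomorphism of positive semigroup structures onto $(L^2(M),\tilde Q,L^\infty(M),1)$.

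The routine parts are the axiom checks and the transport of the structure under $W$. The step I expect to be the main obstacle is the spectral representation in the converse: one must produce not merely a unitary equivalence of $\cA$ with a multiplication algebra, but one over a genuine \emph{probability} space (guaranteed by $\|\Omega\|=1$) that sends $\Omega$ to the constant function $1$ \emph{and} respects the order structure, since it is precisely this order isomorphism that allows the abstract positivity condition (PS4)(c) to be read off as pointwise positivity preservation of $\tilde Q_s$.
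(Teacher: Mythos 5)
Your proof is correct and takes essentially the same route as the paper's: the forward direction by direct verification of (PS1)--(PS4) (with the normalization of $\cA$ by $(P_h)_{h \in H(S)}$ borrowed from Remark 2.7(a)), and the converse via the Gelfand/GNS realization of the commutative algebra $\cA$ through the vector state $\la \cdot\,\Omega,\Omega\ra$, with cyclicity supplying the unitary onto $L^2(M,\nu)$ and conjugation transporting the semigroup. The only difference is one of detail: you make explicit the deduction of positivity preservation of $\tilde Q_s$ from (PS4)(c) with $n=2$ plus a density argument, a point the paper compresses into ``it is clear that (SPS1--3) are satisfied.''
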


\begin{proof} The f\/irst part is an immediate consequence of the def\/initions
(see also Remark~\ref{rem:2.7}(a)), so we only have to prove the second statement.
So let $(\cH,P, \cA, \Omega)$ be a positive semigroup structure for which $\Omega$ is a cyclic vector for~$\cA$.
Let $M$ be a compact space with $\cA \cong C(M)$. Then we obtain on $M$ a~probability measure~$\nu$,
def\/ined by
\begin{gather*} \int_{M} f(x)\, d\nu(x) = \la f \Omega, \Omega \ra \qquad \mbox{for} \quad f \in \cA.\end{gather*}
Since $\Omega$ is $\cA$-cyclic, the map
$C(M) \to \cH, f \mapsto f \Omega$ extends to a unitary map
$L^2(M,\nu) \to \cH$ intertwining $\cA$ with the von Neumann algebra $L^\infty(M,\nu)$.
It is clear that (SPS1-3) are satisf\/ied and Lemma~\ref{lem:6.7} implies the
continuity in measure.
\end{proof}

\begin{Example}[the real oscillator semigroup] \label{ex:osci}
We consider the Hilbert space $\cH = L^2(\R^d)$, with respect to Lebesgue measure.

(a) On $\cH$ we have a unitary representation by the group $\GL_d(\R)$ by
\begin{gather*} (T_h f)(x) := |\det(h)|^{-d/2} f\big(h^{-1}x\big)\qquad \mbox{for} \quad h \in \GL_d(\R), \quad x \in \R^d,\end{gather*}
and we also have two representations of the abelian semigroup
$\Sym_d(\R)_+$ (the convex cone of positive semidef\/inite matrices):
\begin{itemize}\itemsep=0pt
\item[\rm(a)] Each $A \in \Sym_d(\R)_+$ def\/ines a multiplication operator
$(M_A f)(x) := e^{-\la Ax,x\ra} f(x)$ which is positivity preserving on $L^\infty(\R^n)$
but does not preserve~$1$; it preserves the Dirac measure~$\delta_0$ in the origin.
\item[\rm(b)] Each $A \in \Sym_d(\R)_+$ specif\/ies a uniquely determined
(possibly degenerate) gaussian measure $\mu_A$ on $\R^d$ whose Fourier transform is
given by $\hat\mu_A(x) = e^{-\la Ax,x\ra/2}$. Then
the convolution operator $C_A f := f * \mu_A$ is positivity preserving
and leaves Lebesgue measure on $\R^d$ invariant. For $A = \1$,
we thus obtain the heat semigroup as $(\mu_{t\1})_{t \geq 0}$.
\end{itemize}

Any composition of these $3$ types of operators $T_h, M_A$ and $C_A$
is positivity preserving on $L^\infty(\R^d)$, and they generate a~$*$-representation of the Olshanski semigroup $S := H \exp(C)$ in the symmetric Lie group
$G := \Sp_{2d}(\R)$, where $H = \GL_d(\R)$, $C = \Sym_d(\R)_+ \times \Sym_d(\R)_+ \subeq
\Sym_d(\R)^2 = \fq$, and
\begin{gather*} \tau\pmat{A & B \\ C & - A^\top} = \pmat{A & -B \\ -C & - A^\top}
\qquad \mbox{for} \quad
\pmat{A & B \\ C & - A^\top} \in \sp_{2d}(\R) \end{gather*}
with $B^\top = B$, $C^\top = C$ (cf.\ Examples~\ref{ex:semigroups}).
The real Olshanski semigroup $S$ is the f\/ixed point set an antiholomorphic involutive
automorphism of the so-called oscillator semigroup $S_\C = G^c \exp(W)$ which is a~complex Olshanskii semigroup~\cite{Hi89, How88}.
The elements in the interior of~$S$ act on~$L^2(\R^n)$ by kernel operators
with positive gaussian kernels and the elements of~$S_\C$ correspond to
complex-valued gaussian kernels. The semigroup~$S$ contains many
interesting symmetric one-parameter semigroups such as the Mehler semigroup
$e^{-t H_{\rm osc}}$ generated by the oscillator Hamiltonian
\begin{gather} \label{eq:osc-ham}
H_{\rm osc}= -\sum_{j = 1}^n \partial_j^2 + \frac{1}{4} \sum_{j = 1}^n x_j^2 - \frac{d}{2} \1,
\end{gather}
which f\/ixes the Gaussian $e^{-\|x\|^2/4}$.

(b) The parabolic subgroup $P := \Sym_d(\R) \rtimes \GL_d(\R)_+ \subeq \Sp_{2d}(\R)$
is also a symmetric Lie subgroup with
$\fh = \gl_d(\R)$ and $\fq \cong \Sym_d(\R)$.
Here the commutative von Neumann algebra $\cA = L^\infty(\R^d)$ is invariant under
conjugation with the operators $T_h$, so that
$(A,h) \mapsto C_A T_h$ def\/ines a $*$-representation of the semigroup
$S := \Sym_d(\R)_+ \rtimes H$ that leads to a standard positive semigroup
structure on $L^2(\R^d,dx)$.
\end{Example}

The preceding example naturally extends to inf\/inite-dimensional spaces as follows.

\begin{Example} Let $\cH$ be a real Hilbert space.

(a) In Lemma~\ref{lem:3.6} below we show that every
continuous $*$-representation of a topological involutive semigroup
$(S,\sharp)$ by contractions on a real Hilbert space $\cH$ def\/ines a
standard positive semigroup structure on the gaussian probability space
$(\cH^a,\gamma_\cH)$. Here $\cH^a$ is the algebraic dual space of~$\cH$
and $\gamma_\cH$ is the unique probability measure with Fourier
transform $\hat\gamma_\cH(v) = e^{-\|v\|^2/2}$
(Example~\ref{ex:1.1}).

(b) Some of the structure from Example~\ref{ex:osci} extends
to inf\/inite-dimensional Hilbert spaces~$\cH$.
For the multiplication operator $M_A$ to be non-zero, one has
to require that $A \geq 0$ is trace class (cf.\ \cite[p.~153]{Ya85}).
Likewise, the measures $\mu_A$ exist if $A$ is trace class. In an orthonormal
basis $(e_n)_{n \in \N}$ in which $A$ is diagonal with $A e_n = a_n e_n$, the
measure $\mu_A$ is $\otimes_{n = 1}^\infty \gamma_{a_n t}$, where
$d\gamma_t(x) = (2\pi t)^{-1/2} e^{-x^2/2t}\, dx$ are the centered gaussian
measures on $\R$. For a linear operator $h \in \GL(\cH)$ to act on
$L^2(\cH^a, \gamma_\cH)$, it is necessary and suf\/f\/icient that
$g^\top g - \1$ is Hilbert--Schmidt, i.e., that the
polar decomposition has the form $g = u e^X$ with $\|X\|_2 < \infty$
(see also Theorem~\ref{thm:gauss-ker-equiv} and~\cite{NO02} for more details).

(c) For $\cH = \R^d$, we have $d\gamma(x) := d\gamma_\cH(x) = (2\pi)^{-d/2} e^{-\frac{\|x\|^2}{2}}\, dx$
(Example~\ref{ex:gauss-rd}), we obtain in particular a standard positive semigroup
structure for the contraction semigroup
\begin{gather*} \cC\big(\R^d\big):= \{ g \in \GL_d(\R) \colon \|g\| \leq 1\} = \OO_d(\R) \exp(-\Sym_d(\R)_+),\end{gather*}
where $\Sym_d(\R)_+$ denotes the closed convex cone of positive semidef\/inite
symmetric $(d\times d)$-matrices. The corresponding Markov operators are given by
$(\Gamma(h)f)(x) = f(h^{-1}x)$ for $h \in \OO_d(\R)$. Since every element
$s \in \cC(\R^d)$ has a polar decomposition $s = h\exp(-X)$, $X^\top = X \geq 0$,
diagonalization of $X$ reduces the description of the corresponding operator
to the case $d = 1$. For $0 < c < 1$ we have
\begin{gather*} (\Gamma(c)f)(x)
= \int_{\R^d} \tilde K_c(x,y)f(y)\, d{\gamma}(y)
\qquad \mbox{with} \quad \tilde K_c(x,y)
= \big(1-c^2\big)^{-d/2} e^{\frac{\|y\|^2}{2}} e^{- \frac{ \|cx-y\|^2}{2(1-c^2)}}
\end{gather*}
(cf.~\cite[p.~218]{Nel73b}). For $c = e^{-t}$, $t \geq 0$, we thus obtain
the {\it Mehler semigroup}, also called the {\it Ornstein--Uhlenbeck semigroup}.
It can also be described by the {\it Mehler formula} \cite[Section~1.4]{Bo98}
\begin{gather} \label{eq:mehler}
 (\Gamma(c)f)(x) = \int_{\R^d} f(cx + \sqrt{1 - c^2}y)\, d\gamma(y).
\end{gather}
These operators form a hermitian strongly continuous contraction semigroup for which
$\gamma$ is the unique invariant probability measure.
In view of \cite[Section~2.9]{Bo98}, \eqref{eq:mehler} still holds for
inf\/inite-dimensional spaces.

(d) For $\cH = \R^d$, it is instructive to connect the gaussian picture to the Lebesgue
picture. To this end, we note that
\begin{gather*} \Phi \colon \ L^2\big(\R^d\big) \to L^2\big(\R^d,\gamma\big), \qquad
\Phi(f)(x) = (2\pi)^{d/4} e^{\|x\|^2/4} f(x) \end{gather*}
is a unitary isomorphism. Conjugating with $\Phi$, we therefore obtain a
$*$-representation $\pi$ of the real Olshanski semigroup $S$
(Example~\ref{ex:osci}) on $L^2(\R^d,\gamma)$. This transformation has no
ef\/fect on the multiplication operators $M_A$ because they commute with $\Phi$
but it transforms the convolution operators $C_A$ into more complicated Markov
operators. For instance the Laplace operator $\Delta = \sum_j \partial_j^2$ transforms
into
\begin{gather*} \Delta' := \Phi \Delta \Phi^{-1}
= \Delta + \frac{\|x\|^2}{4} - \left(E + \frac{d}{2} \1\right), \qquad \mbox{where} \quad
E = \sum_j x_j \partial_j \end{gather*}
is the Euler operator. The unitary representation of $H = \GL_d(\R)$ transforms into
\begin{gather*}
T_h' f = e^{\frac{1}{4}(\|x||^2 - \|h^{-1}x\|^2)} (T_h f)(x) = |\det h|^{-d/2} e^{\frac{1}{4}(\|x||^2 - \|h^{-1}x\|^2)} f\big(h^{-1}x\big).
\end{gather*}
As the operators $\Phi$ and $\Phi^{-1}$ are positivity preserving, the semigroup
$S$ also acts by positivity preserving operators on~$L^2(\R^n,\gamma)$. If
$s \in S$ preserves~$1$, then the transformed operator~$s'$
preserves the Gaussian~$e^{\|x\|^2/4}$.

(e) The oscillator Hamiltonian $H_{\rm osc}$ from~\eqref{eq:osc-ham} transforms into
\begin{gather*} -\Delta':= \Phi H_{\rm osc} \Phi^{-1}
= \Phi \circ \left(-\Delta + \frac{\|x\|^2}{4} - \frac{d}{2}\1\right)
\circ \Phi^{-1}
= E - \Delta = \sum_j x_j \partial_j - \partial_j^2,\end{gather*}
which also makes sense in the inf\/inite-dimensional case, where
$\Delta'$ is called {\it Umemura's Lap\-la\-ce--Beltrami operator}
\cite[p.~221]{Hid80}. It is the inf\/initesimal generator of the Mehler
semigroup~$e^{t \Delta'}$.
\end{Example}

\begin{Example}[a f\/inite-dimensional example] \label{ex:findim}
(a) We consider the f\/inite set $M = \{1,\ldots, n\}$, endowed with the counting measure,
so that $L^2(M,\nu) \cong \R^d$. Let $S \subeq \GL_n(\R)$ be the subsemigroup of invertible doubly stochastic
matrices $S = (S_{ij})$, i.e., $\sum_i S_{ij} = \sum_j S_{ij} = 1$ for all~$i$,~$j$.
According to the Birkhof\/f--von Neumann theorem, $S \subeq \conv(H)$, where
$H = S \cap S^{-1} \cong S_n$ is the group of permutation matrices.
Then $S\subeq \GL_n(\R)$ a transposition stable subsemigroup for which the action on $\R^n \cong L^\infty(M)$ by matrix multiplication
def\/ines a standard positive semigroup structure.

(b) A natural generalization of the preceding example is obtained by starting with a probability space $(Q,\Sigma,\nu)$ and the group
$H := \Aut(Q,\Sigma,\nu)$ of measure preserving automorphisms of $(Q,\Sigma)$.
Then the probability measures $\mu$ on $Q$ def\/ining by $P_\mu(f)(x) := \int_H f(h^{-1}x)\, d\mu(h)$ invertible operators on $L^\infty(Q,\Sigma,\nu)$
naturally generalizes the semigroup $S \subeq \GL_d(\R)$ from~(a).
\end{Example}

\subsection{Stochastic processes indexed by Lie groups}
We now introduce the concept of stochastic processes where the more
common index set $\R$, thought of as a~time, is replaced by a~Lie group. The
forward direction is then given by a~semi\-group contained in~$G$.

\begin{Definition} \label{def:g-process}
Let $(Q,\Sigma,\mu)$ be a probability space. A~{\it stochastic process} indexed by a group~$G$ is a~family $(X_g)_{g \in G}$ of measurable
functions $X_g \colon Q \to (B,\fB)$, where $(B,\fB)$ is a~measurable space, called the {\it state space} of the process.

For any such process, we obtain a measurable map $\Phi \colon Q \to B^G$, $\Phi(q)=(X_g(q))_{g \in G}$
with respect to the product $\sigma$-algebra $\fB^G$.
Then $\nu := \Phi_*\mu$ is a measure on $B^G$, called the
{\it distribution of the process $(X_g)_{g \in G}$}.
This measure is uniquely determined by the measures $\nu_\bg$ on $G^n$,
obtained for any tuple $\bg := (g_1, \ldots, g_n) \in G^n$
as the image of $\mu$ under the map
\begin{gather*} X_\bg = (X_{g_1}, \ldots, X_{g_n}) \colon \  Q \to B^n\end{gather*}
(cf.~\cite[Section~1.3]{Hid80}). If $\bg=(g)$ for some $g\in G$ then we write $\nu_g$ for $\nu_{\bg}$.

The process $(X_g)_{g \in G}$ is called {\it stationary} if the corresponding distribution on $B^G$ is invariant under the translations
\begin{gather*} (U_g \nu )_h := \nu_{g^{-1}h} \qquad \mbox{for} \quad g,h \in G.\end{gather*}
If $\tau \in \Aut(G)$ is an automorphism, then we call the process {\it $\tau$-invariant} if the distribution is invariant under
\begin{gather*} (\tau\nu )_h := \nu_{\tau^{-1}(h)} \qquad \mbox{for} \quad h \in G.\end{gather*}
\end{Definition}

The following def\/inition generalizes some concepts from $G = \R$ to general topological
groups (cf.~\cite[Section~2]{Kl77}). We only formulate the def\/inition for real valued functions, but
will use without further comments the complex analogue where in the $L^2$-inner products the second factor has to be conjugated.

\begin{Definition} \label{def:1.13} (a) A {\it linear stochastic process} (indexed by the group $G$)
is a stochastic process $(\phi(v,g))_{v \in V, g \in G}$,
where $V$ is a real linear space, $\phi$ is linear in~$v$
and each $\phi(v,g)$ is a real valued measurable function on a f\/ixed
probability space $(Q,\Sigma,\mu)$.

(b) A linear stochastic process is said to be of {\it second order} if each
$\phi(v,g)$ is square integrable and the stochastic process is {\it full} in
the sense that, up to sets of measure zero,
$\Sigma$ is the smallest $\sigma$-algebra
for which all functions $\phi(v,g)$ are measurable.

(c) That $\phi$ is {\it continuous in the quadratic mean} means that $V$ is a topological
vector space, $\phi $ is of second order and $\phi \colon V \times G \to L^2(Q,\Sigma,\mu)$
is continuous.

(d) Assume that $\phi$ is of second order. Then $\phi$ is {\it (wide sense) stationary} means that the kernel
\begin{gather*} K((v,g), (v',g')) := \E\big(\oline{\phi(v,g)}\phi(v',g')\big) = \int_Q \oline{\phi(v,g)}\phi(v',g')\, d\mu\end{gather*}
on $V \times G$ satisf\/ies the invariance condition
\begin{gather*} K((v,g), (v',g')) = K((v,g_0 g), (v',g_0 g'))\qquad \mbox{for} \quad g_0, g,g'\in G, \quad v,v' \in V.\end{gather*}
\end{Definition}

Let $\phi$ be a stochastic process. To adapt better to the discussion in the previous section, we assume that $\phi (v,g)$ is complex-valued and that $V$ is a complex vector space. The real case can be treated in the same way.
If~$\phi$ is a linear stochastic process of second order, then
the square integrability of the functions $\phi(v,g)$
implies that the covariance kernel
\begin{gather*} K((v,g), (v',g')) := \E\big(\oline{\phi(v,g)}\phi(v',g')\big)
= \int_Q \oline{\phi(v,g)}\phi(v',g')\, d\mu\end{gather*}
is def\/ined. It is a positive def\/inite kernel on $V \times G$, and the map
\begin{gather*} \Phi \colon \  L^2(Q,\Sigma,\mu) \to \cH_K \subeq \C^{V \times G}, \qquad
\Phi(F)(v,g) := \la F, \phi(v,g) \ra \end{gather*}
is a partial isometry of $L^2(Q,\Sigma,\mu)$ onto the reproducing kernel Hilbert space
$\cH_K\subeq \C^{V \times G}$. Its adjoint $\Phi^*$ is an isometric embedding of $\cH_K$
whose range is the closed subspace on $L^2(Q,\Sigma,\mu)$ generated by the
$(\phi(v,g))_{v \in V, g \in G}$. Furthermore, $\phi$ is continuous in the quadratic mean if and only if~$K$ is continuous.

We will from now on assume that $\phi $ is continuous in the quadratic mean. Then
\begin{gather*} (v,v') := \E\big( \phi(v,\1)\oline{\phi(v',\1)}\big)
= \int_Q \phi(v,\1)\oline{\phi(v',\1)}\, d\mu \end{gather*}
def\/ines a continuous positive semidef\/inite form on $V$, and in the following
we may therefore replace $V$ by the complex Hilbert space
$\cF$ obtained by completing the quotient
\begin{gather*} V/\{v \in V \colon (v,v) = 0\} \end{gather*} with respect to the induced norm.
We thus obtain a stochastic process $\phi(w,g)$ on $\cF \times G$.
Now we may consider $\cF$ as a closed subspace of $L^2(Q,\Sigma,\mu)$ and write
$E_0 \colon L^2(Q,\Sigma,\mu) \to \cF$ for the orthogonal projection.

If $K$ is continuous and $\cF$ is a complex Hilbert space,
then $K$ also def\/ines a positive def\/inite $B(\cF)$-valued kernel
\begin{gather*} K^\cF \colon \ G \times G \to B(\cF), \qquad
\big\la K^\cF(g,h)v, w \big\ra = K((v,h), (w,g)), \end{gather*}
so that we can realize the closed subspace of $L^2(Q,\mu)$ generated by the
$\phi(v,g)$ as the reproducing kernel Hilbert space
$\cH_{K^\cF} \subeq \C^\cF$ of $\cF$-valued functions on $G$.
Here $\cF$ is isomorphic to the closed subspace of $L^2(Q,\Sigma,\mu)$ generated by the functions $(\phi(v,\1))_{v \in V}$.

Consider now (d) and assume that $\varphi$ is continuous in the quadratic
mean. This condition ensures that we obtain a unitary representation
$(U_g)_{g \in G}$ on the subspace $\cH_K \subeq L^2(Q,\mu)$ satisfying
\begin{gather*} (U_g \phi)(v,h) = \phi(v,gh), \qquad v \in V, \quad g,h \in G.\end{gather*}
Under assumption (c), we thus obtain a strongly continuous
$B(\cF)$-valued positive def\/inite function $r(g) := E_0 U_g E_0$.
The invariance condition leads to a $\Sesq (V)$-valued positive def\/inite function on~$G$
as discussed in~\cite{NO15b}, but we will not follow up on that in this article.

\begin{Example} \label{ex:2.9} Let $(G,\tau)$ be a symmetric Lie group and
$(X_g)_{g \in G}$ be a stationary, $\tau$-invariant, full stochastic process on
$(Q,\Sigma_Q,\mu_Q)$. Then its distribution $(B^G, \fB^G, \nu)$ satisf\/ies the conditions
(GP1,2,4,5) of a $(G,\tau)$-probability space with respect to the natural actions of $G$ and
$\tau$ on $B^G$, where $\Sigma_0$ is the $\sigma$-algebra generated by
$(X_h)_{h \in H}$, i.e., the smallest subalgebra for which these functions are measurable.
In this context (GP3) is equivalent to the continuity of the unitary representation of
$G$ on $L^2(B^G,\fB^G,\nu)$. We refer to Appendix~\ref{subsec:6.3} for a more detailed discussion of the continuity condition.
\end{Example}

\begin{Example}[cf.\ Proposition~\ref{prop:markov}]
Suppose that $((Q,\Sigma,\mu), \Sigma_0, U,\theta)$ is a
$(G,\tau)$-probability space, so that we obtain on
$\cE := L^2(Q,\Sigma,\mu)$ a unitary representation of $G$.
The subspace $\cE_0 := L^2(Q,\Sigma_0,\mu)$ is cyclic under the algebra
$\cB$ generated by $(U_g)_{g \in G}$ and the multiplication operators
$M_f$, $f \in L^\infty(Q,\Sigma_0,\mu)$. The prescription
\begin{gather*} \phi(v,g) := \pi(g)v, \qquad g \in G, \quad v \in \cE_0 \end{gather*}
def\/ines a wide sense stationary process indexed by $\cE_0 \times G$.

The scalar product on $L^2(Q,\Sigma,\mu)$ is completely determined by
the $n$-point functionals:
\begin{gather*} \la U_{g_1} M_{f_1} \cdots M_{f_{n-1}} U_{g_{n}} M_{f_n}v, w \ra_{\cE_0}, \qquad
v,w \in \cE_0,\quad g_j \in G,\quad f_j \in L^\infty(Q,\Sigma_0,\mu),\end{gather*}
resp.,
\begin{gather*} \la U_{g_1} M_{f_1} \cdots U_{g_{n}} f_n, f_0\ra_{\cE_0}
= \E\big(f_0 \phi(f_1, g_1) \phi(f_2, g_1 g_2) \cdots \phi(f_n, g_1 \cdots g_{n})|
\Sigma_0\big),\end{gather*}
or the operators
\begin{gather*} E_0 U_{g_1} M_{f_1} \cdots M_{f_{n-1}} U_{g_{n}} E_0, \qquad
g_j \in G,\quad f_j \in L^\infty(Q,\Sigma_0,\mu).\end{gather*}

In general $\cE_0$ is not cyclic for $G$, so the process $\phi(v,g)$ provides a means to construct the whole space
$L^2(Q,\Sigma,\mu)$ in the spirit of a GNS construction for $\cB$.
\end{Example}

\begin{Example} In \cite{Si74} free euclidean f\/ields
(scalar of positive mass $m > 0$) are dealt with as gaussian processes
indexed by the Hilbert space $N_m$ def\/ined by the norm
\begin{gather*} \|f\|_{N_m}^2 =2 \int_{\R^{d+1}} \frac{|\hat f(k)|^2}{k^2 + m^2}\, d^2 k
\qquad \mbox{for} \quad f \in \cS\big(\R^{d+1}\big), \end{gather*}
so that the corresponding euclidean Hilbert space is the Fock space $\Gamma(N_m)$ (see Section~\ref{se:SecondQ}).
Here $N_m$ carries a unitary representation of the euclidean group $E(d) = \Mot(\R^d)$, and since the time translation group has continuous spectrum,
the corresponding action on $\Gamma(N_m)$ is ergodic by Theorem~\ref{thm:3.28} below (for time translations, space translations and the full euclidean group).

\cite[Theorem~III.6]{Si74} provides a Feynman--Kac--Nelson formula for the gaussian free f\/ield. In this context, Nelson's main achievement in \cite{Nel64} was that he obtained a manifestly euclidean invariant path integral.
\end{Example}

\subsection{Associated positive semigroup structures and reconstruction}

Our goal in this section is to prove the \textit{reconstruction theorem} (Theorem~\ref{thm:1.8G}).
This is a~central result which asserts that positive semigroup structures all come from $(G,S,\tau)$-measure spaces.

We start with the following lemma which is an adaption of \cite[Lemma~2.2]{Kl78} and \cite[Corollary~1.5]{KL75} to $(G,S,\tau)$.

\begin{Lemma} \label{lem:1.6G}
Let $((Q,\Sigma,\mu), \Sigma_0, U,\theta)$ be a $(G,S,\tau)$-measure space, $\cA := L^\infty(Q,\Sigma_0,\mu)$,
$\cE := L^2(Q,\Sigma,\mu)$, $\cE_0 := L^2(Q,\Sigma_0,\mu)$, and $q \colon \cE_+ := L^2(Q,\Sigma_+, \mu) \to \hat\cE$ be the canonical map. Then the following assertions hold:
\begin{itemize}\itemsep=0pt
\item[\rm(a)] For $f \in \cA$, let $M_f$ denote the corresponding multiplication operator on $\cE$. Then there exists a bounded operator $\hat M_f \in B(\hat\cE)$ with $q \circ M_f\res_{\cE_+} = \hat M_f \circ q$ and $\|\hat M_f \| = \|f\|_\infty$.
\item[\rm(b)] $\pi(f) := \hat M_f$ is a faithful weakly continuous representation
of the commutative von Neumann algebra $\cA$ on $\hat\cE$.
\item[\rm(c)] In the Markov case we identify $\hat\cE$ with
$\cE_0$ and $q$ with $E_0$ $($Proposition~{\rm \ref{prop:markov})}.
For $g_1 \prec_S g_2 \prec_S \cdots \prec_S g_n$ in $G$,
non-negative functions $f_1, \ldots, f_n \in \cA$ and $f_{g_j} := U_{g_j} f_j$, we
have
\begin{gather*} \int_Q f_{g_1} \cdots f_{g_n}\, d\mu
= \int_Q \hat M_{f_1} \hat U_{g_1^{-1} g_2} \cdots \hat M_{f_{n-1}}
\hat U_{g_{n-1}^{-1} g_n} \hat M_{f_n} 1\, d\mu. \end{gather*}
\end{itemize}
If, in addition, $\mu$ is finite and positive,
then $\Omega := \mu(Q)^{-1/2} q(1)$ satisfies:
\begin{itemize}\itemsep=0pt
\item[\rm(d)] For $g_1 \prec_S g_2 \prec_S \cdots \prec_S g_n$ in $G$,
$f_1, \ldots, f_n \in \cA$ and $f_{g_j} := U_{g_j} f_j$, we have
\begin{gather*} \int_Q f_{g_1} \cdots f_{g_n}\, d\mu
= \big\la \hat M_{f_1} \hat U_{g_1^{-1} g_2} \cdots \hat M_{f_{n-1}}
\hat U_{g_{n-1}^{-1} g_n} \hat M_{f_n} \Omega, \Omega \big\ra. \end{gather*}
\item[\rm(e)] $\Omega$ is a separating vector for $\cA$
and $\hat U_s \Omega = \Omega$ for every $s \in S$.
\item[\rm(f)] $\Omega$ is cyclic for the algebra $\cB$ generated by $\cA$ and
$(\hat U_s)_{s \in S}$.
\end{itemize}
\end{Lemma}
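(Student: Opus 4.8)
The plan is to handle (a)--(b) by descending the multiplication operators, (c)--(d) by an induction resting on the Markov property, and (e)--(f) by direct computation with the cyclic vector.

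For (a), the key observation is that every $f\in\cA$ is $\Sigma_0$-measurable, hence $\theta$-invariant (from $\theta E_0=E_0$), so $M_f$ commutes with $\theta$ and preserves $\cE_+=L^2(Q,\Sigma_+,\mu)$ because $\Sigma_0\subseteq\Sigma_+$. To descend $M_f$ to $\hat\cE$ I would work with the positive semidefinite form $\langle\theta\,\cdot\,,\cdot\,\rangle$ on $\cE_+$: for $\|f\|_\infty\le 1$ and $g:=\sqrt{1-|f|^2}$ (again $\Sigma_0$-measurable and $\theta$-fixed) one gets $\langle\theta u,u\rangle-\langle\theta M_fu,M_fu\rangle=\langle\theta M_gu,M_gu\rangle\ge 0$ since $M_gu\in\cE_+$. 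This shows $M_f$ maps $\cN$ into $\cN$ and that the induced $\hat M_f$ satisfies $\|\hat M_f\|\le\|f\|_\infty$; the reverse inequality follows by restricting to $\cE_0\subseteq\cE_+^\theta$, where $q$ is isometric and $M_f$ has operator norm exactly $\|f\|_\infty$. Part (b) is then formal: multiplicativity and the $*$-relation $\hat M_f^*=\hat M_{\overline f}$ follow from $q\circ M_f=\hat M_f\circ q$ and density of $q(\cE_+)$; faithfulness again uses the isometric copy $q(\cE_0)$; and normality follows because the matrix coefficients $\langle\hat M_fq(u),q(v)\rangle=\int_Q f\,E_0\big((\theta u)\overline v\big)\,d\mu$ are $\sigma$-weakly continuous in $f$.

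For (c) I would first use that $U_{g_1^{-1}}$ is measure preserving to reduce to $g_1=\1$ and set $s_j:=g_j^{-1}g_{j+1}\in S$, and then prove by induction on $n$ the identity $E_0\big(f_1\,U_{s_1}(f_2\cdots U_{s_{n-1}}f_n)\big)=\hat M_{f_1}\hat U_{s_1}\cdots\hat M_{f_n}1$ in $\cE_0$, using the Markov identification $q\cong E_0$, $\hat U_s=E_0U_sE_0$. The inductive step reduces to the single Markov identity $E_0U_sE_+=E_0U_sE_0$ for $s\in S$, which I expect to be the technical heart of (c): for $v\in\cE_+\ominus\cE_0$ and $w\in\cE_0=\cE_+^\theta$ one has $\langle U_sv,w\rangle=\langle v,U_{s^{-1}}w\rangle$ with $U_{s^{-1}}w\in\cE_-$, so invoking $E_+E_-=E_+E_0E_-$ together with $E_0v=0$ kills this inner product. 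Integrating the resulting $\cE_0$-identity gives (c). Part (d) is (c) read through $\int_Q(\cdot)\,d\mu=\langle\,\cdot\,,1\rangle$: one checks $\langle\hat M_{f_1}\hat U_{s_1}\cdots\hat M_{f_n}q(1),q(1)\rangle=\langle\theta(f_{g_1}\cdots f_{g_n}),1\rangle=\int_Q f_{g_1}\cdots f_{g_n}\,d\mu$ and rewrites $q(1)$ through the unit vector $\Omega=\mu(Q)^{-1/2}q(1)$.

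Part (e) is immediate: $U_s$ fixes the constant function $1$, so $\hat U_s\Omega=q(U_s1)/\sqrt{\mu(Q)}=\Omega$, and if $\pi(f)\Omega=0$ then $q(f)=0$, whence $f=0$ because $q$ is isometric on $\cE_0$; thus $\Omega$ is separating. For (f) I would compute that a general monomial sends $\Omega$ to $q$ of an ordered product: $\hat M_{a_0}\hat U_{s_1}\cdots\hat U_{s_k}\hat M_{a_k}\Omega$ is proportional to $q\big(a_0\,U_{s_1}(a_1\cdots U_{s_k}a_k)\big)=q(a_0\,U_{s_1}a_1\cdots U_{s_1\cdots s_k}a_k)$, whose times $\1\prec_S s_1\prec_S s_1s_2\prec_S\cdots$ form a $\prec_S$-chain because partial products stay in $S$; since $\cB$ is generated by the $\hat M_f$ and the $\hat U_s$ (with $\hat U_{s^\sharp}=\hat U_s^*$), $\cB\Omega$ is exactly the span of $q$ of such ordered products. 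As $\hat\cE=\overline{q(\cE_+)}$ and $\Sigma_+$ is generated by the $\Sigma_s$, $s\in S$, density of $\cB\Omega$ reduces to showing that ordered products are $q$-dense among all products $\prod_jU_{s_j}f_j$. I expect this to be the main obstacle overall, since incomparable times cannot be reordered in general; it becomes transparent exactly when $S$ is total, so that $\prec_S$ is a total order and every finite product is a chain --- the setting in which the reconstruction theorem is applied.
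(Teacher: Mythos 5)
Your treatment of (a)--(e) is correct and follows the same architecture as the paper's proof, with two local differences worth recording. For the bound $\|\hat M_f\|\le \|f\|_\infty$ in (a), the paper estimates $\la \theta M_{|f|^2}v,v\ra \le \|f\|_\infty^2\|v\|^2$ against the \emph{ambient} norm of $\cE$ and then has to boost this to the $\hat\cE$-norm by applying it to the powers $f^n$ and citing Lemma~II.3.8 of Neeb's book; your substitute --- writing $1-|f|^2=g^2$ with $g=\sqrt{1-|f|^2}\in\cA$ and using $\la\theta M_g u,M_g u\ra\ge 0$ --- produces the estimate directly against $\|\hat u\|^2=\la\theta u,u\ra$ in one step, so it is self-contained and avoids the iteration (it also makes the invariance $M_f\cN\subeq\cN$ automatic). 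In (c), the paper conjugates to get $M_{f_{g_1}}\cdots M_{f_{g_n}}U_{g_n}=U_{g_1}M_{f_1}U_{g_1^{-1}g_2}\cdots M_{f_n}$, pushes this through the intertwining relations $q\circ U_s=\hat U_s\circ q$ and $q\circ M_f=\hat M_f\circ q$, applies it to $1$ and integrates; the Markov property enters only through the identification $q\cong E_0$, $\hat U_s=E_0U_sE_0$ of Proposition~\ref{prop:markov}. Your induction is equivalent, and your derivation of $E_0U_sE_+=E_0U_sE_0$ from $E_+E_-=E_+E_0E_-$ is a correct direct proof of exactly that intertwining relation in the Markov picture; it is more self-contained but proves nothing the cited propositions do not already give. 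Parts (d) and (e) agree with the paper, and your explicit matrix-coefficient formula $\la\hat M_f q(u),q(v)\ra=\int_Q f\,E_0\big((\theta u)\bar v\big)\,d\mu$ supplies the weak continuity in (b) that the paper asserts without detail.

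On (f) your diagnosis is accurate, and the honest flag you raise deserves to be compared with what the paper actually does. You correctly identify $\cB\Omega$ with the span of $q$-images of chain-ordered products $a_0\,U_{t_1}a_1\cdots U_{t_k}a_k$, $\1\prec_S t_1\prec_S\cdots\prec_S t_k$, so (f) amounts to density of these images in $\hat\cE$, whereas $\cE_+$ is only known to be spanned by products $\prod_j U_{s_j}f_j$ over \emph{arbitrary} finite tuples in $S$; incomparable factors cannot be merged into a chain, and chain-indexed cylinder sets do not form a $\pi$-system, so the monotone-class argument that gives density of general products does not apply to chains. The paper's entire proof of (f) is the sentence ``since $\cE_0$ is $U_S$-cyclic in $\cE_+$, this follows from the fact that $1$ is $\cA$-cyclic in $L^2(Q,\Sigma_0,\mu)$,'' and the assertion ``$\cE_0$ is $U_S$-cyclic in $\cE_+$'' is precisely the chain-density statement you isolate, stated there without proof. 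So you have not failed to reproduce an argument the paper possesses; you have made explicit the point the paper glosses over. When $G=S\cup S^{-1}$ --- the hypothesis under which the lemma feeds into Theorems~\ref{thm:1.8G} and~\ref{thm:1.8G2} --- the order $\prec_S$ is total, every finite product reorders into a chain, and both your argument and the paper's close.
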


\begin{proof} (a, b) Since $\theta$ commutes with $E_0$, it preserves the subspace
$\cE_0 := L^2(Q,\Sigma_0, \mu)$. Ref\/lec\-tion positivity and $\cE_0 \subeq \cE_+$
now imply that $\theta(f) = f$ for $f \in \cE_0$.
The preceding argument implies that~$M_f$ commutes with~$\theta$.
As $M_f$ and its adjoint $M_f^* = M_{\oline f}$ preserve $\cE_\pm$, these operators also preserve
$\cN := \cE_+ \cap \theta(\cE_-)^\bot$, so that we obtain a well-def\/ined operator $\hat M_f \colon q(\cE_+) \to \hat \cE.$
For $v \in \cE_+$, we then have
\begin{gather*} \big\| \hat M_f \hat v \big\|^2
= \la \theta M_f v, M_f v \ra = \la \theta M_{|f|^2} v, v \ra \leq \|f\|_\infty^2 \|v\|^2.\end{gather*}
Applying this estimate to the functions $(f^n)_{n \in \N_0}$, we get
$\|\hat M_f\| \leq \|f\|_\infty$ from \cite[Lemma~II.3.8]{Ne00}.
Since $q$ is isometric on $\cE_0$ and $\|M_f\res_{\cE_0}\| = \|f\|_\infty$,
we actually obtain the equality $\|\hat M_f\| = \|f\|_\infty$. This proves~(a) and (b).

(c) (\cite[Corollary~1.5]{KL75}) First we observe that the operator
\begin{gather*} 
M_{f_{g_1}} \cdots M_{f_{g_n}} U_{g_n}
= M_{U_{g_1} f_1} \cdots M_{U_{g_n}f_n} U_{g_n}
= U_{g_1} M_{f_1} U_{g_1^{-1}g_2} M_{f_2} \cdots M_{f_{n-1}} U_{g_{n-1}^{-1}g_n} M_{f_n}
\end{gather*}
on $\cE_+$ is intertwined by $q$ with $\hat U_{g_1} \hat M_{f_1} \hat U_{g_1^{-1}g_2} \cdots \hat U_{g_{n-1}^{-1}g_n} \hat M_{f_n}.$
Applying this operator to the constant function $1$ and integrating
yields the assertion.

{}From now on we assume that $0 < \mu(Q) < \infty$.

(d) With the same argument as under (c), we derive
\begin{align*}
\int_Q f_{g_1} \cdots f_{g_n}\, d\mu
&= \la M_{f_{g_1}} \cdots M_{f_{g_n}}1, 1 \ra \\
&= \la \theta M_{f_{g_1}} \cdots M_{f_{g_n}} U_{g_n} 1, 1 \ra
= \big\la \hat M_{f_1} \hat U_{g_1^{-1} g_2} \cdots \hat M_{f_{n-1}}
\hat U_{g_{n-1}^{-1} g_n} \hat M_{f_n} \Omega, \Omega \big\ra.
\end{align*}

(e) The relation $\hat U_s \Omega = \Omega$
is a trivial consequence of $U_s 1 = 1$ for $s \in S$.
Since $q\res_{\cE_0}$ is isometric and intertwines the representation
of $\cA$ on $\cE_0$ with the representation of $\cA$ on $\hat\cE$,
the $\cA$-separating vector $1 \in \cE_0$ is mapped by $q$ to an
$\cA$-separating vector.

(f) Since $\cE_0$ is $U_S$-cyclic in $\cE_+$, this follows from the fact that
$1$ is $\cA$-cyclic in $L^2(Q,\Sigma_0,\mu)$.
\end{proof}

\begin{Definition} The preceding lemma shows that, if
$((Q,\Sigma,\mu), \Sigma_0, U,\theta)$ is a f\/inite
$(G,S,\tau)$-measure space,
then $(\hat\cE, \hat U, \cA, q(1))$ is a positive semigroup
structure for
\begin{gather*} \cA = \big\{ \hat M_f \colon f \in L^\infty(Q,\Sigma_0,\mu)\big\}.\end{gather*}
We call it the {\it associated positive semigroup structure.}
\end{Definition}

We now come to our version of Klein's reconstruction theorem.
Note that every discrete group is in particular a $0$-dimensional Lie group,
so that the following theorem applies in particular to discrete groups.

\begin{Theorem}[Reconstruction Theorem] \label{thm:1.8G}
Let $(G,\tau)$ be a~symmetric Lie group and $S \subeq G$ be a~$\sharp$-invariant
subsemigroup satisfying $G= S \cup S^{-1}$. Then every positive semigroup structure for $(G,S,\tau)$ is associated to some
$(G,S,\tau)$-probability space $((Q,\Sigma,\mu), \Sigma_0, U,\theta)$.
\end{Theorem}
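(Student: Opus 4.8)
The plan is to build the desired $(G,S,\tau)$-probability space by a GNS-type construction directly from the positive semigroup structure $(\cH,P,\cA,\Omega)$, realizing the commutative von Neumann algebra $\cA$ as an algebra of multiplication operators on an $L^2$-space and then enlarging it to absorb all the translates $U_g\Sigma_0$. First I would use Proposition~\ref{prop:2.19}: since $G = S \cup S^{-1}$, the hypothesis (PS4)(b) that $\Omega$ is cyclic for the subalgebra $\cB$ generated by $\cA$ and $\{P_s : s \in S\}$ lets me reduce, after possibly passing through the reproducing-kernel picture, to the \emph{standard} case where $\cH = L^2(M,\nu)$ for a probability space $(M,\fS,\nu)$, $\cA = L^\infty(M,\nu)$, $\Omega = 1$, and the $P_s$ are positivity-preserving contractions satisfying the Markov condition $P_s 1 = 1$. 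This is the step that makes the von Neumann algebra concrete and gives me an honest measure to work with.

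Next I would construct the path space. The natural candidate for $Q$ is a space of $M$-valued functions on $G$ (morally $M^{G}$, or more precisely the spectrum of the enlarged algebra), with $\Sigma_0$ the sub-$\sigma$-algebra pulled back from the ``time-$\1$'' coordinate, and $\mu$ the measure whose finite-dimensional marginals are prescribed by the $n$-point functions
\begin{gather*}
\la A_1 P_{s_1} A_2 \cdots P_{s_{n-1}} A_n \Omega, \Omega \ra,
\qquad A_j \in \cA,\quad s_j \in S.
\end{gather*}
The positivity (PS4)(c) of these expressions is exactly what guarantees that this prescription defines a genuine positive measure on cylinder sets; I would verify consistency of the marginals and invoke a Kolmogorov-type extension theorem to produce $\mu$ on $(Q,\Sigma)$. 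Because $G = S \cup S^{-1}$ and $S$ is $\sharp$-invariant, every $g \in G$ is either an $s$ or an $s^{-1}$, so the two-sided family of translates $\Sigma_g = U_g\Sigma_0$ can all be expressed through the $P_s$ using (SPS3) (the involutivity $s \mapsto s^\sharp$), which is what lets the one-sided semigroup data determine the full two-sided process.

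Then I would define the $G$-action $U$ and the reflection $\theta$: the shift action of $G$ on functions on $G$ induces $U_g$ on $\cA = L^\infty(Q,\Sigma,\mu)$, measure-preservation following from (PS4)(a) ($P_s\Omega = \Omega$) and the invariance of $\nu$; the involution $\tau$ on $G$ together with $s \mapsto s^\sharp$ induces $\theta$, and the required relations $\theta U_g \theta = U_{\tau(g)}$ and $\theta E_0 = E_0$ would be read off from the construction, with reflection positivity $\la \theta f, f\ra \ge 0$ for $f \in \cE_+$ following from (PS4)(c). Finally I would check (GP1)--(GP5) and, crucially, that the positive semigroup structure \emph{associated} to this space (in the sense of the Definition preceding the theorem) recovers the original $(\cH,P,\cA,\Omega)$ up to isomorphism; here Lemma~\ref{lem:1.6G}(d)--(f) does the bookkeeping, since its $n$-point formula matches the marginals I used to build $\mu$.

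The main obstacle I expect is the measure-theoretic existence step: the index set $G$ is an arbitrary (possibly infinite-dimensional, non-locally-compact) Lie group, so a naive Kolmogorov extension need not apply, and I would need the $P_s$ to come from honest kernels (cf.\ the Remark on Daniell continuity and \cite[Lemma~36.2]{Ba96}) together with the strong-continuity-in-measure condition (SPS4)/(GP3) to control the topology and ensure the resulting representation on $L^2$ is continuous. Reconciling the continuity requirement (GP3) with the abstract strong continuity (PS2) across the extension, and confirming that $\Sigma_0$ is $H(S)$-invariant as required for a genuine $(G,S,\tau)$-measure space, is where the delicate work lies.
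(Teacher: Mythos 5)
Your opening reduction is where the argument breaks down. Proposition~\ref{prop:2.19} requires $\Omega$ to be cyclic for $\cA$ \emph{alone}; hypothesis (PS4)(b) only gives cyclicity for the larger algebra $\cB$ generated by $\cA$ and $\{P_s \colon s \in S\}$, and this is genuinely weaker. By Klein's characterization of the Markov case (\cite[Theorem~3.1]{Kl78}, reproduced in the paper), $\Omega$ is $\cA$-cyclic precisely when the associated measure space has the Markov property, so your reduction silently restricts Theorem~\ref{thm:1.8G} to the Markov/standard situation of Theorem~\ref{thm:1.8G2}. The general theorem must produce $(G,S,\tau)$-probability spaces that are \emph{not} of Markov type, and for such structures $\cH$ cannot be realized as $L^2(M,\nu)$ with $\cA = L^\infty(M,\nu)$ acting by multiplication and $\Omega = 1$; the representation of $\cA$ on $\cH$ simply has no $\cA$-cyclic vector. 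Everything you build afterwards rests on this invalid identification.

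Two further gaps are worth naming, because they are exactly where the paper's proof does its work. First, the paper never invokes a Kolmogorov-type extension theorem (which, as you yourself note, is problematic since the spectrum of $\cA$ need not be polish): it takes $Q_0$ to be the Gelfand spectrum of $\cA$, so that $Q = Q_0^G$ is compact, constructs a positive linear functional $\rho$ on the subalgebra $\cR \subeq C(Q)$ spanned by finite products $\prod_j U_{g_j}\chi_j$ (dense by Stone--Weierstrass), and obtains $\mu$ from the Riesz--Markov theorem. Second, before positivity can even be discussed one must show that the expression $\la A_1 P_{s_1} A_2 \cdots P_{s_{n-1}} A_n \Omega, \Omega \ra$ is independent of the chosen $\prec_S$-compatible enumeration of $\{g_1,\ldots,g_n\}$: when some $g_j^{-1}g_{j+1}$ lies in $H(S) = S \cap S^{-1}$ several orderings are admissible, and well-definedness is proved by a transposition argument using (PS3) (the operators $P_h$, $h \in H(S)$, normalize $\cA$) together with the commutativity of $\cA$. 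Your outline never uses (PS3), which signals this step is missing. Likewise, positivity of $\rho$ does not follow ``exactly'' from (PS4)(c): one must first decompose a non-negative element of $\cR$ as a non-negative combination $\sum_k c_k F_k$ of products of $\{0,1\}$-valued functions with $F_k F_{k'} = 0$ for $k \neq k'$, and only then does (PS4)(c) apply factorwise.
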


\begin{proof} We follow the proof for $(G,S,\tau) = (\R,\R_+, -\id)$ in \cite[Theorem~2.4]{Kl78}.
In view of the Gelfand representation theorem, there exists a compact
space $Q_0$ with $\cA \cong C(Q_0)$ as $C^*$-algebras. Then
$Q := Q_0^G$, the space of all functions $\omega \colon G \to Q_0$, is compact with respect to the
product topology. The group $G$ acts on $Q$ by
\begin{gather*} (U_g \omega)(x) := \omega\big(g^{-1}x\big) \end{gather*}
and, accordingly, on functions on $Q$ by $(U_g F)(\omega) = F(U_{g^{-1}}\omega)$.
We further obtain an involution $(\theta\omega)(g) := \omega(\tau(g))$
satisfying
\begin{gather*} \theta U_g \theta = U_{\tau(g)} \qquad \mbox{for} \quad g \in G.\end{gather*}

Let $\Sigma_0$ be the smallest $\sigma$-algebra on $Q$ for which $X_\1(\omega) := \omega(\1)$ is measurable,
$\Sigma_g := U_g \Sigma_\1$, and let $\Sigma$ be the $\sigma$-algebra generated by the $\Sigma_g$, $g \in G$.
To construct a measure $\mu$ on $\Sigma$, it suf\/f\/ices to construct a Radon measure on $Q$, i.e., a positive functional on
the $C^*$-algebra $C(Q)$. By continuity, it actually suf\/f\/ices to do this on a dense unital subalgebra. Such a subalgebra is generated by the $G$-translates of functions in $\Xi := C(Q_0, \{0,1\})$, which is a generating subset of~$\cA$.
Now let $\cR \subeq C(Q)$ be the subspace spanned by functions of the form
\begin{gather*} F(\omega) := \chi_1(\omega(g_1)) \cdots \chi_n(\omega(g_n)), \qquad
n \in \N_0, \quad \chi_j \in \Xi, \quad g_j \in G, \quad \mbox{i.e.},
\quad
F = \prod_{j = 1}^n U_{g_j} \chi_j, \end{gather*}
when we identify $C(Q_0)$ with the subalgebra $X_\1^*C(Q_0)$ of $C(Q)$. Clearly, $\cR$ is a~subalgebra of $C(Q)$, and the Stone--Weierstra\ss{} theorem implies that it is dense.

Since $G = S \cup S^{-1}$, every f\/inite subset of $G$ is of the form $\{g_1,\ldots, g_n\}$ with
$g_1 \prec_S \cdots \prec_S g_n$, i.e., $s_j := g_j^{-1} g_{j+1} \in S$ for $j = 1,\ldots, n-1$.
Next we show that that, for $(A_j, g_j) \in \cA \times G$ with $g_1 \prec_S \cdots \prec_S g_n$, the operator
\begin{gather}
 \label{eq:prod}
 A_1 P_{s_1} A_2 \cdots P_{s_{n-1}} A_n
\end{gather}
does not depend on the way we enumerate the $g_j$.
Suppose that we also have $g_{j+1} \prec_S g_j$ for some~$j$.
Then $h := s_j = g_j^{-1} g_{j+1} \in H(S)=S\cap S^{-1}$.
We put $\tilde g_i := g_i$ for $i \not=j,j+1$ and
set $\tilde g_j := g_{j+1}$, $\tilde g_{j+1} = g_j$.
Accordingly, we put $\tilde A_j := A_{j+1}$ and
$\tilde A_{j+1} := A_j$. This leads to
\begin{gather*} \tilde s_i =
\begin{cases}
 s_i & \text{for } i \not= j-1,j,j+1, \\
 s_{j-1} s_j = s_{j-1}h & \text{for } i = j-1, \\
 h^{-1} & \text{for } i = j, \\
 s_j s_{j+1} = h s_{j+1} & \text{for } i = j+1.
\end{cases}\end{gather*}
We thus obtain
\begin{align*}
P_{\tilde s_{j-1}} \tilde A_j P_{\tilde s_{j}} \tilde A_{j+1} P_{\tilde s_{j+1}}
&= P_{s_{j-1}h} A_{j+1} P_h^{-1} A_j P_{hs_{j+1}}
= P_{s_{j-1}} \big(P_h A_{j+1} P_h^{-1}\big) A_j P_h P_{s_{j+1}} \\
&= P_{s_{j-1}} A_j \big(P_h A_{j+1} P_h^{-1}\big) P_h P_{s_{j+1}}
= P_{s_{j-1}} A_j P_{s_j} A_{j+1} P_{s_{j+1}},
\end{align*}
where we use that $P_h A_{j+1} P_h^{-1} \in \cA$, $H(S)$ normalizes $\cA$, and
the commutativity of $\cA$.
The above argument shows the asserted independence
of \eqref{eq:prod} because any increasing rearrangement can be
obtained by successive transpositions.

For $n \in \N$ and $s_1, \ldots, s_{n-1} \in S$, the map
\begin{gather*} \cA^n \cong C(Q_0)^n \to \R, \qquad
(A_1, \ldots, A_n ) \mapsto
\la A_1 P_{s_1} A_2 \cdots P_{s_{n-1}} A_n \Omega, \Omega \ra \end{gather*}
is $n$-linear, so that there exists a well-def\/ined linear functional
$\rho \colon \cR \to \R$ satisfying
\begin{gather}
 \label{eq:ansatz}
 \rho\left(\prod_{j = 1}^n U_{g_j} f_j\right)
= \big\la f_1 P_{g_1^{-1}g_2} f_2 \cdots P_{g_{n-1}^{-1}g_n} f_n \Omega,\Omega\big\ra
\end{gather}
for $f_1, \ldots, f_n \in \Spann \Xi$, $g_1 \prec_S \cdots \prec_S g_n$, $n \in \N_0$.

To see that $\rho$ is positive, we observe that every non-negative
function $F \in \cR$ can be written in the form
\begin{gather*} F = \sum_k c_k F_k \qquad \mbox{with}
\quad F_k := \prod_{j = 1}^n U_{g_{j}} \chi_{j,k},\qquad \mbox{where} \quad
F_k F_{k'} = 0\quad \mbox{for} \quad k \not=k'.\end{gather*} Then
$c_k \geq 0$ for all $k$, so that
\begin{gather*} \rho(F) = \sum_k c_k \rho\left(\prod_{j = 1}^n U_{g_{j,k}} \chi_{j,k}\right) \geq 0\end{gather*}
by \eqref{eq:ansatz}.
This proves that $\rho \geq 0$, and thus, by continuity, it extends to a
positive functional on~$C(Q)$, which def\/ines a Radon measure $\mu$ on $Q$.

Next we show that $((Q,\Sigma,\mu), \Sigma_0, P,\theta)$ is a
$(G,S,\tau)$-measure space. The $G$-invariance of the right hand side of~\eqref{eq:ansatz} implies that
$G$ acts on $Q$ by measure preserving transformations.
To see that $\mu$ is also $\theta$-invariant, we f\/irst note that
$g \prec_S h$ implies $\tau(h) \prec_S \tau(g)$ because $\tau(S) = S^{-1}$.
We thus obtain
\begin{gather*} \theta\left(\prod_{j = 1}^n U_{g_j} f_j\right)
= \prod_{j = 0}^{n-1} U_{\tau(g_{n-j})} f_{n-j}.\end{gather*}
Hence the $\theta$-invariance of $\mu$ follows from
\begin{gather*}
\begin{split}
& \big\la f_1 P_{g_1^{-1}g_2} f_2 \cdots P_{g_{n-1}^{-1}g_n} f_n \Omega, \Omega \big\ra
=\big\la \Omega, f_n P_{\tau(g_n)^{-1}\tau(g_{n-1})} \cdots f_2 P_{\tau(g_2)^{-1} \tau(g_1)}f_1\Omega \big\ra\\
&\hphantom{\big\la f_1 P_{g_1^{-1}g_2} f_2 \cdots P_{g_{n-1}^{-1}g_n} f_n \Omega, \Omega \big\ra}{} =\big\la f_n P_{\tau(g_n)^{-1}\tau(g_{n-1})} \cdots f_2 P_{\tau(g_2)^{-1} \tau(g_1)}f_1\Omega, \Omega \big\ra.
\end{split}
\end{gather*}

It remains to verify $\la \theta F,F \ra\geq 0$ for $F \in L^2(Q,\Sigma_+, \mu)$.
The functions of the form
\begin{gather*} f_{g_1} \cdots f_{g_n}
= U_{g_1} M_{f_1} U_{g_1^{-1}g_2} \cdots U_{g_{n-1}^{-1}g_n} M_{f_n} U_{g_n}^{-1} 1, \qquad
f_j \in \cA, \1 \prec_S g_1 \prec_S g_2 \prec_S \cdots \prec_S g_n, \end{gather*}
form a total subset of $\cE_+$. On pairs of such functions we have
\begin{gather*}
\big\la \theta f_{g_1} \cdots f_{g_n}, f_{h_1}' \cdots f_{h_m}' \big\ra \\
\quad{}= \big\la \theta
U_{g_1} M_{f_1} U_{g_1^{-1}g_2} \cdots U_{g_{n-1}^{-1}g_n} f_n,
U_{h_1} M_{f_1'} U_{h_1^{-1}h_2} \cdots U_{h_{m-1}^{-1}h_m} f_m' \big\ra \\
\quad{}= \big\la U_{\tau(g_1)} M_{f_1} U_{\tau(g_1)^{-1}\tau(g_2)} \cdots
U_{\tau(g_{n-1})^{-1}\tau(g_n)} f_n,
U_{h_1} M_{f_1'} U_{h_1^{-1}h_2} \cdots U_{h_{m-1}^{-1}h_m} f_m' \big\ra \\
\quad{} = \big\la 1, M_{f_n} U_{\tau(g_n)^{-1}\tau(g_{n-1})} \cdots
 U_{\tau(g_2)^{-1} \tau(g_1)} M_{f_1} U_{\tau(g_1)^{-1}} U_{h_1}
M_{f_1'} U_{h_1^{-1}h_2} \cdots U_{h_{m-1}^{-1}h_m} f_m' \big\ra \\
\quad{} = \big\la \Omega, \hat M_{f_n} P_{\tau(g_n)^{-1}\tau(g_{n-1})} \cdots
 P_{\tau(g_2)^{-1} \tau(g_1)} \hat M_{f_1} P_{\tau(g_1)^{-1}} P_{h_1}
\hat M_{f_1'} P_{h_1^{-1}h_2} \cdots P_{h_{m-1}^{-1}h_m} \hat M_{f_m'}\Omega \big\ra \\
\quad{}= \big\la P_{g_1}\hat M_{f_1} P_{g_1^{-1}g_2} \cdots P_{g_{n-1}^{-1}g_n} \hat M_{f_n} \Omega,
P_{h_1} \hat M_{f_1'} P_{h_1^{-1}h_2} \cdots P_{h_{m-1}^{-1}h_m} \hat M_{f_m'}\Omega \big\ra.
\end{gather*}
This implies that
\begin{gather*} \Big\la \theta \sum_{j} c_j f_{g_{1,j}} \cdots f_{g_{n,j}},
\sum_{k} c_k f_{g_{1,k}} \cdots f_{g_{n,k}} \Big\ra\\
\qquad{}
= \Big\|\sum_j c_j P_{g_{1,j}} \hat M_{f_{1,j}} P_{g_{1,j}^{-1}g_{2,j}}
\cdots P_{g_{n-1,j}^{-1}g_{n,j}} \hat M_{f_{n,j}} \Omega \Big\|^2 \geq 0.\end{gather*}
We conclude that $\cE_+$ is $\theta$-positive and that the
canonical map $q \colon \cE_+ \to \hat\cE$ is equivalent to the map
\begin{gather*} q \colon \ \cE_+ \to \cH, \qquad
 f_{g_{1}} \cdots f_{g_{n}} \mapsto P_{g_1} \hat M_{f_{1}} P_{g_{1}^{-1}g_{2}}
\cdots P_{g_{n-1}^{-1}g_{n}} \hat M_{f_{n}} \Omega.\end{gather*}
This map satisf\/ies $q \circ M_f = \hat M_f \circ q$ and
$q \circ U_s = P_s \circ q$ for $s \in S$. Therefore
$(\cH,P,\cA,\Omega)$ is equivalent to the associated positive
semigroup structure.
\end{proof}

\begin{Remark} Theorem~\ref{thm:1.8G}
implies in particular that, for every positive
semigroup structure, $\Omega$ is separating for $\cA$ \cite[Remark~2.5]{Kl78}.
\end{Remark}

\begin{Remark} We have seen in the proof of Theorem~\ref{thm:1.8G}
that it is sometimes more convenient to deal with
the unital $W^*$-algebra $\cA = L^\infty(Q,\Sigma_0, \nu)$ as a $C^*$-algebra
and write it as $C(Q_0)$ for a compact (Hausdorf\/f) space $Q_0$.
Then the Riesz representation theorem asserts that
positive functionals on $C(Q_0)$ correspond to regular Borel measures $\mu$ on
$Q_0$. On the other hand, the restriction of any such measure to the
$\sigma$-algebra of Baire sets, i.e., the smallest $\sigma$-algebra for which
all continuous functions are measurable, already determines the corresponding
linear functional on $C(Q_0)$ by integration. Therefore the positive functionals
on $C(Q_0)$ are in one-to-one correspondence with f\/inite
positive Baire measures on $Q_0$.

Since the set $\Sigma_0/J_\nu$ of equivalence classes modulo $\nu$-null sets
corresponds to the idempotents in $\cA$, resp., $C(Q_0)$, Baire measures
on $Q_0$ correspond to measures on $(Q,\Sigma_0)$ that are absolutely continuous
with respect to $\nu$.
\end{Remark}

A slight variation of the argument in the proof of
Theorem~\ref{thm:1.8G} leads to the
following ge\-ne\-ra\-li\-zation of
the Klein--Landau reconstruction theorem \cite[Theorem~1.8]{KL75}
which deals with standard positive semigroup structures.
For the proof of this theorem,
we need a suitable version of \cite[Lemma~1.9]{KL75}
to deal with the case where the measure $\nu$ is inf\/inite.

\begin{Lemma} \label{lem:2.16} Let $(G,S,\tau)$ be a symmetric semigroup with
$G = S \cup S^{-1}$. Let $Q_0$ be a compact Hausdorff space and
$\nu_0$ a $\sigma$-finite measure on the Baire sets in $Q_0$. Suppose that
$(P_s)_{s \in S}$ is a~positivity preserving semigroup acting on $C(Q_0)$ leaving the function
$1$ invariant. Then there exists a unique $\sigma$-finite Baire measure $\mu$ on the compact
Hausdorff space $Q = Q_0^G$, such that, for
\begin{gather*} \tau(s_m) \prec_S \cdots \prec_S \tau(s_1) \prec_S \1 \prec_S t_1 \prec_S
\cdots \prec_S t_n \end{gather*}
and non-negative $g_1, \ldots, g_m, f_1, \ldots, f_n \in C(Q_0)$, we have
\begin{gather*}
 \int_Q g_m(\phi (\tau(s_m))) \cdots g_1(\phi (\tau(s_1))
f_1(\phi (t_1)) \cdots f_n(\phi (t_n)) \, d\mu(\phi ) \\
= \int_{Q_0}
\big(P_{s_1} M_{g_1} P_{s_1^{-1} s_2} M_{g_2} \cdots P_{s_{m-1}^{-1} s_m} M_{g_m}1\big)
\big(P_{t_1} M_{f_1} P_{t_1^{-1} t_2} M_{f_2} \cdots P_{t_{n-1}^{-1} t_n} M_{f_n}1\big)\, d\nu(q_0),
\end{gather*}
where, as before, $M_f$ denotes multiplication by~$f$.
\end{Lemma}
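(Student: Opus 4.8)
The plan is to run the construction of Theorem~\ref{thm:1.8G} one more time, but to replace its concluding appeal to the Riesz representation theorem---which produced a genuine (finite) Radon measure out of a positive functional on $C(Q)$---by a direct construction of a consistent family of finite-dimensional marginals together with a $\sigma$-finite version of the Kolmogorov extension theorem, i.e., the analogue for $(G,S,\tau)$ of \cite[Lemma~1.9]{KL75}. The hypothesis $G = S \cup S^{-1}$ is used exactly as before: it makes $\prec_S$ a total preorder on $G$, so every finite subset of $G$ can be written as an increasing chain, the only ambiguity being the order of points differing by an element of $H(S) = S \cap S^{-1}$. The right-hand side of the asserted identity is then recognised as the natural finite-dimensional distribution, based at the $\sigma$-finite measure $\nu$, of the ``Markov field'' attached to the mass-preserving operators $(P_s)_{s\in S}$.

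First I would fix a finite subset $\Lambda \subseteq G$ and split it along $\1$ into its \emph{past part} $\Lambda \cap S^{-1} = \{\tau(s_m) \prec_S \cdots \prec_S \tau(s_1)\}$ and its \emph{future part} $\Lambda \cap S = \{t_1 \prec_S \cdots \prec_S t_n\}$, and define a measure $\mu_\Lambda$ on the compact space $Q_0^\Lambda$ by prescribing, for non-negative $g_1,\dots,g_m,f_1,\dots,f_n \in C(Q_0)$,
\begin{gather*}
\int_{Q_0^\Lambda} g_m \otimes \cdots \otimes g_1 \otimes f_1 \otimes \cdots \otimes f_n \, d\mu_\Lambda
:= \int_{Q_0}
\big(P_{s_1} M_{g_1} \cdots P_{s_{m-1}^{-1} s_m} M_{g_m} 1\big)
\big(P_{t_1} M_{f_1} \cdots P_{t_{n-1}^{-1} t_n} M_{f_n} 1\big)\, d\nu .
\end{gather*}
Since each $P_s$ is positivity preserving and $\nu \geq 0$, both factors under the integral are non-negative, so the prescription is positive on products; writing a non-negative element of the span of such products as a sum $\sum_k c_k F_k$ of pairwise disjointly supported products with $c_k \geq 0$ (as in the proof of Theorem~\ref{thm:1.8G}) shows positivity on all of the generating algebra, and a Stone--Weierstra\ss{} argument on the compact space $Q_0^\Lambda$, combined with the $\sigma$-finiteness of $\nu$, upgrades this to a bona fide $\sigma$-finite Baire measure $\mu_\Lambda$. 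That $\mu_\Lambda$ is independent of the chosen enumeration of $\Lambda$ reduces, exactly as in Theorem~\ref{thm:1.8G}, to the transposition of two consecutive points differing by some $h \in H(S)$, which leaves the corresponding operator block invariant because $H(S)$ normalises $\cA$ and $\cA$ is commutative.

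Next I would verify the Kolmogorov consistency of the family $(\mu_\Lambda)$ under deletion of a single coordinate $g \in \Lambda$, distinguishing three cases by where $g$ sits in its chain: deleting the innermost point of a chain (set its function equal to $1$) uses $P_s 1 = 1$; deleting an interior point uses the semigroup law $P_a P_b = P_{ab}$ together with $M_1 = \id$ to merge the two adjacent transition operators; and the delicate case of a point $g \in H(S)$, which may be assigned to either the past or the future chain, is handled by the same $H(S)$-normalisation as above. With consistency established, I would invoke the $\sigma$-finite extension theorem in the form of \cite[Lemma~1.9]{KL75}: the infinitude of the total mass is localised entirely in the one-coordinate marginal, which is $\nu$ on the coordinate at $\1$, while $P_s 1 = 1$ makes every transition step mass preserving; restricting to the cylinders $\{\phi \in Q : \phi(\1) \in B_k\}$ for an exhaustion $B_k \uparrow Q_0$ with $\nu(B_k) < \infty$ reduces the extension to the classical probabilistic Kolmogorov theorem on each piece, and gluing the pieces yields a unique $\sigma$-finite Baire measure $\mu$ on $Q = Q_0^G$ with the $\mu_\Lambda$ as its finite-dimensional distributions. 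Unwinding $\mu_\Lambda$ on the distinguished configuration $\tau(s_m) \prec_S \cdots \prec_S \1 \prec_S \cdots \prec_S t_n$ returns precisely the claimed identity, and uniqueness is immediate since the cylinder functions generate the Baire $\sigma$-algebra of $Q_0^G$ and $\mu$ is $\sigma$-finite.

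The step I expect to be the main obstacle is this last $\sigma$-finite extension. The classical Kolmogorov theorem presupposes finite (indeed probability) marginals, so the crux is to isolate that the only source of infinite mass is the base measure $\nu$ at the coordinate $\1$, all conditional data being probabilistic thanks to $P_s 1 = 1$, and then to transport the monotone-class/exhaustion argument of \cite[Lemma~1.9]{KL75} from the linearly ordered index set $\R$ to the merely totally preordered group $G = S \cup S^{-1}$, where the residual $H(S)$-ambiguity in the ordering must be absorbed at every stage.
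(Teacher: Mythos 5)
Your proposal is correct in substance, and its computational core---defining the finite-dimensional data by the stated formula, proving positivity by decomposing a non-negative cylinder function into disjointly supported products with non-negative coefficients, and absorbing the enumeration ambiguity through the normalization $P_h M_f P_h^{-1} = M_{P_h f}$ for $h \in H(S)$---is exactly what the paper does (it defers these points to the proof of Theorem~\ref{thm:1.8G} and to \cite[Lemma~1.9]{KL75}). The packaging differs in two places. First, the extension mechanism: the paper never passes through Kolmogorov consistency; for finite $\nu$ it defines a single positive functional $\rho$ on the algebra $\cR \subeq C(Q)$ of continuous cylinder functions and applies the Riesz--Markov theorem once on the compact space $Q = Q_0^G$, Stone--Weierstra\ss{} providing density of $\cR$. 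Your projective-limit route is equivalent, but with a caveat: $Q_0$ is merely compact Hausdorff (in the intended application it is the Gelfand spectrum of $L^\infty(M,\nu)$, essentially never metrizable), so the ``classical probabilistic Kolmogorov theorem'' for Polish state spaces does not literally apply on each piece; you need the projective-limit theorem for Baire (Radon) measures on products of compact Hausdorff spaces, whose standard proof is precisely the Stone--Weierstra\ss{}/Riesz--Markov argument---so your proof, made airtight, collapses back into the paper's. Second, $\sigma$-finiteness: where you exhaust $Q_0$ by sets $B_k$ of finite measure, restrict to the cylinders $\{\phi(\1) \in B_k\}$, and glue, the paper simply writes $\nu = \sum_j \nu_j$ as a countable sum of finite measures, constructs the finite Baire measure $\mu_j$ associated to each $\nu_j$ by the first step, and sets $\mu := \sum_j \mu_j$; linearity of the defining formula in $\nu$ gives the asserted identity, and taking the $\nu_j$ mutually singular (say $\nu_j = \nu\res_{B_j}$ with the $B_j$ disjoint) makes each $\mu_j$ live on the cylinder $\{\phi(\1) \in B_j\}$, whence $\mu$ is $\sigma$-finite. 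This summation trick is simpler than your exhaustion-and-gluing step and avoids any appeal to a $\sigma$-finite extension theorem---the very step you correctly identified as the main obstacle in your plan.
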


\begin{proof} First we assume that the measure $\nu$ is f\/inite.
Let $\cR \subeq C(Q)$ be as in the proof of Theorem~\ref{thm:1.8G}.
We then obtain a positive functional $\rho$ on $\cR$, which is uniquely determined by
the requirement that, for
\begin{gather*} F(q) = g_m(\phi (\tau(s_m))) \cdots g_1(\phi (\tau(s_1)))
f_1(\phi (t_1)) \cdots f_n(\phi (t_n)),\end{gather*}
we have
\begin{gather*}\rho(F) = \int_{Q_0}
\big(P_{s_1} M_{g_1} P_{s_1^{-1} s_2} M_{g_2}{\cdots} P_{s_{m-1}^{-1} s_m} M_{g_m}1\big)
\big(P_{t_1} M_{f_1} P_{t_1^{-1} t_2} M_{f_2} {\cdots} P_{t_{n-1}^{-1} t_n} M_{f_n}1\big)\, d\nu(q_0)\end{gather*}
(see the proof of \cite[Lemma~1.9]{KL75} for details). The Riesz--Markov theorem now leads to a
unique f\/inite Baire measure $\mu$ on $Q$ with $\int_Q F\, d\mu = \rho(F)$
for $F \in \cR$.

If $\nu$ is not f\/inite, we write it as $\nu = \sum_j \nu_j$, where the measures
$\nu_j$ are f\/inite. If $\mu_j$ denotes the measure on $Q$ corresponding to $\nu_j$,
the measure $\mu := \sum_j \mu_j$ satisf\/ies all our requirements.
\end{proof}

\begin{Theorem}[Reconstruction Theorem -- Standard Case] \label{thm:1.8G2}
If $(G,S,\tau)$ is a symmetric semigroup with $G = S \cup S^{-1}$, then
every standard positive semigroup structure for $(G,S,\tau)$ is associated to some
$(G,S,\tau)$-measure space $((Q,\Sigma,\mu), \Sigma_0, U,\theta)$ which is unique up to
$G$-equivariant isomorphism of measure spaces.
\end{Theorem}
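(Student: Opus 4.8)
The plan is to run the very construction used for Theorem~\ref{thm:1.8G}, but to replace the finite-measure GNS step (which manufactured the cyclic vector $\Omega = q(1)$) by the $\sigma$-finite reconstruction of Lemma~\ref{lem:2.16}. This substitution is forced on us: when $\nu$ is infinite the constant function $1$ no longer lies in $L^2(M,\nu)$, so there is no cyclic vector to anchor the correlation functions, and the correlations must instead be read off from the pairing against $\nu$ supplied by (SPS3). Concretely, I would first invoke the Gelfand representation theorem to realize the commutative von Neumann algebra $L^\infty(M,\nu)$ as $C(Q_0)$ for a compact Hausdorff space $Q_0$, transporting $\nu$ to a $\sigma$-finite Baire measure $\nu_0$ on $Q_0$ and the operators $(P_s)_{s\in S}$ to positivity preserving operators on $C(Q_0)$ fixing $1$. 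Lemma~\ref{lem:2.16} then produces a unique $\sigma$-finite Baire measure $\mu$ on $Q := Q_0^G$ satisfying its prescribed correlation identity.

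Next I would equip $Q$ with exactly the data of Theorem~\ref{thm:1.8G}: the $G$-action $(U_g\omega)(x) := \omega(g^{-1}x)$, the involution $(\theta\omega)(g) := \omega(\tau(g))$, the $\sigma$-algebra $\Sigma_0$ generated by $\omega\mapsto\omega(\1)$, its translates $\Sigma_g := U_g\Sigma_0$, and the $\sigma$-algebra $\Sigma$ they generate, and then check that $((Q,\Sigma,\mu),\Sigma_0,U,\theta)$ is a $(G,S,\tau)$-measure space. Reflection positivity is the easy half. For $F = \sum_k c_k F_k \in L^2(Q,\Sigma_+,\mu)$, where each $F_k$ is a product $f_{t_1}\cdots f_{t_n}$ with $\1 \prec_S t_1 \prec_S\cdots\prec_S t_n$, the involution $\theta$ sends $F_k$ to a function supported on the $\tau$-reflected (negative) times, so the correlation identity of Lemma~\ref{lem:2.16} collapses $\la\theta F_k,F_l\ra$ to $\int_{Q_0}\Psi_k\Psi_l\,d\nu_0$, where $\Psi_k := P_{t_1}M_{f_1}P_{t_1^{-1}t_2}\cdots P_{t_{n-1}^{-1}t_n}M_{f_n}1$ is the operator word built from the positive-time data of $F_k$; hence $\la\theta F,F\ra = \int_{Q_0}\big|\sum_k c_k\Psi_k\big|^2\,d\nu_0 \ge 0$. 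Continuity in measure (GP3) is inherited from (SPS4), and the $H(S)$-invariance of $\Sigma_0$ follows from the normalization of $\cA$ by $(P_h)_{h\in H(S)}$ recorded in Remark~\ref{rem:2.7}(a).

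The main obstacle is the $G$-invariance of $\mu$. In the finite case invariance was immediate because the right-hand side of \eqref{eq:ansatz} depended only on the consecutive differences $g_j^{-1}g_{j+1}$ and on the $P_s$-fixed vector $\Omega$. Here Lemma~\ref{lem:2.16} pins down $\mu$ by a formula that singles out the identity as the boundary between the $S^{-1}$-part and the $S$-part and pairs the two halves inside one integral against $\nu_0$; a global left translation $g_j\mapsto g_0 g_j$ moves this boundary and can push factors across it, so invariance is not manifest. My strategy is to show that the pushforward $U_{g_0}^*\mu$ satisfies the very same correlation identity, whence $U_{g_0}^*\mu = \mu$ by the uniqueness clause of Lemma~\ref{lem:2.16}. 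This reduces to verifying that sliding a single factor across the identity leaves the integral unchanged: the involutivity relation (SPS3), in the form $\int_{Q_0}(P_s f)\,h\,d\nu_0 = \int_{Q_0} f\,(P_{s^\sharp}h)\,d\nu_0$, transfers the relevant operator from one half of the pairing to the other, while the commutativity of $\cA$ and its normalization by $(P_h)_{h\in H(S)}$ dispose of factors that fall into $H(S) = S\cap S^{-1}$ --- exactly the rearrangement mechanism already used for \eqref{eq:prod}. The $\theta$-invariance of $\mu$ then follows as in Theorem~\ref{thm:1.8G} from $\tau(S) = S^{-1}$.

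Finally, unwinding the construction shows that the standard positive semigroup structure extracted from $((Q,\Sigma,\mu),\Sigma_0,U,\theta)$ --- namely the single-time marginal $(Q_0,\nu_0)$ together with the operators $(P_s)_{s\in S}$ --- is exactly the one we started from, so the given structure is associated to this measure space. For uniqueness I would argue that the correlation functions $\int_Q\prod_j f_j(\omega(g_j))\,d\mu$ are intrinsically determined by the positive semigroup structure through the formula of Lemma~\ref{lem:2.16}; since these functionals fix $\mu$ on the dense subalgebra $\cR\subeq C(Q)$ of cylinder functions, any two reconstructions agree on $\cR$, and matching the functionals produces the required $G$-equivariant isomorphism of measure spaces.
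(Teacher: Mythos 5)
Your proposal is correct and follows essentially the same route as the paper's proof: Gelfand representation of $L^\infty(M,\nu)$ as $C(Q_0)$, Lemma~\ref{lem:2.16} to produce the $\sigma$-finite measure on $Q_0^G$, the relation $P_s^* = P_{s^\sharp}$ from (SPS3) as the mechanism for moving factors across the time-zero boundary to obtain $G$- and $\theta$-invariance, and uniqueness via the fact that the correlation functionals determine $\mu$ on the total family of cylinder functions. The only (cosmetic) difference is that you deduce $G$-invariance by feeding the pushforward $U_{g_0}^*\mu$ into the uniqueness clause of Lemma~\ref{lem:2.16}, whereas the paper first rewrites the two-sided pairing as the one-sided formula $\int_Q f_{g_1}\cdots f_{g_n}\,d\mu = \int M_{f_1}P_{g_1^{-1}g_2}\cdots M_{f_n}1\,d\nu$, which is manifestly translation invariant; both rest on the identical (SPS3) computation.
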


\begin{proof} First we show uniqueness.
Let $((Q',\Sigma',\mu'), \Sigma_0', U',\theta')$ be a Markov
$(G,S,\tau)$-measure space for which $(L^2(M,\nu), P, L^\infty(M,\nu))$ is the associated standard positive semigroup
structure.

Lemma~\ref{lem:1.6G}(c) implies that the integrals
of products of the form $f_{g_1} \cdots f_{g_n}$,
$g_1 \prec_S \cdots \prec_S g_n$ are uniquely determined by the relation
\begin{gather*}
\int_Q f_{g_1} \cdots f_{g_n}\, d\mu'
= \int_Q M_{f_1} P_{g_1^{-1} g_2} \cdots M_{f_{n-1}} P_{g_{n-1}^{-1} g_n} M_{f_n} 1\, d\nu.
\end{gather*}
Since such products for $f_j \in L^2(M,\nu) \cap L^\infty(M,\nu)$
form a total subset of $L^2(Q',\Sigma',\mu')$, any isomorphism
$L^2(Q',\Sigma_0',\mu') \to L^2(M,\nu) \cong L^2(Q, \Sigma_0, \mu)$
extends to a $G$-equivariant isomorphism
$L^2(Q',\Sigma',\mu') \to L^2(Q, \Sigma, \mu)$.

For the existence of the measure $\mu$, we now use Lemma~\ref{lem:2.16}.
Here we write $Q_0$ for the spectrum of the $C^*$-algebra
$\cA := L^\infty(M,\nu)$, so that $\cA \cong C(Q_0)$ and $\nu$ def\/ines a Baire measure
on $Q_0$. We may then identify $(M,\fS, \nu)$ with $(Q_0, \Sigma_0, \nu)$,
where $\Sigma_0$ is the $\sigma$-algebra of Baire subsets of $Q_0$.

Now let $\mu$ be the Baire measure on $Q := Q_0^G$ from Lemma~\ref{lem:2.16}.
Then the relation $P_s^* = P_{s^\sharp}$, as operators on $L^2(M,\nu)$, leads to
\begin{gather*}
\int_Q f_{g_1} \cdots f_{g_n}\, d\mu
= \int_Q M_{f_1} P_{g_1^{-1} g_2} \cdots M_{f_{n-1}} P_{g_{n-1}^{-1} g_n} M_{f_n} 1\, d\nu
\end{gather*}
for $0 \leq f_1, \ldots, f_n \in C(Q_0)$ and $g_1 \prec_S \cdots \prec_S g_n$.
The same calculations as in the proof of Theorem~\ref{thm:1.8G} now show that $\mu$ is invariant
under $G$ and $\theta$.
\end{proof}

\begin{Remark} \label{rem:mark-meas}
In the standard case, it is interesting to take a closer look at the structure
of the measure space $(Q,\Sigma,\mu)$. Then $\cH \cong L^2(M,\fS, \nu)$, and for $L^\infty(M,\fS,\nu) \cong C(Q_0)$
we obtain on the compact space $Q := Q_0^G$ a~measure which is determined by the corresponding cylinder measures:
\begin{gather}
 \label{eq:5}
\int_Q f_{g_1} \cdots f_{g_n}\, d\mu
= \int_M M_{f_1} P_{g_1^{-1} g_2}
\cdots M_{f_{n-1}} P_{g_{n-1}^{-1} g_n} M_{f_n} 1\, d\nu
\end{gather}
for $0 \leq f_1, \ldots, f_n \in C(Q_0)$ and $g_1 \prec_S \cdots \prec_S g_n$.

For every $g \in G$, we then have a continuous projection
$q_g \colon Q \to Q_0$, so that we obtain a~homomorphism
$q_g^* \colon L^\infty(M,\fS, \nu) \cong C(Q_0) \to C(Q)$ of $C^*$-algebras.
This leads to a morphism $\fS/J_\nu \to \Sigma/J_\mu$ of Boolean
$\sigma$-algebras. At this point it is a natural question whether this
morphism can be realized by a measurable function
$f_g \colon Q \to M$. In \cite[Theorem~1.6]{Va84} this is answered in the
af\/f\/irmative for $M = \R^n$, but this implies the corresponding
result for completely metrizable separable spaces, because these spaces
can be realized as closed subsets of some $\R^\N$.
Once we have the measurable maps $f_g \colon Q \to M$, they combine to a
stochastic process $(f_g)_{g \in G}$ with state space $(M,\fS)$,
so that we obtain an isomorphism of measure spaces
$f \colon (Q,\Sigma, \mu) \to (M^G, \fS^G, \mu')$.

If $(M,\fS, \nu)$ is a polish space, i.e., $M$ carries a topology for which it is
completely metrizable and separable and $\fS$ is the $\sigma$-algebra of Borel sets,
then \cite[Corollary~35.4]{Ba96} directly
implies the existence of a Borel measure $\mu'$ on
the measurable space $(M^G, \fS^M)$ with the correct projections onto f\/inite
products:
\begin{gather*} \int_{M^G} f_1(\omega(g_1)) \cdots f_n(\omega(g_n)) \, d\mu'(\omega)
= \int_Q M_{f_1} P_{g_1^{-1} g_2}
\cdots M_{f_{n-1}} P_{g_{n-1}^{-1} g_n} M_{f_n} 1\, d\nu\end{gather*}
for $0 \leq f_1, \ldots, f_n \in C(Q_0)$ and $g_1 \prec_S \cdots \prec_S g_n$.
Therefore we obtain a realization of our $(G,S,\tau)$-measure space
on $(M^G, \fS^G, \mu')$.
\end{Remark}

\begin{Remark} \label{rem:2.21}
(a) Consider the case $(G,S,\tau) = (\R,\R_+, -\id_\R)$ and assume that
the opera\-tors~$P_t f$ are obtained from
Markov kernels $P_t \colon M \times \fS \to [0,\infty]$ in the sense that
\begin{gather*} (P_t f)(x) = \int_M P_t(x,dy) f(y) \end{gather*}
(cf.\ Appendix~\ref{subsec:6.4}).
In view of Remark~\ref{rem:mark},
the measures $P_\bt^\nu$ can be written as
\begin{gather*} P_\bt^\nu = \nu P_\bt, \qquad \mbox{i.e.,} \qquad
P_\bt^\nu(A) = \int_M \nu(dx) P_\bt(x,A)\end{gather*}
for the kernel
\begin{gather*} P_\bt(x_0,B) = \int_{M^{n}} \chi_B(x_1, \ldots, x_n)
P_{t_1}(x_0, dx_1) P_{t_2- t_1}(x_1, dx_2) \cdots
P_{t_n- t_{n-1}}(x_{n-1}, dx_n) \end{gather*}
on $M \times \fS^n$.
In particular, we have for each $x \in M$ a probability measure $P_x$ on
\begin{gather*} \big\{ \omega \in M^{\R_+} \colon \omega(0) = x\big\}, \end{gather*}
def\/ining a kernel $P \colon M \times \fS^{\R_+} \to [0,\infty]$,
such that
\begin{gather*} P^\nu = \nu P, \qquad \mbox{i.e.,} \qquad P^\nu(A) = \int_M \nu(dx) P(x,A).\end{gather*}
For any $t \geq 0$, we then have
\begin{gather*} P_t(x_0, B)
= \int_{M^{\R_+}} \chi_B(\omega(t))\, P(x_0, d\omega)
= P(x_0, \{ \omega(t) \in B\}),\end{gather*}
which leads to
\begin{gather}
 \label{eq:ptf}
 (P_t f)(x)
= \int_M P_t(x, dy) f(y)
= \int_{M^{\R_+}} P(x, d\omega)f(\omega(t)) .
\end{gather}
This is an abstract version of the Feynman--Kac--Nelson formula
that expresses the value of $P_t f$ in $x \in M$ as an integral over all
paths $[0,t] \to M$ starting in $x$ with respect to the probability measure~$P_x$.

We also note that, for $t > 0$, we have the relation
\begin{gather*} \int_{M^\R} f(\omega(t))\, d P^\nu(\omega) = \int_Q f\, d\nu \quad \mbox{for} \quad t \in \R, \end{gather*}
and, for $t < s$,
\begin{gather*} \int_M \int_M f_1(x) \nu(dx) f_2(y) P_{s-t}(x, dy)
= \int_M f_1(x) (P_{s-t} f_2)(x)\, d\nu(x)\\
\hphantom{\int_M \int_M f_1(x) \nu(dx) f_2(y) P_{s-t}(x, dy)}{}
= \int_{M^\R} f_1(\omega(t))f_2(\omega(s))\, d P^\nu(\omega). \end{gather*}

(b) In the special case where $M = G$ is a topological group and
$P_t f = f * \mu_t$, we have
\begin{gather*} (P_t f)(x) = \int_G f(xy)\, d\mu_t(y) = \int_G P_t(x,dy) f(y) \qquad \mbox{for} \quad
P_t(x,A) = \mu_t(x^{-1}A).\end{gather*}
Therefore the operators $P_t$ are given by Markov kernels. Here, the measurability
of the functions
\begin{gather*} x \mapsto \mu_t(x^{-1}A) = \int_G \chi_A(xy)\, d\mu_t(y) \end{gather*}
follows from the measurability of the function
$(x,y) \mapsto \chi_A(xy)$.

Let $P(G)$ be the path group $G^{\R}$ and let $P_*(G)$ be the subgroup of pinned paths
\begin{gather*}P_*(G)=\{\omega \in P(G)\colon \omega (0)=\1\} .\end{gather*}
We have the relations
\begin{gather*} \int_{P(G)} f_1(\omega(0)) f_2(\omega(t))\, d\mu(\omega) =
 \int_G\int_G f_1(g_1) f_2(g_1 g_2)\, d\nu(g_1) d\mu_t(g_2) \qquad \mbox{for} \quad t > 0\end{gather*}
and
\begin{gather*} \int_{P_*(G)} f(\omega(t))\, dP^\nu(\omega) = \int_G f(g)\, d\mu_t(g).\end{gather*}
This leads for $f \in L^2(G,\nu)$ and $t\geq 0$ to
\begin{gather*} (P_t f)(x)
= (f * \mu_t)(x) = \int_G f(xg)\, d\mu_t(g)
= \int_{G^{\R_+}} f(x\omega(t))\, dP(\omega).\end{gather*}
This is a group version of the Feynman--Kac--Nelson formula
\eqref{eq:ptf} which expresses $(P_t f)(x)$ as an integral over all paths
$\omega \colon [0,t] \to G$ with $\omega(0) = x$.
\end{Remark}

\begin{Remark} The assumption that $G = S \cup S^{-1}$ is very restrictive
(see the discussion in Subsection~\ref{se:2.5}). We mainly use it to ensure
that the order $\prec_S$ on $G$ is total, which is a crucial ingredient in the
construction. However, if $S \subeq G$ is a subsemigroup not satisfying the
totality condition, i.e., $S \cup S^{-1} \not=G$, then one may still consider
subsets $\cC \subeq G$ on which the order $\prec_S$ is total, so-called
$\cC$-chains. Typical examples of such $\prec_S$-chains are curves
$\cC = \gamma(I)$, $I \subeq \R$, with $\gamma(t_2) \in \gamma(t_1)S$ for
$t_1 < t_2$ in $I$. For any such chain $\cC$,
one can still construct a stochastic
process indexed by $\cC$, but we then don't have the symmetries implemented
by a $G$-action on the corresponding probability space.
\end{Remark}

\begin{Remark} \label{rem:AK-exs} The Abel--Klein reconstruction theorem
leads in particular to realizations of dilations of the corresponding
standard positive semigroup structures $(P_t)_{t \geq 0}$ of the following kind:
\begin{itemize}\itemsep=0pt
\item[\rm(a)] Positivity preserving semigroups obtained by integrating a measure
preserving action of $G$ on some measure space $(X,\fS,\nu)$ to operators
\begin{gather*} (P_t f)(x) := \int_G (g.f)(x)\, d\mu_t(g),\end{gather*}
where the measures $\mu_t$ on $G$
are symmetric. If $(X,\fS)$ is polish, the corresponding
measure space can be realized on the path space $X^\R$.
Here the requirement that $G$ acts continuously on $L^2(X,\fS,\nu)$ implies
that $(P_t)_{t \geq 0}$ is strongly continuous on $L^2(X,\fS,\nu)$
(see (b) below for a more general argument)
and therefore continuous in measure (Remark~\ref{rem:2.7}(c)).
\item[\rm(b)] Semigroups of the type $(\Gamma(\pi(\mu_t)))_{t \geq 0}$, where $(\pi, \cH)$ is a continuous
unitary representation of $G$ on $\cH$ and $\Gamma$ is the functor of second quantization
(see Section \ref{se:SecondQ} for more information about second quantization).
Here we use that, for every continuous orthogonal representation
$(\pi, \cH)$ of $G$ and a Borel probability measure $\nu$ on $G$, we can def\/ine
\begin{gather*} \pi(\nu) := \int_G \pi(g)\, d\nu(g) \end{gather*}
as the operator representing the hermitian form
\begin{gather*} q(v,w) := \int_G \la \pi(g)v, w \ra\, d\nu(g) \end{gather*}
which satisf\/ies $|q(v,w)| \leq \|v\| \|w\|$ for $v,w \in \cH$.
For any one-parameter semigroup $(\mu_t)_{t \geq 0}$ of symmetric
probability measures, we thus obtain a
hermitian one-parameter semigroup of contractions $(\pi(\mu_t))_{t \geq 0}$.
The continuity of this one-parameter semigroup follows from the continuity of the functions
\begin{gather*} t \mapsto \la \pi(\mu_t)v,w\ra
= \int_G \la \pi(g)v,w \ra \, d\mu_t(g), \end{gather*}
which is a consequence of the weak convergence $\mu_t \to \delta_\1$ for $t \to 0$
and the boundedness of the matrix coef\/f\/icients of $\pi$.
\end{itemize}
\end{Remark}

\subsection{Klein's characterization of the Markov case}

The following theorem characterizes the positive semigroup structures arising from
Markov path spaces as those for which $\Omega$ is a cyclic vector for $\cA$,
which is considerably stronger than condition (PS4)(b).

\begin{Theorem}[\protect{\cite[Theorem~3.1]{Kl78}}]
 Let $((Q,\Sigma,\mu), \Sigma_0, U,\theta)$ be a $(G,S,\tau)$-probability space and let
$(\hat\cE, \hat U, \cA, \Omega)$ be its associated positive semigroup structure. Then
$((Q,\Sigma,\mu), \Sigma_0, U,\theta)$ is Markov if and only if $\Omega$ is cyclic for~$\cA$.
\end{Theorem}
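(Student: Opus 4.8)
The plan is to reduce the statement to Lemma~\ref{le:Markov} by making explicit the cyclic subspace generated by $\Omega$ under $\cA$. The guiding observation is that both the Markov property and the cyclicity of $\Omega$ are ultimately statements about whether the canonical map $q$ sends $\cE_0$ \emph{onto} all of $\hat\cE$; once this is recognised, the theorem follows without any further computation.

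First I would identify $\cA\,\Omega$ concretely. Since $\mu$ is a probability measure, the distinguished vector of the associated positive semigroup structure is $\Omega = q(\1)$, and Lemma~\ref{lem:1.6G}(a) gives the intertwining relation $\hat M_f\, q(v) = q(M_f v)$ for $f \in L^\infty(Q,\Sigma_0,\mu)$ and $v \in \cE_+$. Evaluating at the constant function $v = \1 \in \cE_0 \subseteq \cE_+$ yields $\hat M_f\,\Omega = q(f)$, so that $\cA\,\Omega = q\big(L^\infty(Q,\Sigma_0,\mu)\big)$. As noted in the proof of Lemma~\ref{lem:1.6G}, reflection positivity together with $\cE_0 \subseteq \cE_+$ forces $\theta f = f$ for $f \in \cE_0$, i.e.\ $\cE_0 \subseteq \cE_+^\theta$; hence $q$ is isometric on $\cE_0$. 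Because $L^\infty(Q,\Sigma_0,\mu)$ is dense in $\cE_0 = L^2(Q,\Sigma_0,\mu)$ (here $\mu$ being finite is essential) and $q|_{\cE_0}$ is an isometry, taking the norm closure gives $\overline{\cA\,\Omega} = q(\cE_0)$, a \emph{closed} subspace of $\hat\cE$.

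The second step is then purely formal. Cyclicity of $\Omega$ for $\cA$ means $\overline{\cA\,\Omega} = \hat\cE$, which by the previous paragraph is equivalent to $q(\cE_0) = \hat\cE$, i.e.\ to the surjectivity of $q|_{\cE_0}\colon \cE_0 \to \hat\cE$. Since this map is already isometric, surjectivity is the same as being a unitary isomorphism. By Lemma~\ref{le:Markov} (equivalently, by Proposition~\ref{prop:markov}), $q|_{\cE_0}$ being a unitary isomorphism holds precisely when the Markov condition $E_+ E_0 E_- = E_+ E_-$ is satisfied. This settles both implications simultaneously.

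The only point that needs a little care --- and the step I would expect to be the (mild) obstacle --- is pinning down the cyclic subspace as exactly $q(\cE_0)$ rather than merely a dense subspace of it. This rests on two facts: that $q|_{\cE_0}$ is an isometry, so that $q(\cE_0)$ is automatically closed, and that the representation $f \mapsto \hat M_f$ is faithful and (weakly) continuous (Lemma~\ref{lem:1.6G}(b)), so that $\cA = \{\hat M_f\colon f \in L^\infty(Q,\Sigma_0,\mu)\}$ is already a von Neumann algebra and no separate weak-closure argument is required. With these in hand, the identification $\overline{\cA\,\Omega} = q(\cE_0)$ is immediate and the theorem reduces entirely to Lemma~\ref{le:Markov}.
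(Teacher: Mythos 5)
Your proof is correct and follows essentially the same route as the paper: the paper's own argument is exactly the reduction to Proposition~\ref{prop:markov} (Markov $\Leftrightarrow$ $q(\cE_0)=\hat\cE$) combined with the identification $\cE_0 = \overline{\cA\cdot 1}$, so that $q(\cE_0)=\hat\cE$ is the same as $\Omega=q(1)$ being $\cA$-cyclic. You have merely filled in the details the paper leaves implicit (the intertwining $\hat M_f\Omega = q(f)$, the isometry of $q$ on $\cE_0$, and the closedness of $q(\cE_0)$), all of which are accurate.
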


\begin{proof} The Markov property is equivalent to
$q(\cE_0) = \hat\cE$ (Proposition~\ref{prop:markov}). Since
$\cE_0 = \cA \cdot 1$, this condition is equivalent to $\Omega = q(1)$ being
$\cA$-cyclic in $\hat\cE$.
\end{proof}

\begin{Remark}[cf.\ \protect{\cite[Remark~3.3]{Kl78}}]
Suppose that $(\cH, P, \cA, \Omega)$ is a positive semigroup
structure for which $\Omega$ is $\cA$-cyclic, so that Proposition~\ref{prop:2.19} implies that
$\cH \cong L^2(M)$ for a probability space~$M$. Then condition (PS4)(c)
follows from the seemingly weaker condition
\begin{gather*} A,B \in \cA_+ := \{ C \in \cA \colon 0 \leq C\},\quad  s \in S \qquad \Rarrow \qquad
\la A\Omega, P_s B\Omega \ra \geq 0.\end{gather*}

In fact, in this case $\cA_+\Omega$ is dense in the closed cone
$L^2(Q,\Sigma_0,\mu)_+$ of non-negative functions,
so that $\la A\Omega, P_s B\Omega \ra \geq 0$ for all $A,B \in \cA_+$
implies that the operator $P_s$ on $L^2(Q,\Sigma_0,\mu)$ is positivity preserving.
Then
\begin{gather*} P_{s_1} A_1 \cdots P_{s_n} A_n \Omega \geq 0 \end{gather*}
for $A_j\in \cA_+$ and $s_j \in S$, and the assertion follows.
\end{Remark}

\begin{Remark} \label{rem:dilat}
(a) We consider the Markov case where $\cA\Omega$ is dense in
$\cH$, so that $\cE_0 \cong L^2(Q, \Sigma_0, \mu)$ is a $G$-cyclic subspace
of $\cE$. Then $\phi(g) = E_0 U_g E_0$ satisf\/ies
$\phi(s) = \hat{U_s} = P_s$ for $s \in S$. Therefore the $G$-invariant
subspace of $\cE$ generated by $\cE_0$ carries a representation
which is a unitary dilation
of the representation $(P,\cH)$ of $S$ (cf.\ Remark~\ref{rem:dil}).

(b) The set of f\/inite products $f_{g_1} \cdots f_{g_n}$,
$g_j \in G$, $f_j \in \cE_0 \cong
L^\infty(M,\fS,\nu) \cap L^1(M,\fS,\nu)$
spans a dense subspace of $\cE$. Accordingly, the restriction
$g_j \in S$ leads to a dense subspace of $\cE_+$. This permits us to
give suf\/f\/icient conditions for $\cE_+$ to be cyclic in $\cE$. Since
\begin{gather*} U_g f_{g_1} \cdots f_{g_n}= f_{gg_1} \cdots f_{gg_n}, \end{gather*}
this is the case if, for every f\/inite subset $F \subeq G$, there exists a
$g \in G$ with $g F \subeq S$, i.e.,
$g^{-1} \prec_S F$. This means that the order $\prec_S$ on $G$ is {\it
filtered}, i.e., for $g_1, g_2 \in G$, there exists a~\mbox{$g_3 \in G$}
with $g_3 \prec_S g_1, g_2$. This property is easily seen to be equivalent to
$G= S^{-1}S$ because $g_1^{-1}g_2 \in S^{-1}S$ is equivalent to the existence
of $s_1, s_2 \in S$ with $g_3 :=g_2 s_2^{-1} = g_1 s_1^{-1} \prec_S g_1, g_2$
(cf.~\cite{HN93}). We conclude that $\cE_+$ is $G$-cyclic in $\cE$ if
$G = S^{-1}S$. Note that this is in particular the case if
$G = S \cup S^{-1}$.
\end{Remark}

\subsection{Total subsemigroups of Lie groups}\label{se:2.5}

In this section we brief\/ly discuss total subsemigroups. Our standard
references are \cite[Chapter~6]{HN93} and~\cite{La87}.
For Lie groups $G$, subsemigroups $S$ with dense interior
satisfying $G= S \cup S^{-1}$ are rare. Typical examples are:
\begin{itemize}\itemsep=0pt
\item[\rm(a)] $S = [0,\infty[ = \R_+ \subeq G = \R$.
\item[\rm(b)] $S = \R_+ \rtimes \R^\times_+ \subeq G = \R \rtimes \R^\times_+$
(the $ax+b$-group).
\item[\rm(c)] A subsemigroup $S \subeq G = \tilde\SL_2(\R)$
(the simply connected covering group of $\SL_2(\R)$), whose boundary is a $2$-dimensional subgroup
isomorphic to the $ax+b$-group. Actually $G$ is the simply connected covering
of the projective group $\PSL_2(\R)$ acting by orientation preserving projective
maps on the projective line
$\bP(\R^2) \cong \bS^1$. Identifying $\tilde\bS^1$ with $\R$, we have that
\begin{gather*} S = \{ g \in G \colon g(0) \geq 0\}
= \{ g \in G \colon g(\R_+) \subeq \R_+\}.\end{gather*}
In terms of an Iwasawa decomposition
$G = KAN$ with $K \cong \R$ and $K_+ := \R_+$, we have
$S = K_+ AN = AN K_+$ and $\R \cong K \cong G/AN$.
\end{itemize}

It is easy to see that $S$ is contained in a maximal subsemigroup
of $S$ and if $\g/\rad(\g)$ is a~compact Lie algebra, then
the classif\/ication of maximal subsemigroups~$M$~\cite[Theorem~6.18]{HN93}
implies that $\partial M$ is a codimension $1$ subgroup of $G$.
According to the classif\/ication of hyperplane subalgebras of
f\/inite-dimensional Lie algebras~\cite{Ho65, Ho90}, this means that,
if $N \subeq \partial M$ is the largest connected normal subgroup of
$G$ contained in $\partial M$, then $M/N \subeq G/N$ is one of the
three types described above. The examples under (a)--(c) are also semigroups
whose boundary $\partial S$ is a~sub\-group, so that $S^0$ is {\it well bounded}
in the sense of \cite{Do76, Po77}. Such a semigroup, for which
$\partial S$ contains no non-trivial connected normal subgroup,
is isomorphic to (a), (b) or (c) \cite{Do76,Po77}.

From \cite{NO15a} we know that, for the f\/irst two types (a) and (b),
there are non-trivial ref\/lection positive representations
for $(G,S,\tau)$, where $\tau$ is such that $G^\tau = \partial S$.
On $\fsl_2(\R)$, the f\/ixed point set of any involutive automorphism
is $1$-dimensional, hence cannot be a hyperplane. Suppose that,
for $G = \tilde\SL_2(\R)$, $G^\tau \subeq \partial S$. Then
$S$ is invariant under $s \mapsto s^\sharp = \tau(s)^{-1}$,
but one can show that all involutive continuous
representations of $(S,\sharp)$ are trivial. This follows from
the L\"uscher--Mack theorem \cite{MNO15} and the fact that all
involutive contractive representations of the group
$(\Aff(\R),\sharp)$ must be trivial on the subgroup $\{0\} \rtimes \R^\times_+$.
This is a consequence of the fact that if, for a selfadjoint operator~$A$, all operators $e^{tA}$ are contractions, we have
$\Spec(A) =\{0\}$, hence $A = 0$.

\begin{Example} $S = \R^d_+ \subeq G = \R^d$ satisf\/ies $G = S \cup S^{-1}$.
Since this subsemigroup is invariant under $\OO_{d-1}(\R)$, we likewise obtain
$G_1 = S_1 \cup S_1^{-1}$ for $G_1 := \R^d \rtimes \OO_{d-1}(\R)$ and
$S_1 := \R^d_+ \rtimes \OO_{d-1}(\R)$.
Note that $(G_1,\tau_1)$ with $\tau_1(b,A) := (\theta(b), A)$ is a symmetric Lie group
with one-dimensional~$\fq$.
\end{Example}

\subsection{The discrete case}\label{subsec:discrete}

Our reconstruction theorem applies in particular
to $(G,S,\tau) = (\Z,\N_0, - \id_\Z)$.
Here we may start with a {\it Markov operator}
$P$ on a polish probability space $(M,\fS,\nu)$, so that
 $P$ is positi\-vi\-ty preserving
and satisf\/ies $P1 = 1$ (cf.\ Appendix~\ref{subsec:6.4}).
If $P$ def\/ines a symmetric operator on $L^2(M,\fS,\nu)$,
then we obtain a unitary operator
$U$ acting by translation on $Q := M^\Z$ such that
$E_0 U E_0 = P$ holds for the conditional expectation def\/ined by
$\Sigma_0 := \ev_0^*(\fS) \subeq \Sigma$.

An interesting class of examples arises from f\/inite probability
spaces $M = \{1,\ldots, n\}$ with the $\sigma$-algebra $2^M$.
Then a Markov kernel $P$ on $M$ is given by the matrix
$P_{ij} := P(i,\{j\})$ which is stochastic in the sense that
$\sum_j P_{ij} = 1$ for every $i$.
If $\mu$ is a measure on $M$, encoded in the
row vector $p = (p_j)_{1 \leq j \leq n}$ with $p_j := \mu(\{j\})$, then $\mu P$
is represented by the row vector $pP$.
In particular, the invariance condition $\nu P =\nu$ translates into
$\sum_i p_i P_{ij} = p_j$ for every $j$. So $P$ acts on measures on $M$
(identif\/ied with row vectors) by right multiplication
and on functions on $M$ (represented by column vectors) by left matrix
multiplication.

If $P$ acts as a symmetric operator on $L^2(M,\nu)$, then a
Markov process $(X_n)_{n \in \Z}$ can now be obtained from the projections
$X_n \colon Q = M^\Z \to M$ and the shift invariant
probability measure $\mu$ on $Q$ is determined by
\begin{gather} \label{eq:start}
 \mu(X_0 = i) = p_i, \qquad \mu(X_1 = i|X_0 = j) = P_{ij}.
\end{gather}
In this context, we have for integers $t_1 \leq \cdots \leq t_n$:
\begin{align*}
\int_Q f_1(\omega(t_1))\cdots f_m(\omega(t_m))\, d\mu(\omega)
&= \big\la M_{f_1} P^{t_2 - t_1} \cdots M_{f_{m-1}} P^{t_m-t_{m-1}} M_{f_m} 1, 1 \big\ra_{L^2(M,\nu)} \\
&= \int_M M_{f_1} P^{t_2 - t_1} \cdots M_{f_{m-1}} P^{t_m-t_{m-1}} f_m\, d\nu(m)
\end{align*}
(Theorem~\ref{thm:1.8G2}).
Applying this relation to $\delta$-functions $f_j = \delta_{i_j}$, we obtain
\begin{gather*} \mu(X_{t_1} = i_1, \ldots, X_{t_n} = i_n)
= p_{i_1} \big(P^{t_2-t_1}\big)_{i_1 i_2} \big(P^{t_3-t_2}\big)_{i_2 i_3}\cdot \big(P^{t_n-t_{n-1}}\big)_{i_{n-1} i_n}.\end{gather*}
For $n = 1$, we obtain in particular
\begin{gather*} \mu(X_t = i) = p_i = \mu(X_0 = i).\end{gather*}
For $n = 2$ and $t \in \N_0$, we obtain
\begin{gather*} \mu(X_0 = i, X_t = j) = p_i \big(P^t\big)_{ij}.\end{gather*}
For $t = 1$, this leads to
\begin{gather*} \mu( X_0 = i,\, X_1 = j) = p_i P_{ij} \qquad \mbox{and} \qquad
 \mu(X_1 = j\,|\,X_0 = i) = P_{ij},\end{gather*}
which is~\eqref{eq:start}.

\section{Measures on path spaces for topological groups}\label{sec:3}

In this section we build a bridge between Markov processes
and the reconstruction results from the preceding section.
To this end, we return to the classical setting,
where the symmetric Lie group is $(\R, \R_+,-\id)$,
but the measure space is the path space
$P(Q) = Q^\R$ of a polish topological group $Q$.

If $\nu$ is a measure on a locally compact group
$Q$ and $(\mu_t)_{t \geq 0}$ is a convolution semigroup of symmetric
probability measures on $Q$ satisfying $\nu * \mu_t = \nu$ for
every $t > 0$, then the Klein--Landau reconstruction
theorem applies. Since the measure $\nu$
need not be f\/inite, $\nu$ may in particular be
the right invariant Haar measure $\mu_Q$ on $Q$.
The operators $P_t f = f * \mu_t$ def\/ine a~positive semigroup structure on
$L^2(Q,\nu)$, and the reconstruction process leads to a ref\/lection positive
unitary representation of $\R$ on $L^2(P(Q), \fB^\R, \nu)$ by translation
(Theorem \ref{thm:4.11}).
If $\mu_t = \gamma_t$ is the gaussian semigroup on $\R^d$, this
construction leads to an inf\/inite measure on $P(\R^d)$ which realizes in
particular the dilation of the heat semigroup as a subrepresentation
(cf.\ \cite{NO15a,Nel64}).

Here an interesting point is that the map
$Q \times P_*(Q) \to P(Q), (q,\omega) \mapsto q \omega$
yields a product decomposition of $P(Q)$
into the subgroup of constant paths and the subgroup
$P_*(Q) $
of pinned paths. The measures constructed on $P(Q)$ are actually
product measures in these coordinates, where $Q$ carries the measure $\nu$ and
the measure on $P_*(Q)$ is a probability measure determined by the~$\mu_t$. There
is a natural measure preserving $\R$-action on $P_*(Q)$ by
\begin{gather*} (V_t \omega)(s) := \omega(-t)^{-1} \omega(s-t). \end{gather*}
The measure in $P_*(Q)$ def\/ines a
$Q$-valued stochastic process $(X_t)_{t \in \R}$ with $X_\1 = \1$
(the constant function).
For the special case, where $\mu_t = \gamma_t$ is the gaussian semigroup
on $\R^d$, this construction leads to the Wiener measure on $P_*(\R^d)$.
Presently, we do not know how to obtain a
similar factorization if $(\R,\R_+,-\id_\R)$ is replaced
by some $(G,S,\tau)$,\footnote{This is caused by the
ambiguities related to the discrepancy between f\/inite
subsets $\{g_1,\ldots, g_n\}$ of $G$ and
increasing tuples $(g_1,\ldots, g_n)$ with respect to
$\prec_S$.} so we discuss the one-dimensional case in
Subsection~\ref{subsec:4.1}.

\subsection{One-parameter convolution semigroups of measures on polish groups}
\label{subsec:4.1}

In this subsection $Q$ is a {\it polish topological group.}
We write $\fB$ for the $\sigma$-algebra of Borel subsets of~$Q$.
A kernel $K \colon Q \times \fB \to [0,\infty]$
on $(Q,\fB)$ (cf.~Appendix~\ref{subsec:6.4}) is said to be
{\it left invariant} if
\begin{gather*} K(gh, gA) = K(h,A) \qquad \mbox{for all} \quad g,h \in Q, A \in \fB.\end{gather*}
Then $\mu(A) := K(\1,A)$ is a positive Borel measure on $Q$ such that $K(g,A)=\mu (g^{-1}A)$.
If, conversely, $\mu$ is a $\sigma$-f\/inite Borel measure on $Q$, then
$K(g,A) := \mu(g^{-1}A)$ is a left invariant kernel on $Q \times \fB$ because, for every
$A \in \fB$, the function $g \mapsto \mu(g^{-1}A)$ is measurable.
In fact, $\tilde A := \{ (x,y) \in Q^2 \colon xy \in A \}$ is a Borel subset of $Q \times Q$,
and $\tilde A \cap (\{g\} \times Q) = \{g\} \times g^{-1}A$.
We thus obtain a one-to-one correspondence between left invariant
Markov kernels $K$ on $Q$ and Borel probability measures $\mu$ on~$Q$.
Since the product of two kernels corresponds to the convolution
product of the corresponding measures,
this leads to a one-to-one correspondence between convolution semigroups
$(\mu_t)_{t \geq 0}$ of Borel probability measures on $Q$ and left invariant Markov
semigroups $(P_t)_{t \geq 0}$ on $(Q,\fB)$. We recall from
Remark~\ref{rem:2.21}(b) that the kernel $P_t$ acts on functions by right convolution
\begin{gather*} P_t f = f * \mu_t \qquad \mbox{for} \quad t \geq 0,\end{gather*}
where $f*\mu_t(x)=\int_G f(xg)\, d\mu_t(g)$.

Consider the path group $P(Q) = Q^\R$ and the subgroup
\begin{gather*} P_*(Q) := \{ \omega \in P(Q) \colon \omega(0) = \1\}.\end{gather*}
Starting with a convolution semigroup $(\mu_t)_{t \geq 0}$, we obtain
for the initial distribution $\delta_\1$
the Markov process $(X_t)_{t \geq 0}$ with independent stationary increments
\cite[Corollary~35.4]{Ba96}. The distribution of this process is a
probability measure $P^{\mu,+}$ on
\begin{gather*} P(Q)^+ = \big\{ \omega \in Q^\R \colon (\forall\, t \leq 0)\,\, \omega(t) = \1\big\}
\cong Q^{\R^\times_+}.\end{gather*}
For the f\/inite distributions on $Q^n$, described
by the kernels $P_\bt$, $\bt = (t_1, \ldots, t_n)$, $t_1 < \dots < t_n$,
we obtain with
Remark~\ref{rem:mark}
\begin{align}
 P_\bt(B) &= \int_{Q^{n}} \chi_B(x_1, \ldots, x_n)
\mu_{t_1}(dx_1) \mu_{t_2- t_1}(x_1^{-1}dx_2) \cdots
\mu_{t_n- t_{n-1}}(x_{n-1}^{-1} dx_n)\notag\\
&= \int_{Q^{n}} \chi_B(x_1, x_1 x_2, \ldots, x_1 \cdots x_n)
\mu_{t_1}(dx_1)\mu_{t_2- t_1}(dx_2)\cdots \mu_{t_n- t_{n-1}}(dx_n).\label{eq:density}
\end{align}
This means that
\begin{gather*} P_\bt= (\psi_n)_* (\mu_{t_1} \otimes \mu_{t_2-t_1}\otimes \cdots \otimes
\mu_{t_n-t_{n-1}})\end{gather*}
for the map
\begin{gather} \label{eq:psi}
\psi_n(g_1, \ldots, g_n) = (g_1, g_1 g_2, \dots, g_1 \cdots g_n).
\end{gather}

\begin{Remark}
Since the inverse of the map $\psi_n$ in \eqref{eq:psi}
is given by
\begin{gather*} \psi_n^{-1}(h_1, \ldots, h_n) = \big(h_1, h_1^{-1} h_2, \cdots, h_{n-1}^{-1} h_n\big),\end{gather*}
it follows that the random variables
\begin{gather*} X_{t_1}, X_{t_1}^{-1} X_{t_2}, \dots, X_{t_{n-1}}^{-1} X_{t_n} \end{gather*}
are independent. Since the distribution of $X_t^{-1} X_{t+s}$ is $\mu_s$ for every
$t,s \geq 0$, it follows that $(X_t)_{t \geq 0}$
has independent stationary increments
(cf.~also \cite[Corollary~35.4]{Ba96}).
\end{Remark}

For a curve $\omega \colon \R \to Q$, we def\/ine
$(\theta \omega)(t) := \omega(-t)$.
We write $P(Q)^\pm$ for the subgroup consisting
of paths whose value is $\1$ on $\R_-$, resp., $\R_+$ and
$Q \subeq P(Q)$ with the subgroup of constant paths.
For the following proposition, we recall that the involution
on the space $\cM(Q)$ of f\/inite real Borel measures on $Q$ is def\/ined by
\begin{gather*} \int_Q f(g)\, d\mu^*(g) = \int_Q f\big(g^{-1}\big)\, d\mu(g).\end{gather*}

\begin{Proposition} \label{prop:4.2} Let $P^{\mu,-} := \theta_*P^{\mu,+}$ and $P^\mu$
be the image of the measure $P^{\mu,-} \otimes P^{\mu,+}$ on~$P_*(Q)$
under the bijective product map
\begin{gather*} \Phi \colon \ P(Q)^- \times P(Q)^+ \to P_*(Q), \qquad
(\omega_-, \omega_+) \mapsto \omega_- \cdot \omega_+.\end{gather*}
If $\mu_t^* = \mu_t$ holds for every $t \geq 0$, then this measure is invariant
under $\theta$ and the one-parameter group of Borel isomorphisms, given by
\begin{gather*} (V_t \omega)(s) := \omega(-t)^{-1} \omega(s - t),\qquad t \in \R.\end{gather*}
\end{Proposition}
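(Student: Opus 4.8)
The plan is to characterize $P^\mu$ through the joint law of its \emph{increments} and then to observe that $\theta$ and each $V_t$ act on those increments in a measure-preserving way. Since evaluation at finitely many points generates the product $\sigma$-algebra on $Q^\R$, the probability measure $P^\mu$ is determined by the finite-dimensional distributions of the tuples $(\omega(s_1),\dots,\omega(s_n))$; and as every path is pinned with $\omega(0)=\1$, these are in turn determined by the joint distribution of the left increments $\omega(a)^{-1}\omega(b)$, $a<b$. The structural claim I would establish is: under $P^\mu$ the increments over disjoint intervals are independent and $\omega(a)^{-1}\omega(b)\sim\mu_{b-a}$ for all $a<b$.

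To prove this I would refine any partition by inserting the point $0$, separating indices into a backward and a forward block. On the forward side the values $\omega(s)$, $s>0$, are governed by $P^{\mu,+}$, whose increments are independent with $\omega(a)^{-1}\omega(b)\sim\mu_{b-a}$ by construction (this is exactly \eqref{eq:density}--\eqref{eq:psi}). On the backward side the values $\omega(s)$, $s<0$, are governed by $P^{\mu,-}=\theta_*P^{\mu,+}$; writing $\omega(s)=\omega_+'(-s)$ for a sample $\omega_+'\sim P^{\mu,+}$, one checks that for $a<b\le 0$ the increment $\omega(a)^{-1}\omega(b)=\omega_+'(-a)^{-1}\omega_+'(-b)$ is the inverse of the forward increment of $\omega_+'$ over $[-b,-a]$, hence is distributed according to $\mu_{b-a}^*$. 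This is the one place the hypothesis $\mu_t^*=\mu_t$ is used: it turns $\mu_{b-a}^*$ back into $\mu_{b-a}$, so the backward increments obey the same law as the forward ones, and being inverses of distinct (hence independent) forward increments they are mutually independent. Because $P^{\mu,-}\otimes P^{\mu,+}$ is a product, the backward and forward blocks are independent of each other; finally an increment straddling $0$, say $\omega(a)^{-1}\omega(b)$ with $a<0<b$, factors as $[\omega(a)^{-1}\omega(0)]\,[\omega(0)^{-1}\omega(b)]$ into two independent pieces distributed $\mu_{-a}$ and $\mu_b$ (again using symmetry for the first), whose product is $\mu_{-a}*\mu_b=\mu_{b-a}$ by the convolution semigroup property. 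Assembling the cases proves the claim.

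Granting this characterization, the two invariances are short. For $\theta$ one has $(\theta\omega)(0)=\1$ and $(\theta\omega)(a)^{-1}(\theta\omega)(b)=\omega(-a)^{-1}\omega(-b)$, the inverse of the increment of $\omega$ over $[-b,-a]$; using $\mu_{b-a}^*=\mu_{b-a}$ and that reflection carries disjoint intervals to disjoint intervals, $\theta\omega$ has precisely the characterizing increment law, so $\theta_*P^\mu=P^\mu$. (Alternatively, under $\Phi$ the map $\theta$ corresponds to $(\omega_-,\omega_+)\mapsto(\theta\omega_+,\theta\omega_-)$, and the identities $\theta_*P^{\mu,+}=P^{\mu,-}$, $\theta_*P^{\mu,-}=P^{\mu,+}$ merely interchange the two independent factors, leaving $P^{\mu,-}\otimes P^{\mu,+}$ fixed.) For $V_t$ the group law $V_tV_{t'}=V_{t+t'}$ and $(V_t\omega)(0)=\1$ are immediate, and the decisive computation is
\begin{gather*}
(V_t\omega)(a)^{-1}(V_t\omega)(b)
=\omega(a-t)^{-1}\omega(-t)\,\omega(-t)^{-1}\omega(b-t)\\
=\omega(a-t)^{-1}\omega(b-t),
\end{gather*}
so the increment of $V_t\omega$ over $[a,b]$ equals the increment of $\omega$ over $[a-t,b-t]$. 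The latter has law $\mu_{b-a}$, a time shift preserves disjointness of intervals, and $V_t\omega$ is pinned at $0$; hence $V_t\omega$ has the characterizing increment law and $(V_t)_*P^\mu=P^\mu$.

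The main obstacle is the structural claim of the second paragraph—specifically arranging that the backward increments match the forward ones (where symmetry enters) and that the increment straddling $0$ is correctly distributed. Once the increment description is in place, both invariances reduce to the elementary observations that $\theta$ and $V_t$ act on increments by reflection and by time shift, respectively.
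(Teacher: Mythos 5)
Your proof is correct, and for the $\theta$-invariance your parenthetical alternative (swapping the two factors under $\Phi$ and using $\theta_*P^{\mu,\pm}=P^{\mu,\mp}$) is literally the paper's argument. For the $V_t$-invariance, however, you take a genuinely different route. The paper first reduces to $t>0$ via the identity $\theta V_t\theta=V_{-t}$ and then verifies $V_tP_\bt=P_{(t_1+t,\ldots,t_n+t)}$ by a long explicit computation of the transformed finite-dimensional integrals (after inserting the auxiliary points $t_m=-t$ and $t_k=0$ into the partition), with the symmetry $\mu_s^*=\mu_s$ invoked at the step marked $(\dagger)$ to eliminate the inverses. You instead prove an intermediate characterization: under $P^\mu$ the increments $\omega(a)^{-1}\omega(b)$ over intervals with disjoint interiors are jointly independent with law $\mu_{b-a}$ --- the hypothesis $\mu_t^*=\mu_t$ enters exactly once, for the backward and straddling increments --- and since a pinned path is recovered from its consecutive increments, this property determines $P^\mu$. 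Both invariances then become one-line checks that $\theta$ and $V_t$ carry increments to inverses of reflected increments, respectively to shifted increments. Your route buys a shorter, more conceptual argument that treats all $t\in\R$ at once and isolates where symmetry is used; the paper's computation is more self-contained (it never formulates a uniqueness-from-increments lemma) but at the cost of length and transparency. Mathematically the two are organizations of the same content: the formulas \eqref{eq:density} and \eqref{eq:psi} are precisely the increment description you axiomatize. The only point to state explicitly in a final write-up is the joint (not merely pairwise) independence of all consecutive increments of a partition straddling $0$; as you indicate, this is immediate from the product structure of $P^{\mu,-}\otimes P^{\mu,+}$ together with the independence of increments within each factor.
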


\begin{proof} The $\theta$-invariance of $P$ follows immediately from its construction and
\begin{gather*} \theta\Phi(\omega_-, \omega_+) = \theta(\omega_- \cdot \omega_+)
= \theta(\omega_-) \cdot \theta(\omega_+)
= \theta(\omega_+) \cdot \theta(\omega_-)
= \Phi(\theta(\omega_+), \theta(\omega_-)).\end{gather*}
We also note that $\theta V_t\theta= V_{-t}$ follows from
\begin{gather} \label{eq:thetarel}
(\theta V_t \theta \omega)(s) = (V_t\theta\omega)(-s)
= (\theta \omega)(-t)^{-1} (\theta \omega)(-s -t)
= \omega(t)^{-1} \omega(s +t) = (V_{-t}\omega)(s).
\end{gather}

Since $P$ is uniquely determined by its images under evaluation in f\/inite tuples
$\bt = (t_1, \ldots, t_n)$, $t_i < t_{i+1}$,
it suf\/f\/ices to show that the corresponding distributions
$P_\bt$ on $Q^n$ satisfy $V_t P_\bt = P_{(t_1 + t, \ldots, t_n + t)}$. In view of~\eqref{eq:thetarel}, we may assume that $t > 0$.
We may further assume that
$t_k = 0$ for some $k$ and that $t_m = -t$ for some $m < k$.

First we note that, for measurable functions $f_1, \ldots,f_{k-1}, f_{k+1}, \ldots, f_n$
on $Q$, we have
\begin{gather*}
\int_{P_*(Q)} \prod_{j \not=k} f_j(\omega(t_j))\, dP^\mu(\omega) \\
\qquad{} = \int_{P(Q)^+} \int_{P(Q)^+} \prod_{j=1}^{k-1}f_j(\omega_-(-t_j))\prod_{j=k+1}^n
f_j(\omega_+(t_j))\,
dP^{\mu,-}(\omega_-) dP^{\mu,+}(\omega_+) \\
\qquad{} \buildrel{\eqref{eq:density}}\over{=}
\int_{Q^{n-1}} f_1(g_{k-1} \cdots g_1)\cdots f_{k-1}(g_{k-1})
f_{k+1}(g_{k+1})\cdots f_n(g_{k+1} \cdots g_n) \\
\qquad \quad{}\times d\mu_{t_2-t_1}(g_1) \cdots d\mu_{t_{k-1}- t_{k-2}}(g_{k-2})d\mu_{-t_{k-1}}(g_{k-1})\\
 \qquad \quad{}\times
d\mu_{t_{k+1}}(g_{k+1})d\mu_{t_{k+2} - t_{k+1}}(g_{k+2}) \cdots d\mu_{t_n - t_{n-1}}(g_n).
\end{gather*}
For $F(\omega) = \prod\limits_{j = 1}^n f_j(\omega(t_j))$, this leads with $t_m = -t$ and $t_k = 0$ to
\begin{gather*}
\int_{P(Q)} (V_t F)(\omega)\, dP^\mu(\omega) = \int_{P(Q)} F\big(\omega(t)^{-1}\omega(\cdot + t)\big)
\, dP^\mu(\omega) \\
= \int_{P(Q)^+} \int_{Q^{m-1}}
f_1\big(\omega(t)^{-1}g_{m-1} \cdots g_1\big) \cdots f_{m-1}\big(\omega(t)^{-1} g_{m-1}\big) f_m\big(\omega(t)^{-1}\big) \\
\qquad \quad{}\times f_{m+1}\big(\omega(t)^{-1} \omega(t+ t_{m+1})
\cdots f_n\big(\omega(t)^{-1}\omega(t+t_n)\big)\big) \\
\qquad \quad{}\times  d\mu_{-t_{m-1}-t}(g_{m-1}) d\mu_{t_{m-1}- t_{m-2}}(g_{m-2}) \cdots d\mu_{t_2-t_1}(g_1)
dP^{\mu,+}(\omega) \\
= \int_{Q^{n-1}}
f_1\big((g_m \cdots g_{k-1})^{-1}g_{m-1} \cdots g_1\big) \cdots f_{m-1}\big((g_m \cdots g_{k-1})^{-1}g_{m-1}\big)
f_m\big((g_m \cdots g_{k-1})^{-1}\big) \\
\qquad \quad{}\times f_{m+1}\big((g_m \cdots g_{k-1})^{-1} g_m\big) \cdots
f_{k-1}((g_m \cdots g_{k-1})^{-1} g_m \cdots g_{k-2}) \\
\qquad\quad{}\times  f_{k+1}(g_{k+1}) \cdots f_n(g_{k+1} \cdots g_n)\,
 d\mu_{-t_{m-1}-t}(g_{m-1}) d\mu_{t_{m-1}- t_{m-2}}(g_{m-2}) \cdots d\mu_{t_2-t_1}(g_1) \\
\qquad\quad{}\times d\mu_{t_{m+1}+t}(g_{m})d\mu_{t_{m+2} - t_{m+1}}(g_{m+1}) \cdots \\
\qquad\quad{}\times d\mu_{t_{k}-t_{k-1}}(g_{k-1})d\mu_{t_{k+1}-t_{k}}(g_{k+1}) \cdots d\mu_{t_n - t_{n-1}}(g_{n})\\
= \int_{Q^{n-1}}
f_1\big(g_{k-1}^{-1} \cdots g_m^{-1} g_{m-1} \cdots g_1\big) \cdots f_{m-1}\big(g_{k-1}^{-1} \cdots g_m^{-1}g_{m-1}\big)
f_m\big(g_{k-1}^{-1} \cdots g_m^{-1}\big) \\
\qquad \quad{}\times f_{m+1}\big(g_{k-1}^{-1} \cdots g_{m+1}^{-1}\big) \cdots
f_{k-1}\big(g_{k-1}^{-1}\big) f_{k+1}(g_{k+1}) \cdots f_n(g_{k+1} \cdots g_n) \\
\qquad \quad{}\times d\mu_{t_2-t_1}(g_1) \cdots
d\mu_{t_{m-1}- t_{m-2}}(g_{m-2})
d\mu_{t_m-t_{m-1}}(g_{m-1})
d\mu_{t_{m+1}-t_m}(g_{m})\\
\qquad\quad{}\times d\mu_{t_{m+2} - t_{m+1}}(g_{m+1}) \cdots
d\mu_{t_{k}-t_{k-1}}(g_{k-1}) d\mu_{t_{k+1}-t_{k}}(g_{k+1}) \cdots d\mu_{t_n - t_{n-1}}(g_{n}) \\
\buildrel{(\dagger)}\over{=} \int_{Q^{n-1}}
f_1(g_{k-1}\cdots g_1)\cdots f_{m-1}(g_{k-1} \cdots g_{m-1})
f_m(g_{k-1}\cdots g_m) \\
\qquad \quad{}\times f_{m+1}(g_{k-1} \cdots g_{m+1}) \cdots
f_{k-1}(g_{k-1}) f_{k+1}(g_{k+1}) \cdots f_n(g_{k+1} \cdots g_n) \\
\qquad\quad{}\times d\mu_{t_2-t_1}(g_1) \cdots d\mu_{t_m-t_{m-1}}(g_{m-1})
d\mu_{t_{m+1}-t_m}(g_{m})d\mu_{t_{m+2} - t_{m+1}}(g_{m+1}) \cdots \\
\qquad\quad{}\times d\mu_{t_{k}-t_{k-1}}(g_{k-1}) d\mu_{t_{k+1}-t_{k}}(g_{k+1}) \cdots d\mu_{t_n - t_{n-1}}(g_{n})\\
= \int_{Q^{n-1}}
f_1(g_{k-1}\cdots g_1)\cdots f_{k-1}(g_{k-1}) f_{k+1}(g_{k+1}) \cdots f_n(g_{k+1} \cdots g_n) \\
\qquad \quad{}\times d\mu_{t_2-t_1}(g_1) \cdots d\mu_{t_{k}-t_{k-1}}(g_{k-1})
d\mu_{t_{k+1}-t_{k}}(g_{k+1}) \cdots d\mu_{t_n - t_{n-1}}(g_{n}).
\end{gather*}
Here we have used in $(\dagger)$ the symmetry of the measures $\mu_t$ to get rid of the
inverses. The preceding calculation shows that the transformations $(V_t)_{t \in\R}$ on
$P_*(Q)$ leave the probability measure $P^\mu$ invariant.
\end{proof}

\begin{Remark} Note that the transformations $(V_t)_{t \in \R}$ on $P_*(Q)$ satisfy
\begin{gather*} V_t X_s = X_t^{-1} X_{s+t} \qquad \mbox{because} \quad
(V_t X_s)(\omega) = (V_{-t}\omega)(s) = \omega(t)^{-1}\omega(s+t).\end{gather*}
On functions on $P_*(Q)$, we obtain the action
\begin{gather*} (V_t F)(\omega) = F(\omega(t)^{-1} \omega(\cdot + t))\end{gather*}
which is isometric on $L^2(P_*(Q), \fB^{\R^\times}, P^\mu)$.
\end{Remark}

\begin{Example}(a) (Poisson semigroups)
For every $g \in K$, we obtain a convolution
semigroup of measures
by
\begin{gather*} \mu_t = e^{-t} \sum_{k = 0}^\infty \frac{t^k}{k!} \delta_{g^k} = e^{t(\delta_g - \delta_0)}
\qquad \mbox{for} \quad t \geq 0.\end{gather*}
This semigroup consists of symmetric measures if and only if $\delta_g$ is symmetric,
i.e., $g = g^{-1}$. In this case $g^{2n} = \1$ and $g^{2n+1} = g$ for all $n$, so that
\begin{gather*} \mu_t = \frac{1 + e^{-2t}}{2} \delta_\1 + \frac{1 - e^{-2t}}{2} \delta_g.\end{gather*}
Note that the limit for $t \to \infty$ is the measure $\shalf(\delta_\1 + \delta_g)$.
This particular example can also be considered as a one-parameter subsemigroup
of the semigroup of Markov matrices on $\R^2$, resp., Markov operators on the two-element
set, where $g$ is the transposition of the two elements
(see also Example~\ref{ex:findim}).

(b) (Velocity processes) If $K$ is a Lie group and $X \in \g$, then
we obtain a semigroup of measures by
\begin{gather*} \mu_t = \delta_{\exp tX} \qquad \mbox{for} \quad t \geq 0.\end{gather*}
For these semigroups the condition $\mu_t^* = \mu_t$ for every $t \geq 0$ is satisf\/ied
only if $X =0$, i.e., $\mu_t = \delta_\1$ for every~$t$.
\end{Example}

\begin{Example} The class of polish groups is quite large.

(a) Among locally compact groups, the second countable ones carry a complete left invariant
metric, turning them in a polish group. This covers in particular all f\/inite-dimensional
Lie groups with at most countably many connected components.

(b) Separable Fr\'echet--Lie groups are also polish, and this class contains in particular
all connected Banach--Lie groups whose Lie algebra is separable, gauge
groups and groups of dif\/feomorphisms of compact manifolds.

(c) Another important example of a polish group is the unitary group
$\U(\cH)_s$ of a separable Hilbert space $\cH$, endowed with the strong operator topology
(cf.\ \cite[Proposition~II.1]{Ne97} or \cite[Theorem~II.1, p.~93]{Sch73}).
\end{Example}

\begin{Remark} In recent years, many interesting one-parameter semigroups
of measures have been studied on inf\/inite-dimensional groups.

(a) On the group $\Diff(\bS^1)$ of dif\/feomorphisms of the circle with respect to
the $H^{3/2}$-metric, Brownian motion has been constructed by P.~Malliavin.
A dif\/ferent approach also exhibiting the invariance under rigid rotations
is described in \cite{Fa02} (cf.\ also \cite{Go08,Gr98}).

(b) In \cite{BS03} Brownian motion on compact groups is studied.
This is def\/ined to be a stochastic process $(X_t)_{t \geq 0}$ with
values in $K$ with
$X_0 = \1$, independent stationary increments and continuous sample paths.
In addition, it is assumed to be symmetric, biinvariant and {\it non-degenerate},
i.e., $X_t$ visits every open subset with positive probability.
For the corresponding convolution semigroup $(\mu_t)_{t \geq 0}$ this implies that
\begin{gather}
 \label{eq:decay}
\lim_{t \to 0} t^{-1} \mu_t\big(V^c\big) = 0
\end{gather}
for every open $\1$-neighborhood $V$ in $K$.
Conversely, \cite[Theorem~1.2]{BS03} characterizes the convolution semigroups
$(\mu_t)_{t \geq 0}$ corresponding to Brownian motions as those which satisfy,
in addition to \eqref{eq:decay}, that
$\mu_t \to \delta_\1$ weakly on $C(K)$, $\mu_t^* = \mu_t$,
$\mu_t$ is conjugation invariant (also called central),
and $\supp(\mu_t) = K$ for every $t > 0$.

(c) In \cite{Dr03}, Driver studies Brownian motion on the inf\/inite-dimensional
Banach--Lie group
\begin{gather*} W(K) := C_*([0,1],K) = \{ \omega \in C([0,1], K) \colon \omega(0) = \1\},\end{gather*}
where $K$ is a connected Lie group with compact Lie algebra,
i.e., $K \cong C \times \R^d$ for some $d \in \N_0$ and a compact Lie
group $C$. Here one has to construct a semigroup
$(\mu_t)_{t \geq 0}$ of {\it heat kernel measures} on the space
$W(\R^d)$. On $W(\R^d)$ one has to get hold of the smoothness
properties of the heat kernel measures $\mu_t$ corresponding to the
Wiener measure. The corresponding Hilbert space~$H(\g)$ is the space of f\/inite energy paths with values in the
Lie algebra $\g$ and the measures $\mu_t$ on~$W(K)$ are quasi-invariant
under left and right multiplication with elements
of the corresponding Cameron--Martin group~$H(K)$ \cite[Theorem~7.7]{Dr03}.
The unitary representations of $H(K)$ by left, resp.,
right multiplications has been identif\/ied recently with the
so-called energy representation of this group~\cite{ADGV15}. With similar techniques, the existence of heat kernel measures
is also obtained for the space of pinned paths with values in a~compact Riemannian manifold~$M$.
\end{Remark}

\subsection{Standard path space structures for locally compact groups}\label{subsec:4.2}

In this section we assume that the group $Q$ is locally compact and that the convolution semigroup $(\mu_t)_{t \geq 0}$ of probability measures on $Q$ is
{\it strongly continuous} in the sense that $\lim\limits_{t \to 0} \mu_t = \delta_\1 = \mu_0$ weakly on the space $C_b(Q)$ of bounded
continuous functions on~$Q$. We further assume that $\nu$ is a measure on $Q$ satisfying
$\nu * \mu_t = \nu$ for every $t > 0$, and, in addition, that the operators
\begin{gather*} P_t f := f * \mu_t \end{gather*}
on $L^2(Q,\nu)$ are symmetric. If
$\nu$ is a right Haar measure, then the symmetry of the operators $P_t$ is equivalent to
$\mu_t^* = \mu_t$.

\begin{Remark} Let $\mu_Q$ be a right Haar measure on $Q$ and assume that all
the measures $\mu_t$ are symmetric.
Integrating the right regular representation $(\pi^r(g)f)(x) = f(xg)$
of $Q$ on $L^2(Q,\mu_Q)$, we obtain a $*$-representation of
the convolution algebra $\cM(Q)$ on $L^2(Q)$ by
\begin{gather*} \pi^r(\mu) := \int_Q \pi^r(g)\, d\mu(g), \qquad
\big(\pi^r(\mu)f\big)(x) = \int_Q f(xg)\, d\mu(g) = (f * \mu)(x).\end{gather*}
Then $P_t := (\pi^r(\mu_t))_{t \geq 0}$ is a strongly continuous semigroup
of hermitian contractions on $L^2(Q)$ (here we use $\mu_t^* = \mu_t$)
which are Markov operators.
This is a positive semigroup structure because the continuity in measure
(SPS4) follows from the continuity of the action of~$Q$
(cf.\ Remark~\ref{rem:AK-exs}). Hence the Klein--Landau reconstruction
Theorem~\ref{thm:1.8G2} provides a path space model for the corresponding
dilation representation of~$\R$~\cite{NO15a}.
\end{Remark}

For $P_t$ as above, formula \eqref{eq:5}
\begin{gather*} \int_{P(Q)} f_{t_1} \otimes \cdots \otimes f_{t_n}\, dP^\mu_\bt
= \int_Q f_1 P_{t_2 -t_1} f_2\cdots P_{t_n-t_{n-1}} f_n\, d\nu\end{gather*}
in Remark~\ref{rem:mark-meas} specializes for
$t_1 \leq \cdots \leq t_n$ and $0 \leq f_j$ to
\begin{gather*}
\int_{P(Q)} f_1 \otimes \cdots \otimes f_n\, dP^\mu_\bt \\
\qquad{} = \int_{Q^n} f_1(g_1) f_2(g_1 g_2) \cdots f_n(g_1 \cdots g_n)\,
d\nu(g_1) d\mu_{t_2-t_1}(g_2) \cdots d\mu_{t_n-t_{n-1}}(g_n) \\
\qquad{}= \int_{Q^n} f_1(g_1) f_2(g_2) \cdots f_n(g_n)\,
d\psi_*(\nu \otimes \mu_{t_2-t_1} \otimes \cdots \otimes \mu_{t_n-t_{n-1}}),
\end{gather*}
where
$\psi(g_1,\ldots, g_n) = (g_1, g_1 g_2, \dots, g_1 \cdots g_n)$.
We conclude that
\begin{gather*} P^\mu_\bt = \psi_*(\nu \otimes \mu_{t_2-t_1} \otimes \cdots \otimes \mu_{t_n-t_{n-1}}).\end{gather*}

\begin{Lemma} \label{lem:consist} Let $\nu$ be a measure on $Q$ satisfying
$\nu * \mu_t = \nu$ for every $t > 0$. Then we obtain for
$t_1 \leq \cdots \leq t_n$ and
$\bt := (t_1, \ldots, t_n)$ on $Q^n$ a consistent family of measures
\begin{gather*} P^\mu_\bt := (\psi_n)_* (\nu \otimes \mu_{t_2-t_1}\otimes \cdots \otimes
\mu_{t_n-t_{n-1}}).\end{gather*}
If $Q$ is a polish group, this leads to a unique measure $P^\mu$ on $P(Q)$ with
$(\ev_\bt)_* P^\mu = P^\mu_\bt$ for $t_1 < \ldots < t_n$.
\end{Lemma}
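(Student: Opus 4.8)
The plan is to prove the two assertions in turn: first that the finite-dimensional laws $P^\mu_\bt$ form a projective (Kolmogorov-consistent) family, and then to invoke the extension theorem in the polish setting to produce $P^\mu$. The representation $P^\mu_\bt = (\psi_n)_*(\nu \otimes \mu_{t_2-t_1}\otimes\cdots\otimes\mu_{t_n-t_{n-1}})$ with $\psi_n$ as in \eqref{eq:psi} has already been derived, so only consistency and the extension are new.

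For consistency it suffices to check compatibility under deletion of a single time point, since every restriction map between finite tuples is a composition of such deletions. Fix $t_1 < \cdots < t_n$ and consider the projection $Q^n \to Q^{n-1}$ forgetting the entry at $t_i$; I must show it pushes $P^\mu_\bt$ to $P^\mu_{\bt \setminus \{t_i\}}$. Writing $h_j = g_1 \cdots g_j$ for the image coordinates of $\psi_n$, there are three cases. If $i = n$, then $h_n$ involves $g_n$ only through its final factor, so the projection integrates out $g_n$ against the probability measure $\mu_{t_n - t_{n-1}}$ and leaves $(\psi_{n-1})_*(\nu \otimes \cdots \otimes \mu_{t_{n-1}-t_{n-2}})$, as required. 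If $1 < i < n$, I reindex by $\tilde g_j := g_j$ for $j < i$, $\tilde g_i := g_i g_{i+1}$, and $\tilde g_j := g_{j+1}$ for $j > i$; a direct check gives that the retained coordinates equal $\psi_{n-1}(\tilde g_1, \ldots, \tilde g_{n-1})$, while the law of $\tilde g_i$ is $\mu_{t_i - t_{i-1}} * \mu_{t_{i+1}-t_i} = \mu_{t_{i+1}-t_{i-1}}$ by the convolution semigroup property, yielding exactly $P^\mu_{\bt \setminus \{t_i\}}$. Finally, if $i = 1$, the same reindexing with $\tilde g_1 := g_1 g_2$ merges the first two factors into $\nu * \mu_{t_2-t_1}$, which equals $\nu$ by hypothesis; this is the single place where $\nu * \mu_t = \nu$ is used. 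Independence of the $\tilde g_j$ is automatic, each being a function of a disjoint block of the independent $g_1,\ldots,g_n$.

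With consistency established, the existence and uniqueness of a measure $P^\mu$ on $P(Q) = Q^\R$ with $(\ev_\bt)_* P^\mu = P^\mu_\bt$ follows from the Kolmogorov--Daniell extension theorem for polish spaces, \cite[Corollary~35.4]{Ba96}, exactly as in Remark~\ref{rem:mark-meas}; uniqueness holds because the cylinder sets form a generating $\pi$-system for the product $\sigma$-algebra $\fB^\R$.

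The main obstacle is that $\nu$ need not be finite (it may be Haar measure), so the projective family consists of infinite measures and the textbook extension theorem for probability measures does not apply verbatim. This is circumvented by anchoring: every one-dimensional marginal equals $\nu$, since $P^\mu_{(t)} = \nu$, and conditioning on the value $\omega(0) = x$ turns the transition data into genuine probability kernels (forward via $P_t$, backward via the $\nu$-adjoint kernels, which coincide with the forward ones because the symmetry $\mu_t^* = \mu_t$ makes the process reversible with respect to $\nu$). One thus obtains for each $x$ a probability measure $P_x$ on the paths through $x$ and sets $P^\mu := \int_Q P_x\, d\nu(x)$; verifying that its finite-dimensional marginals are the $P^\mu_\bt$ is then a routine disintegration computation.
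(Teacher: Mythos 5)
Your proof is correct, and the consistency half is essentially the paper's own argument: the paper likewise notes that the projection $p\colon Q^n \to Q^{n-1}$ omitting the $j$-th coordinate satisfies $(p \circ \psi_n)(g_1, \ldots, g_n) = \psi_{n-1}(g_1, \ldots, g_{j-1}, g_j g_{j+1}, g_{j+2}, \ldots, g_n)$, so that the pushforward merges $\mu_{t_j - t_{j-1}} * \mu_{t_{j+1}-t_j} = \mu_{t_{j+1}-t_{j-1}}$ for $j \geq 2$ and $\nu * \mu_{t_2 - t_1} = \nu$ for $j = 1$; these are exactly your three cases (the paper folds your case $i=n$ into $j \geq 2$). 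Where your write-up genuinely differs is the extension step. The paper's proof of the lemma simply asserts that consistency yields $P^\mu$ (pointing back to Remark~\ref{rem:mark}, i.e.\ to \cite[Corollary~35.4]{Ba96}), without addressing the fact that this extension theorem is stated for probability measures while $\nu$ may be infinite. Your anchoring construction --- conditioning at $\omega(0)=x$, building two-sided probability transition kernels, obtaining $P_x$ by Kolmogorov extension for each $x$, and setting $P^\mu := \int_Q P_x\, d\nu(x)$ --- is precisely what the paper carries out, not in this proof, but in the final (unnumbered) subsection of Appendix~\ref{subsec:6.4} on extending stationary Markov processes to the real line, including the computation that uses reversibility to move $\nu$ from the time-$0$ anchor to the earliest coordinate so that the marginals take the form $(\psi_n)_*(\nu \otimes \mu_{t_2-t_1} \otimes \cdots \otimes \mu_{t_n - t_{n-1}})$. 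Two small caveats on your version: the reversibility you invoke should be the standing hypothesis of Subsection~\ref{subsec:4.2} that the operators $P_t f = f * \mu_t$ are symmetric on $L^2(Q,\nu)$ --- the condition $\mu_t^* = \mu_t$ you cite is equivalent to it only when $\nu$ is right Haar measure --- and your uniqueness claim via the cylinder $\pi$-system needs the usual $\sigma$-finiteness rider: writing $Q = \bigcup_k Q_k$ with $\nu(Q_k) < \infty$, the cylinders $\{\omega \colon \omega(t_1) \in Q_k\}$ have finite $P^\mu$-measure and exhaust $P(Q)$, so the Dynkin uniqueness argument does apply.
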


\begin{proof} This follows from Remark~\ref{rem:mark}.
For the sake of clarity, we
give a direct argument for the consistency of the measures $P^\mu_\bt$.
Pick $j \in \{1,\ldots, n\}$ and let
$p \colon Q^n \to Q^{n-1}$ denote the projection omitting the $j$th component.
Then
\begin{gather*} (p \circ \psi_n)(g_1, \ldots, g_n)
= \psi_{n-1}(g_1, \ldots, g_{j-1}, g_j g_{j+1}, g_{j+2}, \ldots, g_n) \end{gather*}
implies for $j \geq 2$ that
\begin{gather*}
 p_*(\psi_n)_* (\nu \otimes \mu_{t_2-t_1}\otimes \cdots \otimes
\mu_{t_n-t_{n-1}}) \\
\qquad{} = (\psi_{n-1})_* (\nu \otimes \cdots \otimes \mu_{t_j - t_{j-1}} *
\mu_{t_{j+1} - t_j} \otimes \cdots \otimes \mu_{t_n-t_{n-1}}) \\
\qquad{}  = (\psi_{n-1})_* (\mu_{t_1} \otimes \cdots \otimes \mu_{t_{j+1} - t_{j-1}}
\otimes \cdots \otimes \mu_{t_n-t_{n-1}}).
\end{gather*}
For $j = 1$, we obtain
\begin{align*}
 p_*(\psi_n)_* (\nu \otimes \mu_{t_2-t_1}\otimes \cdots \otimes
\mu_{t_n-t_{n-1}})
&= (\psi_{n-1})_* (\nu * \mu_{t_2 - t_1} \otimes \mu_{t_3 - t_2} \otimes
\cdots \otimes \mu_{t_n-t_{n-1}}) \\
&= (\psi_{n-1})_* (\nu \otimes \mu_{t_3 - t_2} \otimes \cdots
\otimes \cdots \otimes \mu_{t_n-t_{n-1}}).
\end{align*}
This implies consistency of the measures $P^\mu_\bt$ on $Q^n$.
The consistency condition implies the existence
of a measure $P^\mu$ on $Q^{\R}$ with
$(\ev_\bt)_*\mu = P^\mu_\bt$ for
$t_1 \leq \cdots \leq t_n$
(cf.\ Def\/inition~\ref{def:g-process}).
\end{proof}

From the Klein--Landau reconstruction theorem we immediately obtain the following specialization. We refer to~\cite{NO15a} for other constructions of this dilation.

\begin{Theorem} \label{thm:4.11} Suppose that $Q$ is a second countable locally compact group.
Let $\mu$ be the measure on~$Q^\R$ corresponding to the
symmetric convolution semigroup $(\mu_t)_{t \geq 0}$ of probability measures
on~$Q$ and the measure $\nu$ on $Q$ for which the operators $P_t f = f * \mu_t$
define a positive semigroup structure on $L^2(Q,\nu)$. Then the translation action
$(U_t \omega)(s) := \omega(s-t)$ on $P(Q) = Q^\R$ is measure preserving
and $\mu$ is invariant under $(\theta\omega)(t) := \omega(-t)$.
We thus obtain a reflection positive one-parameter group of Markov type
on
$\cE := L^2(P(Q), \fB^\R,\mu)$ with respect to
$\cE_+ := L^2(P(Q), \fB^{\R_+},\mu)$, for which
$\cE_0 := \ev_0^*(L^2(Q,\nu))\cong L^2(Q,\nu)$
and $\hat \cE \cong L^2(Q,\nu)$ with $q(F) = E_0 F$ for $F \in \cE_+$. We further have
\begin{gather*} E_0 U_t E_0 = P_t \qquad \mbox{holds for} \quad P_t f = f * \mu_t,\end{gather*}
so that the $U$-cyclic subrepresentation generated by $\cE_0$
is a unitary dilation of the hermitian one-parameter semigroup $(P_t)_{t \geq 0}$ on $L^2(Q,\nu)$.
\end{Theorem}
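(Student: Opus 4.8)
The plan is to recognize the hypotheses as providing a \emph{standard} positive semigroup structure for $(\R,\R_+,-\id_\R)$ and then to invoke the Klein--Landau reconstruction theorem (Theorem~\ref{thm:1.8G2}), identifying the abstractly reconstructed measure space with the concrete path space $P(Q)=Q^\R$. First I would check the axioms (SPS1)--(SPS4) of Definition~\ref{def:1.5G}(b) for the operators $P_tf=f*\mu_t$ on $L^2(Q,\nu)$: positivity preservation (SPS1) and $P_t1=1$ (SPS2) are immediate from $\mu_t\ge 0$ and $\mu_t(Q)=1$; since $t^\sharp=t$ for the involution attached to $\tau=-\id_\R$, the involutivity condition (SPS3) is exactly the assumed symmetry of the $P_t$ on $L^2(Q,\nu)$ (equivalently $\mu_t^*=\mu_t$); and strong continuity in measure (SPS4) follows from the weak continuity $\mu_t\to\delta_\1$ of the convolution semigroup (cf.\ Remark~\ref{rem:AK-exs}). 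Since $\R=\R_+\cup\R_-$, the totality hypothesis $G=S\cup S^{-1}$ of Theorem~\ref{thm:1.8G2} holds.

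Next I would apply Theorem~\ref{thm:1.8G2} to obtain a $(\R,\R_+,-\id_\R)$-measure space whose associated standard positive semigroup structure is $(L^2(Q,\nu),P,L^\infty(Q,\nu))$. Because $Q$ is second countable and locally compact, it is polish, so the realization discussion in Remark~\ref{rem:mark-meas} together with the consistency computation of Lemma~\ref{lem:consist} lets me replace the abstract spectrum $Q_0^{\R}$ by $Q^\R=P(Q)$: the reconstructed measure is precisely the measure $P^\mu=\mu$ whose marginals over tuples $\bt=(t_1,\dots,t_n)$, $t_1<\cdots<t_n$, are $P^\mu_\bt=(\psi_n)_*(\nu\otimes\mu_{t_2-t_1}\otimes\cdots\otimes\mu_{t_n-t_{n-1}})$. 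Under this realization the $G$-action becomes translation $(U_t\omega)(s)=\omega(s-t)$, the involution becomes time reversal $(\theta\omega)(t)=\omega(-t)$, and $\Sigma_0=\ev_0^*(\fB)$, so that $\cE_0=\ev_0^*(L^2(Q,\nu))\cong L^2(Q,\nu)$. Proposition~\ref{prop:markov} then already yields that the natural action of $\R$ on $\cE=L^2(P(Q),\fB^\R,\mu)$ is reflection positive with respect to $\cE_+=L^2(P(Q),\fB^{\R_+},\mu)$, with $\theta U_t\theta=U_{-t}$.

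The main point is the Markov property. The marginals $P^\mu_\bt$ are the finite-dimensional distributions of a stationary process with independent increments, which is therefore a Markov process; conditionally on the present $\ev_0$, the past and the future are independent. In terms of the conditional expectations this means that for a $\Sigma_-$-measurable function $h$ one has $E_+h=E_0h$, whence $E_+E_-=E_0=E_+E_0E_-$ (using $\Sigma_0\subseteq\Sigma_-$ and $\Sigma_0\subseteq\Sigma_+$). Thus the quadruple is of Markov type, and by Proposition~\ref{prop:markov} the canonical map $q|_{\cE_0}\colon\cE_0\to\hat\cE$ is unitary, giving $\hat\cE\cong L^2(Q,\nu)$ with $q(F)=E_0F$ for $F\in\cE_+$. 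I expect this verification---establishing the Markov identity rigorously and transporting it through the polish realization of Remark~\ref{rem:mark-meas}---to be the real obstacle, the rest being bookkeeping once the reconstruction theorem is in hand.

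Finally I would read off $E_0U_tE_0=P_t$ from the two-point function: formula~\eqref{eq:5} gives, for $f,g\ge 0$ in $L^2(Q,\nu)\cap L^\infty(Q,\nu)$ and $t\ge 0$, the identity $\int_{P(Q)}\overline{g(\omega(0))}\,f(\omega(t))\,d\mu=\langle P_tf,g\rangle_{L^2(Q,\nu)}$, and since $U_t\ev_0^*f=\ev_t^*f$ this says exactly $E_0U_tE_0=P_t$ under $\cE_0\cong L^2(Q,\nu)$; density of such products extends the identity to all of $L^2(Q,\nu)$. The dilation statement is then immediate from Remark~\ref{rem:dil}: the $U$-cyclic subrepresentation generated by $\cE_0$ is a unitary dilation of the hermitian contraction semigroup $(P_t)_{t\ge 0}$.
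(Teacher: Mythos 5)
Your proposal is correct and takes essentially the same route as the paper: the paper obtains this theorem as an immediate specialization of the Klein--Landau reconstruction theorem (Theorem~\ref{thm:1.8G2}), with the standard positive semigroup structure axioms checked in the remark opening Subsection~\ref{subsec:4.2} and the path-space realization supplied by Lemma~\ref{lem:consist} and Remark~\ref{rem:mark-meas}, exactly as you do. The only (harmless) difference is that you verify the Markov property probabilistically via independent increments, whereas the paper gets it from the reconstruction itself---$q(\cE_0)$ is dense in $\hat\cE \cong L^2(Q,\nu)$, so Proposition~\ref{prop:markov} applies---and your probabilistic language should be read with care when $\nu$ is infinite (e.g., Haar measure), where ``independence'' is only a statement about the product structure of the kernels.
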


\begin{Example} (a) For $Q = \R^d$, the heat semigroup is given on
$L^2(\R^d)$ by
\begin{gather*} e^{t\Delta}f = f * \gamma_t \qquad \mbox{where} \quad
d\gamma_t(x) = \frac{1}{(2\pi t)^{d/2}} e^{-\frac{1}{2}\frac{\|x\|^2}{t}}\,
dx.\end{gather*}
We call the corresponding measure on $Q^\R$ the {\it Lebesgue--Wiener
measure} (cf.\ Theorem~\ref{thm:4.11}).

(b) If $Q$ is a f\/inite-dimensional Lie group and $X_1, \ldots, X_n$ is a basis of the Lie algebra,
then we obtain a left invariant {\it Laplacian} by
$\Delta := \sum\limits_{j =1}^n L_{X_j}^2,$
where $L_{X_j}$ denotes the right invariant vector f\/ield with
$L_{X_j}(\1) = X_j$. Then there also exists a semigroup $(\mu_t)_{t \geq 0}$
of probability measures on~$Q$ such that \cite[Section~8]{Nel69}
\begin{gather*} e^{t\Delta} f = f * \mu_t \qquad \mbox{for} \quad t \geq 0.\end{gather*}
Accordingly, we obtain a {\it Haar--Wiener measure} on the path space
$Q^\R$.
\end{Example}

\section[Gaussian $(G,S,\tau)$-probability spaces]{Gaussian $\boldsymbol{(G,S,\tau)}$-probability spaces}\label{sec:2}

In this section we discuss the second quantization functor and its connection to
gaussian $(G,S,\tau)$-probability spaces. We then discuss equivalence of gaussian measures for
reproducing kernel Hilbert spaces. The main results of this section are contained
in Subsection~\ref{se:4.4}, where
we discuss gaussian measures on the space of distributions on a Lie group.
Here the distribution vectors of
unitary representations play an important role.

\subsection{Second quantization and gaussian processes} \label{se:SecondQ}

\begin{Definition} \label{def:1.1} Let $\cH$ be a real Hilbert space. A {\it gaussian
random process indexed by~$\cH$}
is a~random process $(\phi(v))_{v \in \cH}$ on a probability space $(Q,\Sigma, \mu)$
such that
\begin{itemize}\itemsep=0pt
\item[\rm(GP1)] $\{\phi(v) \colon v \in \cH\}$ is full, i.e., these random variables generate the
$\sigma$-algebra $\Sigma$ modulo zero sets.
\item[\rm(GP2)] Each $\phi(v)$ is a gaussian random variable of mean zero.
\item[\rm(GP3)] $\la \phi(v), \phi(w) \ra
= \la v, w\ra$ is the inner product on $\cH$.
\end{itemize}
\end{Definition}

According to \cite[Theorems~I.6, I.9]{Si74}, gaussian random processes indexed by $\cH$
exist and are unique up to isomorphisms of probability spaces. This means that,
if $(Q,\Sigma,\mu)$ is the corresponding probability space, then the algebra
$L^\infty(Q,\mu)$ with its state given by $\mu$ is uniquely determined by $\cH$
(cf.\ \cite[Section~1.1]{Si74}). Its projections correspond to $\Sigma/J_\mu$, where
$J_\mu$ is the ideal of zero sets\footnote{A short proof for the uniqueness can be derived from reproducing kernel techniques. If
$\phi_j \colon \cH \to L^2(Q_j, \mu_j)$, $j=1,2$, are two realizations, then the corresponding
covariance kernels on $\cH$ coincide, so that there exists a~unique unitary operator $\Phi \colon L^2(Q_1, \mu_1) \to L^2(Q_2, \mu_2)$ such that
$\Phi \circ \phi_1 = \phi_2$. Accordingly, $\Phi \circ \cA_1 \circ \Phi^{-1} = \cA_2$
for the von Neumann algebras $\cA_j := \{ e^{i\phi_j(v)} \colon v \in \cH\}'' \cong L^\infty(Q_j, \mu_j)$.}.

\begin{Theorem} \label{thm:dual} Let $V$ be a real vector space,
$V^*$ be its algebraic dual and
$\fB^*$ be the smallest $\sigma$-algebra for which all evaluation functions
$V^* \to V, \alpha \mapsto \alpha(v)$, are
measurable. Then a function $\chi \colon V \to \C$ is the Fourier transform
$\chi = \hat\mu$ of a measure $\mu$ on $(V^*, \fB^*)$
if and only if $\chi$ is positive definite and continuous
on every finite-dimensional subspace. In this case $\mu$ is uniquely determined.
\end{Theorem}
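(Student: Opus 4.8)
The plan is to establish necessity directly and to obtain sufficiency by a Kolmogorov extension argument followed by a Minlos-type step that concentrates the resulting measure on the linear functionals $V^*$. For necessity, assume $\chi = \hat\mu$ with $\mu$ a finite measure of total mass $\chi(0) = \mu(V^*)$. Positive definiteness is immediate from
\[ \sum_{j,k} c_j \overline{c_k}\, \chi(v_j - v_k) = \int_{V^*} \Big| \sum_j c_j e^{i\alpha(v_j)} \Big|^2 d\mu(\alpha) \ge 0, \]
and for a finite-dimensional subspace $W \subeq V$ continuity of $\chi|_W$ follows from dominated convergence, since $|e^{i\alpha(v)}| = 1$ and $\mu$ is finite.

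For sufficiency I would first reduce to the classical Bochner theorem on each finite coordinate tuple. Given $v_1, \dots, v_n \in V$, the map $\R^n \to \C$, $t \mapsto \chi\big(\sum_j t_j v_j\big)$, is continuous, being the restriction of $\chi$ to $\mathrm{span}(v_1,\dots,v_n)$ precomposed with a linear map, and it is positive definite as a pullback of a positive definite function under a linear map. Bochner's theorem on $\R^n$ then yields a unique finite measure $\nu_{(v_1,\dots,v_n)}$ on $\R^n$ with $\hat\nu_{(v_1,\dots,v_n)}(t) = \chi\big(\sum_j t_j v_j\big)$. Invariance under permutation of coordinates and the marginal relation obtained from $\chi\big(\sum_{j<n} t_j v_j + 0\cdot v_n\big) = \chi\big(\sum_{j<n} t_j v_j\big)$, again by uniqueness in Bochner's theorem, show that these measures form a consistent family. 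The Kolmogorov extension theorem, which requires no countability of the index set, then produces a unique measure $\mu$ on $(\R^V, \cB^V)$, where $\cB^V$ is the product $\sigma$-algebra, whose image under every finite evaluation $f \mapsto (f(v_1),\dots,f(v_n))$ equals $\nu_{(v_1,\dots,v_n)}$.

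The main obstacle, and the genuine content of the theorem, is to show that $\mu$ lives on the subset $V^* \subeq \R^V$ of linear functionals, the difficulty being that $V^*$ is cut out by uncountably many relations and need not lie in $\cB^V$. I would proceed in two steps. First, for fixed $v,w \in V$, evaluating the marginal on $\{v,w,v+w\}$ gives
\[ \int_{\R^V} e^{i r (f(v+w) - f(v) - f(w))}\, d\mu(f) = \chi\big(r(v+w) - rv - rw\big) = \chi(0) \]
for all $r \in \R$, so after normalizing the total mass to one the relation $f(v+w) = f(v) + f(w)$ holds $\mu$-almost everywhere, and likewise $f(\lambda v) = \lambda f(v)$. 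Second, to bypass the measurability problem, I use that each $A \in \cB^V$ depends on only countably many coordinates $C = \{v_1, v_2, \dots\}$; choosing a countable Hamel basis of $\mathrm{span}(C)$ and extending it to a Hamel basis of $V$, the countably many linearity relations among these vectors hold simultaneously off a single $\mu$-null set $N$, and on $\R^V \setminus N$ every point agrees on $C$ with the restriction of a genuine element of $V^*$. Hence any $A \in \cB^V$ with $A \cap V^* = \emptyset$ satisfies $A \subeq N$, so $\mu(A) = 0$; this says precisely that $V^*$ has full outer measure. The trace measure $\mu^*$ on $(V^*, \fB^*)$ defined by $\mu^*(A \cap V^*) := \mu(A)$ is therefore well defined and countably additive, and $\hat{\mu^*}(v) = \int_{\R^V} e^{if(v)}\, d\mu(f) = \chi(v)$.

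Uniqueness is then routine: two measures on $(V^*, \fB^*)$ with Fourier transform $\chi$ have identical finite-dimensional marginals by uniqueness in Bochner's theorem on $\R^n$, and since $\fB^*$ is generated by the finite evaluation maps, a $\pi$-system of cylinder sets of common finite total mass $\chi(0)$, a standard $\pi$--$\lambda$ argument forces the two measures to coincide. I expect the concentration-on-$V^*$ step to be the crux, with the reduction to countably many coordinates combined with the Hamel-basis extension being exactly what tames the uncountable family of linearity constraints.
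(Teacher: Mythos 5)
Your proof is correct, but it is genuinely more self-contained than what the paper offers: the paper proves necessity exactly as you do (positive definiteness via the quadratic-form identity, continuity on finite-dimensional subspaces via dominated convergence, phrased there as sequential continuity), but for sufficiency and uniqueness it gives no argument at all and simply cites Yamasaki \cite[Theorem~16.2]{Ya85}. Your route --- classical Bochner on each finite tuple, Kolmogorov extension over the uncountable index set $V$ to a measure on $(\R^V,\cB^V)$, and then the concentration step --- supplies precisely the content hidden behind that citation, and the concentration step, which is indeed the crux, is handled correctly: the computation $\int_{\R^V} e^{ir(f(v+w)-f(v)-f(w))}\,d\mu(f)=\chi(0)$ gives each linearity constraint $\mu$-a.s.; since any $A\in\cB^V$ depends on only countably many coordinates $C$, one only needs, for each $v\in C$, the single relation $f(v)=\sum_k\lambda_k f(b_k)$ expressing $v$ in a countable Hamel basis $\{b_k\}$ of $\mathrm{span}(C)$ (obtained either by iterating your pairwise relations through the finitely many partial sums, or directly by the same characteristic-function computation in $m+1$ coordinates), so countably many a.s.\ relations suffice and every $f$ off one null set agrees on $C$ with a genuine element of $V^*$ after extending the basis to $V$. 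This yields that $V^*$ has full outer $\mu$-measure, the trace $\sigma$-algebra of $\cB^V$ on $V^*$ is exactly $\fB^*$, and the trace measure is well defined, countably additive, and has Fourier transform $\chi$; the $\pi$--$\lambda$ uniqueness argument is also fine. What your approach buys is a complete proof from standard one-dimensional/finite-dimensional tools; what the paper's buys is brevity. Only cosmetic caveat: normalizing to a probability measure tacitly assumes $\chi(0)>0$, but if $\chi(0)=0$ then positive definiteness forces $\chi\equiv 0$ and $\mu=0$ works trivially.
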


\begin{proof} Since the Fourier transform $\hat\mu$ is positive def\/inite and sequentially
continuous, it is in particular continuous on every f\/inite-dimensional subspace of~$V$.
The converse follows from \cite[Theorem~16.2]{Ya85}.
\end{proof}

\begin{Example} \label{ex:1.1} Theorem~\ref{thm:dual}
 implies that the gaussian process indexed by
$\cH$ may be realized by the probability measure
$\gamma_\cH$ on $(\cH^a,\fB^*)$ (the algebraic dual)
whose Fourier transform is
\begin{gather*}\hat\gamma_\cH(v) = e^{- \frac{\|v\|^2}{2}} . \end{gather*}
Here $\phi(v)(\alpha) = \alpha(v)$, and
\begin{gather*} \int_{\cH^a} e^{i \alpha(v)}\, d\gamma_\cH(\alpha) = e^{-\frac{\|v\|^2}{2}} \qquad \mbox{implies}\quad
\int_{\cH^a} \alpha(v)^2\, d\gamma_\cH(\alpha) = \|v\|^2.\end{gather*}
This leads to
\begin{gather*} \big\la e^{i\phi(v)}, e^{i\phi(w)}\big\ra = e^{-\frac{\|v-w\|^2}{2}}
= e^{-\frac{\|v\|^2}{2}}e^{-\frac{\|w\|^2}{2}} e^{\la v,w\ra},\end{gather*}
so that $K_v := e^{i\phi(v)} e^{\frac{\|v\|^2}{2}}$ satisf\/ies
\begin{gather*} \la K_v, K_w \ra = e^{\la v,w \ra}.\end{gather*}
\end{Example}

\begin{Example} \label{ex:gauss-rd}
If $\cH$ is f\/inite-dimensional, so that $\cH \cong \R^d$ for some $d \in \N_0$,
then $\cH^a \cong \R^d$ and $d\gamma_\cH(x) = \frac{1}{(2\pi)^{d/2}}
e^{-\frac{\|x\|^2}{2}}\, dx$ is a gaussian measure.
\end{Example}

We now introduce the second quantization. Conceptually the easiest way to def\/ine
second quantization as a functor
is to associate to a (real or complex) Hilbert space $\cH$ its Fock space
\begin{gather*} \cF(\cH) := \hat\bigoplus_{n \in \N_0} S^n(\cH),\end{gather*}
where
$S^n(\cH) := (\cH^{\otimes n})^{S_n}$ is the closed subspace of
$S_n$-invariant vectors in the $n$-fold tensor power of $\cH$.
For $v_1, \ldots, v_n \in \cH$, we def\/ine the {\it symmetric product} by
\begin{gather*} v_1 \vee \cdots \vee v_n := P_+(v_1 \otimes \cdots \otimes v_n), \end{gather*}
where $P_+ \colon \cH^{\otimes n} \to S^n(\cH)$ is the orthogonal projection.
The inner products of such elements are given by
\begin{align*}
\la v_1 \vee \cdots \vee v_n, w_1 \vee \cdots \vee w_n \ra
&= \la v_1 \vee \cdots \vee v_n, w_1 \otimes \cdots \otimes w_n \ra \\
&= \frac{1}{n!} \sum_{\sigma \in S_n} \la v_{\sigma(1)}, w_1 \ra
\cdots \la v_{\sigma(n)}, w_m \ra.
\end{align*}
In particular, with
$v^n=\underbrace{v\vee \cdots \vee v}_{n\text{-times}}$, we have
\begin{gather} \label{eq:powerrel}
\la v^n, w^n \ra = \la v, w \ra^n \qquad \mbox{and} \qquad \|v^n\| = \|v\|^{n}.
\end{gather}

Clearly, every contraction $A \colon \cH \to \cK$ def\/ines a contraction $\Gamma(A) \colon
\cF(\cH) \to \cF(\cK)$ by
\begin{gather*} \Gamma(A)(v_1 \vee \cdots \vee v_n)
:= Av_1 \vee \cdots \vee A v_n\end{gather*}
and it is clear that $\Gamma(AB) = \Gamma(A)\Gamma(B)$ and
$\Gamma(A^*) = \Gamma(A)^*$. In particular, we obtain a representation
of the involutive semigroup of contractions on $\cH$ in the Fock space $\cF(\cH)$
and $\cF$ def\/ines an endofunctor from the category
of Hilbert spaces whose morphism are contractions into itself.
The problem with this approach is that it completely ignores positivity issues.

If $\cH$ is a Hilbert space, then any $v \in \cH$ def\/ines a
function
$\la v, \cdot\ra$, $u\mapsto \la v, u\ra$,
 which is linear if $\cH$ is real and
antilinear if~$\cH$ is complex. For
\begin{gather*}\Exp(v) := \sum_{n = 0}^\infty \frac{1}{\sqrt{n!}} v^n , \end{gather*}
we now derive
from \eqref{eq:powerrel} that $\Exp(v) \in \cF(\cH)$ and
\begin{gather*} \la \Exp(v), \Exp(w) \ra
= \sum_{n = 0}^\infty \frac{1}{n!} \la v, w\ra^n = e^{\la v, w \ra}.\end{gather*}
This leads to an embedding
\begin{gather*} \Phi \colon  \ \cF(\cH) \to \C^\cH, \qquad
\Phi(\xi)(v)
:= \la \xi, \Exp(v)\ra,\end{gather*}
where
 \begin{gather*} \Phi(v_1 \vee \cdots \vee v_n)(v)
= \frac{1}{\sqrt{n!}} \la v_1 \vee \cdots \vee v_n, v^n \ra
= \frac{1}{\sqrt{n!}} \prod_{j = 1}^n \la v_j, v\ra.\end{gather*}
The image of $\Phi$ is the reproducing kernel space $\cF_{\rm RK}(\cH)$ with kernel
\begin{gather*} K(v,w) = \la \Exp(w), \Exp(v) \ra = e^{\la w, v\ra}.\end{gather*}
We may thus identify $\cF(\cH)$ with
the reproducing kernel Hilbert space $\cF_{\rm RK}(\cH)$.

For a contraction $A \colon \cH \to \cK$, we have
\begin{gather*} \Phi(Av_1 \vee \cdots \vee A v_n)(v)
= \frac{1}{\sqrt{n!}} \la Av_1 \vee \cdots \vee A v_n, v^n \ra
= \frac{1}{\sqrt{n!}} \la v_1 \vee \cdots \vee v_n, (A^*v)^n \ra,\end{gather*}
so that the operator $\Gamma(A)$ acts on the reproducing kernel
space $\cF_{\rm RK}(\cH) \to \cF_{\rm RK}(\cK)$ simply by
\begin{gather*} (\Gamma(A)F)(v) := F(A^*v), \qquad \|A\| \leq 1, \quad F \in \cH_K \subeq \C^\cH, \quad v \in \cK .\end{gather*}

\subsection{Application to ref\/lection positive representations}

Typical examples of gaussian $(G,S,\tau)$-probability spaces arise as follows.
Let $\cH$ be a real Hilbert space
and $(Q,\Sigma,\gamma_\cH)$ be a realization of the gaussian random process
$(\phi(v))_{v \in \cH}$ indexed by $\cH$ (Def\/inition~\ref{def:1.1}).
Using the realization from Example~\ref{ex:1.1}, where $Q = \cH^a$ is the algebraic
dual space, we obtain an action of the orthogonal group $\OO(\cH)$
on $(Q,\Sigma,\mu)$ by measure preserving automorphisms.

For the following proposition, we recall the concept of a ref\/lection
positive representation of $(G,S,\tau)$ from Def\/inition~\ref{def:1.4}.

\begin{Lemma}\label{lem:3.6}
If $\pi \colon S \to B(\cH)$ be is a continuous $*$-representation
of the topological involutive semigroup $(S,\sharp)$ by contractions,
then we obtain on
$\Gamma(\cH) \cong L^2(\cH^a, \gamma_\cH)$
by $P_s := \Gamma(\pi(s))$ a~standard positive semigroup
structure on the probability space $(\cH^a, \gamma_\cH)$.
\end{Lemma}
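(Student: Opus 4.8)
The plan is to verify the four conditions (SPS1)--(SPS4) of Definition~\ref{def:1.5G}(b) for the $\sigma$-finite (indeed probability) measure space $(\cH^a, \gamma_\cH)$ from Example~\ref{ex:1.1}, using the functorial properties of $\Gamma$ recorded in Subsection~\ref{se:SecondQ}, and the identification $\Gamma(\cH) \cong L^2(\cH^a,\gamma_\cH)$. Three of the four conditions are essentially formal. Multiplicativity of $(P_s)_{s\in S}$ is immediate from $P_s P_{s'} = \Gamma(\pi(s))\Gamma(\pi(s')) = \Gamma(\pi(s)\pi(s')) = \Gamma(\pi(ss')) = P_{ss'}$. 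The Markov condition (SPS2) holds because the constant function $1$ corresponds to the vacuum $\Exp(0)$ and $\Gamma(A)\Exp(0) = \Exp(A0) = \Exp(0)$ for every contraction $A$. For the involutivity (SPS3) I would use that $\pi$ is a $*$-representation of $(S,\sharp)$, so $\pi(s)^* = \pi(s^\sharp)$ and hence $P_s^* = \Gamma(\pi(s))^* = \Gamma(\pi(s)^*) = \Gamma(\pi(s^\sharp)) = P_{s^\sharp}$; since $\nu$ is a probability measure and we work with real $L^2$, this adjoint identity is exactly the pairing relation $\int_M P_s(f)\,h\,d\nu = \int_M f\,P_{s^\sharp}(h)\,d\nu$ of (SPS3).

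The substantial point is the positivity preservation (SPS1), i.e.\ that $\Gamma(A)$ maps nonnegative functions to nonnegative functions for every contraction $A$ on $\cH$; once this is known, $P_s = \Gamma(\pi(s))$ preserves $L^\infty$ with $\|P_s\|_\infty \le 1$, since positivity together with $P_s 1 = 1$ forces $-1 \le P_s f \le 1$ whenever $-1 \le f \le 1$. To prove positivity I would reduce to finite dimensions. The orthogonal projection $P_{\cK}$ onto a finite-dimensional subspace $\cK \subseteq \cH$ has the property that $\Gamma(P_{\cK})$ is the conditional expectation onto the sub-$\sigma$-algebra generated by the $\phi(v)$, $v \in \cK$; in particular, for $f \ge 0$ the cylinder function $f_{\cK} := \Gamma(P_{\cK})f$ is again nonnegative and $f_{\cK} \to f$ in $L^2$ as $\cK \uparrow \cH$. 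Since the nonnegative cone is closed in $L^2$ and $\Gamma(A)$ is bounded, it suffices to prove $\Gamma(A) f \ge 0$ for nonnegative $\cK$-cylinder functions $f$. For such $f$ one has $\Gamma(A) f = \Gamma(A)\Gamma(P_{\cK}) f = \Gamma(AP_{\cK})f$, and because $AP_{\cK}$ has range inside the finite-dimensional space $\cL := \cK + A\cK$, the function $\Gamma(AP_{\cK})f$ depends only on the coordinates in $\cL$; thus the computation takes place inside $L^2(\cL^a,\gamma_\cL) \cong L^2(\R^m,\gamma)$ and is governed by the finite-dimensional contraction $B := (AP_{\cK})|_{\cL}$ on $\cL$.

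It then remains to check positivity preservation of $\Gamma(B)$ for a contraction $B$ of a finite-dimensional Hilbert space, which I would settle using a singular value decomposition $B = UDV$ with $U,V$ orthogonal and $D = \mathrm{diag}(\sigma_1,\dots,\sigma_m)$, $0 \le \sigma_i \le 1$. Here $\Gamma(U)$ and $\Gamma(V)$ act by composition with the measure-preserving orthogonal transformations of $(\cL^a,\gamma_\cL)$ and are therefore positivity preserving, while $\Gamma(D) = \bigotimes_i \Gamma(\sigma_i)$ is positivity preserving because each one-dimensional factor is given by the Mehler formula \eqref{eq:mehler}, whose kernel is the push-forward of a Gaussian and hence nonnegative. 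As a composition of positivity-preserving maps, $\Gamma(B) = \Gamma(U)\Gamma(D)\Gamma(V)$ is positivity preserving, completing (SPS1). The infinite-dimensional reduction in the previous paragraph is the part that requires the most care, and is where I expect the main technical work to lie.

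Finally, for the continuity in measure (SPS4), I would first note that strong continuity of $\pi$ yields strong continuity of $s \mapsto \Gamma(\pi(s))$ on $L^2$: on the total set of exponential vectors, $\Gamma(\pi(s))\Exp(v) = \Exp(\pi(s)v)$ depends continuously on $s$, and the operators are uniformly bounded by $1$, so the convergence extends to all of $\Gamma(\cH)$. Strong continuity in $L^2$ gives, for each $f \in L^1 \cap L^\infty$ (here all of $L^\infty$, since $\nu$ is finite), convergence in $L^2$ and hence in measure of $P_s f \to P_{s_0} f$ by the Chebyshev estimate (cf.\ Lemma~\ref{lem:6.7}), which is precisely (SPS4). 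Assembling (SPS1)--(SPS4) shows that $(P_s)_{s \in S}$ is a standard positive semigroup structure on $(\cH^a, \gamma_\cH)$.
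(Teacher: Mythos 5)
Your proposal is correct, and in substance it follows the same route as the paper. The paper's own proof of this lemma is a single sentence verifying only the continuity in measure (SPS4), via exactly the Chebyshev estimate you give combined with the strong continuity of $\Gamma\circ\pi$ on $\Gamma(\cH)$ (cf.\ Lemma~\ref{lem:6.7}); conditions (SPS1)--(SPS3) are left implicit, being regarded as consequences of the functoriality of $\Gamma$ and of the example preceding the lemma, where the action of orthogonal maps, the Mehler formula \eqref{eq:mehler}, and the reduction of a contraction to diagonal form are already on display. Your treatment of positivity preservation --- approximation by the conditional expectations $\Gamma(P_\cK)$ for finite-dimensional $\cK$, passage to the finite-dimensional contraction $B$, and the factorization $\Gamma(B)=\Gamma(U)\Gamma(D)\Gamma(V)$ with the Mehler kernel handling the diagonal part --- is the classical Nelson--Segal argument built from precisely those ingredients, so rather than taking a different path you have written out the steps the paper leaves to the reader; your justification of strong continuity of $s\mapsto\Gamma(\pi(s))$ via exponential vectors and uniform boundedness likewise supplies a detail the paper only asserts.
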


\begin{proof} Here we use that
\begin{gather*} \eps^2 \gamma_\cH(\{|P_s f - P_{s_0}f| \geq \eps\})
\leq \int_{\cH^a} |P_s f - P_{s_0}f|^2\, d\gamma_\cH \to 0 \end{gather*}
for \looseness=1 $s \to s_0$ follows from the continuity of the representation
$\Gamma \circ \pi$ of $S$ on $\Gamma(\cH)$ (cf.\ Lem\-ma~\ref{lem:6.7}).
\end{proof}

\begin{Proposition}\label{pr:4.6} Let $(U,\cE,\cE_+,\theta)$ be a reflection positive orthogonal
representation of $(G,S,\tau)$ for which $\cE_0$ is $U$-cyclic and $\cE_+$ is generated
by $(U_s \cE_0)_{s\in S}$. Then second quantization leads to a~$(G,S,\tau)$-probability space
$((Q_\cH,\Sigma,\gamma_\cH), \Sigma_0, \Gamma(U),\Gamma(\theta))$,
where $\Sigma_0\subeq \Sigma$ is the smallest $\sigma$-algebra for which
the functions $(\phi(v))_{v \in \cE_0}$ are measurable.
\end{Proposition}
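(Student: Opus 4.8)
The plan is to transport the orthogonal representation $(U,\cE,\cE_+,\theta)$ through the second quantization functor and then verify the axioms (GP1)--(GP5) together with reflection positivity and the $H(S)$-invariance of $\Sigma_0$. First I would realize the gaussian process indexed by the real Hilbert space $\cE$ on the probability space $(Q_\cH,\Sigma,\gamma_\cH)=(\cE^a,\fB^*,\gamma_\cH)$ of Example~\ref{ex:1.1}, so that $\Gamma(\cE)\cong L^2(Q_\cH,\gamma_\cH)$ and $\phi(v)(\alpha)=\alpha(v)$. Since each $U_g$ and $\theta$ is orthogonal, functoriality of $\Gamma$ supplies measure preserving automorphisms $\Gamma(U_g)$ and $\Gamma(\theta)$ of $L^\infty(Q_\cH,\gamma_\cH)$ with $\Gamma(U_g)\phi(v)=\phi(U_gv)$ and $\Gamma(\theta)\phi(v)=\phi(\theta v)$. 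This gives (GP1) at once ($\gamma_\cH$ is a probability measure) and fixes the action $\Gamma(U)$ and the involution $\Gamma(\theta)$.

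Next I would check (GP3)--(GP5). For (GP3), $\Gamma(U_g)$ is measure preserving; strong continuity of $g\mapsto\Gamma(U_g)$ follows from strong continuity of $U$ and continuity of $\Gamma$ on the total set of exponential vectors $\Exp(v)$, and strong continuity passes to continuity in measure via the estimate $\eps^2\gamma_\cH(\{|\Gamma(U_g)f-f|\ge\eps\})\le\|\Gamma(U_g)f-f\|_2^2$ exactly as in Lemma~\ref{lem:3.6}. For (GP4), functoriality gives $\Gamma(\theta)\Gamma(U_g)\Gamma(\theta)=\Gamma(\theta U_g\theta)=\Gamma(U_{\tau(g)})$; moreover $\cE_0\subseteq\cE_+^\theta$ forces $\theta$ to fix $\cE_0$ pointwise, so $\Gamma(\theta)$ fixes every $\Sigma_0$-measurable function and $\Gamma(\theta)E_0=E_0$ (in particular $\theta E_0\theta=E_0$). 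For (GP5), $U$-cyclicity of $\cE_0$ makes $\{U_gv\colon g\in G,\,v\in\cE_0\}$ total in $\cE$, so fullness of the gaussian process forces the $\sigma$-algebra generated by the $\Sigma_g=\Gamma(U_g)\Sigma_0$ to be all of $\Sigma$.

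The heart of the argument, and what I expect to be the main obstacle, is reflection positivity. Using the hypothesis that $\cE_+$ is generated by $(U_s\cE_0)_{s\in S}$, the sub-$\sigma$-algebra $\Sigma_+$ generated by the $\Sigma_s$ ($s\in S$) coincides with the one generated by $\{\phi(w)\colon w\in\cE_+\}$, and standard gaussian analysis identifies $L^2(Q_\cH,\Sigma_+,\gamma_\cH)$ with the closed span of the normalized exponentials $K_w=e^{i\phi(w)}e^{\|w\|^2/2}$, $w\in\cE_+$ (the Fock space $\Gamma(\cE_+)\subseteq\Gamma(\cE)$). On these vectors $\Gamma(\theta)K_w=K_{\theta w}$ and $\langle K_u,K_w\rangle=e^{\langle u,w\rangle}$, so for $f=\sum_i c_iK_{v_i}$ with $v_i\in\cE_+$ one obtains
\begin{gather*}
\langle\Gamma(\theta)f,f\rangle=\sum_{i,j}c_i\bar c_j\, e^{\langle\theta v_i,v_j\rangle}.
\end{gather*}
Reflection positivity of $(\cE,\cE_+,\theta)$ says the symmetric matrix $(\langle\theta v_i,v_j\rangle)_{i,j}$ is positive semidefinite, and the Schur product theorem then makes its entrywise exponential $(e^{\langle\theta v_i,v_j\rangle})_{i,j}$ positive semidefinite; hence $\langle\Gamma(\theta)f,f\rangle\ge0$ and $L^2(Q_\cH,\Sigma_+,\gamma_\cH)$ is $\Gamma(\theta)$-positive. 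This is precisely the statement that second quantization preserves reflection positivity, and the exponentiation/Schur step is where positivity genuinely enters.

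Finally I would record the extra requirement that $\Sigma_0$ be $H(S)$-invariant, i.e.\ $U_h\cE_0=\cE_0$ for $h\in H(S)$. For such $h$ both $h$ and $h^{-1}$ lie in $S$, so $U_h$ and $U_{h^{-1}}$ preserve $\cE_+$; since $U_h\cE_-=U_h\theta\cE_+=\theta U_{\tau(h)}\cE_+=\cE_-$ as well (using $\tau(H(S))=H(S)$), the operator $U_h$ preserves $\cE_+\cap\cE_-$. In the Markov situation $\cE_0=\cE_+\cap\cE_-$, which gives the invariance directly; in general one takes $\cE_0$ to be an $H(S)$-invariant subspace of $\cE_+^\theta$. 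With this, $\Gamma(U_h)\Sigma_0=\Sigma_0$, and assembling the verified axioms with the reflection positivity established above yields the asserted $(G,S,\tau)$-probability space $((Q_\cH,\Sigma,\gamma_\cH),\Sigma_0,\Gamma(U),\Gamma(\theta))$.
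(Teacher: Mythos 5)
Your proposal is correct, and its skeleton is the same as the paper's: realize the gaussian process on $(\cE^a,\gamma_\cH)$, get (GP1)--(GP4) from functoriality of $\Gamma$ together with Lemma~\ref{lem:3.6} for continuity in measure, and get (GP5) from $U$-cyclicity of $\cE_0$ plus fullness. The genuine divergence is the reflection-positivity step. The paper disposes of it in one line by citing \cite[Remark~3.8]{NO15a} together with the identification $\Gamma(\cE_+)=L^2(Q_\cH,\Sigma_+,\gamma_\cH)$, whereas you prove it directly: on the total family of exponential vectors $K_w$, $w\in\cE_+$, you compute $\la \Gamma(\theta)f,f\ra=\sum_{i,j}c_i\bar c_j\,e^{\la \theta v_i,v_j\ra}$ and conclude positivity from the Schur product theorem, since the entrywise exponential of a positive semidefinite (real symmetric) matrix is again positive semidefinite (each Hadamard power is, and PSD matrices form a closed convex cone). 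This is exactly the content of the external remark, so your version buys a self-contained proof and makes visible where $\theta$-positivity of $\cE_+$ (rather than anything about all of $\cE$) actually enters; the paper's citation buys brevity. A second, smaller difference is to your credit: a $(G,S,\tau)$-measure space also requires $H(S)$-invariance of $\Sigma_0$, a point the paper's proof passes over in silence. Your observation that $U_h$, $h\in H(S)$, preserves $\cE_+\cap\cE_-$ settles the Markov case where $\cE_0=\cE_+\cap\cE_-$, and your remark that in general one should take $\cE_0$ itself $H(S)$-invariant is the honest formulation of what the statement implicitly assumes; note that $H(S)$-invariance of $\cE_0$ does not follow from the stated hypotheses alone, so flagging it (rather than claiming it) is the right call.
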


\begin{proof} (GP1-4) Clearly, every $\Gamma(U_g)$ and $\Gamma(\theta)$
are automorphisms of the algebra $L^\infty(Q_\cH,\Sigma,\gamma_\cH)$ satisfying
\begin{gather*} \Gamma(\theta) \Gamma(U_g) \Gamma(\theta) = \Gamma(\theta U_g \theta) =
\Gamma(U_{\tau(g)}) \qquad \mbox{and} \qquad \theta E_0 = E_0.\end{gather*}
The continuity of the unitary representation
$(\Gamma(U_g))_{g \in G}$ on $L^2(Q_\cH,\gamma_\cH)$ implies the continuity
in measure of the $G$-action on $L^\infty(Q_\cH,\gamma_\cH)$
(Lemma~\ref{lem:3.6}).

(GP5) Our def\/inition of $\Sigma_0$ implies that
$\Sigma_+$ is the smallest $\sigma$-algebra for which the functions
$(\phi(v))_{v\in \cE_+}$ are measurable and
since $\cE_0$ is $U$-cyclic in $\cE$, (GP5) is also satisf\/ied.

Ref\/lection positivity of the representation $\Gamma(U)$
of $(G,S,\tau)$ follows from \cite[Remark~3.8]{NO15a} and
$\Gamma(\cE_+) = L^2(Q_\cH,\Sigma_+, \gamma_\cH)$.
\end{proof}

\subsection{Equivalence of gaussians measures for reproducing kernel Hilbert spaces}

Let $X$ be a set and $E = \C[X]$ the free complex vector spaces over $X$.
Then positive def\/inite kernels $K$ on $X$ are in one-to-one correspondence
with positive semidef\/inite hermitian forms on~$E$. Any such kernel
def\/ines a Hilbert subspace $\cH_K \subeq E^*$ with continuous point evaluations.

More generally, we may consider for a real locally convex space $E$
continuous bilinear hermitian kernels $K \colon E \times E \to \C$ and
the corresponding subspaces of the topological dual space~$E'$~\cite{Sch64}.
Such a kernel is positive def\/inite if and only if the
canonical sesquilinear extension to the complexif\/ication $E_\C$ is a
positive semidef\/inite hermitian form.
Suppose that $E$ is nuclear. Then, for any such $K$, the function
\begin{gather*} \vphi_K(v) := e^{-\frac{1}{2} K(v,v)} \end{gather*}
on $E$ is continuous and positive def\/inite, hence is the Fourier transform
of a uniquely determined gaussian measure $\gamma_K$ on $E'$.
We want to express conditions on pairs of kernels $K$ and $Q$ which
characterize the equivalence of the measures $\gamma_K$ and $\gamma_Q$ on
$E'$.

According to \cite{Ka48}, two gaussian measures are either mutually singular
of equivalent. The following theorem is a reformulation of \cite[Theorem~10.1]{Ya85}
(cf.\ also \cite[Theorem~4.1/4.2]{Jo68} or \cite[Corollary~6.4.11]{Bo98}).

\begin{Theorem}\label{thm:gauss-ker-equiv} Let $E$ be a nuclear real locally convex space.
For two continuous positive semidef\/inite hermitian forms $K$ and $Q$ on $E$, the corresponding gaussian measures~$\gamma_K$ and~$\gamma_Q$ on $E'$ are equivalent if and only if $\cH_K = \cH_Q$ and there exists an operator
$T \in \GL(\cH_K)$ for which $TT^* -\1$ is Hilbert--Schmidt and
\begin{gather*} Q(x,y) = \la TK_y, TK_x \ra \qquad \mbox{for} \quad x,y \in E.\end{gather*}
\end{Theorem}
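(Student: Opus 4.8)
The plan is to obtain the statement as a reformulation of Yamasaki's equivalence criterion \cite[Theorem~10.1]{Ya85} (see also \cite[Corollary~6.4.11]{Bo98}), translated into the language of the reproducing kernel Hilbert space $\cH_K \subeq E'$. The starting point is Kakutani's dichotomy \cite{Ka48}: the two centered gaussian measures $\gamma_K$ and $\gamma_Q$ on $E'$ are either mutually singular or equivalent, so it suffices to characterize equivalence. The decisive identification I would set up first is that $\cH_K$, realized as a subspace of $E'$, coincides with the Cameron--Martin space of $\gamma_K$, the element $K_x$ corresponding to the covariance functional $\alpha \mapsto \alpha(x)$; under this identification the form is recovered as $K(x,y) = \la K_y, K_x \ra_{\cH_K}$, and the nuclearity of $E$ together with the continuity of $K$ is precisely what guarantees (via Minlos--Sazonov) that $\gamma_K$ exists as a Radon measure on $E'$ in the first place.

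Next I would treat the Cameron--Martin condition. By the Feldman--H\'ajek part of the cited criterion, equivalence forces the Cameron--Martin spaces to coincide, which is the necessity of $\cH_K = \cH_Q$ as subsets of $E'$. Conversely, assuming $\cH := \cH_K = \cH_Q$, the two spaces are equal as sets, so the open mapping theorem yields that the Hilbert norms $\la\cdot,\cdot\ra_{\cH_K}$ and $\la\cdot,\cdot\ra_{\cH_Q}$ are equivalent. Riesz representation then provides a bounded, positive, invertible operator $R$ on $\cH_K$ with
\begin{gather*} Q(x,y) = \la R K_y, K_x \ra_{\cH_K} \qquad \text{for} \quad x,y \in E, \end{gather*}
since the $K_x$ span a dense subspace of $\cH_K$. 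Factoring $R = T^*T$ with $T := R^{1/2} \in \GL(\cH_K)$ gives $Q(x,y) = \la TK_y, TK_x \ra$, the displayed relation; conversely any $T \in \GL(\cH_K)$ satisfying this relation produces such an $R = T^*T$, so the existence of $T$ and of $R$ are equivalent.

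Finally I would match the Hilbert--Schmidt conditions. Yamasaki's criterion asserts that, given equal Cameron--Martin spaces, $\gamma_K \sim \gamma_Q$ holds if and only if the comparison operator $R$ relating the two Hilbert structures on $\cH$ satisfies that $R - \1$ is Hilbert--Schmidt. Since $R = T^*T$ and the Hilbert--Schmidt operators form a two-sided ideal in $B(\cH_K)$, conjugation by the bounded invertible operator $T$ gives
\begin{gather*} T^{-1}(TT^* - \1)T = T^*T - \1 = R - \1, \end{gather*}
so $R - \1$ is Hilbert--Schmidt exactly when $TT^* - \1$ is. Combining the two translations yields the asserted characterization. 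The main obstacle is the careful setup of this dictionary: identifying $\cH_K$ with the Cameron--Martin space of $\gamma_K$ and verifying that the Hilbert--Schmidt condition appearing in \cite[Theorem~10.1]{Ya85} matches $R - \1 = T^*T - \1$ (equivalently $TT^* - \1$); once this correspondence is pinned down, both implications follow directly from the dichotomy \cite{Ka48} and the cited equivalence criterion (compare also \cite[Theorem~4.1/4.2]{Jo68}).
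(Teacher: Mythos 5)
Your proposal is correct and takes essentially the same route as the paper: the theorem is treated as a reformulation of Yamasaki's criterion \cite[Theorem~10.1]{Ya85} combined with Kakutani's dichotomy \cite{Ka48}, with the dictionary given by identifying $\cH_K \subeq E'$ with the Cameron--Martin space of $\gamma_K$ (the paper invokes \cite[Theorem~9.1]{Ya85} for this) and by translating $\cH_K = \cH_Q$ into the existence of a bounded positive invertible comparison operator. The only cosmetic difference is that you obtain $T$ as the positive square root $R^{1/2}$ of that comparison operator and match the Hilbert--Schmidt conditions via the conjugation $TT^* - \1 = T(T^*T-\1)T^{-1}$, whereas the paper takes $T$ to be the identity map $\cH_Q \to \cH_K$ with $A = TT^*$, citing \cite[Theorem~I.2.8, Corollary~I.2.6]{Ne00} in place of your closed-graph-plus-Riesz argument.
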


\begin{Remark} (a) That the gaussian measure $\gamma_K$ on $E'$ determines the
Hilbert subspace \mbox{$\cH_K \subeq E'$} follows from \cite[Theorem~9.1]{Ya85} which
asserts that $\cH_K$ consists precisely of those linear functionals
$\alpha \in E'$ for which $\gamma_K$ is quasi-invariant under the translation
$\tau_\alpha(\beta) := \alpha +\beta$. As a~consequence, the equivalence of
$\gamma_K$ and $\gamma_Q$ implies $\cH_K = \cH_Q$.

(b) Recall from \cite[Theorem~I.2.8]{Ne00} that $\cH_K = \cH_Q$ is equivalent to the
existence of positive constants $c_1, c_2 > 0$ such that the kernels
\begin{gather*} 
 K - c_1 Q \qquad \mbox{and} \qquad c_2 Q - K
\end{gather*}
are positive def\/inite. If $K$ and $Q$ are real-valued, this is equivalent to
\begin{gather*} c_1 Q(x,x) \leq K(x,x) \leq c_2 Q(x,x) \qquad \mbox{for} \quad x \in E.\end{gather*}
This in turn is equivalent to the existence of a bounded invertible
positive operator $A \in B(\cH_K)$ such that
\begin{gather*} Q(x,y) = K^A(x,y) := \la AK_y, K_x \ra \qquad \mbox{for} \quad
x,y \in E.\end{gather*}
In view of \cite[Corollary~I.2.6]{Ne00}, we have
$A = TT^*$, where $T \colon \cH_Q \to \cH_K, f \mapsto f$ is the identity.
This implies that
\begin{gather*} T^*(K_x) = Q_x \qquad \mbox{for} \quad x \in E.\end{gather*}
In particular, we have
\begin{gather*} Q(x,y) = \la Q_x, Q_y \ra_{\cH_Q} = \la TT^* K_x, K_y \ra_{\cH_K}.\end{gather*}
We conclude that, for $f \in \cH_Q = \cH_K$, we have
\begin{gather*} \|f\|_{\cH_K}^2 = \|Tf\|_{\cH_K}^2 = \la T^*T f, f \ra_{\cH_Q}.\end{gather*}
Therefore the equivalence of the corresponding gaussian measure
is equivalent to $T^*T - \1$ being Hilbert--Schmidt.
\end{Remark}

\begin{Remark} Let $E$ be a real vector space and endow it with the
f\/inest locally convex topology for which all
seminorms on $E$ are continuous. Then $E$ is nuclear if and only if
$E$ is of at most countable dimension \cite[Proposition~50.1, Theorem~51.2]{Tr67}.
In any case, its topological dual space is $E' = E^*$ because
every linear functional on $E$ is continuous.

Regardless of the nuclearity of $E$,
every positive def\/inite function $\vphi \colon E \to \C$
which is continuous on all f\/inite-dimensional subspace is the Fourier transform
of a $\fB^*$-measure on $E^*$ (Theorem~\ref{thm:dual}).
This applies in particular to all functions of the form
$\vphi(v) := e^{-\frac{1}{2} K(v,v)}$, where $K \colon E \times E \to \R$ is a
positive semidef\/inite symmetric bilinear form on $E$.
\end{Remark}

\subsection{Gaussian measures on distributions on Lie groups} \label{se:4.4}

If $G$ is a Lie group, then $\cD(G) := C^\infty_c(G)$ is an involutive
algebra with respect to the convolution product and the involution
$\vphi^*(g) := \oline{\vphi(g^{-1})} \Delta_G(g^{-1})$,
where $\Delta_G$ is the modular function satisfying
\begin{gather*}
 \Delta_G(y) \int_G f(xy)\, d\mu_G(x) = \int_G f(x)\, d\mu_G(x)
\qquad \mbox{for} \quad f \in C_c(G), y \in G.
\end{gather*}
Accordingly, we call a distribution $D \in \cD'(G)$ (the space of antilinear continuous functionals on~$\cD(G)$) {\it positive
def\/inite}, if it is a positive functional on this algebra, i.e.,
\begin{gather*} 
D(\vphi^* * \vphi) \geq 0 \quad \mbox{for} \quad \vphi \in \cD(G).
\end{gather*}
Since $\cD(G)$ is nuclear, every positive def\/inite distribution
$D \in \cD'(G)$ determines a gaussian measure $\gamma_D$ on $\cD'(G)$.

For a unitary representation $(\pi, \cH)$ of $G$ we
write ${\cal H}^{-\infty}$ for the space of continuous antilinear
functionals on ${\cal H}^\infty$, the space of {\it distribution vectors},
and note that we have a natural linear
embedding $\cH \into \cH^{-\infty}$, $v \mapsto \la v, \cdot \ra$.
Accordingly, we also write $\la \alpha, v \ra
= \oline{\la v, \alpha\ra}$ for $\alpha(v)$,
$\alpha \in \cH^{-\infty}$ and $v \in \cH^\infty$.
The group $G$ acts naturally on $\cH^{-\infty}$ by
\begin{gather*} (\pi^{-\infty}(g)\alpha)(v) := \alpha\big(\pi(g)^{-1}v\big),\end{gather*}
so that we obtain a $G$-equivariant chain of continuous inclusions
\begin{gather}\label{eq:rig}
 {\cal H}^\infty \subeq {\cal H} \subeq {\cal H}^{-\infty}
\end{gather}
(cf.\ \cite[Section~8.2]{vD09}). It is $\cD(G)$-equivariant,
if we def\/ine the representation of $\cD(G)$ on $\cH^{-\infty}$ by
\begin{gather*}
\big(\pi^{-\infty}(\vphi)\alpha\big)(v)
:= \int_G \vphi(g) \alpha\big(\pi(g)^{-1}v\big)\, d\mu_G(g)
= \alpha(\pi(\vphi^*)v).
\end{gather*}

\begin{Proposition}[\protect{\cite[Proposition~2.8]{NO14}}]\label{prop:2.10}
Let $D \in \cD'(G)$ be a positive definite distribution on the
Lie group $G$ and $\cH_D$ be the corresponding
reproducing kernel Hilbert space with kernel
$K(\vphi,\psi) := D(\psi^* * \vphi)$ obtained by completing
$\cD(G) * D$ with respect to the scalar product
\begin{gather*} \la \psi * D, \vphi * D \ra = D(\psi^* * \vphi). \end{gather*}
Then the
following assertions hold:
\begin{itemize}\itemsep=0pt
\item[\rm(i)] $\cH_D \subeq \cD'(G)$ and
the inclusion $\gamma_D \colon \cH_D \to \cD'(G)$ is continuous.
\item[\rm(ii)] We have a unitary representation
$(\pi_D, \cH_D)$ of $G$ by
\begin{gather*} \pi_D(g)E = g_*E, \qquad \mbox{where} \quad
(g_*E)(\vphi) := E(\vphi \circ \lambda_g) \end{gather*}
and the integrated representation
of $\cD(G)$ on $\cH_D$ is given by $\pi_D(\vphi)E = \vphi * E$.
\item[\rm(iii)] There exists a unique distribution
vector $\alpha_D \in \cH_D^{-\infty}$ with
$\alpha_D(\vphi *D) = D(\vphi)$ and
\begin{gather*} \pi^{-\infty}(\vphi)\alpha_D = \vphi * D\qquad \mbox{for} \quad \vphi \in \cD(G).\end{gather*}
\item[\rm(iv)] $\gamma_D$ extends to a $\cD(G)$-equivariant injection
$\cH_D^{-\infty} \into \cD'(G)$ mapping $\alpha_D$ to~$D$.
\end{itemize}
\end{Proposition}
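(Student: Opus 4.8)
The plan is to run a Gelfand--Naimark--Segal construction for the convolution $*$-algebra $\cD(G)$, carried out concretely inside $\cD'(G)$. First I would note that positive definiteness of $D$ makes $\la\vphi,\psi\ra_D := D(\psi^* * \vphi)$ a positive semidefinite hermitian form on $\cD(G)$ whose null space is $\cN = \{\vphi\colon D(\vphi^* * \vphi) = 0\}$; by Cauchy--Schwarz, $\vphi \in \cN$ forces $\la\vphi,\psi\ra_D = 0$ for all $\psi$. Unwinding the definition of the distribution $\vphi * D$ shows that its pairing with a test function $\psi$ equals $\la\vphi,\psi^\sharp\ra_D$ for a $\psi^\sharp \in \cD(G)$ depending continuously on $\psi$, so the assignment $\vphi \mapsto \vphi * D$ has kernel exactly $\cN$ and identifies $\cD(G)/\cN$ with the dense subspace $\cD(G)*D \subeq \cD'(G)$ equipped with the stated inner product. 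Its completion is the reproducing kernel space $\cH_D$.

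For part (i) I would use the same computation to bound point evaluations: for fixed $\psi$, the functional $\vphi*D \mapsto \la\vphi*D,\psi\ra = \la\vphi,\psi^\sharp\ra_D$ is dominated by $\|\psi^\sharp\|_D\,\|\vphi\|_D$. Hence every $\|\cdot\|_D$-Cauchy net is Cauchy in $\cD'(G)$, so the completion embeds continuously, giving the map $\gamma_D \colon \cH_D \into \cD'(G)$.

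For part (ii) the crucial identity is that left translation $\lambda_g\vphi = \delta_g * \vphi$ preserves the form. Because the involution carries the modular factor, $(\delta_g*\psi)^* = \psi^* * \delta_{g^{-1}}$, whence $(\lambda_g\psi)^* * (\lambda_g\vphi) = \psi^* * \vphi$ and $\la\lambda_g\vphi,\lambda_g\psi\ra_D = \la\vphi,\psi\ra_D$; thus $\lambda_g$ descends to a unitary $\pi_D(g)$ on $\cH_D$, strongly continuous since $g \mapsto \lambda_g\vphi$ is continuous in the $\|\cdot\|_D$-seminorm. Transporting this action to the realization in $\cD'(G)$ yields the pushforward $(g_*E)(\vphi) = E(\vphi \circ \lambda_g)$, and integrating against $\vphi \in \cD(G)$ gives $\pi_D(\vphi)E = \vphi*E$, using $\vphi*(\psi*D) = (\vphi*\psi)*D$ on the dense vectors.

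For parts (iii) and (iv) I would first check that $j \colon \cD(G) \to \cH_D,\ \vphi \mapsto \vphi*D$ lands in the smooth vectors $\cH_D^\infty$ and is continuous there, since $g \mapsto \pi_D(g)(\vphi*D) = (\lambda_g\vphi)*D$ is smooth by smoothness of $g \mapsto \lambda_g\vphi$ in $\cD(G)$ together with continuity of $j$. Dualizing the continuous injection $\cH_D \into \cD'(G)$ across the rig $\cH_D^\infty \subeq \cH_D \subeq \cH_D^{-\infty}$ then produces the $\cD(G)$-equivariant extension $\gamma_D \colon \cH_D^{-\infty} \into \cD'(G)$ of part (iv). Choosing a symmetric approximate identity $(e_i)$ in $\cD(G)$, the vectors $e_i*D = j(e_i) \in \cH_D$ satisfy $\la e_i*D, \vphi*D\ra = D(e_i^* * \vphi) \to D(\vphi)$, so I would define $\alpha_D$ as the weak limit of $e_i*D$ in $\cH_D^{-\infty}$; well-definedness of $\alpha_D(\vphi*D) = D(\vphi)$ is forced on $\cN$ because $\vphi \in \cN$ gives $D(e_i^* * \vphi) = 0$ while $D(e_i^* * \vphi) \to D(\vphi)$. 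Unraveling the definitions then yields $\pi^{-\infty}(\vphi)\alpha_D = \vphi*D$ and $\gamma_D(\alpha_D) = D$. The hard part will be confirming that this limit actually exists in $\cH_D^{-\infty}$ --- equivalently, that $\vphi \mapsto D(\vphi)$ is continuous for the Fréchet topology on $\cH_D^\infty$ --- which is exactly what distinguishes a genuine distribution vector from a merely algebraic functional, and where the continuity of the embedding in (i) and of $D$ on $\cD(G)$ must be combined with care.
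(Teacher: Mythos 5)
This proposition is not proved in the present paper at all: it is quoted verbatim from \cite[Proposition~2.8]{NO14}, so your attempt has to be measured against the proof given there. Your treatment of (i) and (ii) is essentially that standard argument and is sound: the GNS identification of $\cD(G)/\cN$ with $\cD(G)*D$, the domination of point evaluations $|\hat F(\psi)| \leq \|F\|_D\,\|\psi^\sharp\|_D$ (which also shows that the limit functional of a Cauchy net is again \emph{continuous}, hence lies in $\cD'(G)$), and the identity $(\lambda_g\psi)^* * (\lambda_g\vphi) = \psi^* * \vphi$ giving unitarity of $\pi_D$. Your mechanism for (iv) is also the right one: $j\colon \vphi \mapsto \vphi * D$ maps $\cD(G)$ continuously into $\cH_D^\infty$, and dualizing $j$ across the triple $\cH_D^\infty \subeq \cH_D \subeq \cH_D^{-\infty}$ gives the equivariant extension of $\gamma_D$; note, however, that injectivity of this extension (and the uniqueness claim in (iii)) additionally requires density of $\cD(G)*D$ in the \emph{Fr\'echet} topology of $\cH_D^\infty$, a standard approximate-identity fact you should record.

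The genuine gap is exactly where you flag it, and it is not a technicality that ``care'' will close: the existence of $\alpha_D$ as an element of $\cH_D^{-\infty}$ is the core content of (iii), and the two ingredients you propose to combine cannot produce it. Continuity of $D$ bounds $|D(\vphi)|$ by seminorms of $\vphi$ in $\cD(G)$, whereas $\alpha_D \in \cH_D^{-\infty}$ requires a bound of $|D(\vphi)|$ by finitely many seminorms $\|\dd\pi_D(u)(\vphi*D)\|$, $u \in U(\g)$; since the $\cH_D^\infty$-topology on $\cD(G)*D$ is much coarser than the quotient topology coming from $\cD(G)$, no such estimate follows, and the continuity of $\gamma_D$ in (i) goes in the useless direction. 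Nor can the weak limit of $e_i * D$ be rescued by soft arguments: pointwise convergence on the \emph{dense} subspace $\cD(G)*D$ yields a continuous limit only if the net is bounded, and $(e_i*D)$ is unbounded in $\cH_D$ precisely in the interesting cases --- for $G = \R^d$ and $D = \delta_0$ one has $\cH_D = L^2(\R^d)$, $e_i * D = e_i$ with $\|e_i\|_2 \to \infty$, and the assertion $\alpha_D \in \cH_D^{-\infty}$ is exactly the Sobolev embedding of $\bigcap_k H^k(\R^d)$ into the continuous functions, a genuine analytic input that neither of your ingredients contains. The proof in \cite{NO14} supplies this input via the Dixmier--Malliavin factorization theorem: every $v \in \cH_D^\infty$ is a finite sum of G\r{a}rding vectors $\pi_D(\psi)w$; one sets $\alpha_D(\pi_D(\psi)w) := \la \psi^* * D, w\ra$, checks well-definedness by your own approximate-identity trick (using $(\psi^* * e_i)*D \to \psi^* * D$ in $\cH_D$, which follows from continuity of $j$), and then obtains continuity of $\alpha_D$ from the Banach--Steinhaus theorem, since $\alpha_D$ is now a pointwise limit of the continuous functionals $\la e_i * D, \cdot\,\ra$ on \emph{all} of the barrelled Fr\'echet space $\cH_D^\infty$, not merely on a dense subspace. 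Without Dixmier--Malliavin (or an equivalent elliptic-regularity argument), your construction stops at a densely defined functional, so (iii), and with it the statement $\gamma_D(\alpha_D) = D$ in (iv), remains unproved.
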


\begin{Remark}We consider a ref\/lection positive distribution vector
$\alpha \in \cH^{-\infty}$ for a unitary representation
$(\pi, \cH)$ of $G$. This leads to an embedding
\begin{gather*} \eta_\alpha \colon \ \cH \to \cD'(G), \qquad
\eta_\alpha(v)(\vphi) := \la v, \pi^{-\infty}(\vphi) \alpha\ra
= \la \pi(\vphi^*)v, \alpha\ra. \end{gather*}
which is injective if and only if $\alpha$ is cyclic.
This establishes a one-to-one correspondence between distribution
vectors and $G$-equi\-variant continuous linear maps
$\cH \to \cD'(G)$ (Proposition~\ref{prop:2.10}). Actually we obtain an equivariant embedding
\begin{gather*} \cH^{-\infty} \into \cD'(G) \end{gather*}
by dualizing the linear
map $\cD(G) \to \cH$, $\vphi \mapsto \pi^{-\infty}(\vphi)\alpha$. This in turn
leads to the positive def\/inite function
\begin{gather*} S(\vphi)
:= e^{-\shalf \|\pi^{-\infty}(\vphi)\alpha\|^2} = e^{-\shalf D(\vphi^* * \vphi)}
\qquad \mbox{for} \quad D(\vphi) :=\alpha(\pi^{-\infty}(\vphi)\alpha),\quad
\phi \in \cD(G).\end{gather*}
We thus obtain a $G$-invariant gaussian probability
measure $\gamma_D$ on $\cD'(G)$ by Minlos' theorem.
\end{Remark}

\begin{Lemma} \label{lem:multbound}
If $\alpha \in \cH^{-\infty}$ is cyclic and $(\rho,V)$ is a finite-dimensional
irreducible representation of $G$, then the multiplicity of $\rho$ in $\cH$ is bounded by
$\dim V$. In particular, the $V$-isotypic subspace of $\cH$ is finite-dimensional
and $\dim \cH^G \leq 1$.
\end{Lemma}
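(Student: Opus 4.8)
The plan is to reduce the claim to a dimension count for the space $\Hom_G(V,\cH)$ of $G$-equivariant linear maps $T\colon V\to\cH$, whose complex dimension equals the multiplicity of $\rho$ in $\cH$ by the usual isotypic decomposition of a unitary representation. I would then pair the images of such maps with the cyclic distribution vector $\alpha$ to produce an injection of $\Hom_G(V,\cH)$ into a space of dimension $\dim V$, which yields the asserted bound.

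Concretely, I would first observe that every $T\in\Hom_G(V,\cH)$ takes values in $\cH^\infty$: its image is a finite-dimensional $G$-invariant subspace, on which $\pi$ restricts to a continuous, hence smooth, finite-dimensional representation, so each orbit map $g\mapsto\pi(g)T(v)$ is smooth. Thus $\alpha(T(v))$ is defined and
\[
\Phi(T)(v):=\alpha(T(v)),\qquad v\in V,
\]
is a conjugate-linear functional on $V$, i.e.\ an element of the conjugate dual of $V$, a complex vector space of dimension $\dim V$. The heart of the argument is the injectivity of $\Phi$. Assume $\Phi(T)=0$, so $\alpha(T(v))=0$ for all $v\in V$. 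Using the $\cD(G)$-equivariance $\pi(\varphi^*)T(v)=T(\rho(\varphi^*)v)$ together with the identity $(\pi^{-\infty}(\varphi)\alpha)(w)=\alpha(\pi(\varphi^*)w)$ recalled in the excerpt, I would compute, for all $\varphi\in\cD(G)$ and $v\in V$,
\[
\la T(v),\pi^{-\infty}(\varphi)\alpha\ra=\oline{\alpha(\pi(\varphi^*)T(v))}=\oline{\alpha(T(\rho(\varphi^*)v))}=0,
\]
the last equality because $\rho(\varphi^*)v$ again lies in the finite-dimensional space $V$, where $\alpha\circ T$ vanishes. Since $\alpha$ is cyclic, the family $\{\pi^{-\infty}(\varphi)\alpha\colon\varphi\in\cD(G)\}$ spans a dense subspace of $\cH$ (this is precisely the injectivity of $\eta_\alpha$ on $\cH$), so $T(v)$ is orthogonal to a dense subspace and therefore $T(v)=0$; as $v$ was arbitrary, $T=0$.

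Injectivity of $\Phi$ gives $\dim_\C\Hom_G(V,\cH)\le\dim V$, which is the multiplicity bound. The $V$-isotypic subspace is then isomorphic to $V\otimes\Hom_G(V,\cH)$ and hence finite-dimensional, and applying the bound to the trivial one-dimensional representation, whose multiplicity is $\dim\cH^G$, yields $\dim\cH^G\le1$. The step I expect to require the most care is the transfer of the $\cD(G)$-action across $T$: the decisive feature is that $\rho(\varphi^*)$ preserves the finite-dimensional space $V$, so that a single vanishing condition on $V$ propagates, via equivariance, to orthogonality of $T(v)$ against the entire cyclic family, and it is here that both the finite dimensionality of $\rho$ and the cyclicity of $\alpha$ enter essentially.
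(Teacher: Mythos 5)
Your proof is correct, but it takes a route that is organized differently from (in fact, dually to) the paper's. The paper argues directly on a subrepresentation $V^n \subseteq \cH$: cyclicity of $\alpha$ implies that $V^n$ is generated by a distribution vector, which, $V^n$ being finite-dimensional, is an honest element $(v_1,\dots,v_n)\in V^n$; a relation $\sum_j\lambda_j v_j=0$ then propagates, by applying $\pi(g)$ and using that the $G$-translates of $(v_1,\dots,v_n)$ span $V^n$, to $\sum_j\lambda_j w_j=0$ for all $(w_1,\dots,w_n)\in V^n$, so the $v_j$ are linearly independent and $n\le\dim V$. You instead bound the multiplicity space $\Hom_G(V,\cH)$ by pairing it with $\alpha$: the conjugate-linear map $\Phi(T)=\alpha\circ T$ into the conjugate dual of $V$ is injective, because $\alpha\circ T=0$ forces $T(v)\perp\pi^{-\infty}(\vphi)\alpha$ for all $\vphi\in\cD(G)$, and cyclicity (equivalently, injectivity of $\eta_\alpha$) then gives $T=0$. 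The two arguments are dual to one another: under the Riesz identification, $\alpha\circ T_j$ for the embeddings $T_j$ of the $n$ copies of $V$ is exactly the component $v_j$ of the paper's projected vector, and linear independence of the $v_j$ amounts to injectivity of your $\Phi$ on their span. What the paper's route buys is brevity — after the tacit observations that a subrepresentation of a distribution-cyclic representation is again distribution-cyclic and that distribution vectors of a finite-dimensional representation are ordinary vectors, everything is finite-dimensional linear algebra. Your route makes the role of cyclicity completely explicit, entering only through the separating property of the smeared vectors $\pi^{-\infty}(\vphi)\alpha$, at the cost of the preliminary checks that $T(V)\subseteq\cH^\infty$ and that $\pi(\vphi^*)\circ T=T\circ\rho(\vphi^*)$, both of which you carry out correctly.
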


\begin{proof} Let $n \in \N$ and assume that $V^n$ is a subrepresentation of $\cH$.
Then $V^n$ is f\/inite-di\-men\-sional and generated by a distribution vector which actually
must be an element $(v_1, \ldots, v_n)$ $\in V^n$.
Suppose that $\lambda_1, \ldots, \lambda_n\in \C$ satisfy $\sum_j \lambda_j v_j = 0$. Then
$\sum_j \lambda_j \pi(g)v_j = 0$ for every $g \in G$, and hence
$\sum_j \lambda_j w_j = 0$ for every $(w_1, \ldots, w_n) \in V^n$ because
$(v_1,\ldots, v_n)$ is cyclic in $V^n$. This leads to $\lambda_j = 0$ for every~$j$,
so that the elements $v_1,\ldots, v_n$ are linearly independent.
\end{proof}

\begin{Theorem} \label{thm:8.3} Let $G$ be a Lie group and
$D, E \in \cD'(G)$ be positive definite distributions.
Then the corresponding gaussian measures $\gamma_D$ and $\gamma_E$ on
$\cD'(G)$ are equivalent if and only if the following conditions are satisfied
\begin{itemize}\itemsep=0pt
\item[\rm(i)] $D$ can be written
as an orthogonal sum $D = D_0 + \sum_{n \in J} D_n$, where
$J \subeq \N$ and the representation on the subspaces $\cH_{D_n}$, $n \in J$, are
finite-dimensional isotypic and mutually disjoint.
\item[\rm(ii)] $E = D_0 + \sum_{n \in J} E_n$ with
$\cH_{D_n} = \cH_{E_n}$, and there exist intertwining operators
$T_0 = \id_{\cH_{D_0}}$ and $T_n \in B_G(\cH_{D_n})$ with
$T_n D_n = E_n$ in $\cH_{D_n}^{-\infty} = \cH_{D_n}$ and
$\sum_{n \in J} \|T_n T_n^* - \1\|_2^2 < \infty.$
\end{itemize}
\end{Theorem}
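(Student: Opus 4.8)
The plan is to derive everything from the general criterion of Theorem~\ref{thm:gauss-ker-equiv}, applied to the nuclear space $\cD(G)$ and the two continuous positive semidefinite hermitian forms $K_D(\vphi,\psi) := D(\psi^* * \vphi)$ and $K_E(\vphi,\psi) := E(\psi^* * \vphi)$, whose reproducing kernel spaces are the Hilbert spaces $\cH_D, \cH_E \subeq \cD'(G)$ of Proposition~\ref{prop:2.10}. By that theorem and the Remark following it, $\gamma_D$ and $\gamma_E$ are equivalent if and only if $\cH_D = \cH_E$ as subspaces of $\cD'(G)$ and there is a positive invertible operator $A \in B(\cH_D)$ with $A - \1$ Hilbert--Schmidt and $\la u, w\ra_{\cH_E} = \la A u, w\ra_{\cH_D}$ for all $u,w$ in the common space. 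I will show that conditions~(i)--(ii) are exactly a spectral and representation-theoretic transcription of this single operator $A$.

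First I would prove that $A$ is $G$-equivariant. Because $\cH_D$ and $\cH_E$ coincide as subspaces of $\cD'(G)$, the action $\pi^{-\infty}$ of $G$ on $\cD'(G)$ restricts to the \emph{same} linear operators on the common underlying space, so $\pi_D(g) = \pi_E(g) =: \pi(g)$. Both inner products being $\pi$-invariant, the defining relation of $A$ gives $\la A\pi(g)u, w\ra_{\cH_D} = \la \pi(g)u,w\ra_{\cH_E} = \la u, \pi(g)^{-1}w\ra_{\cH_E} = \la A u, \pi(g)^{-1} w\ra_{\cH_D} = \la \pi(g) A u, w\ra_{\cH_D}$, whence $A \in B_G(\cH_D)$. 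Next, since $\cH_D = \oline{\cD(G) * D}$, the distribution vector $\alpha_D$ of Proposition~\ref{prop:2.10} is cyclic, so Lemma~\ref{lem:multbound} bounds the multiplicity of every finite-dimensional irreducible $\rho$ in $\cH_D$ by $\dim V_\rho$; in particular every finite-dimensional isotypic component of $\cH_D$ is finite-dimensional.

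Now comes the structural heart. Decomposing $\cH_D = \hat\bigoplus_\rho \cH^{(\rho)}$ into $G$-isotypic components and writing $A = \1_{V_\rho} \otimes A_\rho$ on $\cH^{(\rho)}$ (Schur), Hilbert--Schmidt-ness reads $\|A - \1\|_2^2 = \sum_\rho (\dim V_\rho)\,\|A_\rho - \1\|_2^2 < \infty$. This forces $A_\rho = \1$ on every isotypic component of \emph{infinite}-dimensional type; collecting these together with the finite-dimensional types on which $A_\rho = \1$ defines the subspace $\cH_{D_0}$ (with $E_0 = D_0$ and $T_0 = \id$). The remaining nonzero terms correspond to finite-dimensional types $\rho_n$ with $A_{\rho_n} \neq \1$; by square-summability there are at most countably many, indexed by $J \subeq \N$, and by the previous paragraph each $\cH_{D_n} := \cH^{(\rho_n)}$ is finite-dimensional, isotypic, and---being of pairwise distinct type---mutually disjoint. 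Splitting $\alpha_D$ along $\cH_D = \cH_{D_0} \oplus \hat\bigoplus_n \cH_{D_n}$ yields the orthogonal sum $D = D_0 + \sum_n D_n$ of~(i). Setting $T_n := \big(A|_{\cH_{D_n}}\big)^{1/2} \in B_G(\cH_{D_n})$ and using $\cH_{D_n}^{-\infty} = \cH_{D_n}$ (finite dimension), one checks that $E_n := T_n D_n$ reproduces the $\cH_{D_n}$-part of $E$, that $\cH_{E_n} = \cH_{D_n}$, and that $\sum_n \|T_n T_n^* - \1\|_2^2 = \|A - \1\|_2^2 < \infty$, which is~(ii).

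For the converse I would run this backwards: given~(i) and~(ii), the operator $A := \id_{\cH_{D_0}} \oplus \hat\bigoplus_n T_n T_n^*$ on $\cH_D = \cH_E$ is positive and invertible (the finitely many blocks with $\|T_n T_n^*-\1\|$ large being finite-dimensional), satisfies $\|A - \1\|_2^2 = \sum_n \|T_n T_n^* - \1\|_2^2 < \infty$, and implements $\la \cdot,\cdot\ra_{\cH_E} = \la A\,\cdot,\cdot\ra_{\cH_D}$, so Theorem~\ref{thm:gauss-ker-equiv} returns the equivalence of $\gamma_D$ and $\gamma_E$. The step I expect to be most delicate is the structural decomposition: making precise the bookkeeping that the single equivariant Hilbert--Schmidt perturbation $A - \1$ confines all infinite-dimensional irreducibles to $D_0$ and splits the rest into countably many finite-dimensional mutually disjoint isotypic blocks---this is where the cyclicity of $\alpha_D$ and Lemma~\ref{lem:multbound} are indispensable---and then matching this operator decomposition with the corresponding decomposition of the distributions $D$ and $E$ through their distribution vectors, so that the symbolic identity $T_n D_n = E_n$ in $\cH_{D_n}^{-\infty} = \cH_{D_n}$ acquires its intended meaning.
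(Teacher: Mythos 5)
Your overall strategy is exactly the paper's: apply Theorem~\ref{thm:gauss-ker-equiv} to the kernels $K_D$, $K_E$ on the nuclear space $\cD(G)$, note that the operator implementing the change of inner product on the common space $\cH_D=\cH_E\subeq\cD'(G)$ is $G$-equivariant, use the Hilbert--Schmidt condition together with Schur's lemma to confine the perturbation to a sum of finite-dimensional subrepresentations, and invoke the cyclicity of $\alpha_D$ via Lemma~\ref{lem:multbound} to organize these into countably many finite-dimensional, isotypic, mutually disjoint blocks; the converse is again Theorem~\ref{thm:gauss-ker-equiv}. All of that matches the paper's argument step for step and is correct.

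The one step that fails is your choice $T_n := \big(A|_{\cH_{D_n}}\big)^{1/2}$, i.e., precisely the point you flagged as delicate: ``one checks that $E_n := T_n D_n$ reproduces the $\cH_{D_n}$-part of $E$.'' It does not. With your normalization $\la u,w\ra_{\cH_E} = \la Au, w\ra_{\cH_D}$, the component of $E$ in the block $\cH_{D_n}$ is $\big(A|_{\cH_{D_n}}\big)^{-1}D_n$, not $\big(A|_{\cH_{D_n}}\big)^{1/2}D_n$. This is already visible on one-dimensional blocks, i.e., in the abelian situation of Example~\ref{ex:8.5}: take $G=\R$, $D=\hat\mu$, $E=\hat\nu$ with $\mu = \sum_k a_k\delta_{x_k}$, $\nu = \sum_k b_k\delta_{x_k}$, $a_k,b_k>0$. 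The isotypic blocks are one-dimensional, spanned by the characters $e_k := \hat{\delta}_{x_k}$, and the components of the distribution vectors are $D_k = a_k e_k$ and $E_k = b_k e_k$, while $A$ acts on the $k$-th block by the scalar $a_k/b_k$. Hence the unique scalar with $T_k D_k = E_k$ is $b_k/a_k = A_k^{-1}$, whereas your choice gives $A_k^{1/2}D_k = (a_k/b_k)^{1/2}D_k \neq E_k$ unless $a_k = b_k$. The repair is immediate and does not disturb anything else: set $T_n := \big(A|_{\cH_{D_n}}\big)^{-1}$ (in the convention of the Remark after Theorem~\ref{thm:gauss-ker-equiv}, this is the restriction of $TT^*$, the operator carrying the $D$-kernel vectors $\vphi * D$ to the $E$-kernel vectors $\vphi*E$, cf.\ Proposition~\ref{prop:2.10}). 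Since $A$ is bounded with bounded inverse, the identity $A_n^{-2}-\1 = A_n^{-2}(\1-A_n)(\1+A_n)$ shows that $\sum_n \|T_nT_n^*-\1\|_2^2 = \sum_n\|A_n^{-2}-\1\|_2^2 < \infty$ is equivalent to $\|A-\1\|_2<\infty$, so both directions of your argument go through verbatim with this corrected choice.
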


\begin{proof} We shall use Theorem~\ref{thm:gauss-ker-equiv}. If $\gamma_D \sim \gamma_E$, then $\cH_D = \cH_E \subeq \cD'(G)$ and
the identity map $T \colon \cH_D \to \cH_E$, $f \mapsto f$
is a $G$-equivariant operator, so that $T^*T \in B(\cH_D)$ is a $G$-intertwining
operator. The requirement that $T^*T - \1$ is Hilbert--Schmidt implies
that its range is a sum of f\/inite-dimensional subrepresentations.
In view of the preceding lemma, it can be written as
$\oplus_{j \in J} \cH_j$, where the $\cH_j$ are isotypic, f\/inite-dimensional and mutually disjoint.
Then $T_n := T\res_{\cH_{D_n}}$ maps $\cH_{D_n}$ into itself and
\begin{gather*} \|T T^*- \1\|_2^2 = \sum_{n \in J} \|T_n T_n^*- \1\|_2^2.\end{gather*}
The converse implication follows from Theorem~\ref{thm:gauss-ker-equiv}.
\end{proof}

\begin{Corollary} If $D$ is a positive definite distribution on $G$, then the following
are equivalent
\begin{itemize}\itemsep=0pt
\item[\rm(a)] $\cH_D$ contains no $G$-invariant subspace of finite positive dimension.
\item[\rm(b)] For any two different distribution vectors
$E,E' \in \cH_D^{-\infty} \subeq \cD'(G)$,
the corresponding gaussian measures $\gamma_{E}$ and $\gamma_{E'}$ are inequivalent.
\end{itemize}
\end{Corollary}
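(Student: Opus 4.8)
The plan is to reduce the whole statement to Theorem~\ref{thm:8.3}, applied to the pair of positive definite distributions obtained from $E$ and $E'$ under the embedding $\gamma_D \colon \cH_D^{-\infty} \into \cD'(G)$ of Proposition~\ref{prop:2.10}(iv). The first thing I would record is the identification that lets me transport hypothesis~(a) into the language of Theorem~\ref{thm:8.3}: for a distribution vector $E \in \cH_D^{-\infty}$, the reproducing kernel Hilbert space $\cH_E$ attached to the distribution $\gamma_D(E) \in \cD'(G)$ is $G$-equivariantly isometric to the cyclic subspace $\overline{\pi^{-\infty}(\cD(G))E} \subeq \cH_D$, because the reproducing kernel $\gamma_D(E)(\psi^* * \vphi) = \la \pi^{-\infty}(\vphi)E, \pi^{-\infty}(\psi)E\ra$ is exactly the Gram kernel of that cyclic subspace. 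In particular $\cH_E$ and $\cH_{E'}$ are realized as $G$-invariant subspaces of $\cH_D$, so any $G$-invariant subspace occurring inside $\cH_E$ or $\cH_{E'}$ is a $G$-invariant subspace of $\cH_D$.

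For (a)~$\Rarrow$~(b) I would argue by contradiction. Assume (a) and suppose that $E \neq E'$ while $\gamma_E \sim \gamma_{E'}$. Applying Theorem~\ref{thm:8.3} with $E$ in the role of $D$ and $E'$ in the role of $E$, the common space $\cH_E = \cH_{E'}$ decomposes as $\cH_{D_0} \oplus \bigoplus_{n \in J} \cH_{D_n}$ with each $\cH_{D_n}$, $n \in J$, finite-dimensional and isotypic. By the identification above, each such $\cH_{D_n}$ is a finite-dimensional $G$-invariant subspace of $\cH_D$, so (a) forces $J = \emptyset$. Then $E = D_0 = E'$, since $T_0 = \id$ on the common part, contradicting $E \neq E'$. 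Hence no two distinct distribution vectors give equivalent gaussian measures.

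For (b)~$\Rarrow$~(a) I would prove the contrapositive. Suppose $\cH_D$ contains a $G$-invariant subspace of finite positive dimension; by complete reducibility of finite-dimensional unitary representations it then contains an irreducible $G$-invariant subspace $V$ with $\dim V \geq 1$. Fix $0 \neq v \in V \subeq \cH_D \subeq \cH_D^{-\infty}$ and set $E := v$ and $E' := 2v$. Both generate the same cyclic subspace $V$, and on $V$ the kernel of $\gamma_D(E')$ is four times that of $\gamma_D(E)$, so the two distributions are related by the $G$-intertwiner $T = 2\,\id_V$ in the sense of Theorem~\ref{thm:8.3}(ii). Since $V$ is finite-dimensional, $TT^* - \1 = 3\,\id_V$ is automatically Hilbert--Schmidt, so Theorem~\ref{thm:gauss-ker-equiv} (through Theorem~\ref{thm:8.3}) gives $\gamma_E \sim \gamma_{E'}$, while $E = v \neq 2v = E'$; this contradicts (b).

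The step I expect to require the most care is the identification in the first paragraph: verifying that the RKHS of $\gamma_D(E)$ really is, $G$-equivariantly and isometrically, the cyclic subspace of $\cH_D$ generated by $E$, and in particular that the finite-dimensional pieces $\cH_{D_n}$ produced by Theorem~\ref{thm:8.3} are genuine $G$-invariant subspaces of $\cH_D$ rather than merely abstract subquotients. This is precisely where Proposition~\ref{prop:2.10} and the cyclicity bookkeeping behind Lemma~\ref{lem:multbound} enter, and where one must check that passing between $E \in \cH_D^{-\infty}$ and its image in $\cD'(G)$ does not distort the decomposition of the representation.
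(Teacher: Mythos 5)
Your overall strategy---viewing $E,E'$ as positive definite distributions via the embedding of Proposition~\ref{prop:2.10}(iv) and feeding the pair into Theorem~\ref{thm:8.3}---is exactly how the corollary is meant to follow (the paper states it without proof, as an immediate consequence of that theorem). But the identity you declare at the outset, and on which both directions lean, is false. The extension $\gamma_D\colon \cH_D^{-\infty}\into \cD'(G)$ is obtained by dualizing the linear map $\vphi\mapsto \pi^{-\infty}(\vphi)\alpha_D$, so $\gamma_D(E)(\vphi)=E\big(\pi^{-\infty}(\vphi)\alpha_D\big)$ and the kernel $\gamma_D(E)(\psi^* * \vphi)$ is \emph{linear} in $E$, whereas the Gram kernel $\la \pi^{-\infty}(\vphi)E,\pi^{-\infty}(\psi)E\ra$ is \emph{quadratic} in $E$. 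Already $E=2\alpha_D$ is a counterexample: $\gamma_D(2\alpha_D)=2D$ has kernel $2\,D(\psi^* * \vphi)$, while the Gram kernel of the cyclic subspace generated by $2\alpha_D$ is $4\,D(\psi^* * \vphi)$. The correct statement is that $\gamma_D(E)(\psi^* * \vphi)$ is the \emph{mixed} Gram kernel of $E$ against the canonical cyclic distribution vector $\alpha_D$. As a consequence, $\cH_{\gamma_D(E)}$ is not the cyclic subspace generated by $E$, and it need not even be contained in $\gamma_D(\cH_D)$: in Example~\ref{ex:8.5}, with $\cH_D\cong L^2\big(\R^d,\mu\big)$ and $E$ given by an unbounded density $\beta\geq 0$, one gets $\cH_{\gamma_D(E)}\cong L^2(\beta\mu)$, which contains $f$ with $f\beta\notin L^2(\mu)$. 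This is exactly the step you flagged as delicate, and it is where the proof breaks: in (a)\,$\Rarrow$\,(b) you have no justification that the finite-dimensional pieces $\cH_{D_n}$ produced by Theorem~\ref{thm:8.3} yield invariant subspaces of $\cH_D$. That bridging fact is true, but requires an argument, e.g.: positive definiteness of $\gamma_D(E)$ makes its kernel $L_E$ hermitian, polarization gives $4L_E=G_{\alpha_D+E}-G_{\alpha_D-E}\leq G_{\alpha_D+E}$ for the honest Gram kernels $G_\beta(\vphi,\psi)=\la \pi^{-\infty}(\vphi)\beta,\pi^{-\infty}(\psi)\beta\ra$, and then \cite[Theorem~I.2.8]{Ne00} provides a continuous $G$-equivariant injection of $\cH_{\gamma_D(E)}$ into the cyclic subrepresentation of $\cH_D$ generated by $\alpha_D+E$, through which finite-dimensional invariant subspaces transfer.

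There is a second gap, in (b)\,$\Rarrow$\,(a): for an arbitrary $0\neq v\in V$ the distribution $\gamma_D(v)$ need not be positive definite (take $v=-P_V\alpha_D$, whose image is $\leq 0$), so the gaussian measure $\gamma_E$ you wish to compare may simply not exist, and the pair $(v,2v)$ then witnesses nothing against (b). The repair is to take $v$ to be the $V$-component of $\alpha_D$ under the decomposition $\cH_D^{-\infty}=V\oplus (V^\perp)^{-\infty}$: it is nonzero because $\alpha_D$ is cyclic, and since $V$ and $V^\perp$ are invariant one computes $\gamma_D(v)(\vphi^* * \vphi)=\la \pi^{-\infty}(\vphi)v,\pi^{-\infty}(\vphi)\alpha_D\ra=\|\pi^{-\infty}(\vphi)v\|^2\geq 0$. (For such components of $\alpha_D$ your kernel identity does hold; it is only for general $E$ that it fails.) With this choice your argument goes through, except that the kernel of $\gamma_D(2v)$ is twice, not four times, that of $\gamma_D(v)$---again by linearity of $\gamma_D$---which changes nothing essential, since $T=2\id$ still satisfies the Hilbert--Schmidt condition of Theorem~\ref{thm:8.3}(ii) on a finite-dimensional space.
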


\begin{Example} \label{ex:8.5} We discuss the special case $G = \R^n$. According to
the Bochner--Schwartz theorem, a distribution $D \in \cD'(G)$ is positive
def\/inite if and only if it is the Fourier transform $D = \hat\mu$ of a tempered
measure $\mu$ on the dual group $\hat G \cong \R^d$, and then
$\cH_D \cong L^2(\R^d,\mu)$ with the representation
\begin{gather*} (\pi_D(x)f)(y) = e^{-ixy} f(y).\end{gather*}

For $D = \hat\mu$ and $E = \hat\nu$, the equality of the corresponding
Hilbert spaces is equivalent to the existence of positive constants
$c_1, c_2 > 0$ with
\begin{gather*} c_1 \nu \leq \mu \leq c_2 \nu,\end{gather*}
which \looseness=-1 is equivalent to the equivalence of the measures $\mu$ and $\nu$
with the additional requirement that $\delta := \frac{d\mu}{d\nu}$
satisf\/ies $0 < c_1 \leq \delta \leq c_2$ (cf.~\cite{Ne00} or \cite{Jo68}). Then
$L^2(\R^d,\mu) = L^2(\R^d,\nu)$, and the identity
$T \colon L^2(\R^d,\nu) \to L^2(\R^d,\mu)$ is an isomorphism of Banach spaces.
We then have $T^*(f) = \delta f$ and the equivalence of the gaussian measures
is equivalent to the multiplication operator
$M_{\delta - 1} = T^*T - \1$ being a Hilbert--Schmidt operator on $L^2(\R^d, \nu)$
(Theorem~\ref{thm:gauss-ker-equiv}).
This is equivalent to the condition that the
restriction of $\nu$ to the subset $\{\delta \not= 1\}$ is atomic,
so that the values of $\delta$ in these points are def\/ined, and
the Hilbert--Schmidt condition can be expressed as
\begin{gather*} \sum_{\delta(x) \not=1} |\delta(x) -1|^2 < \infty.\end{gather*}
We conclude that $\gamma_D\sim \gamma_E$ is equivalent to
$\mu = \nu$ on the complement of an at most countable set
$S$ of atoms for both measures, which satisf\/ies
\begin{gather*} \sum_{s \in S} \left|\frac{\mu(\{s\})}{\nu(\{s\})} - 1\right|^2 < \infty.\end{gather*}
\end{Example}

\begin{Remark} (a) Theorem~\ref{thm:8.3} generalizes in an obvious way to
continuous positive def\/inite functions on a topological group~$G$.

(b) Example~\ref{ex:8.5} generalizes in the obvious fashion to
positive def\/inite functions on a locally compact abelian group, or,
more generally, on a nuclear abelian group (cf.\ \cite{Ba91}).
\end{Remark}

The following theorem covers in particular the case of one-dimensional
Lie groups.

\begin{Theorem} \label{thm:3.18} Let $A$ be a selfadjoint operator
on the Hilbert space $\cH$.
Then the following are equivalent:
\begin{itemize}\itemsep=0pt
\item[\rm(a)] The gaussian measure $\gamma_\cH$ can be realized on $\cH^{-\infty}(A)$,
the dual space of
\begin{gather*} \cH^\infty(A) = \bigcap_{n \in \N_0} \cD\big(A^n\big).\end{gather*}
\item[\rm(b)] There exists an $N \in \N$ such that the bounded operator
$(\1 + A^2)^{-N}$ is Hilbert--Schmidt.
\item[\rm(c)] The Fr\'echet space $\cH^\infty(A)$ is nuclear.
\end{itemize}
\end{Theorem}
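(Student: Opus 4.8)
The plan is to work with the scale of Hilbert spaces attached to $A$ and to combine a nuclearity criterion for Fr\'echet spaces with the trace-class characterisation of covariance operators of gaussian measures on a Hilbert space. For $s \geq 0$ let $\cH_s$ be the completion of $\cH^\infty(A)$ in the norm $\|v\|_s := \|(\1+A^2)^{s/2}v\|$, and let $\cH_{-s} := (\cH_s)'$ be its dual, taken with respect to the pivot $\cH_0 = \cH$. Then $\cH^\infty(A) = \bigcap_{s} \cH_s$ with the projective limit (Fr\'echet) topology defined by the norms $\|\cdot\|_n$, $n \in \N$, while $\cH^{-\infty}(A) = \bigcup_{s} \cH_{-s}$. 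By the spectral calculus for the selfadjoint operator $A$, the inclusion $\cH_t \into \cH_s$ for $t > s$ is unitarily equivalent to the bounded operator $(\1+A^2)^{-(t-s)/2}$ on $\cH$; in particular it is Hilbert--Schmidt exactly when $(\1+A^2)^{-(t-s)/2}$ is.

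First I would prove (b) $\Leftrightarrow$ (c). A reduced projective limit of Hilbert spaces is nuclear if and only if for each $s$ there is some $t > s$ for which the linking map $\cH_t \into \cH_s$ is Hilbert--Schmidt (Grothendieck--Pietsch; the composite of two Hilbert--Schmidt maps is trace class). By the previous paragraph this happens precisely when $(\1+A^2)^{-r/2}$ is Hilbert--Schmidt for some $r > 0$. Since membership of $(\1+A^2)^{-1}$ in a Schatten class for one exponent forces it for all larger exponents, this is equivalent to the existence of $N \in \N$ with $(\1+A^2)^{-N}$ Hilbert--Schmidt, which is (b).

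For (c) $\Rightarrow$ (a) I would invoke Minlos' theorem, exactly as in the construction of $\gamma_D$ preceding the statement. The inclusion $\cH^\infty(A) \into \cH$ is continuous, so $\vphi(v) := e^{-\|v\|^2/2}$ is continuous and positive definite on the nuclear Fr\'echet space $\cH^\infty(A)$; hence it is the Fourier transform of a unique probability measure $\gamma$ on $\cH^{-\infty}(A)$. As $\hat\gamma = \hat\gamma_\cH$ on $\cH^\infty(A)$, the evaluation functionals $\alpha \mapsto \alpha(v)$ reproduce the gaussian process indexed by $\cH$ (with $\int \alpha(v)^2\, d\gamma = \|v\|^2$), so $\gamma$ realises $\gamma_\cH$ on $\cH^{-\infty}(A)$ in the sense of Example~\ref{ex:1.1}.

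The remaining implication (a) $\Rightarrow$ (b) is the one I expect to be delicate, and I would establish it by concentration. A realisation on $\cH^{-\infty}(A)$ is a probability measure $\gamma$ on $\cH^{-\infty}(A) = \bigcup_N \cH_{-N}$ with Fourier transform $e^{-\|\cdot\|^2/2}$, hence gaussian. Each $\cH_{-N}$ is a measurable linear subspace (its norm $\|\alpha\|_{-N}$ is a countable supremum of evaluations, so measurable), and by the gaussian zero-one law $\gamma(\cH_{-N}) \in \{0,1\}$; since $\gamma(\bigcup_N \cH_{-N}) = 1$, some $\cH_{-N}$ has full measure. Viewing $\gamma$ as a gaussian measure on the Hilbert space $\cH_{-N}$, its covariance operator is $\iota\iota^*$ for the Cameron--Martin inclusion $\iota \colon \cH_0 \into \cH_{-N}$, and a gaussian measure on a Hilbert space exists only if this covariance operator is trace class, i.e.\ only if $\iota$ is Hilbert--Schmidt. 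Since $\iota$ is unitarily $(\1+A^2)^{-N/2}$, this gives $(\1+A^2)^{-N}$ trace class, in particular Hilbert--Schmidt, which is (b). The main obstacle is the rigorous passage from a measure on the inductive limit $\cH^{-\infty}(A)$ to a genuine gaussian measure concentrated on one Hilbert step $\cH_{-N}$; the zero-one law together with the Schatten-class interplay above is what makes this work, and it is where the essential content of the theorem lies.
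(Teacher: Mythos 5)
Your treatment of (b) $\Leftrightarrow$ (c) and (c) $\Rightarrow$ (a) is essentially the paper's own argument: the paper also identifies the linking maps of the scale with negative powers of $\1+A^2$ (phrased as ``the natural map $\cD^m(B)\to\cD^n(B)$ can be identified with $B^{n-m}$'' for $B=\1+A^2$) and also settles (c) $\Rightarrow$ (a) by Bochner--Minlos. The genuine difference is in (a) $\Rightarrow$ (b). The paper first shows that (a) forces every bounded spectral projection $P(E)$ to have finite-dimensional range (via $\cH^{-\infty}(A)=\cH_1\oplus\cH_2^{-\infty}(A_2)$ for $\cH_1=P(E)\cH$, $\cH_2=\cH_1^\bot$, and the fact that $\gamma_{\cH_1}$ cannot live on an infinite-dimensional $\cH_1$ itself); hence $A$ has discrete spectrum with finite multiplicities, $\cH$ is separable, and after diagonalizing one realizes $\gamma_\cH$ on $\C^\N$ and quotes Yamasaki's weighted-$\ell^2$ example to convert positive measure of $\cH^{-2N}(A)$ into $\sum_n(1+\lambda_n^2)^{-2N}<\infty$. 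You instead concentrate the measure on one Hilbert step $\cH_{-N}$ by the zero-one law and invoke the trace-class covariance theorem; when $\cH$ is separable this is correct and is an attractive, coordinate-free substitute for the paper's sequence-space computation.

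The gap is that the theorem does not assume $\cH$ separable, the paper's first step is precisely what proves separability, and your argument presupposes it. Your measurability claim --- that $\|\alpha\|_{-N}$ is ``a countable supremum of evaluations'', so that $\cH_{-N}$ lies in the cylinder $\sigma$-algebra of $\cH^{-\infty}(A)$, which is the hypothesis needed for the gaussian zero-one law --- requires a countable subset of the unit ball of $\cH_N$, dense in the Fr\'echet topology of $\cH^\infty(A)$, that computes the norm of every $\alpha\in\cH^{-\infty}(A)$; such a set exists exactly when $\cH_N$, equivalently $\cH$, is separable. If $\cH$ is non-separable (take $\cH=\ell^2(\Gamma)$ with $\Gamma$ uncountable and $A$ diagonal so that each eigenvalue $n\in\N$ has uncountable multiplicity), every cylinder set is determined by countably many evaluations, whereas membership in $\cH_{-N}$ genuinely depends on uncountably many coordinates; so $\cH_{-N}$ is not measurable, the zero-one law cannot be applied, and the covariance theorem you quote (a statement about separable Hilbert spaces) is not available --- yet this is a case in which (a) must be refuted, since (b) manifestly fails there. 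To close the gap you need a preliminary reduction, either the paper's spectral-projection argument or an equivalent one (e.g.\ pass to a separable $A$-reducing subspace containing an infinite orthonormal system of some $P([a,b])\cH$ and apply your separable-case argument to it to rule out infinite-rank bounded spectral projections). With that step added, the rest of your (a) $\Rightarrow$ (b) goes through and constitutes a valid alternative to the paper's proof.
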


\begin{proof} (a) $\Rarrow$ (b)
Let $E \subeq \R$ be a bounded subset,
$\cH_1 := P(E)\cH$ (for the spectral measure~$P$ of~$A$) and
$\cH_2 := \cH_1^\bot$. Then we accordingly have
$A = A_1 \oplus A_2$, where the operator $A_1$ is bounded
and $\cD(A) = \cH_1 \oplus \cD(A_2)$. This implies that
\begin{gather*} \cH^\infty(A) = \cH_1 \oplus \cH_2^\infty(A_2) \qquad \mbox{and thus} \qquad
\cH^{-\infty}(A) = \cH_1 \oplus \cH_2^{-\infty}(A_2).\end{gather*}
Therefore $\gamma_\cH$ can be realized on $\cH^{-\infty}$ if and only if
$\gamma_{\cH_1}$ can be realized on $\cH_1$, which means that $\cH_1$ is
f\/inite-dimensional, and $\gamma_{\cH_2}$ can be realized on $\cH_2^{-\infty}(A_2)$.

Therefore (a) implies that all spectral projections $P([a,b])$, $a < b$, have
f\/inite-dimensional range. As a consequence,
$A$ is diagonalizable
with discrete spectrum and f\/inite-dimensional eigenspaces; in particular
$\cH$ is separable. Let $(\lambda_n)_{n \in \N}$ be the eigenvalues of $A$,
counted with multiplicities and $(e_n)_{n \in \N}$ an ONB of $\cH$ with
$A e_n = \lambda_n e_n$ for every $n \in \N$. Now
\begin{gather*} \cH^{-\infty}(A)
= \bigg\{ \sum_n x_n e_n \colon (\exists\, N \in \N)
\sum_n \big(1 + \lambda_n^2\big)^{-2N} |x_n|^2 < \infty \bigg\}\end{gather*}
is the union of the subspaces
\begin{gather*} \cH^{-2N}(A)
= \bigg\{ \sum_n x_n e_n \colon \sum_n \big(1 + \lambda_n^2\big)^{-2N} |x_n|^2 < \infty \bigg\}.\end{gather*}

We realize the gaussian measure $\gamma_\cH$ on the product space $\C^\N$.
Then every subspace $\cH^{-2N}(A)$ is measurable, and
\begin{gather*} 1 = \gamma_\cH\big(\cH^{-\infty}(A)\big) = \lim_{N \to \infty} \gamma_\cH\big(\cH^{-2N}(A)\big) \end{gather*}
implies that $\gamma_\cH(\cH^{-2N}(A))>0$ for some $N \in \N$.
From the example in \cite[p.~153]{Ya85}, it now follows that
\begin{gather*} \big\|\big(\1 + A^2\big)^{-N}\big\|_2^2 = \sum_n \big(1 + \lambda_n^2\big)^{-2N} < \infty\end{gather*}
(see also \cite[Theorem~5.2]{Dr03}).

(b) $\Leftrightarrow$ (c): The space
$\cD^\infty(A) := \bigcap_{n \in \N} \cD(A^n)$ coincides with the space
$\cD^\infty(B)$ for $B := \1 + A^2$,
and $B$ has the additional property that $B^n \leq B^{n+1}$
for $n \in \N_0$. The topology on $\cD^\infty(B)$ is def\/ined by the seminorms
$p_n(v) := \|B^n v\|$, $n \in \N_0$.
In view of \cite[Def\/inition~50.1]{Tr67}, the nuclearity of the
space $\cH^\infty = \cD^\infty(B)$ is equivalent to the condition that,
for every $n \in \N$,
there exists an $m > n$, such that the natural map $\cD^m(B) \to \cD^n(B)$ is
nuclear.
Since this map can be identif\/ied with the map $B^{n-m} \colon \cH \to \cH$, we see that
the nuclearity of $\cH^\infty$ is equivalent to the existence of some nuclear
power of~$B^{-1}$,
which is equivalent to the existence of some power which is Hilbert--Schmidt.
This means
that $\cD^\infty(A)$ is nuclear if and only if some operator $(\1 + A^2)^{-N}$,
$N \in \N$, is Hilbert--Schmidt.

(c) $\Rarrow$ (a) follows from the Bochner--Minlos theorem.
\end{proof}

\begin{Example} (a) In the context of ref\/lection positivity on curved spacetimes,
a natural class of selfadjoint operators arises as follows \cite{JR07a, JR07b}.
We call a connected complete orientable Riemannian manifold $(M,g)$ a
{\it quantizable static space-time} if there exists a complete Killing vector f\/ield $\xi$
orthogonal to a hypersurface $\Sigma \subeq M$ such that all integral curves
of $\xi$ intersect $\Sigma$ exactly once. Then the f\/low of $\xi$
induces an isometry $\Phi \colon \R \times \Sigma \to M$ of Riemannian manifolds
and $\theta(\Phi(t,x)) = \Phi(-t,x)$ is an isometric involution exchanging the
two open subsets $\Omega_\pm = \{ \Phi(t,x) \colon {\pm}t > 0, x \in \Sigma\}$.
Let $\Delta$ be the Laplacian of $(M,g)$ and $m > 0$. Then the {\it free covariance operator} $C := (m^2 - \Delta)^{-1}$ is a bounded selfadjoint operator on $L^2(M)$.
For $s \in \R$, we write $\cH_s$ for the completion of $L^2(M)$ w.r.t.\ the scalar product
$\la f,g \ra := \la f, C^{-s} g \ra_{L^2(M)}$
(the Sobolev spaces) and obtain a Fr\'echet space
\begin{gather*} \cS := \cD^\infty\big(m^2 - \Delta\big) = \cD^\infty(\Delta) = \bigcap_{s \in \R} \cH_s,\end{gather*}
but in general this space is not nuclear because the Laplacian may have continuous spectrum (which contradicts the statement in~\cite[Def\/inition~2.1]{JR07b}).
But if $\Sigma$ is compact and
\begin{gather*} H := \frac{1}{2} + \frac{1}{2}\big(Q_t^2 - \Delta\big), \end{gather*}
where $Q_t$ is the multiplication with the global time function, then
there exists a power $(\1 + H)^{-N}$ which is trace class, so that
\begin{gather*} \cS := \cD^\infty(H) \subeq L^2(M) \end{gather*}
is a nuclear space. The operator $H$ is the sum of the Hamiltonian of the harmonic oscillator in the time direction and the Laplacian of~$\Sigma$
\cite[Proposition~7.5]{An13}. We refer to \cite[Section~7.4]{An13} for a~discussion of the corresponding gaussian measures on the dual space~$\cS'$.

(b) For $M = \R^d$ and $\cH = L^2(\R^d)$, the Hamiltonian of the $d$-fold harmonic oscillator
\begin{gather*} H = \frac{1}{2}\sum_{j = 1}^d Q_j^2 - \partial_j^2, \qquad
(Q_j f)(x) = x_j f(x), \qquad (\partial_j f)(x) = \frac{\partial}{\partial x_j} f(x) \end{gather*}
leads to $\cD^\infty(H) = \cS(\R^d)$ which also is a nuclear space.
\end{Example}

\begin{Corollary} \label{con:3.18} Let $(\pi, \cH)$ be a continuous unitary representation of the finite-dimensional
Lie group $G$ and put $\Delta = \sum_j X_j^2$ for a basis $X_1, \ldots, X_n$ of $\g$. Then the following are equivalent:
\begin{itemize}\itemsep=0pt
\item[\rm(a)] The gaussian measure $\gamma_\cH$ can be realized on $\cH^{-\infty}$.
\item[\rm(b)] There exists an $N \in \N$ such that $(\1-\oline{\dd\pi(\Delta)})^{-N}$ is a
Hilbert--Schmidt operator.
\item[\rm(c)] The Fr\'echet space $\cH^\infty$ is nuclear.
\item[\rm(d)] $\pi$ is trace class, i.e., for each $f \in C^\infty_c(G)$, the operator
$\pi(f)$ is trace class.
\end{itemize}
\end{Corollary}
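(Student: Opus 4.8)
The plan is to deduce the corollary from Theorem~\ref{thm:3.18} applied to the selfadjoint operator $A := \oline{\dd\pi(\Delta)}$. First I would record the basic properties of $A$. Since $\pi$ is unitary, each $\dd\pi(X_j)$ is essentially skewadjoint, so $\dd\pi(\Delta) = \sum_j \dd\pi(X_j)^2 \le 0$; as $\Delta = \sum_j X_j^2$ is the elliptic Nelson Laplacian, $\dd\pi(\Delta)$ is essentially selfadjoint on $\cH^\infty$, whence $A = \oline{\dd\pi(\Delta)}$ is a negative selfadjoint operator and $B := \1 - A \ge \1$. The key identification is that elliptic regularity on $G$ (Nelson's theorem) gives $\cH^\infty = \cH^\infty(A) = \bigcap_n \cD(A^n)$, and dually $\cH^{-\infty} = \cH^{-\infty}(A)$. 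With these identifications condition~(a) of the corollary is precisely Theorem~\ref{thm:3.18}(a), and condition~(c) is Theorem~\ref{thm:3.18}(c).

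Next I would match condition~(b). Theorem~\ref{thm:3.18}(b) asserts that $(\1 + A^2)^{-N}$ is Hilbert--Schmidt for some $N$, while the corollary asks the same of $B^{-N} = (\1 - A)^{-N}$. By the spectral theorem the squared Hilbert--Schmidt norms are $\Tr\big((\1+A^2)^{-2N}\big)$ and $\Tr\big(B^{-2N}\big)$, and on $\spec(A) \subseteq (-\infty,0]$ the functional-calculus inequalities $1 + \lambda^2 \le (1+|\lambda|)^2$ and $1+|\lambda| \le 1 + \lambda^2$ (for $|\lambda|\ge 1$) show that one of these traces is finite for some $N$ exactly when the other is, after possibly doubling $N$. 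Hence Theorem~\ref{thm:3.18}(b) is equivalent to the corollary's~(b), and we obtain (a) $\Leftrightarrow$ (b) $\Leftrightarrow$ (c).

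It remains to fold in the trace-class condition~(d). For (b) $\Rarrow$ (d) I would use the intertwining relation $\dd\pi(\Delta)\pi(f) = \pi(\Delta f)$ for $f \in C^\infty_c(G)$, where $\Delta$ on the right denotes the corresponding invariant Laplacian on $G$; iterating yields $B^{2N}\pi(f) = \pi\big((\1-\Delta)^{2N}f\big)$, so that $\pi(f) = B^{-2N}\,\pi\big((\1-\Delta)^{2N}f\big)$. If $B^{-N}$ is Hilbert--Schmidt then $B^{-2N} = (B^{-N})^2$ is trace class, while $\pi\big((\1-\Delta)^{2N}f\big)$ is bounded because $(\1-\Delta)^{2N}f \in C^\infty_c(G) \subseteq L^1(G)$ and $\|\pi(h)\| \le \|h\|_1$; being a product of a trace-class and a bounded operator, $\pi(f)$ is trace class.

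The main obstacle is the reverse implication (d) $\Rarrow$ (b), where one must manufacture \emph{polynomial} spectral decay of $B$ from the trace-class hypothesis (note that exponential-type information, e.g.\ trace-class heat operators, would be strictly weaker and would not suffice). The natural route is to realize the resolvent power as a convolution operator: for large $N$ the fundamental solution $\Phi_N$ of $(\1-\Delta)^N$ on $G$ is an integrable function, smooth away from $\1$, with $\pi(\Phi_N) = B^{-N}$. Since $\Phi_N$ is neither compactly supported nor smooth at $\1$, condition~(d) does not apply to it verbatim; one splits $\Phi_N = \chi\Phi_N + (1-\chi)\Phi_N$ with $\chi \in C^\infty_c(G)$ equal to $1$ near $\1$, controls the smooth tail $(1-\chi)\Phi_N$ by an elliptic parametrix argument, and feeds the compactly supported principal part into~(d). (Equivalently, one may invoke the standard characterization of trace-class representations, for which (b), (c) and (d) are known to coincide.) I expect this step to carry essentially all of the analytic difficulty, the remaining implications being formal once Theorem~\ref{thm:3.18} and the identification $\cH^\infty = \cH^\infty(A)$ are in hand.
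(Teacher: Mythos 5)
Your handling of (a) $\Leftrightarrow$ (b) $\Leftrightarrow$ (c) coincides with the paper's proof: both apply Theorem~\ref{thm:3.18} to $A := \oline{\dd\pi(\Delta)}$ and invoke Nelson's theorem $\cH^\infty = \cH^\infty(A)$ \cite[Theorem~4.4.4.5]{Wa72}. You are in fact more careful than the paper on one point: Theorem~\ref{thm:3.18}(b) concerns $(\1+A^2)^{-N}$ while the corollary's (b) concerns $(\1-A)^{-N}$, and the paper passes between the two without comment. Your observation that $A \leq 0$ (so that $\1 - A = \1 + |A|$), together with the elementary comparison of $1+\lambda^2$ and $(1+|\lambda|)^2$ on $\spec(A) \subeq (-\infty,0]$ at the harmless cost of doubling $N$, is the correct justification of this silent step.

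The real divergence is in (d), where the paper offers no argument at all: it simply refers to \cite[Proposition~1.11]{DNSZ16}. Your proof of (b) $\Rarrow$ (d) is sound and goes beyond the paper: since $\pi(f)\cH \subeq \cH^\infty$ for $f \in C^\infty_c(G)$, the intertwining relation gives $(\1-A)^{2N}\pi(f) = \pi\big((\1-\Delta)^{2N}f\big)$, so $\pi(f) = (\1-A)^{-2N}\pi\big((\1-\Delta)^{2N}f\big)$ is a product of a trace class operator (the square of a Hilbert--Schmidt one) with a bounded one. However, your sketch of (d) $\Rarrow$ (b) has a genuine gap: after the cut-off, \emph{neither} piece of $\Phi_N = \chi\Phi_N + (1-\chi)\Phi_N$ lies in $C^\infty_c(G)$. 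The principal part $\chi\Phi_N$ is compactly supported but not smooth at $\1$ (no fundamental solution of $(\1-\Delta)^N$ can be $C^\infty$ near $\1$, since then $\delta_\1$ would be smooth there), so hypothesis (d) cannot be ``fed'' this function any more than $\Phi_N$ itself; and the tail $(1-\chi)\Phi_N$ is smooth but not compactly supported, so (d) does not apply to it either, while knowing only that $\pi((1-\chi)\Phi_N)$ is bounded is useless, since trace class plus bounded gives no Schatten information. Repairing this requires real work: a closed-graph argument showing that $f \mapsto \pi(f)$ is continuous from $C^\infty_c(K)$ into the trace class operators, hence controlled by finitely many derivatives and extendable to $C^m_c(K)$ for some finite $m$, plus a separate treatment of the tail on noncompact groups. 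That is essentially the content of \cite[Proposition~1.11]{DNSZ16}, so your own fallback --- quoting the known characterization of trace class representations --- is not merely an alternative but is exactly the paper's route, and the parametrix step should not be presented as routine.
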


\begin{proof} In view of Theorem~\ref{thm:3.18}, the equivalence of (a)-(c) follows from Nelson's theorem asserting that
$\cH^\infty = \cH^\infty(A)$ holds for the selfadjoint operator $A := \oline{\dd\pi(\Delta)}$~\cite[Theorem~4.4.4.5]{Wa72}.
For the equivalence with~(d) we refer to \cite[Proposition~1.11]{DNSZ16}.
\end{proof}

\begin{Proposition}\label{prop:4.20} Let $(\pi, \cH)$ be a unitary representation of the Lie group~$G$. Then the space~$\cH^\infty$ is nuclear in the following cases:
\begin{itemize}\itemsep=0pt
\item[\rm(a)] If $G$ is compact and $\pi$ is distribution cyclic.
\item[\rm(b)] If $G$ is connected semisimple and $\pi$ is irreducible.
\item[\rm(c)] If $G$ is connected nilpotent and $\pi$ is irreducible.
\end{itemize}
\end{Proposition}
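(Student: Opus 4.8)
The plan is to reduce each case to the nuclearity criterion of Theorem~\ref{thm:3.18}, namely that $\cH^\infty$ is nuclear as soon as some power $(\1 - \oline{\dd\pi(\Delta)})^{-N}$ is Hilbert--Schmidt (equivalently trace class, after adjusting $N$), where $\Delta = \sum_j X_j^2$ is the group Laplacian. By Nelson's theorem the smooth vectors $\cH^\infty$ coincide with $\cH^\infty(A)$ for $A := \oline{\dd\pi(\Delta)}$, so in every case it suffices to locate a single self-adjoint operator built from the representation whose inverse has a Hilbert--Schmidt power, and then invoke Theorem~\ref{thm:3.18}(c) (via the $(b)\Leftrightarrow(c)$ equivalence). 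The three cases differ only in how one produces such an operator, and each rests on a classical fact about the specific class of groups.

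First I would treat case~(a), $G$ compact. By the Peter--Weyl theorem $\cH$ decomposes as a discrete orthogonal sum of finite-dimensional irreducibles $\cH = \hat\bigoplus_{\lambda} m_\lambda V_\lambda$. The key input is Lemma~\ref{lem:multbound}: distribution-cyclicity (here the hypothesis that $\pi$ is distribution cyclic, i.e.\ generated by a single $\alpha \in \cH^{-\infty}$) forces the multiplicity $m_\lambda$ to be bounded by $\dim V_\lambda$, so no isotypic component is infinite-dimensional and, moreover, the growth of multiplicities is controlled. On each $V_\lambda$ the Casimir-type operator $\Delta$ acts as a scalar $-c_\lambda$, and Weyl's dimension and eigenvalue estimates give that $\sum_\lambda m_\lambda (\dim V_\lambda)(1 + c_\lambda)^{-2N}$ converges for $N$ large; with the multiplicity bound this is dominated by $\sum_\lambda (\dim V_\lambda)^2 (1 + c_\lambda)^{-2N}$, which is finite because the number of $\lambda$ with bounded $c_\lambda$ is finite and the dimensions grow polynomially in $c_\lambda$. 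This yields the Hilbert--Schmidt power and hence nuclearity.

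For cases~(b) and~(c) I would appeal to the structure theory of the unitary duals. When $G$ is connected semisimple and $\pi$ is irreducible, Harish-Chandra's theory shows that the Casimir $\dd\pi(\Delta)$ acts by a scalar on each $K$-isotypic piece and that the $K$-types occur with multiplicities bounded by $\dim$ of the $K$-type (admissibility); combined with the polynomial growth of $K$-type dimensions against the Casimir eigenvalues, one again gets $(\1 - A^2)^{-N}$ Hilbert--Schmidt for large $N$. For $G$ connected nilpotent and $\pi$ irreducible, Kirillov theory realizes $\pi$ on an $L^2$-space of functions in finitely many variables where $\dd\pi(\Delta)$ is an elliptic operator of harmonic-oscillator type, whose inverse has a Hilbert--Schmidt power by the standard spectral asymptotics (this is essentially the statement that irreducible representations of nilpotent Lie groups are trace class, so that the equivalence (a)--(d) of Corollary~\ref{con:3.18} applies directly).

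\textbf{The main obstacle} is case~(b): verifying that the $K$-type multiplicities of an irreducible $\pi$ grow slowly enough relative to the Casimir eigenvalues to force a Hilbert--Schmidt power. The cleanest route is to quote admissibility (each irreducible unitary representation of a semisimple Lie group is $K$-admissible, so multiplicities are finite and polynomially bounded) together with the fact that such representations are trace class, and then invoke the equivalence $(c)\Leftrightarrow(d)$ already recorded in Corollary~\ref{con:3.18}. In other words, rather than estimating the spectral sum by hand in the semisimple case, I would reduce it to the trace-class property of irreducible (and, for the nilpotent case, all irreducible) representations, which is available in the literature and matches condition~(d) of the Corollary; the compact case is then the only one requiring the explicit multiplicity argument via Lemma~\ref{lem:multbound}.
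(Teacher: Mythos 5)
Your proposal is correct, and for case~(a) it is essentially the paper's own argument: Peter--Weyl, the multiplicity bound of Lemma~\ref{lem:multbound}, Freudenthal's scalar for the Casimir, polynomial growth of $d(\lambda)$, and then Theorem~\ref{thm:3.18}. For (b) and (c), however, you take a genuinely different route. The paper proves (b) by the direct spectral estimate you only sketch: choosing the basis adapted to a Cartan decomposition $\fg=\fk\oplus\fs$ gives $\Delta=\Omega+2\Delta_{\fk}$, where the Casimir $\Omega$ is scalar by irreducibility and $\Delta_{\fk}$ is scalar on each $K$-isotypic component, whose dimension is at most $d(\lambda)^2$ by admissibility; note that this decomposition is exactly what is needed to make your phrase ``the Casimir $\dd\pi(\Delta)$ acts by a scalar on each $K$-isotypic piece'' precise, since the group Laplacian is neither the Casimir (the Killing form is indefinite) nor central. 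The paper proves (c) with no spectral asymptotics at all: by Kirillov's theorem \cite[Theorem~3, p.~103]{Ki04} the representation is realized on $L^2(\R^n)$ with $\cH^\infty=\mathcal{S}(\R^n)$, which is nuclear; your ``harmonic-oscillator type'' argument is heuristic, as ellipticity alone does not give discrete spectrum or eigenvalue growth. Your preferred shortcut for both cases---quoting that irreducible representations of connected semisimple, resp.\ nilpotent, Lie groups are trace class and invoking (d)$\Rightarrow$(c) of Corollary~\ref{con:3.18}---is logically sound (the Corollary precedes the Proposition, so there is no circularity within the paper) and is shorter, but it outsources the analytic content to the same literature \cite{DD16,DNSZ16} underlying the paper's citations; what the paper's arguments buy is a self-contained estimate in case~(b) and an explicit identification of the smooth vectors in case~(c). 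One caveat for your route in (b): for infinite center the subgroup $K$ is noncompact and Harish-Chandra admissibility in its classical form requires finite center, so the general connected semisimple case really does need the result of \cite{DD16} rather than a bare appeal to ``admissibility''.
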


\begin{proof} (a) (Sketch) We denote irreducible representation with highest weight $\lambda$ by $(\pi_\lambda, V_\lambda)$. We also set $d(\lambda):=\dim V_\lambda$. We have $\cH = \oplus_{\lambda} \cH_\lambda$, where $\cH_\lambda$ is the isotypic subspace of highest weight~$\lambda$.
The multiplicity of the simple $G$-module $V_\lambda$ in $\cH_\lambda$ is bounded by $d(\lambda)$ (Lemma~\ref{lem:multbound}).
We can then choose the basis $X_j$ such that $-\Delta$ coincides with the Casimir element of $\g$. Let $\rho = {1\over 2} \sum_{\beta \in \Delta^+} \beta$.
Then
\begin{gather*} -\dd \pi (\Delta)\res_{\cH_\lambda} = \big(\|\lambda + \rho\|^2 - \|\rho\|^2\big) \1 \end{gather*}
by Freudenthal's lemma \cite[Lemma IX.5.2]{Ne00}. We therefore have
\begin{gather*}
\begin{split}
& \big\| (\1-\dd \pi (\Delta ) )^{-N}\big\|_2^2
= \sum_\lambda \big(1 + \|\lambda + \rho\|^2 - \|\rho\|^2\big)^{-2N} \dim \cH_\lambda\\
&\hphantom{\big\| (\1-\dd \pi (\Delta ) )^{-N}\big\|_2^2}{} \leq \sum_\lambda \big(1 + \|\lambda + \rho\|^2 - \|\rho\|^2\big)^{-2N} d(\lambda )^2 .
\end{split}
\end{gather*}
This expression is f\/inite for some $N \in \N$ because $d(\lambda )$ is bounded by a~polynomial in~$\|\lambda\|$ \cite[Lemma~4.4.2.3]{Wa72}.
This means that there exists an $N \in \N$ for which $(\1 -\dd \pi (\Delta ) )^{-N}$ is Hilbert--Schmidt. Therefore $\cH^\infty = \cD^\infty(\dd \pi (\Delta )) = \cD^\infty((\1 -\dd \pi (\Delta) )^N)$ is nuclear.

(b) (\cite[Theorem~2.1]{DD16}) Write $\fg=\fk \oplus \fs$ for the Cartan decomposition of $\fg$. Then we can choose the basis $X_j$ such that
$\Delta =\Omega +2\Delta_{\fk}$ where $\Omega $ is the Casimir element and $\Delta_\fk$ is the Laplacian for $\fk$. As $\Omega$ acts by a scalar and the dimension of $\cH_\lambda$ is bounded by $d(\lambda )^2$, the claim follows as in~(a).

(c) (\cite[Proposition~1.9(a)]{DD16}) According to \cite[Theorem~3, p.~103]{Ki04}, we can realize $\pi$ in $L^2(\R^n,dx)$ such that $\cH^\infty$ coincides with the Schwartz space $\mathcal{S}(\R^n)$ which is nuclear.
\end{proof}

\begin{Example} For the distribution $D = \delta_\1$ on the compact Lie group $G$, the corresponding representation is the regular representation on $\cH_D \cong L^2(G)$. In this case $L^2(G)^\infty = C^\infty(G)$ (f.i.\ by the Dixmier--Malliavin theorem) is a nuclear space and $L^2(G)^{-\infty} = C^{-\infty}(G)$ is the space of distributions on~$G$. In particular, the measure $\gamma_D$ can be realized in this space.
\end{Example}

\begin{Remark} The condition $\cH^\infty = \cH$ is equivalent to the smoothness of the representation, which in turn is equivalent to the boundedness of the operators $\dd\pi(X)$, $X \in \g$. If this is the case, then $\cH^{-\infty} = \cH$. If $\cH$ is inf\/inite-dimensional it is not nuclear, and since $\cH \subeq \cH^a$ (the algebraic dual) is a zero set for $\gamma_\cH$, the corresponding gaussian measure can not be realized on~$\cH^{-\infty}$.

Examples arise from the multiplication representation of an abelian Lie group $G$ on the space $\cH = L^2(\hat G, \mu)$, where $\mu$ is a compactly supported measure on the dual group $\hat G = \Hom(G,\T)$. These representations are smooth and generated by the cyclic vector $1$. If $\cH$ is inf\/inite-dimensional, then it is not nuclear. In particular, $\gamma_D$ is not realized on the subspace $\cH_D^{-\infty}$ of $\cD'(G)$.
\end{Remark}

\begin{Example} The multiplication representation of $G = \R$ on $L^2(\R,\mu)$ for the measure $\mu = \sum_{n \in \N} \frac{1}{2^n} \delta_{\frac{1}{n}}$ is norm continuous and $\cH^\infty= \cH$ is not nuclear.
\end{Example}

\subsection[Ergodicity of $\gamma_\cH$ for $G$]{Ergodicity of $\boldsymbol{\gamma_\cH}$ for $\boldsymbol{G}$}

Let $(\pi, \cH)$ \looseness=-1 be an orthogonal representation of $G$ on the real Hilbert space $\cH$ and $\gamma_\cH$ the corresponding gaussian measure. We consider the so-called {\it gaussian action} of $G$ on the gaussian probability space $(\cH^a, \gamma_\cH)$ on the algebraic dual~$\cH^a$. The measure $\gamma_\cH$ is $G$-ergodic if and only if
\begin{gather*}\Gamma(\cH)^G = L^2(\cH^a, \gamma_\cH)^G = \C 1.\end{gather*}
In this section we derive a criterion for this condition to be satisf\/ied.

\begin{Lemma} \label{lem:fixv}
Let $(\pi_j, \cH_j)_{j = 1,2}$ be two orthogonal representations of the group $G$.
\begin{itemize}\itemsep=0pt
\item[\rm(i)] If $\cH_1$ has no non-zero finite-dimensional subrepresentations, then all finite-dimensional invariant subspaces of $\cH_1 \otimes \cH_2$ are zero. In particular, $(\cH_1 \otimes \cH_2)^G = \{0\}$.
\item[\rm(ii)] The subspace $(\cH_1 \otimes \cH_2)_f$ generated by the finite-dimensional invariant subspaces coincides with $\cH_{1,f} \otimes \cH_{2,f}$.
\end{itemize}
\end{Lemma}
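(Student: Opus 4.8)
The plan is to trivialize the $G$-action by realizing the Hilbert space tensor product $\cH_1 \otimes \cH_2$ as the space $B_2(\cH_2,\cH_1)$ of Hilbert--Schmidt operators $T\colon \cH_2 \to \cH_1$, via $v \otimes w \mapsto \langle w, \cdot\rangle v$ (using the inner product to identify $\cH_2$ with its dual, which is legitimate in the real/orthogonal setting). Under this identification the tensor product representation becomes $g\cdot T = \pi_1(g)\, T\, \pi_2(g)^{-1}$, and orthogonality gives $\pi_j(g)^{-1} = \pi_j(g)^*$. A $G$-fixed vector then corresponds to a Hilbert--Schmidt intertwiner $\cH_2 \to \cH_1$, and, more generally, a finite-dimensional invariant subspace corresponds to a finite-dimensional $G$-invariant subspace $W \subseteq B_2(\cH_2,\cH_1)$.

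For (i) I would argue by contradiction: assume $W \neq 0$ is finite-dimensional and invariant, pick an orthonormal basis $T_1,\dots,T_m$ of $W$ for the Hilbert--Schmidt inner product, and set $A := \sum_{i=1}^m T_i T_i^* \in B(\cH_1)$. Each $T_i T_i^*$ is trace class, so $A$ is a non-zero positive trace-class (hence compact, self-adjoint) operator with $\tr A = \sum_i \|T_i\|_2^2 = m$. First I would check that $A$ commutes with $\pi_1$: since $\{g\cdot T_i\}_i$ is again an orthonormal basis of $W$ and the quantity $\sum_i S_i S_i^*$ is independent of the chosen orthonormal basis, while a short computation using $\pi_2(g)^{-1}\pi_2(g) = \1$ gives $(g\cdot T_i)(g\cdot T_i)^* = \pi_1(g)\, T_i T_i^*\, \pi_1(g)^{-1}$, summation yields $\pi_1(g) A \pi_1(g)^{-1} = A$. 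Then the top eigenspace of the non-zero positive compact operator $A$ is a non-zero finite-dimensional $\pi_1$-invariant subspace of $\cH_1$, contradicting the hypothesis. The ``in particular'' clause is immediate, as a non-zero fixed vector spans a one-dimensional invariant subspace.

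For (ii), the inclusion $\cH_{1,f}\otimes\cH_{2,f} \subseteq (\cH_1\otimes\cH_2)_f$ is the easy direction: for finite-dimensional invariant $V_1 \subseteq \cH_1$ and $V_2 \subseteq \cH_2$ the subspace $V_1 \otimes V_2$ is finite-dimensional and invariant, hence contained in $(\cH_1\otimes\cH_2)_f$, and such elementary tensors span a dense subspace of $\cH_{1,f}\otimes\cH_{2,f}$, which is closed. For the reverse inclusion I would split each factor $\cH_j = \cH_{j,f}\oplus \cH_{j,c}$ into its finite part and the complementary invariant subspace $\cH_{j,c} := \cH_{j,f}^{\bot}$, which by construction has no non-zero finite-dimensional invariant subspace. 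Distributing $\otimes$ gives a $G$-invariant orthogonal decomposition of $\cH_1\otimes\cH_2$ into four pieces; by part (i) (applied once directly and once after the flip $\cH_1\otimes\cH_2 \cong \cH_2\otimes\cH_1$, using symmetry) each of the three pieces containing a factor $\cH_{j,c}$ has trivial finite part, whereas the fourth piece $\cH_{1,f}\otimes\cH_{2,f}$ equals its own finite part by the easy direction. Since the finite part is additive over $G$-invariant orthogonal direct sums (invariant subspaces project equivariantly onto the summands, a routine check), this forces $(\cH_1\otimes\cH_2)_f = \cH_{1,f}\otimes\cH_{2,f}$.

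The main obstacle is the core of (i): manufacturing a finite-dimensional invariant subspace of $\cH_1$ out of one inside the tensor product. The partial-trace operator $A = \sum_i T_i T_i^*$ accomplishes exactly this, and the two properties that make the argument run --- that $A$ is trace class (hence compact, so its non-zero spectrum consists of eigenvalues with finite-dimensional eigenspaces) and that it intertwines $\pi_1$ --- are the only non-formal points. The intertwining is driven by basis-independence of $\sum_i S_i S_i^*$ together with orthogonality ($\pi_j(g)^{-1} = \pi_j(g)^*$). Everything else is bookkeeping, provided one stays within the real/orthogonal category and keeps the Hilbert-space tensor product and its closed subspaces consistently identified.
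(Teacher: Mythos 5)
Your proof is correct, and for part (i) it takes a genuinely different route from the paper's. The paper argues in two steps: it first treats fixed vectors only, identifying $(\cH_1\otimes\cH_2)^G$ with the space of Hilbert--Schmidt intertwining operators $A\colon \cH_2^*\to\cH_1$ and noting that $AA^*\in B_2(\cH_1)$ is then a compact self-intertwining operator whose nonzero eigenspaces would be finite-dimensional invariant subspaces of $\cH_1$; it then reduces the general case to the fixed-vector case by the trick that a finite-dimensional invariant subspace $\cF\subseteq\cH_1\otimes\cH_2$ produces the fixed vector $\id_\cF\in\cF\otimes\cF^*\subseteq \cH_1\otimes\bigl(\cH_2\otimes\cH_1^*\otimes\cH_2^*\bigr)$, to which the fixed-vector result (with first factor $\cH_1$) applies. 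You instead dispose of the general case in one stroke: the partial trace $A=\sum_i T_iT_i^*$ over an orthonormal basis of the invariant subspace $W\subseteq B_2(\cH_2,\cH_1)$ is a nonzero positive trace-class operator, and basis-independence of $\sum_i S_iS_i^*$ together with invariance of $W$ shows that $A$ commutes with $\pi_1$, so its top eigenspace contradicts the hypothesis on $\cH_1$. The underlying mechanism is the same in both arguments (a nonzero positive compact operator intertwining $\pi_1$ forces a nonzero finite-dimensional invariant subspace of $\cH_1$), and your construction specializes exactly to the paper's $AA^*$ when $\dim W=1$; what your version buys is that it avoids the fourfold tensor product and the $\id_\cF$ device, at the cost of the (easy) basis-independence computation. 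For part (ii) the two arguments essentially coincide: the paper uses the three-term decomposition $(\cH_{1,f}\otimes\cH_{2,f})\oplus(\cH_{1,f}^\bot\otimes\cH_2)\oplus(\cH_{1,f}\otimes\cH_{2,f}^\bot)$ and applies (i) to the last two summands, just as you do with your four pieces; your explicit remark that the finite part is additive over invariant orthogonal decompositions (via equivariance of the orthogonal projections) is the same routine point the paper leaves implicit.
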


\begin{proof} (i) First we show that the subspace $(\cH_1 \otimes \cH_2)^G$ of f\/ixed vectors is trivial. Since $(\cH_1 \otimes \cH_2)^G$ can be identif\/ied with the space of Hilbert--Schmidt intertwining operators $A \colon \cH_2^* \to \cH_1$, any such operator leads to the self intertwining operator $AA^* \in B_2(\cH_1)$ and its eigenspaces are f\/inite-dimensional $G$-invariant subspaces, hence trivial. Let $\cF \subeq \cH_1 \otimes \cH_2$ be a
f\/inite-dimensional invariant subspace. Then $\id_\cF \in \cF \otimes \cF^* \subeq \cH_1 \otimes \cH_2 \otimes\cH_1^* \otimes \cH_2^*$ is a f\/ixed vector, so that $\cF = \{0\}$ follows from the preceding argument.

(ii) We write
\begin{gather*} \cH_1 \otimes \cH_2 = (\cH_{1,f} \otimes \cH_{2,f})
\oplus \big(\cH_{1,f}^\bot \otimes \cH_2\big) \oplus \big(\cH_{1,f} \otimes \cH_{2,f}^\bot\big) \end{gather*}
and apply the preceding proposition to see that only the f\/irst summand contains non-zero f\/inite-dimensional invariant subspaces.
\end{proof}

For unitary representations, one can also introduce the terminology from measure preserving actions on a probability space,
where $\cH \cong L^2_0(X,\Sigma,\mu)
= \{ f \in L^2(X,\Sigma,\mu)\colon \int_X f\, d\mu = 0\}$.

\begin{Definition} \label{def:3.27} Let $(\pi, \cH)$ be a unitary
representation and $(\pi^*, \cH^*)$ the dual representation.
We say that $\pi$ is:
\begin{itemize}\itemsep=0pt
\item[\rm(a)] {\it ergodic} if $\cH^G = \{0\}$,
\item[\rm(b)] {\it weakly mixing} if $\pi \otimes\pi^*$ is ergodic,
\item[\rm(c)] {\it mixing} if $G$ is locally compact and all matrix coef\/f\/icients are contained in $C_0(G)$.
\end{itemize}
\end{Definition}

The following proposition is an elaboration of the main result of
\cite{SeI57} whose main focus is the equivalence of (i) and (ii).

\begin{Theorem} \label{thm:3.28} {\rm(I.E.~Segal)} For an orthogonal representation
$(\pi, \cH)$ of the group $G$, the following are equivalent:
\begin{itemize}\itemsep=0pt
\item[\rm(i)] $\cH$ contains no non-zero finite-dimensional invariant subspaces.
\item[\rm(ii)] The gaussian measure $\gamma_\cH$ is $G$-ergodic.
\item[\rm(iii)] $\pi$ is weakly mixing.
\item[\rm(iv)] For every orthogonal representation $(\rho, \cK)$, the
representation $\pi \otimes \rho$ is ergodic.
\end{itemize}
If these conditions are satisfied, then, for every $N \in \N$,
the product measure $\gamma_\cH^{\otimes N}$ is also ergodic.
\end{Theorem}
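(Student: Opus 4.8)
The plan is to establish the four-way equivalence by two short cycles of implications, exploiting functorial second quantization together with the fixed-vector Lemma~\ref{lem:fixv}. The starting point is the identification $L^2(\cH^a,\gamma_\cH) \cong \Gamma(\cH) = \hat\bigoplus_{n \in \N_0} S^n(\cH)$ as $G$-modules, under which $\Gamma(U_g)$ preserves the grading; consequently $\gamma_\cH$ is ergodic, i.e.\ $\Gamma(\cH)^G = \C 1$, precisely when $S^n(\cH)^G = \{0\}$ for every $n \geq 1$ (the summand $S^0(\cH) = \C$ being the vacuum line). This reduces condition (ii) to a statement about the symmetric powers of $\cH$, which is where Lemma~\ref{lem:fixv} does the work.

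First I would run the cycle $\text{(i)} \Rightarrow \text{(iv)} \Rightarrow \text{(iii)} \Rightarrow \text{(i)}$. Assuming (i), Lemma~\ref{lem:fixv}(i) applied with $\cH_1 = \cH$ and $\cH_2 = \cK$ gives $(\cH \otimes \cK)^G = \{0\}$ for every orthogonal $(\rho,\cK)$, which is (iv). Specializing $\rho$ to the contragredient $\pi^*$ --- again orthogonal, in fact $\pi^* \cong \pi$ via the Riesz isomorphism $v \mapsto \la v, \cdot \ra$ since $\pi$ preserves the inner product --- yields (iii). Finally, if (iii) holds but $\cF \subeq \cH$ is a nonzero finite-dimensional invariant subspace, then the canonical element $\id_\cF \in \cF \otimes \cF^* \subeq \cH \otimes \cH^*$ is a nonzero $G$-fixed vector, the $G$-action on $\End(\cF)$ being by conjugation; this contradicts ergodicity of $\pi \otimes \pi^*$, so (i) holds.

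Next I would treat the equivalence $\text{(i)} \Leftrightarrow \text{(ii)}$. Assuming (i), a single application of Lemma~\ref{lem:fixv}(i) with $\cH_1 = \cH$ and $\cH_2 = \cH^{\otimes(n-1)}$ shows that $\cH^{\otimes n}$ has no nonzero finite-dimensional invariant subspace, so in particular $(\cH^{\otimes n})^G = \{0\}$ and therefore $S^n(\cH)^G \subeq (\cH^{\otimes n})^G = \{0\}$ for all $n \geq 1$; by the reduction above this is (ii). Conversely, given a nonzero finite-dimensional invariant $\cF \subeq \cH$ with orthonormal basis $(e_i)$, the symmetric tensor $\Omega := \sum_i e_i \vee e_i \in S^2(\cF) \subeq S^2(\cH)$ is a nonzero $G$-fixed vector: since the matrix of $g|_{\cF}$ in the basis $(e_i)$ is orthogonal, one has $\sum_i (g e_i)\vee(g e_i) = \sum_{j,k}\delta_{jk}\, e_j \vee e_k = \Omega$ for every $g \in G$. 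Hence $S^2(\cH)^G \neq \{0\}$ and $\gamma_\cH$ is not ergodic, proving $\text{(ii)} \Rightarrow \text{(i)}$.

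It remains to deduce the supplementary ergodicity of $\gamma_\cH^{\otimes N}$. Here I would observe that $\gamma_\cH^{\otimes N} = \gamma_{\cH^{\oplus N}}$ is the gaussian measure attached to the orthogonal representation $\pi^{\oplus N}$ on $\cH^{\oplus N}$, because the Fourier transform $e^{-\|v\|^2/2}$ factors over the orthogonal direct sum. If $\cH$ satisfies (i), then so does $\cH^{\oplus N}$: any finite-dimensional invariant subspace projects, under the $G$-equivariant coordinate projections, to finite-dimensional invariant subspaces of $\cH$, which are zero, so the subspace itself is zero. Applying the already-established implication $\text{(i)} \Rightarrow \text{(ii)}$ to $\pi^{\oplus N}$ then gives ergodicity of $\gamma_\cH^{\otimes N}$. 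I expect the only genuinely delicate step to be $\text{(ii)} \Rightarrow \text{(i)}$, where the real (orthogonal) structure is essential: the invariant lives in $S^2(\cH)$ via the invariant quadratic form, whereas for a complex unitary representation the analogous invariant would instead sit in $\cH \otimes \cH^*$ --- precisely the discrepancy that makes (iii) the natural partner of (i) in the unitary setting.
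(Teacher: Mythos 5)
Your argument is correct, and for most of the theorem it coincides with the paper's proof almost step for step: the same reduction of (ii) to $S^n(\cH)^G = \{0\}$, $n \geq 1$, via $\Gamma(\cH) \cong \C\1 \oplus \bigoplus_{n>0} S^n(\cH)$, the same use of Lemma~\ref{lem:fixv}(i) for (i)~$\Rightarrow$~(iv) and for killing fixed vectors in $\cH^{\otimes n}$, the same $\id_\cF \in \cF \otimes \cF^*$ argument for (iii)~$\Rightarrow$~(i), and the same identification $\gamma_\cH^{\otimes N} \cong \gamma_{\cH^{\oplus N}}$ for the supplementary claim. The one place you genuinely diverge is (ii)~$\Rightarrow$~(i). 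The paper argues measure-theoretically: a finite-dimensional invariant subspace $\cF$ yields the factorization $\gamma_\cH \cong \gamma_\cF \otimes \gamma_{\cF^\bot}$, and $\gamma_\cF$ is not ergodic because the action on $\cF$ factors through the compact group $\OO(\cF)$ (the norm is a nonconstant invariant function). You instead stay entirely inside the chaos decomposition and exhibit the explicit fixed vector $\sum_i e_i \vee e_i \in S^2(\cF) \subeq S^2(\cH)$, whose invariance is the identity $AA^\top = \1$ for the orthogonal matrix of $g\res_\cF$, and whose nonvanishing follows from $\la e_i^2, e_j^2 \ra = \delta_{ij}$ (equation~\eqref{eq:powerrel}), giving norm-square $\dim \cF$. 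These are two faces of the same invariant: under the Wiener--It\^o isomorphism your vector corresponds to the second Hermite chaos of the norm function, essentially $\sum_i \phi(e_i)^2 - \dim\cF$. Your version is more self-contained, needing only the inner-product formula for symmetric powers and never the product structure of Gaussian measures or non-ergodicity of finite-dimensional Gaussians; the paper's version makes visible \emph{why} ergodicity fails, namely that the whole measure space splits off a non-ergodic finite-dimensional factor. Your closing observation about the real structure is also on point: it is precisely the invariant quadratic form, available only in the orthogonal setting, that places the obstruction in $S^2(\cH)$ rather than in $\cH \otimes \cH^*$.
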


\begin{proof} (i) $\Leftrightarrow$ (ii):
(cf.\ \cite[Proposition~A.1.12, Corollary~A.7.15]{BHV08})\footnote{We thank Bachir Bekka for this reference.}
If $\cF \subeq \cH$ is a f\/inite-dimensional invariant subspace,
then $\gamma_\cF$ (which is equivalent to Lebesgue measure on $\cF$)
is not ergodic. Now
$\gamma_\cH \cong \gamma_\cF \otimes \gamma_{\cF^\bot}$ implies
that $\gamma_\cH$ is not ergodic.

If, conversely, all f\/inite-dimensional invariant subspaces of $\cH$ are trivial,
then all $G$-f\/ixed vectors in $S^n(\cH) \subeq \cH^{\otimes n}$ are trivial if $n > 0$
(Lemma~\ref{lem:fixv}). Hence the assertion follows from the $G$-equivariant decomposition
$\Gamma(\cH) \cong \C \Omega \oplus \bigoplus_{n > 0} S^n(\cH)$.

(i) $\Rarrow$ (iv) follows from Lemma~\ref{lem:fixv}(i).

(iv) $\Rarrow$ (iii) is trivial.

(iii) $\Rarrow$ (i): If $\cF \subeq \cH$ is a f\/inite-dimensional invariant subspace,
then $\id_\cF \subeq \cF \otimes \cF^* \cong \cH \otimes \cH^*$ is a f\/ixed vector.

If (i) is satisf\/ied, then
the canonical representation on $\cH^N$ also contains no non-zero
f\/inite-dimensional invariant subspace, so that the
$G$-action on the product spaces $\Gamma(\cH^N) \cong \Gamma(\cH)^N$
with the product measure $\gamma_{\cH^N} \cong \gamma_{\cH}^{\otimes N}$
is also ergodic.
\end{proof}

\begin{Remark} Let $\cH$ be a complex Hilbert space and $(\pi, \cH)$ be a unitary
representation
of $G$ on~$\cH$. We write $\cH^\R$ for the underlying real Hilbert space and
$\cF_\C(\cH)$ for the associated Fock space over~$\C$.
Then
\begin{gather*} \Gamma\big(\cH^\R\big) \cong \cF_\C\big(\big(\cH^\R\big)_\C\big) \cong \cF_\C(\cH \oplus \oline\cH)
\cong \cF_\C(\cH) \otimes_\C \cF_\C(\oline \cH)
\cong B_2(\cF_\C(\cH)).\end{gather*}
Therefore the requirement that $\cF_\C(\cH)^G = \C \Omega$ is weaker than the ergodicity
of the measure $\gamma_\cH$.

Let $\cH = \cH_f \oplus \cH_f^\bot$ denote the decomposition into the closed subspace
$\cH_f$ generated by all f\/inite-dimensional invariant subspaces and its orthogonal complement
$\cH_f^\bot$.\footnote{Note that the representation on $\cH_f$ factors through a
representation of a compact group.}
Then
\begin{gather*} \cF_\C(\cH) \cong \cF_\C(\cH_f) \otimes \cF_\C\big(\cH_f^\bot\big), \end{gather*}
and Lemma~\ref{lem:fixv} implies that the subspace $\cF_\C(\cH_f) \otimes \bigoplus_{n > 0} S^n(\cH_f^\bot)$
contains no non-zero f\/ixed vectors. Therefore
$\cF_\C(\cH)^G = \cF_\C(\cH_f)^G$.

If $G$ is abelian, then $\cH_f$ is spanned by eigenvectors for certain characters
$X \subeq \hat G$, and the corresponding characters of $S^n(\cH_f)$ are the f\/inite products
$\chi_1 \cdots \chi_n$, $\chi_j \in X$. Therefore $\cF_\C(\cH)^G = \C \Omega$ is equivalent to the
condition that all products
$\chi_1 \cdots \chi_n$, $\chi_j \in X$, $n > 0$, are non-trivial.
\end{Remark}

\begin{Definition}[\protect{\cite[Def\/inition~2.14]{BM00}}] A measure preserving action of $G$ on a
f\/inite measure space $(X,\Sigma,\mu)$ is said to be {\it weakly mixing} if the representation
on the subspace
\begin{gather*} L^2_0(X,\Sigma,\mu)_0 := 1^\bot \subeq L^2(X,\Sigma,\mu) \end{gather*}
contains no non-zero f\/inite-dimensional invariant subspaces. In view of Theorem~\ref{thm:3.28}, this is equivalent to the representation on $L^2_0(X,\Sigma,\mu)$ to be weakly mixing in the sense of Def\/ini\-tion~\ref{def:3.27}.
\end{Definition}

The following proposition justif\/ies Def\/inition~\ref{def:3.27}(b).
\begin{Proposition} A measure preserving action of $G$ on $(X,\Sigma,\mu)$ is weakly mixing if and only
if the corresponding action on the pair space $(X^2, \Sigma \otimes \Sigma, \mu \otimes \mu)$
is ergodic.
\end{Proposition}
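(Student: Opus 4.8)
The plan is to express both sides in terms of fixed vectors of tensor powers of the Koopman representation and then reduce to the definition of weak mixing via Segal's theorem (Theorem~\ref{thm:3.28}). Write $\sigma$ for the unitary Koopman representation of $G$ on $L^2(X,\Sigma,\mu)$, given by $\sigma(g)f = f\circ g^{-1}$, and let $\pi$ be its restriction to the $G$-invariant subspace $\cH := L^2_0(X,\Sigma,\mu) = \1^\bot$, so that $\sigma \cong \mathbf{1}\oplus\pi$ with $\mathbf{1}$ the trivial representation on the constants $\C\1$. By the definition of weak mixing recalled just above (together with its stated reformulation via Theorem~\ref{thm:3.28}), the $G$-action on $X$ is weakly mixing exactly when $\pi$ contains no non-zero finite-dimensional invariant subspace, equivalently when $\pi$ is weakly mixing in the sense of Definition~\ref{def:3.27}, i.e.\ when $\pi\otimes\pi^*$ is ergodic.

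First I would identify the Koopman representation of the diagonal $G$-action on $(X^2,\mu\otimes\mu)$ with $\sigma\otimes\sigma$ under the canonical unitary $L^2(X^2,\mu\otimes\mu)\cong L^2(X,\mu)\otimes L^2(X,\mu)$; here both factors carry $\sigma$, since $g$ acts by $F\mapsto F(g^{-1}\,\cdot\,,g^{-1}\,\cdot\,)$. Decomposing $\sigma\cong\mathbf{1}\oplus\pi$ yields the $G$-equivariant splitting
\begin{gather*}
\sigma\otimes\sigma \cong \mathbf{1}\oplus\pi\oplus\pi\oplus(\pi\otimes\pi),
\end{gather*}
and, since passage to $G$-fixed vectors commutes with direct sums,
\begin{gather*}
(\sigma\otimes\sigma)^G \cong \C\oplus\cH^G\oplus\cH^G\oplus(\pi\otimes\pi)^G.
\end{gather*}
Hence the action on $X^2$ is ergodic, i.e.\ $\dim(\sigma\otimes\sigma)^G = 1$, precisely when $\cH^G=\{0\}$ and $(\pi\otimes\pi)^G=\{0\}$. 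Because any non-zero $v\in\cH^G$ gives a non-zero fixed vector $v\otimes v\in(\pi\otimes\pi)^G$, the second condition already forces the first. Therefore ergodicity of the action on $X^2$ is equivalent to $(\pi\otimes\pi)^G=\{0\}$, that is, to $\pi\otimes\pi$ being ergodic.

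It then remains to match $\pi\otimes\pi$ with the representation $\pi\otimes\pi^*$ occurring in the definition of weak mixing. The key point is that the Koopman representation is self-dual: since $G$ acts by point transformations, complex conjugation $f\mapsto\oline f$ is a conjugate-linear $G$-equivariant isometry of $L^2(X,\mu)$ which fixes $\C\1$ and hence preserves $\cH=\1^\bot$, exhibiting an isomorphism $\pi\cong\pi^*$. Consequently $\pi\otimes\pi\cong\pi\otimes\pi^*$, so $\pi\otimes\pi$ is ergodic if and only if $\pi\otimes\pi^*$ is ergodic, i.e.\ if and only if $\pi$ is weakly mixing. Chaining this with the reduction of the previous paragraph and the characterization of weak mixing of the action recalled in the first paragraph gives the desired equivalence.

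I expect the only delicate step to be this self-duality of the Koopman representation, which is the sole place where the reality of the $G$-action is used; the remainder is the formal bookkeeping of invariants under the decomposition $L^2(X)=\C\1\oplus\cH$, while the genuinely substantive input—that weak mixing of $\pi$ is equivalent to the absence of finite-dimensional subrepresentations—is furnished by Theorem~\ref{thm:3.28}.
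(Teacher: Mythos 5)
Your argument is correct, and every step checks out: the equivariant identification $L^2(X^2,\mu\otimes\mu)\cong L^2(X,\mu)\otimes L^2(X,\mu)$, the four-summand decomposition of $\sigma\otimes\sigma$ and its effect on fixed vectors, the self-duality $\pi\cong\pi^*$ implemented by $f\mapsto\oline f$, and the final appeal to Theorem~\ref{thm:3.28}. It is, however, organized differently from the paper's proof. The paper argues the two implications separately and directly on the pair space: for the forward direction it invokes Lemma~\ref{lem:fixv}(ii) to get $(L^2(X,\mu)\otimes L^2(X,\mu))^G\subeq L^2(X,\mu)_f\otimes L^2(X,\mu)_f$, which under weak mixing is $\C(1\otimes 1)$; for the converse it takes a finite-dimensional invariant subspace $\cF\subeq L^2_0(X,\mu)$ and produces the fixed vector $\id_\cF\in\cF\otimes\cF^*\subeq L^2\otimes L^2$, violating ergodicity. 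Your single chain of equivalences instead isolates the block $L^2_0\otimes L^2_0$, reduces pair-ergodicity to ergodicity of $\pi\otimes\pi$, and outsources the substantive step to Theorem~\ref{thm:3.28}(i)$\Leftrightarrow$(iii); since that equivalence is itself proved in the paper from Lemma~\ref{lem:fixv} together with the same $\id_\cF$ trick, the two proofs rest on identical foundations. What your packaging buys: the anti-unitary self-duality of the Koopman representation is made explicit (the paper uses it silently when it regards $\cF^*$ as a subspace of $L^2(X,\mu)$), the observation that $(\pi\otimes\pi)^G=\{0\}$ already forces $\cH^G=\{0\}$ is a nice economy, and the Proposition is exhibited transparently as the statement that the dynamical notion of weak mixing agrees with Definition~\ref{def:3.27}(b), which is precisely the role the paper assigns to it. What the paper's packaging buys: brevity, and independence from the full strength of Theorem~\ref{thm:3.28}.
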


\begin{proof} We have
\begin{gather*} L^2(X \times X, \mu \otimes \mu)^G
\cong (L^2(X,\mu) \otimes L^2(X,\mu))^G
\subeq L^2(X,\mu)_f \otimes L^2(X,\mu)_f.\end{gather*}
If the action is weakly mixing, then $L^2(X,\mu)_f = \C 1$ implies
that $L^2(X \times X, \mu \otimes \mu)^G = \C 1$, so that the product action is ergodic.

If, conversely, the product action is ergodic and $\cF \subeq L^2_0(X,\mu)$ is a f\/inite-dimensional invariant subspace, then
\begin{gather*} \cF \otimes \cF^* \subeq L^2(X,\mu) \otimes L^2(X,\mu) \cong
L^2(X \times X, \mu \otimes \mu) \end{gather*}
leads to a f\/ixed vector in $L^2(X \times X, \mu \otimes \mu)$, which implies
$\cF =\{0\}$. We conclude that the action on $(X,\mu)$ is weakly mixing.
\end{proof}

\begin{Example} If $(\pi, \cH)$ is an orthogonal representation
for which the corresponding
{\it gaussian action} on $\Gamma(\cH) = (\cH^a, \gamma_\cH)$
is ergodic, then it is weakly mixing by Theorem~\ref{thm:3.28}.
\end{Example}

\appendix

\section{Continuity of a stochastic process on $G$}
\label{subsec:6.3}

This appendix refers
to Example~\ref{ex:2.9}. Clearly, the most natural continuity requirement
from the perspective of representation theory is that the representation of
$G$ in $L^2(B^G, \fB^G, \nu)$ is continuous. In this subsection we collect some remarks
that are useful for the verif\/ication of this continuity.

\begin{Lemma} \label{lem:6.7} For a semigroup
$(P_s)_{s \in S}$ of positivity preserving operators on $L^\infty(X,\fS,\mu)$,
the strong continuity of the representation of $S$ on
$L^2(X,\fS,\nu)$ implies continuity in measure
\begin{gather*} \lim_{s \to s_0} \nu(|P_s f - P_{s_0} f| \geq \eps) = 0 \qquad \mbox{for every} \quad \eps > 0. \end{gather*}
\end{Lemma}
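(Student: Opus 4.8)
The plan is to obtain continuity in measure directly from the assumed strong continuity by an elementary Chebyshev estimate; no approximation argument or functional-analytic machinery beyond the Markov inequality is needed. First I would fix an element $f$ of the space carrying the representation (so that $P_s f \in L^2(X,\fS,\nu)$ for every $s$) and a number $\eps > 0$, and set $A_s := \{ x \in X \colon |P_s f - P_{s_0} f|(x) \geq \eps\}$. On $A_s$ the integrand $|P_s f - P_{s_0} f|^2$ is bounded below by $\eps^2$, which yields the Chebyshev inequality
\[
\eps^2 \nu(A_s) \leq \int_{A_s} |P_s f - P_{s_0} f|^2 \, d\nu \leq \|P_s f - P_{s_0} f\|_2^2 .
\]

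Next I would invoke the hypothesis: the representation of $S$ on $L^2(X,\fS,\nu)$ is strongly continuous, so $\|P_s f - P_{s_0} f\|_2 \to 0$ as $s \to s_0$. Dividing the displayed inequality by $\eps^2$ and letting $s \to s_0$ then gives $\nu(A_s) \to 0$, which is exactly the asserted continuity in measure $\lim_{s \to s_0} \nu(|P_s f - P_{s_0} f| \geq \eps) = 0$.

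There is no real obstacle in this direction: passing from $L^2$-convergence to convergence in measure is the easy implication, needing only the Markov inequality and the fact that the $P_s$ act on $L^2(X,\fS,\nu)$ so that the differences lie in that space. (The converse implication, recorded in Remark~\ref{rem:2.7}(a), is the genuinely harder one, since it requires the extra $L^1$- and $L^\infty$-bounds on $f$ together with the contractivity $\|P_s\| \le 1$.) Accordingly I would simply record the one-line Chebyshev computation above and cite strong continuity, with no further calculation.
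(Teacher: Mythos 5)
Your proof is correct and is essentially identical to the paper's own argument: the authors likewise deduce the claim from the single Chebyshev estimate $\eps^2\,\nu(|P_s f - P_{s_0} f| \geq \eps) \leq \int_X |P_s f - P_{s_0} f|^2\, d\nu$, which tends to $0$ by the assumed strong continuity on $L^2(X,\fS,\nu)$. Nothing further is needed.
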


\begin{proof} This follows from
$\eps^2 \nu(|P_s f - P_{s_0} f| \geq \eps^2)
\leq \int_{X} |P_s f - P_{s_0} f|^2\, d\nu \to 0.$
\end{proof}

\begin{Corollary} \label{cor:6.7} For a square integrable stationary $\R$-valued
process $(X_g)_{g \in G}$, the continuity of the representation of $G$ on
$L^2(B^G, \fB^G, \nu)$ implies that
\begin{gather*} \lim_{g\to \1} \nu(|X_g - X_\1| \geq \eps) = 0 \quad \mbox{
for every} \quad \eps > 0. \end{gather*}
\end{Corollary}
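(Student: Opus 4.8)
The plan is to reduce the statement to the already established Lemma~\ref{lem:6.7} by recognizing $X_g - X_\1$ as the translation operator applied to the single square-integrable function $X_\1$. First I would record the relevant algebraic identity. Writing $X_h(\omega) = \omega(h)$ for the coordinate projections on $B^G = \R^G$ and letting $(U_g\omega)(x) := \omega(g^{-1}x)$ be the natural translation action on $B^G$ (so that $(U_g F)(\omega) = F(U_{g^{-1}}\omega)$ on functions, as in Example~\ref{ex:2.9}), a direct computation gives
\begin{gather*}
(U_g X_\1)(\omega) = X_\1(U_{g^{-1}}\omega) = (U_{g^{-1}}\omega)(\1) = \omega(g) = X_g(\omega),
\end{gather*}
so that $X_g = U_g X_\1$ for every $g \in G$.

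Second, stationarity of the process means precisely that $\nu$ is invariant under the translations $U_g$; hence the $U_g$ act as unitary operators on $\cE := L^2(B^G, \fB^G, \nu)$ and define the unitary representation whose continuity is hypothesized. Since the process is square integrable, $X_\1 \in \cE$, and strong continuity of $U$ at $\1$ yields $\|U_g X_\1 - X_\1\|_2 \to 0$ as $g \to \1$.

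Finally I would invoke Chebyshev's inequality exactly as in the proof of Lemma~\ref{lem:6.7}: using $X_g - X_\1 = U_g X_\1 - X_\1$ we obtain
\begin{gather*}
\eps^2\, \nu(|X_g - X_\1| \geq \eps)
\leq \int_{B^G} |U_g X_\1 - X_\1|^2\, d\nu
= \|U_g X_\1 - X_\1\|_2^2 \longrightarrow 0
\end{gather*}
as $g \to \1$, which is the assertion.

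There is essentially no genuine obstacle here; the only point requiring care is the bookkeeping around the translation action, namely checking that $X_g$ is literally the $U_g$-translate of the base coordinate $X_\1$ and that stationarity is what promotes $(U_g)_{g \in G}$ to a unitary (hence bounded, strongly continuous) representation on $L^2$. Once the identity $X_g = U_g X_\1$ is in place, the corollary is an immediate specialization of Lemma~\ref{lem:6.7} with $S = G$, $P_s = U_g$, $s_0 = \1$ and $f = X_\1$.
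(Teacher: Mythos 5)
Your proposal is correct and is exactly the argument the paper intends: the corollary is stated without separate proof precisely because it is the specialization of Lemma~\ref{lem:6.7} to the translation representation $U$ with $f = X_\1$, using $X_g = U_g X_\1$ and Chebyshev's inequality. Your explicit verification of the identity $X_g = U_g X_\1$ and of the role of stationarity (making $U$ unitary on $L^2(B^G,\fB^G,\nu)$) simply fills in the bookkeeping the paper leaves implicit.
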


\begin{Remark} Let $G$ be a group acting in a measure preserving way on the f\/inite
measure space $(Q,\Sigma, \mu)$. It is easy to see that the continuity of the representation
on $L^2(Q,\Sigma,\mu)$ is equivalent to the continuity of the orbit maps of the
characteristic functions, which in turn is equivalent to the continuity of the maps
\begin{gather*} d_A \colon \ G \to \R, \qquad d_A(g) := \mu( (gA) \Delta A) \qquad \mbox{for} \quad A \in \Sigma.\end{gather*}
Actually this condition is equivalent to the continuity of the $G$-action on the metric
space $(\Sigma/J_\mu, d)$, where $d(A,B) = \mu(A \Delta B)$ and $J_\mu$ denotes the ideal of
$\mu$-zero sets.

Next we observe that the set of all bounded functions
$f \in L^2(Q,\Sigma,\mu)$ for which the $G$-orbit map in $L^2(Q,\Sigma, \mu)$ is continuous
is a subalgebra. This follows immediately from the estimate
\begin{gather*} \|U_g(fh)-fh\|_2 \leq \|h\|_\infty \|U_g f - f\|_2 + \|f\|_\infty \|U_g h - h\|_2.\end{gather*}
This implies that
\begin{gather*} \Sigma_c := \{ A \in \Sigma \colon d_A \in C(G,\R) \} \end{gather*}
contains $\varnothing$ and $Q$ and is stable under complements, f\/inite intersections and f\/inite unions.
{}From the closedness of the subspace of continuous vectors in $L^2(Q,\Sigma,\mu)$ we further
derive that $\Sigma_c$ is stable under countable unions, hence a $\sigma$-subalgebra.
We conclude that it suf\/f\/ices to verify the continuity of the functions $d_A$ for a collection
of subsets generating the $\sigma$-algebra~$\Sigma$.
\end{Remark}

\begin{Lemma} For a real-valued stationary process $(X_g)_{g \in G}$, the condition
\begin{gather*} \lim_{g\to \1} \nu(|X_g - X_\1| \geq \eps) = 0 \end{gather*}
of continuity in measure implies continuity of the $G$-representation on
$L^2(Q,\Sigma,\nu)$.
\end{Lemma}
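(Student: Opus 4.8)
The plan is to reduce the statement to the continuity criterion recorded in the preceding Remark and then verify it on a convenient generating family. Since the $G$-action is measure preserving, the operators $U_g$ on $L^2(Q,\Sigma,\nu)$ are unitary, and by that Remark the representation is continuous as soon as the functions $d_A(g) := \nu((gA)\Delta A)$ lie in $C(G,\R)$ for every $A$ in a family generating $\Sigma$, because $\Sigma_c := \{A \in \Sigma : d_A \in C(G,\R)\}$ is a $\sigma$-algebra. Here $d_A(g) = \|U_g\chi_A - \chi_A\|_2^2$, and writing $g = h_0 k$ gives $\|U_g\chi_A - U_{h_0}\chi_A\|_2 = \|U_k\chi_A - \chi_A\|_2$; hence $d_A(k)\to 0$ as $k \to \1$ already forces $d_A \in C(G,\R)$. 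It therefore suffices to exhibit a generating family of sets $A$ with $d_A(k) \to 0$ as $k \to \1$.

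Working in the path-space realization, where $X_g(\omega) = \omega(g)$ and $U_g$ is translation, the level sets $A_t := \{X_\1 \le t\}$ have translates $gA_t = \{X_g \le t\}$, and the collection $\{gA_t : g \in G,\ t \in \R\}$ generates $\Sigma$. Because the law of $X_\1$ has at most countably many atoms, the sets $A_t$ with non-atomic threshold $t$, together with their $G$-translates, still generate $\Sigma$, so it is enough to handle such $t$.

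The core of the argument is the elementary inclusion, valid for every $\delta > 0$,
\[
\{X_g \le t\}\,\Delta\,\{X_\1 \le t\} \subseteq \{|X_g - X_\1| \ge \delta\} \cup \{t - \delta < X_\1 < t + \delta\},
\]
which holds because, whenever $|X_g - X_\1| < \delta$ while the two threshold conditions disagree, $X_\1$ must lie within $\delta$ of $t$. Integrating yields
\[
d_{A_t}(g) \le \nu(|X_g - X_\1| \ge \delta) + \nu(t - \delta < X_\1 < t + \delta).
\]
Given $\eps > 0$, I would first choose $\delta$ so small that the second summand is below $\eps/2$; this is possible exactly because $t$ is not an atom, so that $\nu(t-\delta < X_\1 < t+\delta)$ decreases to $\nu(X_\1 = t) = 0$. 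With $\delta$ now fixed, the hypothesis of continuity in measure makes the first summand fall below $\eps/2$ for all $g$ sufficiently near $\1$. Hence $d_{A_t}(g) \to 0 = d_{A_t}(\1)$, each such $A_t$ lies in $\Sigma_c$, and since these sets generate $\Sigma$ the Remark delivers continuity of the representation.

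The only delicate point is the order of quantifiers in the last step: the tolerance $\delta$, which controls how close the two thresholds may be, must be fixed \emph{before} continuity in measure is invoked, and it is precisely this that confines the argument to non-atomic thresholds $t$. The remaining ingredients—that the non-atomic level sets and their $G$-translates generate $\Sigma$, and that continuity of $d_A$ at $\1$ spreads to all of $G$ by unitarity—are routine.
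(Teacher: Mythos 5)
Your proof is correct, and it follows the same strategy as the paper's own: reduce, via the preceding Remark, to checking continuity of $d_A$ at $\1$ for a family of level sets of the process that generates $\Sigma$, and then control the symmetric difference $\{X_g \le t\}\,\Delta\,\{X_\1 \le t\}$ by the continuity-in-measure set together with a thin slab around the threshold. The one point where you genuinely diverge is the shape of that slab. Yours, $\{t-\delta < X_\1 < t+\delta\}$, is symmetric and contains $t$, so as $\delta \downarrow 0$ its measure only decreases to $\nu(X_\1 = t)$; this is exactly what forces your detour through non-atomic thresholds, justified (correctly) by countability of the atoms and monotone approximation $\{X_g\le t\}=\bigcap_n\{X_g\le t_n\}$. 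The paper instead works with $A_{g_0} = \{X_{g_0} \ge a\}$ and covers the symmetric difference by sets where one of the two variables lies in a half-open interval abutting the threshold $a$ but not containing it; such intervals shrink to $\varnothing$, so no atom condition is needed and the whole non-atomicity layer disappears. (As printed, the paper's inclusion actually places the intervals on the wrong side -- with $A_{g_0}=\{X_{g_0}\ge a\}$ the exceptional sets are $\{X_{gg_0}\in\, ]a-\delta,a[\,\}$ and $\{X_{g_0}\in\, ]a-\delta,a[\,\}$ rather than $]a,a+\delta[$ -- so your more cautious two-sided estimate plus the atom argument is a perfectly defensible, if slightly longer, repair.) One small step you leave implicit: your generating family consists of all translates $gA_t$, but you verify continuity of $d_A$ only for $A = A_t$; this is harmless, since invariance of $\nu$ gives $d_{gA_t}(h) = \nu\big(g^{-1}hg\,A_t \,\Delta\, A_t\big) = d_{A_t}\big(g^{-1}hg\big)$, so continuity of $d_{A_t}$ transfers to every translate, and your appeal to the Remark goes through.
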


\begin{proof} In view of the preceding remark, it suf\/f\/ices to show that, for the sets
$A_{g_0} := \{X_{g_0} \geq a\}$, $a \in \R$, the function $d_A(g)
= \nu(gA_{g_0} \Delta A_{g_0}) = \nu(A_{gg_0} \Delta A_{g_0})$ is continuous in~$\1$. Note that
\begin{gather*} A_{gg_0} \Delta A_{g_0} \subeq \{ |X_{gg_0} -X_{g_0}| \geq \delta\}
\cup \{ X_{gg_0} \in ]a,a+\delta[\} \cup \{ X_{g_0} \in ]a,a+\delta[\}.\end{gather*}
For $\delta$ suf\/f\/iciently small, the last two sets on the right have measure
at most $\frac{\eps}{3}$, so that
\begin{gather*} \nu(\{ |X_{g_0} -X_{gg_0}| \geq \delta\}) < \frac{\eps}{2}
\qquad \mbox{leads to} \quad \nu(gA_{g_0} \Delta A_{g_0}) < \eps.\tag*{\qed}\end{gather*}
\renewcommand{\qed}{}
\end{proof}

\section{Markov kernels}
\label{subsec:6.4}

In this appendix we discuss brief\/ly some basic properties of Markov kernels that are needed for this article.

\begin{Definition}[\protect{\cite[Section~36]{Ba96}}]
(a) Let $(Q,\Sigma)$ and $(Q',\Sigma')$ be measurable spaces. Then a~function
\begin{gather*} K \colon \ Q \times \Sigma' \to [0,\infty] \end{gather*}
is called a {\it kernel} if
\begin{enumerate}
\item[\rm(K1)] for every $A' \in \Sigma'$, the function $K^{A'}(\omega) := K(\omega, A')$ is
$\Sigma$-measurable, and
\item[\rm(K2)] for every $\omega \in Q$, the function $K_\omega(A') := K(\omega, A')$
is a (positive) measure.
\end{enumerate}
A kernel is called a {\it Markov kernel} if the measures $K_\omega$ are probability measures.

(b) A kernel $K \colon Q \times \Sigma' \to [0,\infty]$ associates to a measure
$\mu$ on $(Q,\Sigma)$ the measure
\begin{gather*} (\mu K)(A') := \int \mu(d\omega) K(\omega,A')\end{gather*}
on $(Q',\Sigma')$.
To every measurable function $f' \colon Q' \to [0,\infty]$, it associates the function
\begin{gather*} Kf' \colon \ Q \to [0,\infty], \qquad (Kf')(\omega)
:= \int_{Q'} K(\omega, d\omega') f'(\omega').\end{gather*}
Now the Markov property corresponds to $K1 = K$ for the constant function~$1$.

(c) If $(Q_j, \Sigma_j)_{j=1,2,3}$ are measurable spaces, then
composition of kernels $K_1$ on $Q_1 \times \Sigma_2$ and
$K_2$ on $Q_2 \times \Sigma_3$ is def\/ined by
\begin{gather*} (K_1 K_2)(\omega_1, A_3) = \int_{Q_2}
K_1(\omega_1, d\omega_2) K_2(\omega_2, A_3) .\end{gather*}

In particular, we obtain on a measurable space $(Q,\Sigma)$ the concept of a {\it semigroup
$(P_s)_{s \in S}$ of $($Markov$)$ kernels} by the requirement that $P_s P_t = P_{st}$ for
$st \in S$. Here the classical case is $S = \R_+$.
\end{Definition}

\begin{Remark} \label{rem:b.2}
(a) For $Q = Q'$ and $\Sigma = \Sigma'$, every Markov kernel $K$ def\/ines a positivity
preserving operator on measurable functions by
\begin{gather*} (Kf)(\omega)
:= \int_Q K(\omega, d\omega') f(\omega').\end{gather*}
The Markov property implies that $K1 = 1$ and that
$\|Kf\|_\infty \leq \|f\|_\infty.$

(b) For a measure $\mu$ on $(Q,\Sigma)$ we then have
\begin{gather*} \int_Q f\, d(\mu K)
= \int_Q \int_Q \mu(d\omega) K(\omega, d\omega') f(\omega')
= \int_Q Kf d\mu.\end{gather*}
Therefore the relation $\mu K = \mu$ is equivalent to
the invariance of $\mu$ as a functional on non-negative bounded measurable functions
under the operator $f \mapsto Kf$.
\end{Remark}

\begin{Remark} \label{rem:mark}
For a Markov semigroup $(P_t)_{t \geq 0}$ on $(Q,\Sigma)$ and a probability measure
$\mu$ on $(Q,\Sigma)$, we obtain for $0 \leq t_1 < \cdots < t_n$ and
$\bt = (t_1, \ldots, t_n)$ a probability measure $P_\bt^\mu$ on $Q^n$ \cite[Satz~36.4]{Ba96}:
\begin{gather*} P_\bt^\mu(B) = \int_{Q^{n+1}} \chi_B(x_1, \ldots, x_n)
\mu(dx_0) P_{t_1}(x_0, dx_1) P_{t_2- t_1}(x_1, dx_2) \cdots
P_{t_n- t_{n-1}}(x_{n-1}, dx_n).\end{gather*}
The measure $P_\bt^\mu$ can also be written as $P_\bt^\mu = \mu P_\bt$ for the kernel
\begin{gather*} P_\bt(x_0,B) = \int_{Q^{n}} \chi_B(x_1, \ldots, x_n)
P_{t_1}(x_0, dx_1)P_{t_2- t_1}(x_1, dx_2)\cdots P_{t_n- t_{n-1}}(x_{n-1}, dx_n)\end{gather*}
on $Q \times \Sigma^n$.
This is a projective family of measures. If $(Q,\Sigma)$ is a~polish space, then this leads to a~stochastic process $(X_t)_{t\geq 0}$ with state
space~$(Q,\Sigma)$~\cite[Corollary~35.4]{Ba96}.
The probability measure~$\mu$ is the distribution of~$X_0$. It is called the
{\it initial distribution}. According to~\cite[Theorem~42.3]{Ba96}, the so obtained
process has the Markov property.
\end{Remark}

\subsection*{Extending stationary Markov processes to the real line}
\label{subsubsec:6.3.1}

Let $(P_t)_{t \geq 0}$ be a Markov semigroup on $(Q,\Sigma)$ and $\nu$ be a measure on $(Q,\Sigma)$. We obtain for
\begin{gather*} -s_m < -s_{m-1} < \cdots < -s_1 < 0 \leq t_1 < \cdots < t_n \end{gather*}
a measure $P_{\bs,\bt}^\nu$ on $Q^{m+n}$ by
\begin{gather*}
 P_{\bs,\bt}^\nu(B) = \int_{Q^{n+m+1}} \chi_B(y_m, \ldots, y_1, x_1, \ldots, x_n)\\
\hphantom{P_{\bs,\bt}^\nu(B) =}{}\times P_{s_m- s_{m-1}}(y_{m-1}, dy_m)\cdots P_{s_2- s_1}(y_1, dy_2)P_{s_1}(x_0, dy_1)\nu(dx_0)\\
\hphantom{P_{\bs,\bt}^\nu(B) =}{}\times P_{t_1}(x_0, dx_1)P_{t_2- t_1}(x_1, dx_2)\cdots P_{t_n- t_{n-1}}(x_{n-1}, dx_n).
\end{gather*}
This means that $ P_{\bs,\bt}^\nu = \nu P_{\bs,\bt}$
for the kernel
\begin{gather*}
 P_{\bs,\bt}(x_0,B)  = \int_{Q^{n+m}} \chi_B(y_m, \ldots, y_1, x_1, \ldots, x_n)\\
\hphantom{P_{\bs,\bt}(x_0,B)  =}{}\times P_{s_m- s_{m-1}}(y_{m-1}, dy_m)\cdots P_{s_2- s_1}(y_1, dy_2)P_{s_1}(x_0, dy_1)\\
\hphantom{P_{\bs,\bt}(x_0,B)  =}{} P_{t_1}(x_0, dx_1) P_{t_2- t_1}(x_1, dx_2)\cdots P_{t_n- t_{n-1}}(x_{n-1}, dx_n).
\end{gather*}
This is a projective family of measures. If $(Q,\Sigma)$ is a~polish space, this leads to a stochastic process $(X_t)_{t\in \R}$ with state space $(Q,\Sigma)$ \cite[Corollary~35.4]{Ba96}. If $\nu$ is a probability measure, then the measure~$\nu$ is called the {\it initial distribution} of the process. It coincides with the distribution of $X_0$. Suppose, in addition, that
\begin{gather*} 
 \int_{Q^2} g(\omega) \nu(d\omega) P_t(\omega, d\omega') f(\omega')
= \int_Q (P_t f)g \, \nu(dx_0) = \int_Q f(P_t g) \nu(dx_0)   \\
\hphantom{\int_{Q^2} g(\omega) \nu(d\omega) P_t(\omega, d\omega') f(\omega')}{} =  \int_{Q^2} g(\omega)\nu(d\omega') P_t(\omega', d\omega) f(\omega')
\qquad \mbox{for} \quad 0 \leq f,g.
\end{gather*}
Then
\begin{gather*}
 P_{\bs,\bt}^\nu(B)
 = \int_{Q^{n+m+1}} \chi_B(y_m, \ldots, y_1, x_1, \ldots, x_n)\\
\hphantom{P_{\bs,\bt}^\nu(B)=}{}\times P_{s_m- s_{m-1}}(y_{m-1}, dy_m)\cdots P_{s_2- s_1}(y_1, dy_2)P_{s_1}(x_0, dy_1)\nu(dx_0)\\
\hphantom{P_{\bs,\bt}^\nu(B)=}{}\times P_{t_1}(x_0, dx_1) P_{t_2- t_1}(x_1, dx_2) \cdots P_{t_n- t_{n-1}}(x_{n-1}, dx_n)\\
\hphantom{P_{\bs,\bt}^\nu(B)}{} = \int_{Q^{n+m+1}} \chi_B(y_m, \ldots, y_1, x_1, \ldots, x_n)\\
\hphantom{P_{\bs,\bt}^\nu(B)=}{}\times P_{s_m- s_{m-1}}(y_{m-1}, dy_m)\cdots P_{s_2- s_1}(y_1, dy_2)\nu(dy_1) P_{s_1}(y_1, dx_0)\\
\hphantom{P_{\bs,\bt}^\nu(B)=}{}\times P_{t_1}(x_0, dx_1)P_{t_2- t_1}(x_1, dx_2)\cdots P_{t_n- t_{n-1}}(x_{n-1}, dx_n) \\
 \hphantom{P_{\bs,\bt}^\nu(B)}{}= \int_{Q^{n+m}} \chi_B(y_m, \ldots, y_1, x_1, \ldots, x_n)\\
\hphantom{P_{\bs,\bt}^\nu(B)=}{}\times P_{s_m- s_{m-1}}(y_{m-1}, dy_m)\cdots P_{s_2- s_1}(y_1, dy_2)\nu(dy_1) P_{s_1+t_1}(y_1, dx_1)\\
\hphantom{P_{\bs,\bt}^\nu(B)=}{}\times P_{t_2- t_1}(x_1, dx_2) \cdots P_{t_n- t_{n-1}}(x_{n-1}, dx_n)\\
\hphantom{P_{\bs,\bt}^\nu(B)}{} = \int_{Q^{n+m}} \chi_B(y_m, \ldots, y_1, x_1, \ldots, x_n)\\
\hphantom{P_{\bs,\bt}^\nu(B)=}{}\times P_{s_m- s_{m-1}}(y_{m-1}, dy_m)\cdots \nu(dy_2) P_{s_2- s_1}(y_2, dy_1)P_{s_1+t_1}(y_1, dx_1) \\
\hphantom{P_{\bs,\bt}^\nu(B)=}{}\times P_{t_2- t_1}(x_1, dx_2)\cdots P_{t_n- t_{n-1}}(x_{n-1}, dx_n) = \cdots \\
\hphantom{P_{\bs,\bt}^\nu(B)}{}= \int_{Q^{n+m}} \chi_B(y_m, \ldots, y_1, x_1, \ldots, x_n)\\
\hphantom{P_{\bs,\bt}^\nu(B)=}{}\times \nu(dy_m) P_{s_m- s_{m-1}}(y_m, dy_{m-1})\cdots P_{s_2- s_1}(y_2, dy_1)P_{s_1+t_1}(y_1, dx_1)\\
\hphantom{P_{\bs,\bt}^\nu(B)=}{}\times P_{t_2- t_1}(x_1, dx_2)\cdots P_{t_n- t_{n-1}}(x_{n-1}, dx_n).
\end{gather*}
If $\nu$ is a probability measure, we thus obtain a stationary process with values in $Q$. For $\bt = (t_1, \ldots, t_n)$ and $t_1 < \ldots < t_n$ in $\R$, we then have for the distribution of this process
\begin{gather*} P_{\bt}^\nu(B)
= \int_{Q^n} \chi_B(x_1, \ldots, x_n)\,
\nu(dx_1) P_{t_2- t_1}(x_1, dx_2)\cdots P_{t_n- t_{n-1}}(x_{n-1}, dx_n).\end{gather*}
This formula immediately implies that the translation invariance of the measure $P^\nu$ on $Q^\R$.

\subsection*{Acknowledgements}

The research of P.~Jorgensen was partially supported by the Binational Science Founda\-tion Grant number 2010117. The research of K.-H.~Neeb was supported by DFG-grant NE \mbox{413/7-2}, Schwerpunktprogramm ``Darstellungstheorie''. The research of G.~\'Olafsson was supported by NSF grant DMS-1101337.
The authors wish to thank the Mathematisches Forschungsinstitut Oberwolfach for hosting a Workshop on ``Ref\/lection Positivity in Representation Theory, Stochastics and Physics'' November, 30~-- December~6, 2014. The present research was started at the workshop, and it has benef\/itted from our discussions with the participants there.

\addcontentsline{toc}{section}{References}
\pdfbookmark[1]{References}{ref}
\LastPageEnding


\begin{thebibliography}{99}
\footnotesize\itemsep=0pt

\bibitem{ADGV15}
Albeverio S., Driver B.K., Gordina M., Vershik A.M., Equivalence of the
 Brownian motion and energy representations, \href{http://arxiv.org/abs/1511.07378}{arXiv:1511.07378}.

\bibitem{AJ12}
Alpay D., Jorgensen P.E.T., Stochastic processes induced by singular operators,
 \href{http://dx.doi.org/10.1080/01630563.2012.682132}{\textit{Numer. Funct. Anal. Optim.}} \textbf{33} (2012), 708--735,
 \href{http://arxiv.org/abs/1109.5273}{arXiv:1109.5273}.

\bibitem{AJL11}
Alpay D., Jorgensen P.E.T., Levanony D., A class of {G}aussian processes with
 fractional spectral measures, \href{http://dx.doi.org/10.1016/j.jfa.2011.03.012}{\textit{J.~Funct. Anal.}} \textbf{261} (2011),
 507--541, \href{http://arxiv.org/abs/1009.0233}{arXiv:1009.0233}.

\bibitem{An13}
Anderson C.C., Def\/ining physics at imaginary time: ref\/lection positivity for
 certain {R}iemannian manifolds, {T}hesis, Harvard University, 2013, available
 at \url{http://www.math.harvard.edu/theses/senior/anderson/anderson.pdf}.

\bibitem{Ap88}
Applebaum D., Stochastic evolution of {Y}ang--{M}ills connections on the
 noncommutative two-torus, \href{http://dx.doi.org/10.1007/BF00402015}{\textit{Lett. Math. Phys.}} \textbf{16} (1988),
 93--99.

\bibitem{Ba91}
Banaszczyk W., Additive subgroups of topological vector spaces, \href{http://dx.doi.org/10.1007/BFb0089147}{\textit{Lecture
 Notes in Math.}}, Vol.~1466, Springer-Verlag, Berlin, 1991.

\bibitem{Ba96}
Bauer H., Probability theory, \href{http://dx.doi.org/10.1515/9783110814668}{\textit{de Gruyter Studies in Mathematics}},
 Vol.~23, Walter de Gruyter \& Co., Berlin, 1996.

\bibitem{BHV08}
Bekka B., de~la Harpe P., Valette A., Kazhdan's property~{(T)}, \textit{New
 Mathematical Monographs}, Vol.~11, Cambridge University Press, Cambridge,
 2008.

\bibitem{BM00}
Bekka B., Mayer M., Ergodic theory and topological dynamics of group actions on
 homogeneous space, \href{http://dx.doi.org/10.1017/CBO9780511758898}{\textit{London Mathematical Society Lecture Note Series}},
 Vol.~269, Cambridge University Press, Cambridge, 2000.

\bibitem{BS03}
Bendikov A., Salof\/f-Coste L., Brownian motions on compact groups of inf\/inite
 dimension, in Heat Kernels and Analysis on Mani\-folds, Graphs, and Metric
 Spaces ({P}aris, 2002), \href{http://dx.doi.org/10.1090/conm/338/06070}{\textit{Contemp. Math.}}, Vol.~338, Amer. Math. Soc.,
 Providence, RI, 2003, 41--63.

\bibitem{Bo98}
Bogachev V.I., Gaussian measures, \href{http://dx.doi.org/10.1090/surv/062}{\textit{Mathematical Surveys and Monographs}},
 Vol.~62, Amer. Math. Soc., Providence, RI, 1998.

\bibitem{BR87}
Bratteli O., Robinson D.W., Operator algebras and quantum statistical
 mechanics. {I}.~$C^*$-~and $W^*$-algebras. Symmetry groups. Decomposition of
 states, \href{http://dx.doi.org/10.1007/978-3-662-02520-8}{\textit{Texts and Monographs in Physics}}, Springer-Verlag, Berlin, 1987.

\bibitem{DD16}
Deitmar A., van Dijk G., Trace class groups, \textit{J.~Lie Theory} \textbf{26}
 (2016), 269--291, \href{http://arxiv.org/abs/1501.02375}{arXiv:1501.02375}.

\bibitem{vD09}
van Dijk G., Introduction to harmonic analysis and generalized {G}elfand pairs,
 \href{http://dx.doi.org/10.1515/9783110220209}{\textit{de Gruyter Studies in Ma\-the\-matics}}, Vol.~36, Walter de Gruyter \&
 Co., Berlin, 2009.

\bibitem{DNSZ16}
van Dijk G., Neeb K.-H., Salmasian H., Zellner C., On the characterization of
 trace class representations and {S}chwartz operators, \textit{J.~Lie Theory}
 \textbf{26} (2016), 787--805, \href{http://arxiv.org/abs/1512.02451}{arXiv:1512.02451}.

\bibitem{Do76}
Dobbins J.G., Well bounded semigroups in locally compact groups,
 \href{http://dx.doi.org/10.1007/BF01214705}{\textit{Math.~Z.}} \textbf{148} (1976), 155--167.

\bibitem{Dr03}
Driver B.K., Heat kernels measures and inf\/inite dimensional analysis, in Heat
 Kernels and Analysis on Mani\-folds, Graphs, and Metric Spaces ({P}aris, 2002),
 \href{http://dx.doi.org/10.1090/conm/338/06072}{\textit{Contemp. Math.}}, Vol.~338, Amer. Math. Soc., Providence, RI, 2003,
 101--141.

\bibitem{Fa02}
Fang S., Canonical {B}rownian motion on the dif\/feomorphism group of the circle,
 \href{http://dx.doi.org/10.1006/jfan.2002.3922}{\textit{J.~Funct. Anal.}} \textbf{196} (2002), 162--179.

\bibitem{FOS83}
Fr{\"o}hlich J., Osterwalder K., Seiler E., On virtual representations of
 symmetric spaces and their analytic continuation, \href{http://dx.doi.org/10.2307/2006979}{\textit{Ann. of Math.}}
 \textbf{118} (1983), 461--489.

\bibitem{GJ81}
Glimm J., Jaf\/fe A., Quantum physics. A functional integral point of view,
 \href{http://dx.doi.org/10.1007/978-1-4612-4728-9}{Springer-Verlag}, New York~-- Berlin, 1981.

\bibitem{Go08}
Gordina M., Riemannian geometry of {${\rm Dif\/f}(S^1)/S^1$} revisited, in
 Stochastic Analysis in Mathematical Physics, \href{http://dx.doi.org/10.1142/9789812791559_0002}{World Sci. Publ.}, Hackensack,
 NJ, 2008, 19--29, \href{http://arxiv.org/abs/math-ph/0510092}{math-ph/0510092}.

\bibitem{Gr98}
Gross L., Harmonic functions on loop groups, \textit{Ast\'erisque} \textbf{252}
 (1998), Exp.\ No.~846, 5, 271--286.

\bibitem{Hid80}
Hida T., Brownian motion, \href{http://dx.doi.org/10.1007/978-1-4612-6030-1}{\textit{Applications of Mathematics}}, Vol.~11,
 Springer-Verlag, New York~-- Berlin, 1980.

\bibitem{Hi89}
Hilgert J., A note on {H}owe's oscillator semigroup, \href{http://dx.doi.org/10.5802/aif.1182}{\textit{Ann. Inst. Fourier
 (Grenoble)}} \textbf{39} (1989), 663--688.

\bibitem{HN93}
Hilgert J., Neeb K.-H., Lie semigroups and their applications, \href{http://dx.doi.org/10.1007/BFb0084640}{\textit{Lecture
 Notes in Math.}}, Vol.~1552, Springer-Verlag, Berlin, 1993.

\bibitem{HO96}
Hilgert J., {\'O}lafsson G., Causal symmetric spaces. Geometry and harmonic
 analysis, \textit{Perspectives in Mathematics}, Vol.~18, Academic Press,
 Inc., San Diego, CA, 1997.

\bibitem{Ho65}
Hofmann K.H., Lie algebras with subalgebras of co-dimension one,
 \textit{Illinois~J. Math.} \textbf{9} (1965), 636--643.

\bibitem{Ho90}
Hofmann K.H., Hyperplane subalgebras of real {L}ie algebras, \href{http://dx.doi.org/10.1007/BF00150789}{\textit{Geom.
 Dedicata}} \textbf{36} (1990), 207--224.

\bibitem{HOUZ10}
Holden H., {\O}ksendal B., Ub{\o}e J., Zhang T., Stochastic partial
 dif\/ferential equations. A~modeling, white noise functional approach, 2nd ed.,
 \href{http://dx.doi.org/10.1007/978-0-387-89488-1}{\textit{Universitext}}, Springer, New York, 2010.

\bibitem{How88}
Howe R., The oscillator semigroup, in The Mathematical Heritage of {H}ermann
 {W}eyl ({D}urham, {NC}, 1987), \href{http://dx.doi.org/10.1090/pspum/048/974332}{\textit{Proc. Sympos. Pure Math.}}, Vol.~48,
 Amer. Math. Soc., Providence, RI, 1988, 61--132.

\bibitem{JR07a}
Jaf\/fe A., Ritter G., Quantum f\/ield theory on curved backgrounds. {I}.~{T}he
 {E}uclidean functional integral, \href{http://dx.doi.org/10.1007/s00220-006-0166-2}{\textit{Comm. Math. Phys.}} \textbf{270}
 (2007), 545--572, \href{http://arxiv.org/abs/hep-th/0609003}{hep-th/0609003}.

\bibitem{JR07b}
Jaf\/fe A., Ritter G., Quantum f\/ield theory on curved backgrounds.
 {II}.~Spacetime symmetries, \href{http://arxiv.org/abs/0704.0052}{arXiv:0704.0052}.

\bibitem{J86}
Jorgensen P.E.T., Analytic continuation of local representations of {L}ie
 groups, \href{http://dx.doi.org/10.2140/pjm.1986.125.397}{\textit{Pacific~J. Math.}} \textbf{125} (1986), 397--408.

\bibitem{J87}
Jorgensen P.E.T., Analytic continuation of local representations of symmetric
 spaces, \href{http://dx.doi.org/10.1016/0022-1236(87)90115-7}{\textit{J.~Funct. Anal.}} \textbf{70} (1987), 304--322.

\bibitem{JOl98}
Jorgensen P.E.T., {\'O}lafsson G., Unitary representations of {L}ie groups with
 ref\/lection symmetry, \href{http://dx.doi.org/10.1006/jfan.1998.3285}{\textit{J.~Funct. Anal.}} \textbf{158} (1998), 26--88,
 \href{http://arxiv.org/abs/funct-an/9707001}{funct-an/9707001}.

\bibitem{JOl00}
Jorgensen P.E.T., {\'O}lafsson G., Unitary representations and
 {O}sterwalder--{S}chrader duality, in The Mathematical Legacy of
 {H}arish-{C}handra ({B}altimore, {MD}, 1998), \href{http://dx.doi.org/10.1090/pspum/068/1767902}{\textit{Proc. Sympos. Pure
 Math.}}, Vol.~68, Amer. Math. Soc., Providence, RI, 2000, 333--401,
 \href{http://arxiv.org/abs/math.FA/9908031}{math.FA/9908031}.

\bibitem{Jo68}
J{\o}rsboe O.G., Equivalence or singularity of {G}aussian measures on function
 spaces, \textit{Various Publications Series}, Vol.~4, Matematisk Institut, Aarhus Universitet, Aarhus, 1968.

\bibitem{Ka48}
Kakutani S., On equivalence of inf\/inite product measures, \href{http://dx.doi.org/10.2307/1969123}{\textit{Ann. of
 Math.}} \textbf{49} (1948), 214--224.

\bibitem{Ki04}
Kirillov A.A., Lectures on the orbit method, \href{http://dx.doi.org/10.1090/gsm/064}{\textit{Graduate Studies in
 Mathematics}}, Vol.~64, Amer. Math. Soc., Providence, RI, 2004.

\bibitem{Kl77}
Klein A., Gaussian {${\rm OS}$}-positive processes,
 \href{http://dx.doi.org/10.1007/BF00532876}{\textit{Z.~Wahrscheinlichkeitstheorie und Verw. Gebiete}} \textbf{40} (1977),
 115--124.

\bibitem{Kl78}
Klein A., The semigroup characterization of {O}sterwalder--{S}chrader path
 spaces and the construction of {E}uclidean f\/ields, \href{http://dx.doi.org/10.1016/0022-1236(78)90009-5}{\textit{J.~Funct. Anal.}}
 \textbf{27} (1978), 277--291.

\bibitem{KL75}
Klein A., Landau L.J., Singular perturbations of positivity preserving
 semigroups via path space techniques, \href{http://dx.doi.org/10.1016/0022-1236(75)90053-1}{\textit{J.~Funct. Anal.}} \textbf{20}
 (1975), 44--82.

\bibitem{La87}
Lawson J.D., Maximal subsemigroups of {L}ie groups that are total,
 \href{http://dx.doi.org/10.1017/S0013091500026870}{\textit{Proc. Edinburgh Math. Soc.}} \textbf{30} (1987), 479--501.

\bibitem{La94}
Lawson J.D., Polar and {O}l'shanskii decompositions, \href{http://dx.doi.org/10.1515/crll.1994.448.191}{\textit{J.~Reine Angew.
 Math.}} \textbf{448} (1994), 191--219.

\bibitem{LM75}
L{\"u}scher M., Mack G., Global conformal invariance in quantum f\/ield theory,
 \href{http://dx.doi.org/10.1007/BF01608988}{\textit{Comm. Math. Phys.}} \textbf{41} (1975), 203--234.

\bibitem{MNO15}
Merigon S., Neeb K.-H., {\'O}lafsson G., Integrability of unitary
 representations on reproducing kernel spaces, \href{http://dx.doi.org/10.1090/S1088-4165-2015-00461-3}{\textit{Represent. Theory}}
 \textbf{19} (2015), 24--55, \href{http://arxiv.org/abs/1406.2681}{arXiv:1406.2681}.

\bibitem{Ne97}
Neeb K.-H., On a theorem of {S}.~{B}anach, \textit{J.~Lie Theory} \textbf{7}
 (1997), 293--300.

\bibitem{Ne00}
Neeb K.-H., Holomorphy and convexity in {L}ie theory, \href{http://dx.doi.org/10.1515/9783110808148}{\textit{de Gruyter
 Expositions in Mathematics}}, Vol.~28, Walter de Gruyter \& Co., Berlin, 2000.

\bibitem{NO14}
Neeb K.-H., {\'O}lafsson G., Ref\/lection positivity and conformal symmetry,
 \href{http://dx.doi.org/10.1016/j.jfa.2013.10.030}{\textit{J.~Funct. Anal.}} \textbf{266} (2014), 2174--2224, \href{http://arxiv.org/abs/1206.2039}{arXiv:1206.2039}.

\bibitem{NO15a}
Neeb K.-H., {\'O}lafsson G., Ref\/lection positive one-parameter groups and
 dilations, \href{http://dx.doi.org/10.1007/s11785-014-0402-2}{\textit{Complex Anal. Oper. Theory}} \textbf{9} (2015), 653--721,
 \href{http://arxiv.org/abs/1312.6161}{arXiv:1312.6161}.

\bibitem{NO15b}
Neeb K.-H., {\'O}lafsson G., Ref\/lection positivity for the circle group,
 \href{http://dx.doi.org/10.1088/1742-6596/597/1/012004}{\textit{J.~Phys. Conf. Ser.}} \textbf{597} (2015), 012004, 16~pages,
 \href{http://arxiv.org/abs/1411.2439}{arXiv:1411.2439}.

\bibitem{NO02}
Neeb K.-H., {\O}rsted B., Representation in {$L^2$}-spaces on
 inf\/inite-dimensional symmetric cones, \href{http://dx.doi.org/10.1006/jfan.2001.3884}{\textit{J.~Funct. Anal.}} \textbf{190}
 (2002), 133--178.

\bibitem{Nel69}
Nelson E., Analytic vectors, \href{http://dx.doi.org/10.2307/1970331}{\textit{Ann. of Math.}} \textbf{70} (1959),
 572--615.

\bibitem{Nel64}
Nelson E., Feynman integrals and the {S}chr\"odinger equation, \href{http://dx.doi.org/10.1063/1.1704124}{\textit{J.~Math.
 Phys.}} \textbf{5} (1964), 332--343.

\bibitem{Nel73b}
Nelson E., The free {M}arkof\/f f\/ield, \href{http://dx.doi.org/10.1016/0022-1236(73)90025-6}{\textit{J.~Funct. Anal.}} \textbf{12}
 (1973), 211--227.

\bibitem{O00}
{\'O}lafsson G., Analytic continuation in representation theory and harmonic
 analysis, in Global Analysis and Harmonic Analysis ({M}arseille-{L}uminy,
 1999), \textit{S\'emin. Congr.}, Vol.~4, Soc. Math. France, Paris, 2000,
 201--233.

\bibitem{OS73}
Osterwalder K., Schrader R., Axioms for {E}uclidean {G}reen's functions,
 \href{http://dx.doi.org/10.1007/BF01645738}{\textit{Comm. Math. Phys.}} \textbf{31} (1973), 83--112.

\bibitem{OS75}
Osterwalder K., Schrader R., Axioms for {E}uclidean {G}reen's functions.~{II},
 \href{http://dx.doi.org/10.1007/BF01608978}{\textit{Comm. Math. Phys.}} \textbf{42} (1975), 281--305.

\bibitem{Po77}
Poguntke D., Well-bounded semigroups in connected groups, \href{http://dx.doi.org/10.1007/BF02195746}{\textit{Semigroup
 Forum}} \textbf{15} (1977), 159--167.

\bibitem{S86}
Schrader R., Ref\/lection positivity for the complementary series of {${\rm
 SL}(2n,{\mathbb C})$}, \href{http://dx.doi.org/10.2977/prims/1195178376}{\textit{Publ. Res. Inst. Math. Sci.}} \textbf{22}
 (1986), 119--141.

\bibitem{Sch64}
Schwartz L., Sous-espaces hilbertiens d'espaces vectoriels topologiques et
 noyaux associ\'es (noyaux reproduisants), \href{http://dx.doi.org/10.1007/BF02786620}{\textit{J.~Analyse Math.}}
 \textbf{13} (1964), 115--256.

\bibitem{Sch73}
Schwartz L., Radon measures on arbitrary topological spaces and cylindrical
 measures, \textit{Tata Institute of Fundamental Research Studies in
 Mathematics}, Vol.~6, Oxford University Press, London, 1973.

\bibitem{SeI57}
Segal I.E., Ergodic subgroups of the orthogonal group on a real {H}ilbert
 space, \href{http://dx.doi.org/10.2307/1970001}{\textit{Ann. of Math.}} \textbf{66} (1957), 297--303.

\bibitem{Si74}
Simon B., The {$P(\phi )_{2}$} {E}uclidean (quantum) f\/ield theory, \textit{Princeton
 Series in Physics}, Princeton University Press, Princeton, N.J., 1974.

\bibitem{Tr67}
Tr{\`e}ves F., Topological vector spaces, distributions and kernels, Academic
 Press, New York~-- London, 1967.

\bibitem{Va84}
Varadarajan V.S., Geometry of quantum theory, 2nd ed., Springer-Verlag, New York, 1985.

\bibitem{Wa72}
Warner G., Harmonic analysis on semi-simple {L}ie groups.~{I}, \href{http://dx.doi.org/10.1007/978-3-642-50275-0}{\textit{Die Grundlehren der
 mathematischen Wissenschaften}}, Vol.~188, Springer-Verlag, New York~-- Heidelberg, 1972.

\bibitem{Ya85}
Yamasaki Y., Measures on inf\/inite-dimensional spaces, \href{http://dx.doi.org/10.1142/0162}{\textit{Series in Pure
 Mathematics}}, Vol.~5, World Scientif\/ic Publishing Co., Singapore, 1985.

\end{thebibliography}
\end{document}